\documentclass[12pt]{article}

\usepackage{amsmath,enumerate,amsbsy,amsfonts,amssymb,mathabx,amscd,graphicx,multirow,xcolor,subcaption, pdflscape}

\usepackage{afterpage}
\usepackage{caption,float}
\usepackage[round,authoryear]{natbib} 
\usepackage{authblk,booktabs, multirow}
\usepackage{mathrsfs,longtable}
\usepackage[flushleft]{threeparttable}
\usepackage{algorithmic,algorithm}

\RequirePackage[colorlinks,citecolor=blue,urlcolor=blue]{hyperref}
\usepackage{xurl}
\usepackage{epsfig}
\usepackage[section]{placeins}

\makeatletter
\let\old@NAT@hyper@\NAT@hyper@
\renewcommand{\NAT@hyper@}[1]{\textup{\old@NAT@hyper@{#1}}}
\let\oldNAT@open\NAT@open
\renewcommand{\NAT@open}{\textup{\oldNAT@open}}
\let\oldNAT@close\NAT@close
\renewcommand{\NAT@close}{\textup{\oldNAT@close}}
\let\oldNAT@separator\NAT@separator
\renewcommand{\NAT@separator}{\textup{\oldNAT@separator}}
\let\oldNAT@aysep\NAT@aysep
\renewcommand{\NAT@aysep}{\textup{\oldNAT@aysep}}
\let\oldNAT@cmt\NAT@cmt
\renewcommand{\NAT@cmt}{\textup{\oldNAT@cmt}}
\makeatother

\providecommand{\keywords}[1]{\textbf{\textit{Keywords: }} #1}

\newcommand{\Date}[1]{\def\@Date{#1}}
\def\today{\number\day\ifcase\month\or
January\or February\or March\or April\or May\or June\or
July\or August\or September\or October\or November\or December\fi\number\year}

\makeatletter
\newcommand{\figcaption}{\def\@captype{figure}\caption}
\newcommand{\tabcaption}{\def\@captype{table}\caption}
\makeatother

\newcommand{\bOmega}{\boldsymbol \Omega}
\newcommand{\bomega}{\boldsymbol \omega}
\newcommand{\bDelta}{\boldsymbol \Delta}
\newcommand{\bSigma}{\boldsymbol \Sigma}

\newcommand{\bUpsilon}{\boldsymbol \Upsilon}

\newcommand{\bxi}{\boldsymbol \xi}

\newcommand{\btheta}{\boldsymbol \theta}

\newcommand{\bdelta}{\boldsymbol \delta}

\newcommand{\vep}{\varepsilon}
\newcommand{\bvep}{\boldsymbol \varepsilon}

\newcommand{\bGamma}{\boldsymbol \Gamma}

\newcommand{\bfeta}{\boldsymbol \eta}
\newcommand{\bfzeta}{\boldsymbol \zeta}

\newcommand{\bXi}{{\boldsymbol \Xi}}

\newcommand{\bu}{{\mathbf u}}
\newcommand{\bU}{{\mathbf U}}
\newcommand{\bv}{{\mathbf v}}
\newcommand{\bh}{{\mathbf h}}
\newcommand{\bb}{{\mathbf b}}
\newcommand{\ba}{{\mathbf a}}
\newcommand{\bc}{{\mathbf c}}
\newcommand{\be}{{\mathbf e}}

\newcommand{\bs}{{\mathbf s}}
\newcommand{\bt}{{\mathbf t}}
\newcommand{\bz}{{\mathbf z}}
\newcommand{\bw}{{\mathbf w}}

\newcommand{\md}{{\rm d}}
\newcommand{\cT}{{\cal T}}
\newcommand{\cU}{{\cal U}}

\newcommand{\cS}{{\cal S}}
\newcommand{\cD}{{\cal D}}

\newcommand{\cG}{{\mathrm{ g}}}
\newcommand{\cN}{{\mathcal N}}

\newcommand{\cE}{{\mathcal E}}
\newcommand{\bA}{{\bf A}}
\newcommand{\bB}{{\bf B}}

\newcommand{\bD}{{\bf D}}
\newcommand{\bS}{{\bf S}}
\newcommand{\bI}{{\bf I}}
\newcommand{\bM}{{\bf M}}
\newcommand{\bH}{{\bf H}}

\newcommand{\bP}{{\bf P}}
\newcommand{\bR}{{\bf R}}

\newcommand{\bbR}{\mathbb{R}}
\newcommand{\bX}{{\mathbf X}}
\newcommand{\bE}{{\mathbf E}}
\newcommand{\bQ}{{\mathbf Q}}

\newcommand{\bZ}{{\mathbf Z}}

\newcommand{\cov}{{\rm{Cov}}}
\newcommand{\var}{{\rm{Var}}}
\newcommand{\diag}{{\rm{diag}}}
\newcommand{\cv}{\text{cv}}
\newcommand{\T}{{\scriptscriptstyle {\top}}}

\newcommand{\0}{{\bf 0}}

\newcommand{\calK}{{\mathcal{K}}}
\newcommand{\bbP}{{\mathbb{P}}}
\newcommand{\bbE}{{\mathbb{E}}}
\newcommand{\cK}{{\mathcal{K}}}

\usepackage{authblk}

\def\JBES{{\sl Journal of Business \& Economic Statistics}}

\def\AS{{\sl The Annals of Statistics}}

\def\AP{{\sl The Annals of Probability}}

\newtheorem{theorem}{Theorem}
\newtheorem{lemma}{Lemma}[section]

\newtheorem{proposition}{Proposition}
\newtheorem{remark}{Remark}

\newtheorem{Condition}{Condition}

\newcommand{\blind}{1}

\allowdisplaybreaks

\begin{document}

\def\spacingset#1{\renewcommand{\baselinestretch}%
{#1}\small\normalsize} \spacingset{1}

%%%%%%%%%%%%%%%%%%%%%%%%%%%%%%%%%%%%%%%%%%%%%%%%%%%%%%%%%%%%%%%%%%%%%%%%%%%%%%

\if1\blind
{
\spacingset{1.25}
\title{\bf \Large %White noise test for high-dimensional functional time series
Testing for functional white noise in high dimensions
}
\author[1,2,3]{Jinyuan Chang}
\author[4]{Qing Jiang}
\author[5]{Xinghao Qiao}
\author[1]{Lin Yang}
\affil[1]{\it Joint Laboratory of Data Science and Business Intelligence, Institute of Statistical
Interdisciplinary Research,  Southwestern University of Finance and Economics, Chengdu, China}
\affil[2]{\it \small State Key Laboratory of Mathematical Sciences, Academy of Mathematics and Systems Science, Chinese Academy of Sciences, Beijing, China}
\affil[3]{\it \small China Center for Economic Research, Peking University, Beijing, China}
\affil[4]{Faculty of Arts and Sciences, Beijing Normal University, Zhuhai, China}
\affil[5]{\it Faculty of Business and Economics, The University of Hong Kong, Hong Kong}
\setcounter{Maxaffil}{0}

\renewcommand\Affilfont{\itshape\small}
%\date{\today}
\date{\vspace{-5ex}}
\maketitle
%\vspace{-2cm}
} \fi

\if0\blind
{\spacingset{2}
\bigskip
\bigskip
\bigskip
\begin{center}
{\Large\bf Testing for functional white noise in high dimensions}
\end{center}
\medskip
} \fi

\bigskip

\spacingset{1.5}
\begin{abstract}
White noise testing is a fundamental problem in time series analysis. Yet it remains largely unsolved for high-dimensional functional time series, despite the growing attention this area has received in recent years, as existing tests are confined to either univariate functional time series or high-dimensional scalar time series. In this paper, we develop a general error-contamination framework for testing white noise in high-dimensional functional time series. We propose a supremum-type test statistic based on cross-autocovariance functions and develop a parametric bootstrap procedure to approximate its null distribution. By imposing a general high-level condition, we derive a new Gaussian approximation result that ensures size control, and establish an asymptotic power guarantee. We then apply our framework to two concrete applications: (i) white noise test for discretely observed functional time series, and (ii) residual-based goodness-of-fit test for functional factor model. For each problem, we verify the corresponding high-level condition to ensure the theoretical validity of our proposed method. Extensive simulations show that our proposed method achieves good finite-sample performance. The practical utility of our proposed method is further illustrated through applications to two real datasets.
\end{abstract}

\noindent
\keywords{Discretely observed functional time series; Error contamination; Gaussian approximation; Goodness-of-fit;  High-dimensional functional white noise;  Parametric bootstrap}

%\newpage
\spacingset{1.7}
\section{Introduction}
\label{sec.intro}

%Functional data analysis is a statistical discipline that concerned with data represented as curves, surfaces, or images that vary over a continuous domain. When such functional observations are collected sequentially over time, they form a functional time series, capturing both the functional nature of the data and their temporal dynamics.
Functional time series analysis provides a statistical framework for modeling random functions observed sequentially over time, capturing both their functional structure and temporal dependence.
Recent technological developments have led %lead 
to the emergence of multivariate, and even high-dimensional functional time series datasets. Examples include yield curves \cite[]{Hays2012} and age-specific mortality rates \cite[]{Shang2017} across multiple countries, daily energy consumption trajectories \cite[]{Cho2013} from a large number of households,  cumulative intraday returns \cite[]{Zhang2016} for a collection of stocks and hourly readings of PM2.5 concentrations collected at different monitoring stations \cite[]{tan2024}. 
These data can be represented as  
\begin{equation}
\label{data}
\bvep_t(\cdot)=\{\vep_{t,1}(\cdot),\ldots,\vep_{t,p}(\cdot)\}^{\T}\,, ~~~t=1, \dots, n\,,
\end{equation}
defined on a compact interval $\cU.$ In high-dimensional settings, the dimension $p$ is large relative to, and may be greater than, the length of time series $n.$

Recent years have seen rapid progress in the development of estimation and inference methods for high-dimensional functional time series,
see, e.g., \cite{tang2022}, \cite{Guo2023}, \cite{Zhou2023}, \cite{Chang2024}, \cite{tan2024}, \cite{Li2024}, \cite{Chang2025a}, \cite{li2025arxiv}, \cite{guo2026}, and \cite{leng2026}.
Within this broader context, a foundational problem in time series analysis is testing for white noise, which is crucial not only for exploring the temporal dependence structure of the series but also for  performing diagnostic checks of time series models.
The existing literature on white noise tests for functional time series has been confined to the univariate case and can be categorized into two groups: time-domain tests based on autocovariance operators \citep{GK2007,Horvath2013,KRS2017,Rice2020,Bucher2023}
and spectral-domain tests based on spectral density operators \citep{Zhang2016,Bagchi2018,CR2020,Hlavka2021}. However, these tests are primarily developed for the univariate functional time series. Their validity may be severely challenged in high-dimensional settings with large $p$,
resulting in poor performance and thus underscoring the need for new test statistics and techniques to accurately characterize the asymptotic behavior.
Additionally, these tests only consider the idealized fully observed functional scenario, whereas in practice, functional time series are typically discretely observed with errors.

In contrast, the existing literature on white noise tests for high-dimensional scalar time series can be loosely divided into two categories. The first comprises supremum-type tests, which are particularly powerful against sparse alternatives and rely on different correlations among series \cite[]{Chang2017,Tsay2020,Chen2025}. The second, designed to detect dense alternatives, consists of sum-type tests that aggregate dependence information across dimensions \cite[]{Li2019,Zhao2024,Feng2022}. However, extending these methods to the infinite-dimensional functional domain presents substantial methodological and theoretical complexities, as it requires constructing test statistics under appropriate functional norms and addressing technical obstacles within abstract Hilbert spaces. 
Moreover, practical scenarios involving discretely observed and noisy curves introduce additional measurement and estimation errors, further complicating the analysis.

In this paper, we propose a general error-contamination framework with methodological development and theoretical guarantees for testing white noise in high-dimensional functional time series. More specifically, we construct a supremum-type test statistic based on cross-autocovariance functions and develop a computationally feasible parametric bootstrap procedure to approximate its null distribution. By imposing a general high-level condition, we derive a new Gaussian approximation result ensuring  size control,  and demonstrate the asymptotic power.
Moreover, we apply our framework to two concrete applications: (i) white noise test for discretely observed functional time series, and (ii) residual-based goodness-of-fit test for functional factor model \cite[]{guo2026}. Specifically, we illustrate how each application is seamlessly incorporated into the error-contamination framework, and verify the high-level condition to provide theoretical validity of our proposed methods. 
Simulation results show that our proposed methods  exhibit good size control and power in these two applications.
%In the white noise testing with discretely observed data, simulations show that our proposed method performs competitively with existing methods in the univariate case and  exhibits good size control and power under moderate- to high-dimensional settings. 
%{\color{red} In the goodness-of-fit testing application, we develop a procedure to determine the number of factors by sequentially applying the proposed white noise test to residuals, supported by empirical evidence of its effectiveness.}  
The main contributions of our paper are threefold. 

First, we  present the first general methodology with theoretical guarantees for testing high-dimensional functional white noise. Our approach is fully functional, without relying on any dimension reduction techniques such as functional principal component analysis or pre-fixed basis expansion, thereby avoiding any incurred information loss. The only existing inference work in high-dimensional functional time series literature is \cite{Zhou2023}, which developed tests for mean functions and thus differs from our white noise testing formulation. 
Moreover, their work and existing white noise testing methods for univariate functional time series consider only the unrealistic fully observed functional scenario, which restricts their practical utility. To overcome this, we introduce a unified error-contamination framework that encompasses the practical scenario of discretely observed functional time series as a concrete example.

Second, our error-contamination framework is applicable to residual-based goodness-of-fit tests for high-dimensional functional time series models such as functional factor model \cite[]{guo2026} and vector functional autoregressive model \cite[]{Guo2023}, which are crucial for model diagnostics and for determining model complexity. As an illustrative example, our proposed goodness-of-fit test for functional factor model 
naturally fits  within our error-contamination framework and its theoretical validity is established by verifying the high-level condition. In contrast, the existing literature on white noise tests for high-dimensional scalar time series has rarely examined goodness of fit theoretically in depth. \cite{Chang2017} provide only a simplified theoretical analysis for a general model without any illustrative example, while \cite{Feng2022} consider goodness-of-fit test merely in real data analysis without any theoretical justification.

Third, we derive novel Gaussian approximation theory for the supremum-type statistic of the sum of high-dimensional dependent processes and develop a new parametric bootstrap procedure, providing a suite of useful methodological and technical tools that can be broadly applied to other testing and inference problems for high-dimensional functional time series. Existing Gaussian approximation results, however, primarily address high-dimensional independent vectors \cite[]{Cher2013,Cher2017, Cher2019,Deng2020,Cher2023}, high-dimensional independent processes \cite[]{Cher2014} and high-dimensional scalar time series \cite[]{Zhang2017, Zhang2018, ChangChenWu2023}.
The most related work is \cite{Zhou2023}, which develops  Gaussian approximation theory for mean functions while allowing $p$ to grow polynomially with $n$. In contrast, our theory considers cross-autocovariance functions and accommodates an exponentially diverging dimension $p$. Furthermore, whereas their theory is based on Fourier expansion,
our framework is fully functional that works directly with the functional realizations.

The rest of the paper is organized as follows. Section \ref{sec.error} introduces the proposed white noise testing procedure and establishes the theoretical guarantees within a general error-contamination framework. Section \ref{sec.app} illustrates the applicability of the proposed framework through two concrete applications, namely white noise test for discretely observed functional time series and goodness-of-fit test for  functional factor model.
The finite-sample performance of our methodology is evaluated through extensive simulations in Section \ref{sec.simu}, while Section \ref{sec.real} demonstrates its practical utility via the analysis of two real-world datasets.  All technical proofs are relegated to the supplementary material.

{\it Notation}. For any positive integer $q\geq2$, we write $[q]=\{1,\ldots,q\}$. 
Denote  by
${\cal L}_2(\cU)$ the Hilbert space of square integrable functions defined on a compact set $\cU$, and write $\cU^2=\cU\times\cU.$  
Let $I(\cdot)$ denote the indicator function and ${\rm vec}(\cdot)$ be the vectorization for matrices.  
Denote by $\lfloor x \rfloor $ and $\lceil x\rceil$ the largest integer not greater than $x$ and the smallest integer not smaller than $x$, respectively. For two sequences of positive numbers $\{a_n\}$ and $\{b_n\}$, we write $a_n\lesssim b_n$ or $b_n\gtrsim a_n$ if $\limsup_{n\rightarrow\infty}a_n/b_n\leqslant c_0$ for some   constant $c_0>0$. For a real-valued random variable $\xi$, we define
$
\|\xi\|_{\psi_2} = \inf  [\lambda > 0 : \mathbb{E} \{\psi_2  ( {|\xi|}/{\lambda} ) \} \leq 1  ]$, where $\psi_2(x)= \exp(x^{2})-1$ for any $x>0$.  
For a  $p$-dimensional vector $\bv=(v_1,\ldots,v_p)^{\T}$, let  $|\bv|_{\max} = \max_{j\in[p]}|v_j|$.
Denote by $\mathbb{S}^{q-1}$ the $q$-dimensional unit sphere. For any $\mathcal{B}\in \mathcal{S}:=\mathcal{L}_2(\cU^2)$, we denote the supremum norm by $\|\mathcal{B}\|_\infty=\sup_{(u,v)\in\cU^2}|\mathcal{B}(u,v)|$ and the Hilbert--Schmidt norm by $\|\mathcal{B}\|_{\mathcal{S}}=\{\int_{\cU}\int_{\cU}\mathcal{B}(u,v)^2\,{\rm d}u{\rm d}v\}^{1/2}.$
For any $\boldsymbol{\mathcal{B}}=(\mathcal{B}_{jk})_{m\times n}$ with each $\mathcal{B}_{jk}\in\mathcal{S}$, we denote its elementwise maximum norm by
%functional version of $L_\infty$-norm by
$\|\boldsymbol{\mathcal{B}}\|_{\infty,\max} = \max_{j\in[m],k\in[n]}\|\mathcal{B}_{jk}\|_{\infty}$. For a symmetric matrix $\bB$, we denote its smallest and largest eigenvalues by $\lambda_{\min}(\bB)$ and $\lambda_{\max}(\bB)$, respectively.

\section{A general framework for testing functional white noise in high dimensions}\label{sec.error}
\subsection{Methodology}
We first present a general framework for white noise testing in high-dimensional functional time series.  Let $\{\bvep_t(\cdot)\}_{t=1}^{n}$ in (\ref{data}) denote a sequence of $p$-dimensional weakly stationary functional time series with mean zero and 
each $\vep_{t,j}(\cdot) \in {\cal L}_2(\cU)$. Consider the following hypothesis testing problem:
\begin{align*}%\label{eq:test}
H_0: \{\bvep_t(\cdot)\} \text{ is white noise } \quad\text{versus} \quad H_1: \{\bvep_t(\cdot)\} \text{ is not white noise}\,,
\end{align*}
which tests whether the observations are serially uncorrelated across time while allowing for contemporaneous cross-correlation among the $p$ functional components. In practical applications, $\bvep_t(\cdot)$'s are rarely fully observed. Instead, we can estimate them using appropriate methods, yielding estimated curves $\hat \bvep_t(\cdot) = \{\hat \varepsilon_{t,1}(\cdot), \dots, \hat \varepsilon_{t,p}(\cdot)\}^\T$ that satisfy
\begin{equation}
\label{err.m}
\hat \bvep_{t}(\cdot) = \bvep_{t}(\cdot) + \bdelta_{t}(\cdot)\,,~~~ t \in [n]\,, 
\end{equation}
where $\bdelta_{t}(\cdot)=\{\delta_{t,1}(\cdot),\ldots,\delta_{t,p}(\cdot)\}^{\T}$ represents the fully functional estimation error. Then $\{\hat \bvep_t(\cdot)\}$ can be treated as input data, upon which the white noise test is performed. We next present three common scenarios that satisfy \eqref{err.m}.
\begin{itemize}
    \item When  $\{\bvep_{t}(\cdot)\}$ is fully observed, it holds that $\hat{\bvep}_{t}(\cdot)=\bvep_{t}(\cdot)$.
    \item When $\{\bvep_t(\cdot)\}$ is discretely observed with errors, we can apply nonparametric smoothing methods to the observed data and thus obtain smoothed estimates $\hat{\bvep}_t(\cdot).$ 
    See Section \ref{sec.partobs} for details.
    \item When $\{\bvep_t(\cdot)\}$ represents the functional errors in high-dimensional functional modeling, such as 
    functional factor model and vector functional  autoregressive model, we can assess the adequacy of the fitted model by applying the white noise test to the resulting functional residuals $\hat{\bvep}_t(\cdot)$. See an illustrative example in Section \ref{sec.ffm}.
 \end{itemize}

Notice that under the null hypothesis $H_0$, all cross-autocovariance functions at nonzero lags vanish, i.e.,
$$
\bSigma^{(\ell)}(u,v)=\{\Sigma^{(\ell)}_{jj'}(u,v)\}_{p\times p}=\cov\{\bvep_{t}(u), \bvep_{t+\ell}(v)\} = {\bf 0}_{p\times p}
$$
for all $\ell\neq0$  and $(u,v)\in\cU^2$. Given the inputs $\{\hat{\bvep}_t(\cdot)\}_{t=1}^n$, we can estimate  $\bSigma^{(\ell)}(u,v)$ by its empirical counterpart 
\begin{align*}%\label{eq:sigest}
    \widehat{\bSigma}^{(\ell)}(u,v)=\{\widehat\Sigma^{(\ell)}_{jj'}(u,v)\}_{p\times p} = \frac{1}{n-\ell} \sum_{t=1}^{n-\ell} \{\hat\bvep_{t}(u)-\bar{\hat\bvep}(u)\} \{\hat\bvep_{t+\ell}(v)-\bar{\hat\bvep}(v)\}^{\T}
\end{align*}
with $\bar{\hat\bvep}(u)=\{\bar{\hat\vep}_1(u),\ldots,\bar{\hat\vep}_p(u)\}^{\T}=n^{-1}\sum_{t=1}^{n} \hat\bvep_{t}(u)$. To aggregate information across lags and  component pairs, we define the test statistic
\begin{align*}%\label{eq:teststat}
T_n  = \max_{\ell\in[L]} \sqrt{n}\|\widehat\bSigma^{(\ell)}\|_{\infty,\max}\,, 
\end{align*}
where $L \geq 1$ is a prescribed integer. Then large values of $T_n$ indicate departure from white noise.
To approximate the null distribution of $T_n$, we first construct a Gaussian process analogue. Let the vectorized empirical autocovariances $$\bxi_{n}(u,v)=\sqrt{n} ([ {\rm vec}\{\widehat\bSigma^{(1)}(u,v)\}]^{\T}, \ldots,[{\rm vec}\{\widehat\bSigma^{(L)}(u,v)\}]^{\T} )^{\T}$$ and define a centered Gaussian process $\boldsymbol{\cG}(u,v)=\{\cG_1(u,v),\ldots,\cG_{Lp^2}(u,v)\}^{\T}$ with covariance structure matching that of $\bxi_{n}(u,v)$, i.e., $$\bXi_{n}(u,v,\tilde{u},\tilde{v})=
\cov\{\boldsymbol{\cG}(u,v),\boldsymbol{\cG}(\tilde{u},\tilde{v})\}=\cov\{\bxi_{n}(u,v), \bxi_{n}(\tilde{u},\tilde{v})\}\,.$$
The Gaussian analogue of $T_n$ is then defined as 
\begin{equation}\label{eq:TnG}
T^{\rm G}_{n} = \|\boldsymbol{\cG}\|_{\infty,\max}\,.
\end{equation}
Under some mild conditions, our newly established Gaussian approximation result shows that $\sup_{x\in \mathbb{R}}|\bbP_{H_0}({T}_n\leq x) - \bbP({T}_n^{\rm G}\leq x )| = o(1)$ provided that $\log p\ll n^{c}$ for some universal constant $c>0$. Hence, for a given significance level $\alpha\in(0,1)$, the corresponding critical value for the proposed test statistic $T_n$ can be selected as ${\rm cv}_{\alpha} = \inf\{x>0: \mathbb{P}(T_n^{\rm G}>x)\leq \alpha\}$.

Since $\bXi_n(u,v,\tilde{u},\tilde{v})$ is unknown, we can estimate it via a kernel smoother applied to the process $\{\bfeta_t(u,v)\}_{t=1}^{n-L}$ with
\begin{align*}
{\bfeta}_t(u,v)
&= \{({\rm vec} [\{\hat\bvep_t(u)-\bar{\hat\bvep}(u)\}\{\hat{\bvep}_{t+1}(v)-\bar{\hat{\bvep}}(v)\}^{\T}])^{\T}, \\
&~~~~~~~~~~~~~~~~~~\ldots,
({\rm vec} [\{\hat\bvep_t(u)-\bar{\hat{\bvep}}(u)\}\{\hat\bvep_{t+L}(v)-\bar{\hat{\bvep}}(v)\}^{\T}] )^{\T}\}^{\T}\,.
\end{align*}
 Denote the resulting long-run covariance estimator \citep{Andrews1991} by
\begin{align*}
%\label{est.F}
\bXi^{*}_{n}(u,v,\tilde{u},\tilde{v})=\sum_{i=-(n-L)+1}^{{n-L}-1} \mathcal{W}\left(\frac{i}{b_{n}}\right) {\bH}^{*}_i(u,v,\tilde{u},\tilde{v}) \,,
\end{align*}
where $\mathcal{W}(\cdot)$ is a symmetric kernel function that is continuous at 0 with $\mathcal{W}(0)=1$, $b_{n}>0$ is the bandwidth diverging with $n$, and  
\begin{align*}
   {\bH}^{*}_i(u,v,\tilde u,\tilde v)
=
\frac{1}{n-L}
\sum_{t=\max(1,1-i)}^{\min(n-L,n-L-i)}
\{\bfeta_{t+i}(u,v)-\bar{\bfeta}(u,v)\}
\{\bfeta_t(\tilde u,\tilde v)-\bar{\bfeta}(\tilde u,\tilde v)\}^{\T}
\end{align*} 
with $\bar{{\bfeta}}(u,v)=(n-L)^{-1}\sum_{t=1}^{{n-L}}{\bfeta}_{t}(u,v)$. Conditionally on $\cD_n=\{\hat{\bvep}_t(\cdot) \}_{t=1}^n$,  we can define a centered Gaussian process ${\boldsymbol\cG}^*(u,v) = \{{\rm g}^*_1(u,v),\ldots,{\rm g}^*_{Lp^2}(u,v)\}^{\T}$ with covariance function $\bXi^{*}_{n}(u,v,\tilde{u},\tilde{v})$. Let
\begin{align*}%\label{eq:TnGstar}
{T}_{n}^{\rm G*}= \|\boldsymbol{\cG}^*\|_{\infty,\max}\,. 
\end{align*}
Our theoretical analysis indicates  that the distribution of $T_n^{\rm G}$ can be approximated by the conditional distribution of $T_n^{\rm G*}$ given $\cD_n$, which means
 the critical value ${\rm cv}_\alpha$ can be approximated by 
$$\hat{\rm cv}_{\alpha} = \inf\{x>0:\mathbb{P}({T}_{n}^{\rm G*}> x\,|\,\cD_n)\leq \alpha\}\,. $$ 
Therefore, for a given significance level $\alpha\in(0,1)$, we reject the null hypothesis $H_0$ if   $T_{n}>\hat{\rm cv}_{\alpha}$. Theorems \ref{thm.Tn.H0} and \ref{thm.Tn.H1} in Section \ref{sec.theory} provide the validity of our proposed testing procedure, which allow the dimension $p$ to diverge exponentially with the sample size $n$.

To implement our proposed testing procedure, we need to generate a $(Lp^2)$-variate Gaussian process $\boldsymbol{\cG}^*$. 
Directly generating a realization of the Gaussian process
$\boldsymbol\cG^*$ from $\bXi_n^*$ is, however, computationally
prohibitive when $p$ is large. Specifically, one possible
approach relies on a multivariate Karhunen-Lo\`eve expansion \citep{HappGreven2018} from
$\bXi_n^*$, which requires solving a high-dimensional operator-based eigenproblem  and storing the resulting eigenfunctions (i.e., function-valued vectors with $Lp^2$ components). Alternatively, one may first evaluate $\bXi_n^*$ at $K$ grid 
points in $\cU^2$ to construct an
$(Lp^2K)\times(Lp^2K)$ covariance matrix. A Cholesky decomposition of this matrix is then computed and used to generate the discretized realization of $\boldsymbol\cG^*$ at these evaluation points \citep{PorcuEtAl2016}. Consequently, both approaches
incur rapidly increasing computational and memory costs as $p$ grows.
To circumvent this
difficulty,  
 we suggest a novel parametric bootstrap procedure to obtain ${\boldsymbol\cG}^*$. Specifically, generate a Gaussian random vector $\boldsymbol\varrho =( \varrho_1 ,\dots,\varrho_{{n-L}})^{\T} \sim \cN({\bf 0},\bXi)$ independent of $\cD_n$, where $\bXi=(\Xi_{ij})_{[n-L]\times [n-L]}$  with  $\Xi_{ij}= \mathcal{W}\{(i-j)/{b_{n}}\}$, and define 
\begin{align}\label{eq:cG}
    {\boldsymbol\cG}^*(u,v) = \frac{1}{\sqrt{n-L}}\sum_{t=1}^{{n-L}}\varrho_t \{{\bfeta}_{t}(u,v)-\bar{{\bfeta}}(u,v)\}\,.
\end{align}
It is easy to see that $\mathbb{E}\{ {\boldsymbol\cG}^*(u,v) {\boldsymbol\cG}^{*}(\tilde u,\tilde v)^{\T} \,|\,\cD_n\} =\bXi^*_{n}(u,v,\tilde u,\tilde v)$. As noted by \cite{Andrews1991}, if the kernel function $\mathcal{W}(\cdot)$  satisfies   $\int_{-\infty}^{\infty}\mathcal W(x)e^{-\sqrt{-1}\lambda x}\,{\rm d}x \ge 0$ for any $\lambda \in \mathbb{R}$, such defined matrix $\bXi$ is then positive semi-definite.  Widely used kernels,
including the Bartlett, Parzen, and Quadratic Spectral kernels, all satisfy this
condition. See Section \ref{sec.simu} for the explicit forms of these kernels.

\subsection{Theoretical properties}\label{sec.theory}
We impose the following conditions for the asymptotic analysis of the test statistic $T_n$.

\setcounter{Condition}{0} 
\renewcommand{\theCondition}{C\arabic{Condition}}
\renewcommand{\theHCondition}{C.\arabic{Condition}}

\begin{Condition}\label{c.alpha}
Assume that $\{\bvep_{t}(\cdot)\}$ is $\alpha$-mixing with the mixing coefficient
\begin{align*}
\alpha(m)\equiv\sup_{t}\sup_{A\in\mathcal{F}_{-\infty}^{t},B\in\mathcal{F}_{t+m}^{\infty}}|\mathbb{P}(A \cap B)-\mathbb{P}(A)\mathbb{P}(B)| \,,~~~ m\geq 1  \,,
\end{align*}
where $\mathcal{F}_{-\infty}^{t}$ and $\mathcal{F}_{t+m}^{\infty}$ are the $\sigma$-fields generated, respectively, by $\{\bvep_s(\cdot)\}_{s\leq t}$ and $\{\bvep_s(\cdot)\}_{s\geq t+m}$. Furthermore, there exist two  universal constants $C_1>1$ and $C_2>0$ %independent of $(n,p,L,\cU)$ 
such that $\alpha(m)\leq C_1\exp(-C_2m)$ for all $m\geq 1$.
\end{Condition}

Consider the oracle counterpart $\tilde{\bxi}_{n}(u,v)$ of $\bxi_n(u,v)$ defined as 
\begin{align}\label{tilde.xi.def}
    \tilde{\bxi}_{n}(u,v)=\sqrt{n} ([ {\rm vec}\{\widetilde\bSigma^{(1)}_{\vep\vep}(u,v)\}]^{\T}, \ldots,[{\rm vec}\{\widetilde\bSigma^{(L)}_{\vep\vep}(u,v)\}]^{\T} )^{\T}\,,
\end{align} 
where 
\begin{align}\label{tilde.Sigma.def}
    \widetilde{\bSigma}_{\vep\vep}^{(\ell)}(u,v)&=\frac{1}{n-\ell}\sum_{t=1}^{n-\ell} \{\bvep_{t}(u)-\bar{\bvep}(u)\} \{\bvep_{t+\ell}(v)-\bar{\bvep}(v)\}^{\T}\,.
\end{align}

\begin{Condition}\label{c.bvar} 
Write $\tilde{\bxi}_n(u,v)=\{\tilde{\xi}_{n,1}(u,v),\ldots,\tilde{\xi}_{n,Lp^2}(u,v)\}^{\T}$. There exists a universal constant $C_3>0$ such that $\var\{ \tilde{\xi}_{n,r}(u,v) \}\geq C_{3}$ for any $r\in[Lp^2]$ and $(u,v)\in\cU^2$. 
\end{Condition}

\begin{Condition}\label{c.subgaussian}
For each $t\in[n]$ and $j\in[p]$, the process
$\{\varepsilon_{t,j}(u):u\in\mathcal U\}$ admits a separable
modification with respect to the Euclidean metric on $\mathcal U$.
Moreover, 
{\rm(i)} there exist two universal constants $C_4>0$ and  $\kappa\in(0,1]$ such that $\max_{t\in[n]}\max_{j\in[p]}\|\varepsilon_{t,j}(u) - \varepsilon_{t,j}(v)\|_{\psi_2} \leq C_4 |u - v|^\kappa$
for all $u, v \in \mathcal{U}$; 
{\rm(ii)} there exists some universal constant $C_5>0$ such that  $\max_{t\in[n]}\max_{j\in[p]}\|\varepsilon_{t,j}(u_0)\|_{\psi_2} \leq C_5$ for some fixed point $u_0 \in \mathcal{U}$.
\end{Condition}

\begin{Condition}\label{c.kernelF}
The symmetric kernel function  $\mathcal{W}(\cdot):\mathbb{R}\rightarrow [-1,1]$ is continuously differentiable with bounded derivatives on $\mathbb{R}$. Moreover, it satisfies {\rm(i)} $\mathcal{W}(0)=1$, and {\rm(ii)}
$|\mathcal{W}(x)|\lesssim |x|^{-\vartheta}$ as $|x|\rightarrow \infty$ for some universal constant $\vartheta>1$. 
%Let the bandwidth  $b_{n}\asymp n^{\rho}$ for some constant $\rho$, where $0<\rho<(\vartheta-1)/(3\vartheta-2)$.
\end{Condition}

\begin{remark}
The $\alpha$-mixing assumption in Condition {\rm\ref{c.alpha}} can be relaxed to allow $\alpha(m)\leq C_1\exp(-C_2m^{\tau})$ for $m\geq 1$ with some constant $\tau\in(0,1]$. Our theoretical results remain valid for any $\tau\in(0,1]$, and we set $\tau=1$ only to simplify the presentation of technical proofs.
Condition {\rm\ref{c.bvar}} restricts the variances of $\tilde{\xi}_{n,r}(u,v)$'s to be uniformly bounded away from zero, which is required for applying the Nazarov's inequality \citep[{\rm Lemma A.1}]{Cher2017}. 
In practice, Condition~{\rm\ref{c.bvar}} can always be satisfied by considering a
slightly perturbed process
$\{\bvep_{t}(\cdot)+\bz_{t}\}$, where $\{\bz_t\}$ are mutually  independent
mean-zero random vectors with covariance matrix $\sigma_z^2\bI_p$ for some
sufficiently small constant $\sigma_z^2>0$, and are independent of
$\{\bvep_t(\cdot)\}$. Therefore, we have $\cov\{\bvep_{t}(u)+\bz_{t},\bvep_{t+\ell}(v)+ \bz_{t+\ell} \}=\cov\{\bvep_{t}(u),\bvep_{t+\ell}(v) \}$ for   any $\ell\neq0$,   
%the nonzero-lagged autocovariance structure is preserved,
and testing whether
$\{\bvep_{t}(\cdot)\}$ is white noise is equivalent to testing whether
$\{\bvep_{t}(\cdot)+\bz_{t}\}$ is white noise.
The separability requirement in Condition~{\rm\ref{c.subgaussian}} is imposed to facilitate the application of chaining arguments in our theoretical analysis.
In addition, Condition {\rm\ref{c.subgaussian}(i)} requires that the increments of $\{\vep_{t,j}(\cdot)\}_{t\in[n],j\in[p]}$ satisfy a sub-Gaussian property \citep[{\rm Section 8.1}]{Ver2018}.
In particular, for Gaussian processes $\{\varepsilon_{t,j}(\cdot)\}$, its increments satisfy $\|\varepsilon_{t,j}(u) - \varepsilon_{t,j}(v)\|_{\psi_2} \lesssim {d}_{t,j}(u,v),$ where ${d}_{t,j}(u,v) \equiv (\mathbb{E}[\{\vep_{t,j}(u)-\vep_{t,j}(v)\}^{2}])^{1/2}$ is the associated metric. 
Furthermore, if ${d}_{t,j}(u,v)\lesssim |u-v|^{\kappa}$ for some universal constant $ \kappa\in(0,1]$ across all $t,j$ and $u,v$, as in \cite{Xiao2009} and \cite{MeerschaertWangXiao2013}, then Condition {\rm\ref{c.subgaussian}(i)} is satisfied automatically. 
Specially, if $\{\vep_{t,j}(\cdot)\}$ is standard Brownian motion, $d_{t,j}(u,v)= |u-v|^{1/2}$ and $\kappa=1/2$. 
Moreover, Conditions {\rm\ref{c.subgaussian}(i) and \ref{c.subgaussian}(ii)} imply that $\varepsilon_{t,j}(u)$ is a sub-Gaussian random variable for any $t,j$ and $u$.
Condition {\rm\ref{c.kernelF}} is standard in the literature on nonparametric estimation of the long-run covariance \cite[]{Andrews1991,ChangQiuYaoZou2018}. 
% The nonnegative-definiteness requirement in Condition {\rm\ref{c.kernelF}}  guarantees that
% $\bXi=\{\mathcal W((i-j)/b_n)\}_{[n-L],[n-L]}$ is positive semidefinite and hence
% defines a valid covariance matrix for the Gaussian multipliers.  
 For kernel function with bounded support, such as the Parzen kernel and the Bartlett kernel, we have $\vartheta=\infty$ in Condition {\rm\ref{c.kernelF}}. 
\end{remark}

Let $\tilde{T}_n$ and $\tilde{T}_n^{{\rm G}*}$ denote the oracle counterparts of  ${T}_n$ and ${T}_n^{{\rm G}*}$, respectively, obtained  by replacing  $\hat{\bvep}_t(\cdot)$ with   $\bvep_t(\cdot)$. More specifically, 
\begin{align}\label{eq:tildeTn}
    \tilde{T}_n  = \max_{\ell\in[L]} \sqrt{n}\|\widetilde\bSigma^{(\ell)}_{\vep\vep}\|_{\infty,\max}\,,
\end{align}
where $\widetilde\bSigma^{(\ell)}_{\vep\vep}$ is defined in \eqref{tilde.Sigma.def}. Write
\begin{align*} 
\tilde{\boldsymbol{\cG}}^*_{\vep\vep}(u,v) =\frac{1}{\sqrt{n-L}}\sum_{t=1}^{{n-L}}\varrho_t \{\tilde{\bfeta}_t^{\vep\vep}(u,v)-\bar{\tilde{\bfeta}}^{\vep\vep}(u,v)\}\,,
\end{align*}
where $\{\varrho_t\}_{t=1}^{n-L}$ is constructed as in \eqref{eq:cG},
\begin{align*}
\tilde{\bfeta}_t^{\vep\vep}(u,v)
&= \{({\rm vec} [\{\bvep_t(u)-\bar{\bvep}(u)\}\{{\bvep}_{t+1}(v)-\bar{{\bvep}}(v)\}^{\T}])^{\T}, \\
&~~~~~~~~~~~~~~~~~~\ldots,
({\rm vec} [\{\bvep_t(u)-\bar{{\bvep}}(u)\}\{\bvep_{t+L}(v)-\bar{{\bvep}}(v)\}^{\T}] )^{\T}\}^{\T} \,,
\end{align*} 
and $\bar{\tilde{\bfeta}}^{\vep\vep}(u,v)= ({n-L})^{-1}\sum_{t=1}^{{n-L}}\tilde{\bfeta}_t^{\vep\vep}(u,v)$.  
We  then  define 
\begin{align*}
     \tilde{T}_n^{{\rm G}*}=  \|\tilde{\boldsymbol{\cG}}^*_{\vep\vep}\|_{\infty,\max}\,.
\end{align*} 
Based on \eqref{err.m}, in order to control  
$|T_n-\tilde{T}_n|$, we need to involve some intermediate statistics
$\widetilde{\bSigma}_{\vep\delta}^{(\ell)}(u,v)$, $\widetilde{\bSigma}_{\delta\vep}^{(\ell)}(u,v)$, and $\widetilde{\bSigma}_{\delta\delta}^{(\ell)}(u,v)$ defined as  
\begin{align*}
     \widetilde{\bSigma}^{(\ell)}_{\vep\delta}(u,v)
 &=    \frac{1}{n-\ell}\sum_{t=1}^{n-\ell}\{\bvep_{t}(u)-\bar{\bvep}(u)\} \{\bdelta_{t+\ell}(v)-\bar{\bdelta}(v)\}^{\T}\,, \\
  \widetilde{\bSigma}^{(\ell)}_{\delta\vep}(u,v)
 &=    \frac{1}{n-\ell}\sum_{t=1}^{n-\ell} \{\bdelta_{t}(u)-\bar{\bdelta}(u)\} \{\bvep_{t+\ell}(v)-\bar{\bvep}(v)\}^{\T}\,,  \\
\widetilde{\bSigma}^{(\ell)}_{\delta\delta}(u,v)
 &= \frac{1}{n-\ell}\sum_{t=1}^{n-\ell} \{\bdelta_{t}(u)-\bar{\bdelta}(u)\} \{\bdelta_{t+\ell}(v)-\bar{\bdelta}(v)\}^{\T}\,. 
\end{align*} 
 %constructed in the same manner as $\widetilde{\bSigma}_{\vep\vep}^{(\ell)}(u,v)$. For conciseness, their explicit forms  are deferred to the supplementary material; see \eqref{eq:def.sigmadv}. 
Similarly, to study the discrepancy between $T_n^{\rm G*}$ and $\tilde{T}_n^{\rm G*}$,
we also need to consider the bootstrap counterparts $\tilde{\boldsymbol{\cG}}^*_{\vep\delta}(u,v)$, $\tilde{\boldsymbol{\cG}}^*_{\delta\vep}(u,v)$, and  $\tilde{\boldsymbol{\cG}}^*_{\delta\delta}(u,v)$ defined as
\begin{align*}
  \tilde{\boldsymbol{\cG}}^*_{\vep\delta}(u,v) &= \frac{1}{\sqrt{n-L}}\sum_{t=1}^{{n-L}} \varrho_t \{ \tilde{\bfeta}^{\vep\delta}_t(u,v)  -\bar{\tilde{\bfeta}} ^{\vep\delta}(u,v)\}  \,, \\
  \tilde{\boldsymbol{\cG}}^*_{\delta\vep}(u,v) &= \frac{1}{\sqrt{n-L}}\sum_{t=1}^{{n-L}} \varrho_t \{ \tilde{\bfeta}^{\delta\vep}_t(u,v)  -\bar{\tilde{\bfeta}} ^{\delta\vep}(u,v)\}     \,, \\
  \tilde{\boldsymbol{\cG}}^*_{\delta\delta}(u,v) &=\frac{1}{\sqrt{n-L}}\sum_{t=1}^{{n-L}} \varrho_t \{ \tilde{\bfeta}^{\delta\delta}_t(u,v)  -\bar{\tilde{\bfeta}} ^{\delta\delta}(u,v)\}   \,, 
\end{align*}
where $\bar{\tilde{\bfeta}}^{\vep\delta}(u,v)=(n-L)^{-1}\sum_{t=1}^{n-L}\tilde{\bfeta}_t^{\vep\delta}(u,v)$, $\bar{\tilde{\bfeta}}^{\delta\vep}(u,v)=(n-L)^{-1}\sum_{t=1}^{n-L}\tilde{\bfeta}_t^{\delta\vep}(u,v)$, and $\bar{\tilde{\bfeta}}^{\delta\delta}(u,v)= (n-L)^{-1}\sum_{t=1}^{n-L}\tilde{\bfeta}_t^{\delta\delta}(u,v)$ with 
\begin{align*}
    \tilde{\bfeta}_t^{\vep\delta}(u,v)
= &~\{({\rm vec} [\{\bvep_t(u)-\bar{\bvep}(u)\}\{{\bdelta}_{t+1}(v)-\bar{{\bdelta}}(v)\}^{\T}])^{\T},\\
&~~~~~\ldots, 
({\rm vec} [\{\bvep_t(u)-\bar\bvep(u)\}\{\bdelta_{t+L}(v)-\bar\bdelta(v)\}^{\T}] )^{\T}\}^{\T}\,,\\
\tilde{\bfeta}_t^{\delta\vep}(u,v)
=&~ \{({\rm vec} [\{\bdelta_t(u)-\bar{\bdelta}(u)\}\{{\bvep}_{t+1}(v)-\bar{{\bvep}}(v)\}^{\T}])^{\T},\\
&~~~~~ \ldots, 
({\rm vec} [\{\bdelta_t(u)-\bar\bdelta(u)\}\{\bvep_{t+L}(v)-\bar\bvep(v)\}^{\T}] )^{\T}\}^{\T}\,,\\
\tilde{\bfeta}_t^{\delta\delta}(u,v)
=&~ \{({\rm vec} [\{\bdelta_t(u)-\bar{\bdelta}(u)\}\{{\bdelta}_{t+1}(v)-\bar{{\bdelta}}(v)\}^{\T}])^{\T},\\
&~~~~~ \ldots, 
({\rm vec} [\{\bdelta_t(u)-\bar\bdelta(u)\}\{\bdelta_{t+L}(v)-\bar\bdelta(v)\}^{\T}] )^{\T}\}^{\T} \,.
\end{align*}   
%analogously to $\tilde{\boldsymbol{\cG}}^*_{\vep\vep}(u,v)$. See \eqref{eq:def.cgdv} in the supplementary material for details. 
In addition, we define
\begin{align*}
    \Delta_{\widetilde{\bSigma}}^{\varepsilon\delta}:=&~ \max_{\ell\in [L]} 
\{\|\widetilde{\bSigma}_{\varepsilon\delta}^{(\ell)}\|_{\infty,\max} + \|\widetilde{\bSigma}_{\delta\vep}^{(\ell)}\|_{\infty,\max} +  \|\widetilde\bSigma_{\delta\delta}^{(\ell)} \|_{\infty,\max} \}\,,\\
\Delta_{\tilde{\boldsymbol{\cG}}}^{\varepsilon\delta}:= &~\|\tilde{\boldsymbol{\cG}}_{\varepsilon\delta}^{*}\|_{\infty,\max}+\|\tilde{\boldsymbol{\cG}}_{\delta\vep}^{*}\|_{\infty,\max}+\|\tilde{\boldsymbol{\cG}}_{\delta\delta}^{*}\|_{\infty,\max}    \,.
\end{align*}
% {\color{red} Consequently, $|T_n-\tilde{T}_n| \leq n^{1/2}\Delta_{\widetilde{\bSigma}}^{\varepsilon\delta}$ and $|T_n^{\rm G*}-\tilde{T}_n^{\rm G*}|\leq \Delta_{\tilde{\boldsymbol{\cG}}}^{\varepsilon\delta}$. The high-level requirement stated in Condition~{\rm\ref{c.rate}} below allows us to establish the theoretical guarantees for the proposed
% method. See Theorems~\ref{thm.Tn.H0} and~\ref{thm.Tn.H1} for details. 
% }

% Before presenting the asymptotic properties of the test statistic $T_n$ under the null and alternative hypotheses, we impose the following high-level conditions on $\Delta_{\widetilde{\bSigma}}^{\varepsilon\delta}$ and $\Delta_{\tilde{\boldsymbol{\cG}}}^{\varepsilon\delta}$. 

\begin{Condition}\label{c.rate}
There exist some universal constants $\gamma_1 > 1/2$ and $\gamma_2,\gamma_3 >0$ such that $\Delta_{\widetilde{\bSigma}}^{\varepsilon\delta}=O_{\rm p}\{n^{-\gamma_1}(\log p)^{\gamma_2}\}$ and $ \Delta_{\tilde{\boldsymbol{\cG}}}^{\varepsilon\delta} =  {O}_{{\rm p}}\{n^{-\gamma_3}(\log p)^{\gamma_2}\}$.
\end{Condition}

\begin{remark}
To investigate the asymptotic behavior of our proposed test statistic $T_n$, it suffices to study the theoretical properties of $\tilde{T}_n$ specified in \eqref{eq:tildeTn}, provided that the impact of the estimation errors $\bdelta_t(\cdot)$'s on $T_n$ is properly controlled.
To this end, the high-level Condition {\rm\ref{c.rate}} is imposed to ensure that the discrepancies between $T_n$ and $\tilde{T}_n$, as well as between their  bootstrap analogues $T_n^{\rm G*}$ and $\tilde{T}_n^{\rm G*}$, are asymptotically negligible.
This condition is automatically satisfied for fully observed functional time series with $\bdelta_t(\cdot)={\bf 0}$. 
By contrast, for discretely observed functional time series and for functional  residuals obtained from   fitted    models, it needs to be verified  under specific settings. In Section {\rm\ref{sec.app}}, we will verify Condition {\rm\ref{c.rate}} for two applications by specifying suitable choices of the constants $(\gamma_1,\gamma_2,\gamma_3)$.  See Remarks {\rm \ref{remark.app1}} and {\rm\ref{remark.app2}} for details.
\end{remark}

\begin{theorem}
\label{thm.Tn.H0}
 Let $p\geq n^{\upsilon}$ and $b_n\asymp n^{\rho}$ for some universal constants $\upsilon>0$ and $ \rho\in (0,1)$.  Under  Conditions {\rm\ref{c.alpha}--\ref{c.rate}}, if $\rho<(\vartheta-1)/(3\vartheta-2)$ and $\log  p \ll n^{\iota}$  for some constant $\iota>0$ depending only on $(\rho,\vartheta, \gamma_1,\gamma_2,\gamma_3)$, it then holds  that
$
\mathbb{P}_{{  H_0}}({T}_{n} >\hat{\mathrm{cv}}_{\alpha})\rightarrow \alpha$ as $n\rightarrow\infty$. 
\end{theorem}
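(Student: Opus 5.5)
The plan is to reduce everything to the oracle quantities $\tilde T_n$ and $\tilde T_n^{\rm G*}$ and then run a three-step argument: Gaussian approximation, Gaussian comparison, and anti-concentration.

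\emph{Step 1 (removing the estimation error).} Expand $\hat\bvep_t=\bvep_t+\bdelta_t$ inside $\widehat\bSigma^{(\ell)}$ and $\bfeta_t$. This gives
\[
\widehat\bSigma^{(\ell)}=\widetilde\bSigma^{(\ell)}_{\vep\vep}+\widetilde\bSigma^{(\ell)}_{\vep\delta}+\widetilde\bSigma^{(\ell)}_{\delta\vep}+\widetilde\bSigma^{(\ell)}_{\delta\delta}
\quad\text{and}\quad
\boldsymbol\cG^*=\tilde{\boldsymbol\cG}^*_{\vep\vep}+\tilde{\boldsymbol\cG}^*_{\vep\delta}+\tilde{\boldsymbol\cG}^*_{\delta\vep}+\tilde{\boldsymbol\cG}^*_{\delta\delta}.
\]
By the triangle inequality, $|T_n-\tilde T_n|\le \sqrt n\,\Delta^{\vep\delta}_{\widetilde\bSigma}$ and $|T_n^{\rm G*}-\tilde T_n^{\rm G*}|\le\Delta^{\vep\delta}_{\tilde{\boldsymbol\cG}}$. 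Condition \ref{c.rate} then makes these $O_{\rm p}\{n^{1/2-\gamma_1}(\log p)^{\gamma_2}\}$ and $O_{\rm p}\{n^{-\gamma_3}(\log p)^{\gamma_2}\}$, with $\gamma_1>1/2$. Choose $\iota$ small enough that both bounds are $o_{\rm p}\{(\log p)^{-1/2}\}$. By anti-concentration (next step), perturbations of this size change the probabilities only by $o(1)$. Hence it suffices to prove
\[
\sup_x|\mathbb P(\tilde T_n\le x)-\mathbb P(T_n^{\rm G}\le x)|=o(1)
\quad\text{and}\quad
\sup_x|\mathbb P(\tilde T_n^{\rm G*}\le x\mid\cD_n)-\mathbb P(T_n^{\rm G}\le x)|=o_{\rm p}(1),
\]
where $T_n^{\rm G}$ is built from the oracle covariance of $\tilde\bxi_n$. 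Once these hold, the usual quantile argument gives $\hat{\rm cv}_\alpha$ within $o_{\rm p}\{(\log p)^{-1/2}\}$ of the quantiles of $T_n^{\rm G}$ at levels $\alpha\pm o(1)$, so $\mathbb P_{H_0}(T_n>\hat{\rm cv}_\alpha)\to\alpha$.

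\emph{Step 2 (discretization and Gaussian approximation).} Under $H_0$, $\sqrt n\,\widetilde\bSigma^{(\ell)}_{\vep\vep}$ equals $n^{-1/2}\sum_t\bvep_t(u)\bvep_{t+\ell}(v)^{\T}$ plus mean-correction terms. The correction terms are products of two $O_{\rm p}(\sqrt{\log p/n})$ sup-norm quantities, so they are negligible. Each entry of the main term is a mean-zero sum of the $\alpha$-mixing (Condition \ref{c.alpha}) sub-exponential products $\vep_{t,j}(u)\vep_{t+\ell,j'}(v)$.

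Take a grid $\cU^2_\epsilon$ with mesh $\epsilon=n^{-c}$. Condition \ref{c.subgaussian}, together with a chaining / Dudley-type bound for the sub-Gaussian increments, controls $\sup_{|(u,v)-(\tilde u,\tilde v)|\le\epsilon}$ of both $\tilde\bxi_n$ and $\boldsymbol\cG$. This oscillation is of order $\epsilon^{\kappa}\,\mathrm{polylog}(pn)$ uniformly over the $Lp^2$ coordinates, and it is $o\{(\log p)^{-1/2}\}$. On the finite grid the statistic becomes the maximum of a vector of dimension $Lp^2|\cU^2_\epsilon|$, which is of log-order $\log p+\log n$.

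Apply the high-dimensional Gaussian approximation for $\alpha$-mixing sums via the big-block/small-block argument of \cite{ChangChenWu2023} or \cite{Zhang2018}. This yields a Kolmogorov-distance bound $n^{-c'}\,\mathrm{polylog}(p)$. Nazarov's inequality, using the variance lower bound in Condition \ref{c.bvar}, supplies the anti-concentration needed both here and in Step 1.

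\emph{Step 3 (bootstrap covariance consistency).} Conditionally on $\cD_n$, $\tilde{\boldsymbol\cG}^*_{\vep\vep}$ is Gaussian with covariance $\tilde\bXi^*_n$, which is the kernel long-run covariance estimator of $\tilde\bfeta^{\vep\vep}_t$. By the Gaussian comparison inequality of \cite{Cher2013}, restricted to the grid and then extended by chaining, it suffices to show
\[
\sup\|\tilde\bXi^*_n-\bXi_n\|_{\infty,\max}=O_{\rm p}(n^{-c''})
\]
with a polylog$(p)$ factor, where the supremum runs over all coordinates and over $(u,v,\tilde u,\tilde v)$. Split this error into two parts:
\begin{itemize}
\item \emph{Bias.} Using $\mathcal W(0)=1$, the differentiability and tail decay of $\mathcal W$ in Condition \ref{c.kernelF}, and geometric mixing, the bias is of order $b_n^{-1}$ plus a truncation tail.
\item \emph{Stochastic part.} A Bernstein inequality for mixing sequences, applied to the fourth-order products and combined with a union bound over the grid, bounds the stochastic part by roughly $b_n\sqrt{\log(pn)/n}$, with extra polylog factors from truncating sub-exponential products. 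Lipschitz-in-$(u,v)$ control again comes from Condition \ref{c.subgaussian}.
\end{itemize}
Balancing the bias and stochastic terms is what forces $\rho<(\vartheta-1)/(3\vartheta-2)$.

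I expect this uniform long-run covariance consistency to be the main obstacle. It involves fourth-moment processes indexed by a four-dimensional functional domain and by $Lp^2$ coordinate pairs, and the tail of $\mathcal W$ for unbounded-support kernels must be handled carefully. The bookkeeping of polynomial rates is what determines $\iota$.
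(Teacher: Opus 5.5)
Your architecture matches the paper's: you remove $\bdelta_t$ via Condition~\ref{c.rate} plus anti-concentration, discretize and apply the Chang--Chen--Wu Gaussian approximation for $\alpha$-mixing sums on the grid, prove max-norm consistency of the kernel long-run covariance with a Gaussian comparison, and then invert quantiles. There are two problems, one substantive and one technical.

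\textbf{Step~3 rate accounting.} The rates you write (bias $b_n^{-1}$, stochastic error $b_n\{\log(pn)/n\}^{1/2}$) are both $o(1)$ for every $\rho<1/2$. So they cannot be what ``forces'' $\rho<(\vartheta-1)/(3\vartheta-2)$. Moreover, the stochastic bound is not what a mixing Bernstein inequality delivers.
\begin{itemize}
\item For the uncentered products, $t\mapsto\eta_{t,k_1}\eta_{t-i,k_2}$ is a function of $\bvep_{t-i},\ldots,\bvep_{t+L}$. Its mixing coefficients are therefore only $\alpha(|m-i-L|_+)$.
\item Lemma~\ref{tail_chang} with $c_n=i+L$ then gives concentration of order $\{(1+i)\log p/n\}^{1/2}$ for the lag-$i$ sample autocovariance of $\bfeta_t$, not $\{\log p/n\}^{1/2}$ uniformly in $i$.
\item You must therefore truncate the lag sum at some $M_{1n}$. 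The discarded lags cost $\sum_{i>M_{1n}}|\mathcal W(i/b_n)|\lesssim n^{\vartheta\rho}M_{1n}^{1-\vartheta}$ (times a polylog), which forces $M_{1n}\gg n^{\rho\vartheta/(\vartheta-1)}$.
\item The retained lags cost $b_n\{M_{1n}\log p/n\}^{1/2}$, which forces $M_{1n}\ll n^{1-2\rho}$.
\end{itemize}
The two requirements are compatible if and only if $\rho(3\vartheta-2)<\vartheta-1$, and that is where the hypothesis comes from. For bounded-support kernels ($\vartheta=\infty$) it reduces to $\rho<1/3$, reflecting the $b_n^{1/2}$ loss from lags up to $b_n$. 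The paper takes the main term from Theorem~11 of Chang, Chen and Wu (2024) and runs exactly this truncation, with $c_n=i+\ell_1$, for the centering cross-terms such as $H_{1,2}$. Once you redo the accounting this way, Step~3 holds under the stated assumption.

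\textbf{Conditioning on $\cD_n$.} Given $\cD_n=\{\hat\bvep_t(\cdot)\}$, the oracle process $\tilde{\boldsymbol\cG}^*_{\vep\vep}$ is \emph{not} Gaussian with covariance $\tilde\bXi^*_n$. That covariance depends on the unobserved $\bvep_t(\cdot)$, which is not $\cD_n$-measurable, so conditionally on $\cD_n$ the process is a Gaussian mixture. Fix this as the paper does:
\begin{itemize}
\item Condition on $\mathcal F_n=\sigma\{\bvep_t(\cdot),\bdelta_t(\cdot):t\in[n]\}\supseteq\sigma(\cD_n)$, with $\boldsymbol\varrho$ independent of $\mathcal F_n$.
\item Establish $A_n:=\sup_x|\bbP(\tilde T_n^{\rm G*}\le x\,|\,\mathcal F_n)-\bbP(T_n^{\rm G}\le x)|=o_{\rm p}(1)$, where $T_n^{\rm G}$ is your oracle Gaussian statistic.
\item Pass to $\cD_n$ through $\bbP(\,\cdot\,|\,\cD_n)=\bbE\{\bbP(\,\cdot\,|\,\mathcal F_n)\,|\,\cD_n\}$ and $\bbE(A_n\,|\,\cD_n)=o_{\rm p}(1)$. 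The latter holds because $0\le A_n\le1$ and $\bbE A_n\to0$.
\end{itemize}
The same device turns the unconditional bound $\bbP(|T_n^{\rm G*}-\tilde T_n^{\rm G*}|>s_1)\to0$ from Condition~\ref{c.rate} into the conditional statement your Step~1 needs.
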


Theorem {\rm\ref{thm.Tn.H0}} establishes the validity  of our proposed test in the sense that it maintains the nominal significance level asymptotically under the null hypothesis. To investigate the asymptotic power of the proposed test, we write 
$\widetilde{\bXi}_{n}(u,v,\tilde{u},\tilde{v}) =\cov\{\tilde\bxi_{n}(u,v), \tilde\bxi_{n}(\tilde{u},\tilde{v})\}$ for 
$\tilde{\bxi}_{n}(u,v)$ specified in \eqref{tilde.xi.def}. Let   $\varrho = \sup_{(u,v)\in\cU^2}{\mathrm{\Psi}}(u,v) $, where $\mathrm{\Psi}(u,v)$ is the largest diagonal element  of $\widetilde{\bXi}_{n}(u,v,u,v)$. Theorem {\rm \ref{thm.Tn.H1}} shows that the proposed test is consistent under local alternatives.

\begin{theorem}\label{thm.Tn.H1}
%Suppose that the conditions of Theorem {\rm \ref{thm.Tn.H0}} hold.
Let the assumptions of Theorem {\rm\ref{thm.Tn.H0}}   hold. Additionally, if  $
\max _{ \ell\in[L]} \|\bSigma^{(\ell)}\|_{\infty,\max}  
\geq \check{C} \varrho^{1/2} n^{-1 / 2}(\log p)^{1/2}
$ for some sufficiently large universal constant $\check{C}>0$, it then holds that 
$
\mathbb{P}_{ H_1}({T}_{n}>\hat{\mathrm{cv}}_{\alpha})\rightarrow 1$ as $n\rightarrow\infty$.
\end{theorem}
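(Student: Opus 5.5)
Under $H_1$ the plan is to show two things. First, the bootstrap critical value $\hat{\rm cv}_\alpha$ is, with probability tending to one, of order at most $C\varrho^{1/2}(\log p)^{1/2}$, up to negligible terms. Second, $T_n$ exceeds a strictly larger multiple of $\varrho^{1/2}(\log p)^{1/2}$ with probability tending to one.

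\textbf{Step 1: lower bound on $T_n$.} Decompose $T_n$ as in the proof of Theorem \ref{thm.Tn.H0}: $|T_n-\tilde T_n|\le \sqrt{n}\,\Delta^{\vep\delta}_{\widetilde\bSigma}$. By Condition \ref{c.rate} this is $O_{\rm p}\{n^{1/2-\gamma_1}(\log p)^{\gamma_2}\}=o_{\rm p}(1)$ under $\log p\ll n^\iota$. Choose $(\ell^*,j,j')$ and $(u^*,v^*)$ attaining (or nearly attaining) $\max_\ell\|\bSigma^{(\ell)}\|_{\infty,\max}$. Then
\[
\tilde T_n\ge \sqrt{n}\,|\Sigma^{(\ell^*)}_{jj'}(u^*,v^*)|-\sqrt{n}\,\|\widetilde\bSigma^{(\ell)}_{\vep\vep}-\bSigma^{(\ell)}\|_{\infty,\max}.
\]
The centered term $\sqrt{n}\|\widetilde\bSigma_{\vep\vep}^{(\ell)}-\bSigma^{(\ell)}\|_{\infty,\max}$ is the supremum of $\tilde\bxi_n-\mathbb{E}\tilde\bxi_n$, whose pointwise variances are bounded by $\varrho$. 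I would bound it by $C'\varrho^{1/2}(\log p)^{1/2}$ with probability tending to one. The tool is the Gaussian approximation for $\tilde T_n$ developed for Theorem \ref{thm.Tn.H0}, which does not use $H_0$ once it is applied to centered summands, combined with a maximal inequality for the Gaussian analogue. That maximal inequality comes from Dudley chaining over $\cU^2$ using the sub-Gaussian Hölder increments of Condition \ref{c.subgaussian}, plus a union bound over $Lp^2$ coordinates, giving $\mathbb{E}\|\boldsymbol\cG\|_{\infty,\max}\lesssim \varrho^{1/2}(\log p)^{1/2}$. Alternatively, a direct Bernstein inequality for $\alpha$-mixing sequences together with chaining gives the same bound with a remainder of order $n^{-1/2}\,{\rm polylog}(p)$, which is negligible. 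With $\check C> C'+C''$ this yields $T_n\ge (\check C-C')\varrho^{1/2}(\log p)^{1/2}-o_{\rm p}(1)$.

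\textbf{Step 2: upper bound on $\hat{\rm cv}_\alpha$.} We have $T_n^{\rm G*}\le \tilde T_n^{\rm G*}+\Delta^{\vep\delta}_{\tilde{\boldsymbol\cG}}$, and the second term is $o_{\rm p}(1)$ by Condition \ref{c.rate}. Conditionally on $\cD_n$, $\tilde{\boldsymbol\cG}^*_{\vep\vep}$ is Gaussian, so Borell--TIS and Dudley give
\[
\hat{\rm cv}_\alpha\le \mathbb{E}(\tilde T_n^{\rm G*}\mid\cD_n)+\{2\log(1/\alpha)\}^{1/2}\sup\{{\rm diagonal\ of\ }\bXi^*_n\}^{1/2}+o_{\rm p}(1).
\]
The key point is that under $H_1$ the summands $\tilde\bfeta_t^{\vep\vep}$ have nonzero mean, but they are recentered by $\bar{\tilde\bfeta}^{\vep\vep}$. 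Hence $\bXi_n^*$ still estimates the long-run covariance of the centered process, and that long-run covariance is $\widetilde\bXi_n$ up to $o(1)$ because $L$ is fixed. I would reuse the covariance-estimation bound from the proof of Theorem \ref{thm.Tn.H0}, namely $\|\bXi^*_n-\widetilde\bXi_n\|_{\infty,\max}=o_{\rm p}(1)$ at polynomial rates under $\rho<(\vartheta-1)/(3\vartheta-2)$, checking that its proof only used the centered structure. This gives conditional variances at most $\varrho+o_{\rm p}(1)$ and an $L_2$ increment metric controlled by Condition \ref{c.subgaussian}. Dudley then yields $\hat{\rm cv}_\alpha\le C''\varrho^{1/2}(\log p)^{1/2}$ with probability tending to one, because $\varrho\ge C_3$ by Condition \ref{c.bvar} and so the $o_{\rm p}(1)$ errors are absorbed.

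\textbf{Conclusion and main obstacle.} Combining the two steps gives $\mathbb{P}_{H_1}(T_n>\hat{\rm cv}_\alpha)\to1$ once $\check C$ is large. The main obstacle is Step 2: verifying that the bootstrap covariance bound, including the uniform-in-$(u,v,\tilde u,\tilde v)$ chaining for $\bXi_n^*$, remains valid when the summands have a nonzero mean under $H_1$. In particular the cross terms from the mean, which are of size $\|\bSigma^{(\ell)}\|_{\infty}$, must cancel after centering up to an error of order $\|\bSigma^{(\ell)}\|_\infty^2\cdot o(1)$. If these mean terms turn out not to cancel fully, I would bound them crudely: a bounded multiple of the mean squared still contributes only $O(1)$ to the variances, and $\varrho\gtrsim1$ absorbs it.
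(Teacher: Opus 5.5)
Your overall architecture matches the paper's: a signal-minus-fluctuation lower bound for $T_n$, a Borell-plus-maximal-inequality upper bound for $\hat{\rm cv}_\alpha$, contamination absorbed through Condition~\ref{c.rate}, and bootstrap variance consistency that survives $H_1$. That consistency survives because the summands are recentered by $\bar{\tilde{\bfeta}}^{\vep\vep}$ and the series is stationary, which is exactly how the proof of \eqref{disSigma} handles the mean, so your cancellation worry resolves in your favour. Your crude fallback would not have worked anyway: uncancelled mean terms enter $\bXi_n^*$ multiplied by $\sum_i|\mathcal W(i/b_n)|\asymp b_n\to\infty$, not by a constant.

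The real gap is in Step~2, and it has two parts.

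First, conditioning. $\tilde{\boldsymbol{\cG}}^*_{\vep\vep}$ and $\Delta^{\vep\delta}_{\tilde{\boldsymbol{\cG}}}$ are built from the latent $\bvep_t,\bdelta_t$, so given $\cD_n$ the bootstrap process is a Gaussian mixture, not a Gaussian. You must apply Borell conditionally on $\mathcal F_n=\sigma\{\bvep_t(\cdot),\bdelta_t(\cdot)\}_{t\in[n]}$ and obtain $\mathbb P(T_n^{\rm G*}>x_0\,|\,\mathcal F_n)\le\alpha/2+o_{\rm p}(1)$. Use $\log(4/\alpha)$ rather than $\log(1/\alpha)$ so the $o_{\rm p}(1)$ terms still leave you below $\alpha$. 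Then average over $\cD_n$.

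Second, and more seriously, the Dudley step is not justified. The conditional increment metric of $\tilde{\boldsymbol{\cG}}^*_{\vep\vep}$ is random. Condition~\ref{c.subgaussian} controls it only up to the factor $b_n\asymp n^\rho$: the best available bound is
$\mathbb E[\{\tilde{\cG}^*_r(u_1,v_1)-\tilde{\cG}^*_r(u_2,v_2)\}^2\,|\,\widetilde{\cD}_n]\lesssim n^{\rho}(|u_1-u_2|^\kappa+|v_1-v_2|^\kappa)$ (Lemma~\ref{lem:diff.Tn.Gstar}). Running Dudley with this metric over all of $\cU^2$ yields only a bound of order $n^{\rho/2}(\log p)^{1/2}$, not $\varrho^{1/2}(\log p)^{1/2}$.

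A bound $\|\bXi^*_n-\widetilde{\bXi}_n\|_{\infty,\max}=o_{\rm p}(1)$ would not fix this by itself, because an additive covariance error does not shrink with the distance between points. Moreover, the paper never proves such a continuum-uniform bound; \eqref{disSigma} is a statement on the grid only.

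The missing idea is a two-scale argument, which is what the paper does. Restrict to $\cU_M^2$ with $M\asymp(np)^{C_M}$. There the bootstrap process is a Gaussian vector of dimension $L\tilde p^2$ with $\log(L\tilde p^2)\lesssim\log p$. The finite maximal inequality on the event $\Phi_4(s_4)$, which comes from \eqref{disSigma}, then gives $\mathbb E(\tilde{T}_n^{\rm dis,G*}\,|\,\mathcal F_n)\le C_0\tilde\varrho^{1/2}(1+s_4)^{1/2}(\log p)^{1/2}$ with no chaining at all. The off-grid error is the event $\Phi_3(s_3)$ with $s_3\asymp n^{-1}(\log p)^{1/2}$, where the fine mesh kills the $n^\rho$ factor.

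Step~1 is correct but does more than needed. Since $T_n\ge\tilde T_n-s_2$ dominates every coordinate at every point, you only need the pointwise fluctuation at the (near-)maximiser, $n^{1/2}|\widetilde\Sigma^{(\ell_0)}_{j_0j_0'}(u_0,v_0)-\Sigma^{(\ell_0)}_{j_0j_0'}(u_0,v_0)|=O_{\rm p}(1)$. This follows from Lemma~\ref{tail_chang} alone.

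The paper then takes $\check C=2\tilde C$, where $\tilde C$ is the constant in $\hat{\rm cv}_\alpha\le s_1+s_3+\varrho^{1/2}\tilde C(\log p)^{1/2}$. It also takes $\epsilon_n\to0$ with $\epsilon_n(\log p)^{1/2}\to\infty$. The resulting margin $\epsilon_n\varrho^{1/2}\tilde C(\log p)^{1/2}\to\infty$ swamps the $O_{\rm p}(1)$ term.

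Your uniform version needs a Gaussian approximation or a chaining concentration bound for $\tilde T_n$ under $H_1$, plus the bookkeeping $\check C>C'+C''$. The paper's Proposition~\ref{pro.Tn.H0}(i) and Lemma~\ref{lemma_tn_dis} are proved only under $H_0$, where the products are mean zero. That is extra work that buys nothing here.
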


\begin{remark}
    Theorems {\rm \ref{thm.Tn.H0}} and {\rm\ref{thm.Tn.H1}} accommodate high-dimensional settings in which $p$ is allowed to grow exponentially with $n$. It is worth noting that the assumption $p \geq n^{\upsilon}$ is fairly mild in the high-dimensional literature, where $\upsilon>0$ can be chosen arbitrarily small. This assumption is not essential for our theoretical analysis and is adopted only to simplify the presentation. In particular, many arguments in our theoretical proofs involve comparing $\log p$ and $\log n$. Without this assumption, the technical proofs would become more involved.
\end{remark}

\section{Applications}\label{sec.app}
In this section, we illustrate the proposed testing framework through two concrete applications, namely white noise test for discretely observed functional time series in Section \ref{sec.partobs} and goodness-of-fit test for functional factor model in Section \ref{sec.ffm}. 

\subsection{Test for discretely observed functional time series}\label{sec.partobs}

For fully observed functional time series, $\hat\bvep_t(\cdot) = \bvep_t(\cdot)$ in \eqref{err.m}. In this case, the testing procedure in Section \ref{sec.error} can be performed by simply substituting $\hat\bvep_t(\cdot)$ with $\bvep_t(\cdot)$. Condition \ref{c.rate} is automatically satisfied, and the asymptotic properties of the statistic $T_n$ follow immediately from Theorems \ref{thm.Tn.H0} and \ref{thm.Tn.H1}, ensuring the validity of the testing procedure.

However,  $\bvep_t(\cdot)$ is not directly observable in practice. Instead, it is discretely observed and subject to measurement errors. We therefore need to smooth the discrete observations using nonparametric methods to obtain the estimated curves $\hat{\bvep}_t(\cdot)$, which naturally places the testing problem within our error-contamination framework.
For each $t\in[n]$ and $j\in[p]$, assume the function $\vep_{t,j}(\cdot)$  is observed with errors at $N_{t,j}$ random   points  $u_{t,j,1},\ldots,u_{t,j,N_{t,j}}\in \cU$, where $\{u_{t,j,k}\}_{t\in[n],j\in[p],k\in[N_{t,j}]}$ are independent and identically distributed (i.i.d.) random variables.
Let $\vep_{t,j,k}^*$ be the observed value of $\vep_{t,j}(u_{t,j,k})$ satisfying
\begin{align}\label{eq:dismodel}
\vep_{t,j,k}^*=\vep_{t,j}(u_{t,j,k})+\varsigma_{t,j,k} \,,
\end{align}
where the random errors
$\varsigma_{t,j,k}$'s  are i.i.d. with $\bbE(\varsigma_{t,j,k})=0$ and ${\rm Var}(\varsigma_{t,j,k}) <\infty$.

To reconstruct functional trajectories, we apply the frequently used local linear smoother \cite[]{ZC2007} to the observed data $\{(u_{t,j,k}, \vep_{t,j,k}^*)\}_{k \in [N_{t,j}]}$ for each $t\in[n]$ and $j\in[p]$. For a given kernel $\cK(\cdot)$ and bandwidth $h_{t,j}>0$, we obtain the estimated curves $\hat{\varepsilon}_{t,j}(u)=\hat{f}_{0,t,j}$, where
\begin{align*}
(\hat{f}_{0,t,j}, \hat{f}_{1,t,j}) = \arg\min_{f_{0,t,j},f_{1,t,j}}\sum_{k=1}^{N_{t,j}}\{\varepsilon_{t,j,k}^* - f_{0,t,j}-f_{1,t,j}(u_{t,j,k}-u)\}^2 \calK_{h_{t,j}}(u_{t,j,k}-u)\,,
\end{align*}
and $\calK_h(\cdot)=\calK(\cdot/h)/h$. 
%For each $j\in[p]$, individual functions across $t\in[n]$ often admit similar smoothness properties and sometimes similar patterns. It is thus reasonable to use a common bandwidth $h_j$ for all of them. 
Under this setting, the estimation error $\bdelta_{t}(\cdot)=\hat{\bvep}_{t}(\cdot) - \bvep_{t}(\cdot)$ satisfies the additive structure \eqref{err.m}. Therefore, the white noise test can be carried out exactly as described in Section \ref{sec.error}, using the smoothed curves $\{\hat{\bvep}_t(\cdot)\}$.

To provide theoretical guarantees of the testing procedure  for discretely observed functional time series based on Theorems~\ref{thm.Tn.H0} and \ref{thm.Tn.H1}, we need to verify the high-level Condition~\ref{c.rate}.
To this end, we impose the following regularity conditions.

\setcounter{Condition}{0} 
\renewcommand{\theCondition}{D\arabic{Condition}}
\renewcommand{\theHCondition}{D.\arabic{Condition}}
 
\begin{Condition}\label{cond.subgauss}
There exists some universal constant $C_1'> 0$ such that $ \|\varsigma_{t,j,k}\|_{\psi_2}\leq C_1'$
for each $t\in[n]$, $j\in[p]$  and $k \in [N_{t,j}]$. 
\end{Condition}

\begin{Condition}\label{cond.Tij}
For each $t\in[n]$ and $j\in[p]$, the sampling frequency $N_{t,j}$ and the bandwidth $h_{t,j}$ satisfy  $N_{t,j}\asymp N\rightarrow\infty$,  $h_{t,j}\asymp h\rightarrow 0$ and $Nh\rightarrow \infty$ as $n\rightarrow\infty$.
\end{Condition}

\begin{Condition}\label{cond.fu}
{\rm (i)} Let $\{u_{t,j,k}\}_{t\in[n], j\in[p], k\in[N_{t,j}]}$ be i.i.d. copies of a random variable $U$ defined on $\cU$
with density $f_U(\cdot)$ satisfying that $(C_2')^{-1}\leq \inf_{u\in\cU} f_U(u)\leq \sup_{u\in\cU}f_U(u)\leq C_2' $ for some universal constant $C_2'>0$. 
{\rm (ii)} For each $t\in[n]$ and $j\in[p]$, the random elements
$\vep_{t,j}(\cdot)$, $\{u_{t,j,k}\}_{k\in [N_{t,j}]}$ and
$\{\varsigma_{t,j,k}\}_{k\in [N_{t,j}]}$ are mutually independent. 
\end{Condition}

\begin{Condition}\label{cond.kern}
{\rm (i)}
The kernel $\cK(\cdot)$ is a symmetric probability density function on compact support $[-1,1]$ with $ C_3'\leq \int u^2\cK(u)\,{\rm d} u<\infty$
for some universal constant $C_3' >0$. 
{\rm (ii)} The kernel $\cK(\cdot)$ is Lipschitz continuous on $\mathbb R$, i.e., there exists some universal   constant $C_4'>0$ such that $|\cK(u)-\cK(v)|\leq C_4'|u-v|$ for any $u,v\in\mathbb R$.
\end{Condition}

Condition~\ref{cond.subgauss} assumes that the random errors are sub-Gaussian. 
Conditions \ref{cond.Tij}--\ref{cond.kern} are standard in the literature on nonparametric smoothing for functional data   \cite[]{ZC2007,ZW2016}. 
We next present the following proposition to verify that Condition~\ref{c.rate} holds for discretely observed functional time series.

\begin{proposition}\label{thm.partial}
 Let $p\geq n^{\upsilon}$ and $b_n\asymp n^{\rho}$ for some universal constants $\upsilon>0$ and $  \rho \in (0,1)$. 	
Under Conditions~{\rm \ref{c.subgaussian}}, {\rm  \ref{c.kernelF}} and   {\rm\ref{cond.subgauss}--\ref{cond.kern}},
%if $\log(N\vee p) \ll Nh$, 
it holds that
\begin{align*}
  \Delta_{\widetilde{\bSigma}}^{\varepsilon\delta} =&\, 
 O_{\rm p} \big[    \{  (Nh)^{- 1/2}  + h^{\kappa}   \}  \log(N\vee p)  \big]\,,\\  
 \Delta_{\tilde{\boldsymbol{\cG}}}^{\varepsilon\delta}=&\  {O}_{\rm p} \big[ n^{\rho/2} \{  (Nh)^{- 1/2}  + h^{\kappa}   \}  \log^2(N\vee p)  \big] \,,
\end{align*}  
provided that $\log(N\vee p) \ll Nh$.
\end{proposition}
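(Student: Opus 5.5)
The plan is to reduce everything to a single uniform bound on the smoothing error,
\[
D_n:=\max_{t\in[n]}\max_{j\in[p]}\sup_{u\in\cU}|\delta_{t,j}(u)| = O_{\rm p}\big[\{(Nh)^{-1/2}+h^{\kappa}\}\log(N\vee p)\big],
\]
together with a uniform bound on the curves themselves, $E_n:=\max_{t,j}\sup_u|\vep_{t,j}(u)|=O_{\rm p}\{(\log(np))^{1/2}\}=O_{\rm p}\{(\log p)^{1/2}\}$. The latter follows from Condition \ref{c.subgaussian} via chaining: the sub-Gaussian increments with H\"older exponent $\kappa$ give a Dudley entropy integral bound for the sub-Gaussian norm of $\sup_u|\vep_{t,j}(u)|$, and a union bound over $np$ pairs finishes, using $p\ge n^{\upsilon}$.

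To get $D_n$, I will write the local linear estimator in the standard form $\hat\vep_{t,j}(u)=\sum_k w_{t,j,k}(u)\vep^*_{t,j,k}$ with weights built from $S_{r}(u)=N_{t,j}^{-1}\sum_k \cK_{h}(u_{t,j,k}-u)(u_{t,j,k}-u)^r$, $r=0,1,2$. I will then decompose $\delta_{t,j}(u)$ into three pieces. The first is a design part, where $S_r$ deviates from its mean; it is controlled uniformly in $u$ by Bernstein's inequality over a grid of mesh $(Nh)^{-2}$, using Lipschitz continuity of $\cK$ (Condition \ref{cond.kern}) and a union bound over $t,j$. This yields $|S_r-\bbE S_r|\lesssim h^r\{\log(N\vee p)/(Nh)\}^{1/2}$, so the denominator is bounded below with high probability under Conditions \ref{cond.fu} and \ref{cond.Tij} and $\log(N\vee p)\ll Nh$. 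The second is a noise part $\sum_k w_{t,j,k}(u)\varsigma_{t,j,k}$: conditionally on the design it is sub-Gaussian with variance proxy $\lesssim (Nh)^{-1}$ by Condition \ref{cond.subgauss}, and the same grid/union argument gives $(Nh)^{-1/2}\log^{1/2}(N\vee p)$. The third is a curve part $\sum_k w_{t,j,k}(u)\{\vep_{t,j}(u_{t,j,k})-\vep_{t,j}(u)\}$, since local linear weights sum to one. Because $\cK$ is supported on $[-1,1]$, only $|u_{t,j,k}-u|\le h$ matter, so this part is bounded by $\sum_k|w_{t,j,k}|\cdot\sup_{|s-u|\le h}|\vep_{t,j}(s)-\vep_{t,j}(u)|$. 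The supremum of the increments over $h$-windows is $O(h^{\kappa}\log^{1/2}(np/h))$ uniformly, again by chaining on the increment process (Condition \ref{c.subgaussian}(i)); Condition \ref{cond.fu}(ii) lets me condition on $\vep_{t,j}$. Absorbing the various log factors into $\log(N\vee p)$ gives $D_n$.

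With $D_n$ and $E_n$ in hand, the first bound is immediate. Each entry of $\widetilde\bSigma^{(\ell)}_{\vep\delta}$, $\widetilde\bSigma^{(\ell)}_{\delta\vep}$ and $\widetilde\bSigma^{(\ell)}_{\delta\delta}$ is an average of products of centered terms, so it is bounded by $4E_nD_n+4D_n^2$. The claimed rate for $\Delta^{\vep\delta}_{\widetilde\bSigma}$ then follows, noting that $D_n=o_{\rm p}(1)$. I expect the crude factor $E_n\asymp(\log p)^{1/2}$ to fit inside the stated $\log(N\vee p)$ only if the noise and bias logs are taken as $\log^{1/2}$. If that accounting fails, I will instead keep the $\vep$-factor inside the average and use conditional sub-Gaussianity, which gives the same order.

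For the bootstrap term, condition on $\cD_n$ and the true curves. Each coordinate of $\tilde{\boldsymbol\cG}^*_{\vep\delta}$ is a Gaussian process in $(u,v)$ with variance at most $(n-L)^{-1}\bvep_\cdot^{\T}\bXi\bvep_\cdot\le\lambda_{\max}(\bXi)\max_t|\tilde\eta^{\vep\delta}_{t,r}(u,v)-\bar{\tilde\eta}|^2$. Condition \ref{c.kernelF} gives $\lambda_{\max}(\bXi)\le\sum_i|\mathcal W(i/b_n)|\lesssim b_n\asymp n^{\rho}$, so the standard deviation is $\lesssim n^{\rho/2}E_nD_n$. The main obstacle is taking the supremum over $(u,v)\in\cU^2$ and the $Lp^2$ coordinates, which requires a modulus of continuity of the Gaussian process. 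I would handle this by Dudley's bound with the canonical metric, which is dominated by $n^{\rho/2}$ times the sup-norm of the increments of $\tilde\eta_t$. For the $\delta$ components these increments are Lipschitz of order $h^{-2}$ from the kernel weights, and their log-entropy costs only $\log(N\vee p)$ since $h^{-1}\le N$. The Gaussian maximal inequality then contributes one extra $\log^{1/2}(Lp^2\cdot N)$ factor, and collecting everything yields $n^{\rho/2}\{(Nh)^{-1/2}+h^\kappa\}\log^2(N\vee p)$.
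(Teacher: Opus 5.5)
Your handling of $\Delta_{\widetilde{\bSigma}}^{\varepsilon\delta}$ is the paper's own route. It bounds $E_n=O_{\rm p}\{\log^{1/2}(np)\}$ and $D_n$ uniformly, as in Lemmas~\ref{lemma1.prop1} and~\ref{lemma2.prop1}, and then uses $\Delta_{\widetilde{\bSigma}}^{\varepsilon\delta}\lesssim E_nD_n+D_n^2$.

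Your three pieces actually deliver $D_n=O_{\rm p}[\{(Nh)^{-1/2}+h^{\kappa}\}\log^{1/2}(N\vee p)]$, using $\log(np/h)\lesssim\log(N\vee p)$. This is exactly what Lemma~\ref{lemma2.prop1} proves and exactly what you need. State it that way and drop the fallback, which is unnecessary.

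The only local difference is the bias piece. The paper bounds it at grid points through its conditional $\psi_2$-norm $\lesssim h^{\kappa}$ given the design. You bound it pathwise by the modulus of $\vep_{t,j}$ over $h$-windows. Both work.

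Where you genuinely diverge is the bootstrap term. The paper never touches the entropy of the conditional canonical metric. It discretizes $\cU^2$ with $M_1$ polynomially large and applies the conditional Gaussian tail plus a union bound only at grid points, with variance $\lesssim n^{\rho}E_n^2D_n^2$. It then bounds the off-grid part crudely by $n_L^{-1/2}\sum_t|\varrho_t|=O_{\rm p}(n^{1/2})$ times the cell oscillations of $\vep$ and $\delta$ from the second parts of the two lemmas. Those terms are killed by the choice of $M_1$.

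Your conditional Dudley argument is valid and in fact yields $\log^{3/2}(N\vee p)$. It uses the clean bound $\lambda_{\max}(\bXi)\le\sum_i|\mathcal W(i/b_n)|\lesssim b_n$ and Gaussian concentration over the $Lp^2$ coordinates. However, it needs a modulus of continuity of $\tilde\eta_t$ at all scales, uniformly in $t$ and $r$.

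Here your claim that the $\delta$ increments are Lipschitz of order $h^{-2}$ is not right. Only $\hat\vep_{t,j}$ is Lipschitz, with constant of order $h^{-2}\max_k|\vep^*_{t,j,k}|$ on the good design event. By contrast, $\delta_{t,j}=\hat\vep_{t,j}-\vep_{t,j}$ inherits the H\"older-type increments of $\vep_{t,j}$.

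You therefore need a simultaneous sample-path H\"older bound for all $\vep_{t,j}$. One option is exponent $\kappa/2$ with constant $O_{\rm p}\{\log^{1/2}(np)\}$, via Lemma~\ref{tail_multi}(ii) with $\beta=1/2$ under Condition~\ref{c.subgaussian}. With that, the covering numbers stay polynomial in $(N,p)$, so the entropy still costs only $\log(N\vee p)$. The paper's discretization is more elementary precisely because it only needs oscillations over tiny cells.

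Finally, a small slip: a grid of mesh $(Nh)^{-2}$ is too coarse for $S_r$. The discretization error is $\lesssim\epsilon h^{r-2}$, which relative to $h^r$ is $(Nh^2)^{-2}$, and this need not vanish when only $Nh\to\infty$. Any polynomial mesh such as $N^{-3}$ or $h^{4}$ fixes it; the paper uses $M'\asymp h^{-4}$.
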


\begin{remark}\label{remark.app1}
Recall that $p\geq n^{\upsilon}$ for $\upsilon>0$ is assumed in Theorems~{\rm\ref{thm.Tn.H0}} and  {\rm\ref{thm.Tn.H1}}. 
In our analysis, we assume   that the underlying functions $\{\vep_{t,j}(\cdot)\}$ satisfy the sub-Gaussian increment property (see Condition {\rm\ref{c.subgaussian}}) without imposing any smoothness conditions, which yields the term $h^{\kappa}$. Balancing the trade-off between $(Nh)^{-1/2}$ and $h^{\kappa}$ suggests choosing the bandwidth $h\asymp N^{-1/(2\kappa+1)}$.
Furthermore, if $N\gtrsim n^{a}$ for some constant $a>(2\kappa+1)/(2\kappa)$, then Condition {\rm\ref{c.rate}} holds
with $\gamma_1=a\kappa/(2\kappa+1)>1/2$, $\gamma_2=2$ and $\gamma_3=\gamma_1-\rho/2>0$.
With verifiable Condition {\rm\ref{c.rate}} under the possibly suboptimal rates established in Proposition {\rm \ref{thm.partial}} and other conditions required in Theorems {\rm \ref{thm.Tn.H0}} and {\rm \ref{thm.Tn.H1}}, an application of Theorems {\rm \ref{thm.Tn.H0}} and {\rm \ref{thm.Tn.H1}} can establish  the validity of the proposed white noise testing procedure for discretely observed functional time series.  
\end{remark}

\subsection{Goodness-of-fit test for functional factor model}\label{sec.ffm}
The error-contamination framework introduced in Section \ref{sec.error} naturally accommodates residuals from high-dimensional functional models, providing a unified approach for residual-based goodness-of-fit tests. In this section, we illustrate its application to the functional factor model for high-dimensional functional time series \cite[]{guo2026}, which models the $p$-dimensional weakly stationary  functional time series $\bX_t(\cdot)=\{X_{t,1}(\cdot), \ldots,X_{t,p}(\cdot)\}^{\T}$ as 
\begin{equation}\label{eq:ffm}
\bX_{t}(\cdot) =  \bA \bZ_{t}(\cdot) +\bvep_{t}(\cdot) \,,~~~~~ t\in[n] \,.
\end{equation}
Here the observed series $\bX_t(\cdot)$ is decomposed as the sum of two components: one common dynamic component driven by an $r$-dimensional weakly stationary latent functional factor time series $\bZ_{t}(\cdot)=\{Z_{t,1}(\cdot),\ldots,Z_{t,r}(\cdot)\}^{\T}$ with corresponding $p \times r$ factor loading matrix $\bA,$  and one idiosyncratic white noise component $\bvep_{t}(\cdot)=\{\varepsilon_{t,1}(\cdot),\ldots,\varepsilon_{t,p}(\cdot)\}^{\T}$,  and $r$ is assumed known and finite. Model \eqref{eq:ffm} remains unchanged if $(\bA,\bZ_t(\cdot))$ is replaced by $(\bA\bH,\bH^{-1}\bZ_t(\cdot))$ for any invertible matrix $\bH\in\mathbb R^{r\times r}$. Consequently, $\bA$ and $\bZ_t(\cdot)$ can not be determined uniquely. However, the factor loading space spanned by the columns of $\bA$ is unique. We therefore focus on estimating the column space of $\bA$.
% {\color{red} 
% Since $\bA$ and $\bZ_t(\cdot)$ cannot in general be identified separately, standard identification restrictions are imposed, under which the loading space spanned by the columns of $\bA$, rather than $\bA$ itself, is identifiable. This loading space can be represented by an orthonormal basis.}
One of the most frequently used procedures for assessing model adequacy  is to test whether the error process $\{\bvep_t(\cdot)\}$ is white noise. 
 Since $\{\bvep_t(\cdot)\}$ is unobserved in practice, we instead project $\bX_t(\cdot)$ onto the orthogonal complement of the estimated loading space and work with the estimated idiosyncratic residuals   $\hat{\bvep}_{t}(\cdot) = (\bI_{p}-\widehat{\bA}\widehat{\bA}^{\T})\bX_t(\cdot)$ for $t \in [n]$, where the columns of $\widehat{\bA}\in\mathbb{R}^{p\times r}$ form an estimated orthonormal basis for the column  space of $\bA$. 
% {\color{red}Since $\{\bvep_t(\cdot)\}$ is unobserved in practice, we instead work with the residuals
% \begin{align*}
% \hat{\bvep}_{t}(\cdot) = \bX_{t}(\cdot) -  \widehat{\bA}\widehat{\bZ}_{t}(\cdot)\,, ~~~t\in[n]\,,
% \end{align*}
% where $\widehat{\bA}$ and $\widehat{\bZ}_t(\cdot)$  are some consistent estimators for $\bA$ and $\bZ_t(\cdot)$, respectively.}
These residuals naturally satisfy the error-contamination structure \eqref{err.m}, where $\bdelta_{t}(\cdot)=\hat\bvep_{t}(\cdot)-\bvep_{t}(\cdot)$ represents the estimation error for the idiosyncratic component.   
Hence, the white noise test developed in Section \ref{sec.error} can be directly applied to $\{\hat\bvep_t(\cdot)\}$, providing a formal goodness-of-fit test for the functional factor model (\ref{eq:ffm}).

To implement the testing procedure based on $\{\hat\bvep_t(\cdot)\}$, we need to obtain  $\widehat\bA$.  
Define the lag-$\ell$ autocovariance functions $\bSigma_{xx}^{(\ell)}(u,v)=\cov\{\bX_{t}(u) ,\bX_{t+\ell}(v)\}$  for $\ell\geq 1$  and $(u,v)\in\cU^2$. 
Since  the column space of $\bA$ can be recovered by the $r$ leading eigenvectors that correspond to the $r$ largest eigenvalues of $\sum_{\ell=1}^{\ell_0} \int_{\cU}\int_{\cU} \bSigma_{xx}^{(\ell)}(u,v)\bSigma_{xx}^{(\ell)}(u,v)^\T \, {\rm d}u{\rm d}v$ under some regularity conditions,
we adopt the estimation procedure with identity weight matrix proposed in \cite{guo2026}.    
Specifically, for some prescribed integer $\ell_0 \geq 1$, define the $p \times p$ matrix
\begin{align}\label{eq:hatM}
    \widehat{\bM} = \sum_{\ell=1}^{\ell_0} \int_{\cU}\int_{\cU} \widetilde{\bSigma}_{xx}^{(\ell)}(u,v)   \widetilde{\bSigma}_{xx}^{(\ell)}(u,v)^{\T} \, {\rm d}u{\rm d}v\,,
\end{align}
 where $\widetilde{\bSigma}_{xx}^{(\ell)}(u,v)=(n-\ell)^{-1}\sum_{t=1}^{n-\ell}\{\bX_{t}(u)-\widebar{\bX}(u)\}\{\bX_{t+\ell}(v)-\widebar{\bX}(v)\}^{\T}$ is the sample estimator for $\bSigma_{xx}^{(\ell)}(u,v)$ and $\widebar{\bX}(\cdot)=n^{-1}\sum_{t=1}^n\bX_t(\cdot)$. By performing the eigen-decomposition of $\widehat{\bM}$, we obtain the leading $r$ eigenvectors  
 $\hat{\bomega}_1, \ldots, \hat{\bomega}_r$, which form the columns of $\widehat \bA$.
%{\color{red}The estimated functional factors are then given by $\widehat{\bZ}_t(\cdot)=\widehat{\bA}^{\T}\bX_t(\cdot)$. Accordingly, the residuals are
% $\hat{\bvep}_t(\cdot)= \bX_t(\cdot) - \widehat{\bA}\widehat{\bA}^{\T}\bX_t(\cdot)$ for $t \in [n]$.  }

Similarly, to verify   Condition~\ref{c.rate} in this setting, we impose Conditions \ref{cond.facerror}--\ref{cond.facmat} below. Throughout this section, the processes $\{\bvep_t(\cdot)\}$ and $\{\bZ_t(\cdot)\}$ are assumed to admit separable modifications with respect to the Euclidean metric on $\cU$.

% To verify the high-level Condition~\ref{c.rate}, we impose the following conditions. 

\setcounter{Condition}{0}
\renewcommand{\theCondition}{F\arabic{Condition}} 
\renewcommand{\theHCondition}{F.\arabic{Condition}}
 
 \begin{Condition}\label{cond.facerror}
The zero-mean random processes $\{\bvep_t(\cdot)\}_{t\in[n]}$ satisfy that {\rm (i)} for some $u_0\in\cU$,  $ \max_{t\in[n]}\sup_{\bv\in\mathbb{S}^{p-1}}\|\bv^{\T}\bvep_t(u_0)\|_{\psi_2}\leq C_1''$ for some universal constant $C_1''>0$;
{\rm (ii)} for any $(u,v)\in\cU^2$, $\max_{t\in[n]}\sup_{\bv\in\mathbb{S}^{p-1}}\|\bv^{\T}\{\bvep_{t}(u)-\bvep_{t}(v)\}\|_{\psi_2} \leq C_{2}'' |u-v|^{\kappa}$ for some universal constant $C_{2}''>0$;
{\rm (iii)} for any $s,t\in[n]$ and $(u,v)\in\cU^2$, $\mathbb{E}\{ \bvep_s(u)\bZ_t(v)^{\T} \}=\mathbf{0}$.
%$\{\bvep_t(\cdot)\}_{t\in[n]}$ and $\{\bZ_t(\cdot)\}_{t\in[n]}$ are uncorrelated.
\end{Condition}

\begin{Condition}\label{cond.facprocess}	
The zero-mean random processes $\{\bZ_t(\cdot)\}_{t\in[n]}$ satisfy that {\rm (i)} for some $u_0\in\cU$, $ \max_{t\in[n]}\max_{i\in[r]} \| Z_{t,i}(u_0)\|_{\psi_2}\leq C_{3}''$ for some universal constant $C_{3}''>0$; 
{\rm (ii)} for any $(u,v)\in\cU^2$,  $\max_{t\in[n]}\max_{i\in[r]} \| Z_{t,i}(u)-Z_{t,i}(v)\|_{\psi_2} \leq C_{4}'' |u-v|^{\kappa}$ for some universal  constant $C_{4}''>0$;
{\rm (iii)} the joint process $\{(\bZ_t(\cdot)^{\T},\bvep_t(\cdot)^{\T})^{\T}\}$  
satisfies  the  $\alpha$-mixing condition specified in Condition {\rm \ref{c.alpha}};
{\rm (iv)} 
the $r$ eigenvalues of $\sum_{\ell=1}^{\ell_0}
\iint \bSigma_{zz}^{(\ell)}(u,v)
\bSigma_{zz}^{(\ell)}(u,v)^{\T}
\,{\rm d}u{\rm d}v$ are uniformly bounded away from zero, where $\bSigma_{zz}^{(\ell)}(u,v)=\cov\{\bZ_{t}(u),\bZ_{t+\ell}(v)\}$.

% the $r$ eigenvalues of $\sum_{\ell=1}^{\ell_0}
% \iint \bSigma_{zz}^{(\ell)}(u,v)
% \bSigma_{zz}^{(\ell)}(u,v)^{\T}
% \,{\rm d}u{\rm d}v$ satisfy $\theta_1 \geq \cdots \geq \theta_r > C_{5}''$  for some universal constant $C_{5}''>0$ , where $\bSigma_{zz}^{(\ell)}(u,v)=\cov\{\bZ_{t}(u),\bZ_{t+\ell}(v)\}$.
\end{Condition}

\begin{Condition}\label{cond.facmat}
 The factor loading matrix $\bA=(\ba_1,\ldots,\ba_p)^{\T}$ satisfies that 
 {\rm (i)} $\lambda_{\min}(\bA^{\T}\bA) \asymp p  \asymp \lambda_{\max}(\bA^{\T}\bA)$;   
{\rm(ii)} 
there exists some universal constant $C_{5}''>0$ such that $|\ba_{j}|_{\max}\leq C_{5}''$ for all $j\in[p]$. 
\end{Condition}

Conditions {\rm \ref{cond.facerror}}(i) and {\rm \ref{cond.facerror}(ii)} require the sub-Gaussian properties of the high-dimensional process $\bvep_{t}(\cdot)$ for all $t\in[n]$. Combined with the preceding separability property,
Conditions~{\rm \ref{cond.facerror}(i)} and {\rm \ref{cond.facerror}(ii)} imply
Condition~{\rm \ref{c.subgaussian}}. 
%which imply that Condition \ref{c.subgaussian} is satisfied automatically. 
Condition~{\rm\ref{cond.facerror}}(iii) assumes that the idiosyncratic
process $\{\bvep_t(\cdot)\}$ and the factor process $\{\bZ_t(\cdot)\}$
are uncorrelated in all leads and lags \citep{ChenYangZhang2022,QiaoEtAl2026}.
%thereby providing a clear separation between the factor
%and idiosyncratic components at the second-order level
%\citep{DozGiannoneReichlin2012,BantisClementsUrquhart2023}.} 
Conditions {\rm \ref{cond.facprocess}(i)--\ref{cond.facprocess}(iii)} impose the marginal sub-Gaussianities and $\alpha$-mixing condition for the latent factor processes $\{\bZ_t(\cdot)\}$. 
Condition {\rm \ref{cond.facprocess}(iv)} ensures that $\bZ_t(\cdot)$ has exactly $r$ components \cite[]{guo2026}. 
Condition {\rm \ref{cond.facmat}} requires the factors to be pervasive in the sense that they influence a large fraction of the functional outcomes. See similar conditions in factor model literature 
\cite[]{Bai2003,Fan2013}.
%  Condition {\rm \ref{cond.facmat}(i)} can be viewed as a simple representation of the usual pervasiveness condition that  $\lambda_{1}   \asymp \lambda_{r}\asymp p$, where $\lambda_{1}\geq \ldots \geq \lambda_{r}$ are the eigenvalues of $\bA^{\T}\bA$. More precisely, there exists a orthgonal matrix $\bGamma$ and $\bf{\Lambda} = {\diag}(\lambda_{1},\ldots,\lambda_{r})$  such that $\bA^{\T}\bA = \bGamma\bf{\Lambda} \bGamma^{\T}$. We can set $\widetilde{\bA}= p^{1/2}\bA \bGamma \bf{\Lambda}^{-1/2}$ and $\widetilde{\bZ}_t(\cdot)= p^{-1/2} \bf{\Lambda}^{1/2} \bGamma^{\T} \bZ_t(\cdot)$ such that $\widetilde{\bA} \widetilde{\bZ}_t(\cdot)= \bA\bZ_t(\cdot)$.}
Condition {\rm \ref{cond.facmat}(i)} can    be  relaxed to allow for weak factors by assuming $\lambda_{\min}(\bA^{\T}\bA) \asymp p^{1-\chi}  \asymp \lambda_{\max}(\bA^{\T}\bA)$ for some constant $\chi\in[0,1]$. Under this weaker condition, the rates in Proposition {\rm \ref{prop.facmodel}} below would depend on $\chi$, complicating the verification of Condition~\ref{c.rate}.

\begin{proposition}\label{prop.facmodel}
Let $p\geq n^{\upsilon}$ and $b_n\asymp n^{\rho}$ for some universal constants $\upsilon>0$ and $  \rho \in (0,1)$.  
Under model~\eqref{eq:ffm} with Conditions {\rm \ref{c.kernelF}} and   {\rm\ref{cond.facerror}--\ref{cond.facmat}},  it holds that 
\begin{align*}
   \Delta_{\widetilde{\bSigma}}^{\varepsilon\delta} =  &~
O_{\rm p}\big\{ (n^{-1}+n^{-1/2}p^{-1/2}) \log p  \big\}\,,\\
   \Delta_{\tilde{\boldsymbol{\cG}}}^{\varepsilon\delta}    = &~ {O}_{\rm p}\big\{ n^{\rho/2} (n^{-1/2}+p^{-1/2} )(\log p)^{5/2} \big\}\,,  
\end{align*}  
provided that $\log p\ll n^{1/5}$.
\end{proposition}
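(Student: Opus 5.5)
The plan is to write $\bdelta_t(\cdot)=\hat\bvep_t(\cdot)-\bvep_t(\cdot)$ explicitly and reduce both $\Delta$'s to a handful of uniform-in-$(u,v)$ maxima that can be bounded by chaining together with the subspace perturbation rate of $\widehat\bA$.

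\textbf{Decomposition of the error.} Since $(\bI_p-\widehat\bA\widehat\bA^{\T})$ is a projection, we have
\[
\bdelta_t(\cdot)=(\bI_p-\widehat\bA\widehat\bA^{\T})\bA\bZ_t(\cdot)-\widehat\bA\widehat\bA^{\T}\bvep_t(\cdot).
\]
Let $\bQ$ be an orthonormal basis of the column space of $\bA$. Then $(\bI_p-\widehat\bA\widehat\bA^{\T})\bA=(\bI_p-\widehat\bA\widehat\bA^{\T})(\bQ\bQ^{\T}-\widehat\bA\widehat\bA^{\T})\bA$.

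\textbf{Step 1: estimation of the loading space.} Show that $\|\widehat\bA\widehat\bA^{\T}-\bQ\bQ^{\T}\|_2=O_{\rm p}\{n^{-1/2}(\log p)^{1/2}+p^{-1/2}\}$, or a rate of this type, as in \cite{guo2026}. Write $\widehat\bM-\bM$ with $\bM=\bA\{\sum_\ell\iint\bSigma_{zz}^{(\ell)}\bSigma_{zz}^{(\ell)\T}\}\bA^{\T}$. The smallest nonzero eigenvalue of $\bM$ is $\asymp p^2$ by Conditions \ref{cond.facprocess}(iv) and \ref{cond.facmat}(i). The perturbation $\|\widehat\bM-\bM\|_2$ is controlled through the terms $\widetilde\bSigma_{zz}$, $\widetilde\bSigma_{z\vep}$ and $\widetilde\bSigma_{\vep\vep}$ under sub-Gaussian $\alpha$-mixing concentration. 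Davis--Kahan then gives the subspace rate.

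\textbf{Step 2: entrywise (max-norm) control of $\bdelta_t$ via two pieces.} First, the piece $(\bI_p-\widehat\bA\widehat\bA^{\T})\bA\bZ_t$ has $j$th row bounded by $|\be_j^{\T}(\bQ\bQ^{\T}-\widehat\bA\widehat\bA^{\T})|_2\,|\bA|_{2}\,|\bZ_t(u)|_2$. This requires a $2\to\infty$ (row-wise) subspace bound of order $p^{-1/2}\{n^{-1/2}+p^{-1/2}\}$ up to logs, which uses the incoherence from Condition \ref{cond.facmat}(ii). Second, $\widehat\bA\widehat\bA^{\T}\bvep_t$ is split as $\bQ\bQ^{\T}\bvep_t$ plus a perturbation. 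The main part $\be_j^{\T}\bQ\bQ^{\T}\bvep_t(u)$ is sub-Gaussian with scale $|\bQ^{\T}\be_j|_2\lesssim p^{-1/2}$ by Condition \ref{cond.facerror}(i)--(ii), so its supremum over $u$ and $j$ is $O_{\rm p}\{p^{-1/2}(\log p)^{1/2}\}$ by Dudley chaining using the $\kappa$-Hölder increments.

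\textbf{Step 3: bounding the three cross terms.} Plug these bounds into $\widetilde\bSigma^{(\ell)}_{\vep\delta}$, $\widetilde\bSigma^{(\ell)}_{\delta\vep}$ and $\widetilde\bSigma^{(\ell)}_{\delta\delta}$.
\begin{itemize}
\item Each cross term, such as $(n-\ell)^{-1}\sum_t\vep_{t,j}(u)\,\bZ_{t+\ell}(v)^{\T}$ or $(n-\ell)^{-1}\sum_t\vep_{t,j}(u)\,\bQ^{\T}\bvep_{t+\ell}(v)$, has mean zero because of Condition \ref{cond.facerror}(iii) and $H_0$. Each is therefore $O_{\rm p}\{n^{-1/2}\log p\}$ uniformly by a Bernstein inequality for mixing sequences combined with chaining over $\cU^2$.
\item Multiplying by the prefactors $\lesssim p^{-1/2}$ or by the subspace error gives the rate $(n^{-1/2}p^{-1/2}+n^{-1})\log p$.
\item The $\delta\delta$ term is a product of two small terms and is of smaller order.
\end{itemize}
The care needed here is to avoid bounding $\widehat\bA$-dependent factors inside sums naively. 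I would factor out the deterministic-given-data matrices $\widehat\bA\widehat\bA^{\T}-\bQ\bQ^{\T}$ and keep the sums as $r$- or $p$-dimensional empirical cross-covariances whose max norms concentrate.

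\textbf{Step 4: the bootstrap term $\Delta_{\tilde{\boldsymbol{\cG}}}^{\varepsilon\delta}$.} Conditionally on $\cD_n$, each component is a centered Gaussian process. Its supremum is at most $(\log p)^{1/2}$ times the conditional standard deviation, plus an entropy term. The conditional variance is $\bXi^*$-type and bounded by $\sum_i|\mathcal W(i/b_n)|\lesssim b_n$ times the maximal empirical second moment of the $\tilde\bfeta^{\vep\delta}$ entries. Those second moments are products of $\max_{t,u}|\delta_{t,j}(u)|^2\lesssim (n^{-1}+p^{-1})(\log p)^{c}$ with sub-Gaussian maxima of $\vep$ of size $\log p$. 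This yields $n^{\rho/2}(n^{-1/2}+p^{-1/2})(\log p)^{5/2}$. The restriction $\log p\ll n^{1/5}$ enters when absorbing the exponential tails in the mixing Bernstein bounds.

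\textbf{Main obstacle.} The hardest step is the row-wise ($2\to\infty$) perturbation bound for $\widehat\bA$ in Step 2, without which one only gets $\max_j|\delta_{t,j}|$ of order $|\widehat\bA\widehat\bA^{\T}-\bQ\bQ^{\T}|_2\sqrt p$, which is too crude. I would try first an expansion $\widehat\bA\widehat\bA^{\T}-\bQ\bQ^{\T}\approx \bM^{+}(\widehat\bM-\bM)\bQ\bQ^{\T}+{\rm transpose}$ plus second-order remainder. Its rows can then be handled through entrywise bounds on $\widehat\bM-\bM$ and $\bM^{+}$'s incoherence.
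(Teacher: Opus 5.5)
Your architecture matches the paper's:
\begin{itemize}
\item the same decomposition of $\bdelta_t$;
\item an eigengap-$p^2$ perturbation bound for the loading space;
\item a row-wise bound on $\bDelta_j=(\widetilde\bA\widetilde\bA^{\T}-\widehat\bA\widehat\bA^{\T})\be_j$, where your $\bQ\bQ^{\T}$ is the paper's $\widetilde\bA\widetilde\bA^{\T}$;
\item Cauchy--Schwarz against uniform maxima of mean-zero lagged cross-covariances for $\widetilde\bSigma^{(\ell)}_{\vep\delta}$;
\item a discretize-plus-conditional-Gaussian-tail argument for $\Delta_{\tilde{\boldsymbol{\cG}}}^{\varepsilon\delta}$.
\end{itemize}

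\textbf{The gap: the $\delta\delta$ term.} You dismiss $\widetilde\bSigma^{(\ell)}_{\delta\delta}$ as ``a product of two small terms'' of smaller order, but this does not follow.
\begin{itemize}
\item $\delta_{t,j}(u)$ contains $-\be_j^{\T}\bQ\bQ^{\T}\bvep_t(u)$, which is genuinely of size $p^{-1/2}$. Bounding by $\max_{t,j,u}|\delta_{t,j}(u)|^2$ therefore gives only $O_{\rm p}\{n^{-1}(\log p)^2+p^{-1}\log p\}$.
\item $p^{-1}\log p$ is not $O\{(n^{-1}+n^{-1/2}p^{-1/2})\log p\}$ when $p\ll n$. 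That regime is allowed by $p\geq n^{\upsilon}$, and it is exactly where Remark~\ref{remark.app2} needs $\gamma_1=(1+\upsilon)/2>1/2$.
\end{itemize}
What the paper does instead is write $\delta_{t,j}(u)=\bDelta_j^{\T}\bX_t(u)-\tilde\ba_j^{\T}\widetilde\bA^{\T}\bvep_t(u)$ and expand $(n-\ell)^{-1}\sum_t\delta_{t,j}(u)\delta_{t+\ell,j'}(v)$ into four bilinear forms.
\begin{itemize}
\item The $\tilde\ba_j^{\T}\{\cdot\}\tilde\ba_{j'}$ piece has as middle matrix the $r\times r$ lag-$\ell$ sample autocovariance of $\widetilde\bA^{\T}\bvep_t$. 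This is mean zero under white noise, so the piece is $O_{\rm p}\{p^{-1}(\log p/n)^{1/2}\}$, not $p^{-1}\log p$.
\item The two mixed pieces have mean-zero $p\times r$ middle matrices of Frobenius norm $O_{\rm p}\{(p\log p/n)^{1/2}\}$.
\item The $\bDelta_j^{\T}\{\cdot\}\bDelta_{j'}$ piece has middle matrix $(n-\ell)^{-1}\sum_t\bX_t(u)\bX_{t+\ell}(v)^{\T}$. This is \emph{not} centered, because the factors are autocorrelated, and it has norm $O_{\rm p}(p)$.
\end{itemize}

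\textbf{Consequences for Step~1 and the row-wise bound.} That last piece is why your Step~1 rate is not good enough.
\begin{itemize}
\item With $\max_j|\bDelta_j|_2\asymp p^{-1/2}(n^{-1/2}+p^{-1/2})$, the piece contributes $p\,|\bDelta_j|_2|\bDelta_{j'}|_2\asymp(n^{-1/2}+p^{-1/2})^2\geq p^{-1}$. This is the same obstruction again.
\item You need $\max_j|\bDelta_j|_2=O_{\rm p}\{n^{-1/2}p^{-1/2}(\log p)^{1/2}\}$, with no $p^{-1}$ part.
\item The sharper rate is true. $\widehat\bM$ uses only lags $\ell\geq1$, and $\bvep_t$ is white noise uncorrelated with $\bZ_s$, so $\bSigma_{xx}^{(\ell)}=\bA\bSigma_{zz}^{(\ell)}\bA^{\T}$ exactly. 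Hence $\|\widehat\bM-\bM\|_{\rm op}=O_{\rm p}(p^2n^{-1/2})$, and the subspace error is $O_{\rm p}(n^{-1/2})$ with no idiosyncratic $p^{-1/2}$ bias.
\item For the row-wise step, bounding the second-order remainder of your expansion in operator norm gives only $O_{\rm p}(n^{-1})$. This exceeds $n^{-1/2}p^{-1/2}$ when $p\gg n$.
\item The paper avoids this by invoking Theorem~2 of Fan, Wang and Zhong (2018). It feeds in $\|\widehat\bM-\bM\|_\infty=O_{\rm p}\{n^{-1/2}p^2(\log p)^{1/2}\}$ (maximal row sum) and incoherence $\mu\lesssim1$ from Condition~\ref{cond.facmat}.
\end{itemize}

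\textbf{Log bookkeeping.} To land on exactly $\log p$ rather than $(\log p)^{3/2}$, the cross-covariance maxima must be $O_{\rm p}\{n^{-1/2}(\log p)^{1/2}\}$, not $O_{\rm p}(n^{-1/2}\log p)$ as you state. This sharper rate is precisely what $\log p\ll n^{1/5}$ buys in the mixing Bernstein bound.
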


\begin{remark}\label{remark.app2}
For a goodness-of-fit test, the null hypothesis specifies a concrete data-generating model. In our framework, the null assumes that the functional factor model~\eqref{eq:ffm} is correctly specified. With the convergence results established in Proposition~{\rm\ref{prop.facmodel}}, we set $\gamma_1=\min\{1,(1+\upsilon)/2\}>1/2$, $\gamma_2=5/2$ and $\gamma_3=\min\{(1-\rho)/2,(\upsilon-\rho)/2\}>0$.
Provided that $\rho<\min(1,\upsilon)$,  Condition~{\rm\ref{c.rate}} holds when model~\eqref{eq:ffm} is correctly specified. Together with the other conditions required in Theorem~{\rm\ref{thm.Tn.H0}}, an application of Theorem~{\rm\ref{thm.Tn.H0}} ensures that the proposed goodness-of-fit test maintains size control under the correctly specified  model~\eqref{eq:ffm}.  However, when the model is misspecified, i.e., the true process does not follow  model~\eqref{eq:ffm}, the verification of Condition~{\rm\ref{c.rate}} is substantially more complicated. 
%The true process need not follow the factor model and may have a much more general structure. 
%For example, one may have
%$
%\bX_t(\cdot)=f\{\bZ_t(\cdot)\}+\bs_t(\cdot)$
%for some complex function $f$, so that the induced error relative to model~\eqref{eq:ffm} takes the form
%$
%f\{\bZ_t(\cdot)\}-\bA\bZ_t(\cdot)+\bs_t(\cdot)$.
%Since the structural discrepancy between $f\{\bZ_t(\cdot)\}$ and $\bA\bZ_t(\cdot)$ is unknown and may be arbitrarily complex, 
%verifying Condition~{\rm\ref{c.rate}}  becomes model-specific and technically challenging. 
Consistent with this difficulty, existing studies on white-noise-based goodness-of-fit tests have primarily  provided theoretical guarantees under correctly specified models \citep{Zhang2016,Kim2024}. In particular, \cite{Kim2024} explicitly note the difficulty of deriving asymptotic power for such tests when the model is not correctly specified. Consequently, a general power analysis under model misspecification is beyond the scope of this paper and is left for future research.
  The numerical results in Section~{\rm\ref{subsec:simu.ff}} show that the proposed test exhibits good empirical power in finite samples.  
\end{remark}

\begin{remark}
Our procedure can be naturally extended to the functional factor model with discrete observations by applying the methodology in Section {\rm\ref{sec.ffm}} to the reconstructed curves obtained from the pre-smoothing step described in Section {\rm\ref{sec.partobs}}. However, the pre-smoothing step introduces additional estimation errors, making the verification of Condition~{\rm\ref{c.rate}} substantially more involved.
\end{remark}

% {\color{red}
% We have developed the theory assuming that the number of factors $r$ is known or can be identified correctly. In practice, $r$ is unknown and the most commonly adopted ratio-based estimator \cite[]{lam2012,guo2026} often suffers from underestimating $r$.
% Building upon our goodness-of-fit test, we can develop a sequential testing procedure to determine $r$.  
%  Specifically, we start from $r_1=1$,  estimate the corresponding factor loading matrix $\widehat{\bA}^{(r_1)}$,  then compute the residuals $\hat{\bvep}_t^{(r_1)}(\cdot)=[\bI_p -\widehat{\bA}^{(r_1)}\{\widehat{\bA}^{(r_1)}\}^{\T} ]\bX_t(\cdot)$, and perform a white noise test on $\{\hat{\bvep}_t^{(r_1)}(\cdot)\}$.  
% If the null  is not rejected, we set  $\hat{r}=r_1$ and stop; otherwise, we increase $r_1$ by one and repeat the above procedure.  }

\section{Numerical studies}\label{sec.simu}	
In this section, we evaluate the finite sample performance of our proposed testing procedures for discretely observed functional time series in Section~\ref{subsec:simu.cov} and  functional factor model in Section~\ref{subsec:simu.ff}.   All tests  are implemented at the 0.05 significance level using 1000 Monte Carlo replications, with  $\cU=[0,1]$ and $1000$ bootstrap replications used to determine the critical value in our procedures. Three kernel functions are used to compute the critical value, 
%estimate the long-run covariance function of $\{ {\boldsymbol{\eta}}_t(u,v)\}_{t=1}^{n-L}$, 
namely,  (i) Quadratic Spectral (QS) kernel $\mathcal{K}_{\rm{QS}}(x)=25(12\pi^2x^2)^{-1}\{ (6\pi x/5)^{-1}\sin(6\pi x/5) - \cos(6\pi x/5)\}$, (ii) Parzen (Pz) kernel $\mathcal{K}_{\rm{Pz}}(x)=(1-6x^2+6|x|^3)I(0\leqslant|x|\leqslant 1/2) + 2(1-|x|)^3I(1/2<|x|\leqslant 1)$, and (iii) Bartlett (Bt) kernel $\mathcal{K}_{\rm{Bt}}(x)= (1-|x|)I(|x|\leqslant 1)$. 
The bandwidths for these kernels are selected via the data-driven bandwidth selection procedures of \cite{Andrews1991} and \cite{Chang2017}.

\subsection{Test for discretely observed functional time series}\label{subsec:simu.cov} 
We consider functional time series of length $n\in \{200,400\}$ and dimension $p\in \{1,10,50,100\}$, covering univariate, moderate-, and high-dimensional settings. 
%For lag orders  $L\in \{2,4,6\}$, the corresponding test statistics are denoted by {\color{red} $T_n^{(2)}$, $T_n^{(4)}$, and $T_n^{(6)}$}, respectively. 
For the univariate case ($p=1$), we compare our proposed method with several existing procedures, including those of \cite{GK2007}, \cite{Zhang2016}, \cite{KRS2017}, and \cite{CR2020}, denoted by GK, ZH, KRS, and CR, respectively. 
For methods involving the lag order $L$, namely the proposed test $T_n$, GK, and KRS, we consider $L\in\{2,4,6\}$.
In addition, for the GK test, the projection dimension  is chosen based on the cumulative percentage of variance explained, using a threshold of 85\%. 
%The lag order is set to $L_{\rm GK}\in\{2,4,6\}$, yielding   ${\rm GK}_2$, ${\rm GK}_4$, and ${\rm GK}_6$. 
For the ZH test, the block length  is determined using the minimum volatility method. 
%For the KRS test, we similarly set $L_{\rm KRS}\in \{2,4,6\}$, resulting in ${\rm KRS}_2$, ${\rm KRS}_4$ and ${\rm KRS}_6$. 
For the CR test, we employ the Parzen (${\rm CR}_{\rm Pz}$) and Bartlett (${\rm CR}_{\rm Bt}$) kernels with bandwidth $h=n^{1/(2q+1)}$, where $q$ denotes the order of the kernel.  
To assess the performance of the proposed method for discretely observed functional time series, we generate discrete observations according to model \eqref{eq:dismodel}. The observation time points $u_{t,k}$ and measurement errors $\varsigma_{t,j,k}$ are independently sampled from $\text{Unif}(0,1)$ and $\mathcal{N}(0,0.5^2)$, respectively. 
We set $N_{t,j}=N\in\{25,51\}$ and apply the Epanechnikov kernel to obtain local linear smoothing estimators $\hat{\vep}_{t,j}(\cdot)$, with the bandwidth selected by 5-fold cross-validation over the candidate set $\{0.1, 0.15,\ldots,0.4\}$.
%where each validation set is defined as the collection of functional observations across a randomly partitioned subset of the discrete observation time points.
Moreover, six data-generating models for the latent curves $\bvep_t(\cdot)$ are considered as follows:

\begin{description}
\item[{\bf Model 1.}] To mimic the infinite-dimensionality of random curves, we generate $\varepsilon_{t,j}(u)=\bs(u)^{\T}\btheta_{t,j}$ for $t\in[n]$, $j\in[p]$ and $u\in \cU$, where $\bs(u)$ is a 25-dimensional Fourier basis function and $\btheta_t=(\btheta_{t,1}^{\T},\ldots,\btheta_{t,p}^{\T})^{\T}\in\bbR^{25p}$ is sampled from a mean-zero multivariate Gaussian distribution with block covariance matrix $\bOmega\in \bbR^{25p\times 25p}$, 
whose $(j,k)$-th block is $\bOmega_{jk}=\omega_{jk}\bD$ with $\bD=\diag(4^2,1,3^{-2},\ldots,25^{-2})$ 
and $\omega_{jk}=0.995^{|j-k|}$ for $j,k\in[p]$.

\item[{\bf Model 2.}] 
Let $\boldsymbol{\varepsilon}_t(u)={\bf D}^{1/2} \{\boldsymbol{\sigma}_t(u)\circ \ba_t(u)\}$, where $\circ$ is the Hadamard product,
${\bf D}=\{D_{jk}\}_{p\times p}$ has unit diagonal, entries $D_{jk}=0.97$ for $r(q-1)+1\leq j\neq k\leq rq$ with $r=\lceil p/3\rceil$ and $q=1,\ldots,\lfloor p/r\rfloor$, and zeros otherwise. Let $\ba_t(u)=\{a_{t,1}(u),\ldots,a_{t,p}(u)\}^{\T}$  be  i.i.d. $p$-dimensional standard Brownian motions on $\cU$  and $\boldsymbol{\sigma}_t(u)=\{\sigma_{t,1}(u),\ldots,\sigma_{t,p}(u)\}^{\T}$, where
$  \sigma_{t,j}^2(u)=u+ 0.1\int_{\cU} \exp\{(u^2+s^2)/2\} \sigma_{t-1,j}^2(s)   a_{t-1,j}^2(s)\,{\rm d}s$.

\item[{\bf Model 3.}] 
Let $\bvep_{t}(u) = \int_{\cU} \bA(u,v)\bvep_{t-1}(v)\,{\rm d}v + \bfzeta_{t}(u) $, where $\bfzeta_t(u)$ is generated in the same way as $\bvep_t(\cdot)$ in Model 1, and $\bA(u,v) = \{A_{jk}(u,v)\}_{p\times p}$ with $A_{jj}(u,v) = c\exp\{-(u^2+v^2)/2\}$ and $A_{jk}(u,v) = 0.5A_{jj}(u,v) I(|j-k|=1)$ for  $j\neq k$. The constant $c$ is chosen so that $\|A_{jj}\|_{\cS}=0.3$.

\item[{\bf Model 4.}]  
Let $\bvep_t(u)$ follow the same autoregressive structure as in Model 3, except that $\bfzeta_t(u)$ is generated from the same mechanism used for $\bvep_t(\cdot)$ in Model 2 and the constant $c$ is chosen so that $\|A_{jj}\|_{\cS}=0.5$.
% Let  $\bvep_{t}(u) = \int_{\cU} \bA(u,v)\bvep_{t-1}(v)\,{\rm d}v + \bfzeta_{t}(u) $, where $\bfzeta_t(u)$ is generated as   in Model 2,   
% and $\bA(u,v) = \{A_{jk}(u,v)\}_{p\times p}$ with $A_{jj}(u,v) = c\exp\{-(u^2+v^2)/2\}$ and $A_{jk}(u,v) = 0.5A_{jj}(u,v) I(|j-k|=1)$ for  $j\neq k$. 
% The constant $c$ is chosen so that $\|A_{jj}\|_{\cS}=0.5$.

% \item[{\bf Model 5.}]
% Let $\bvep_t(u) = \bB\bfzeta_t(u)$, where $\bfzeta_t(u)=\{\zeta_{t,1}(u),\ldots,\zeta_{t,p}(u)\}^{\T}$. For $k \in [k_0]$ with  $k_0=\lceil 3p/4 \rceil,$ 
% let $\zeta_{t,k}(u)=4u(1-u)a_{t,k}$, 
% where $(a_{1,k},\ldots,a_{n,k})^{\T}\sim {\cal N}({\bf 0}, \bSigma^a)$, with $\bSigma^a$ having unit diagonal, entries $0.8|i-j|^{-0.5}$ for $1\leqslant |i- j| \leqslant\lceil n/2\rceil$, and zeros otherwise. 
% For the remaining $p-k_0$ elements,  $\{ \zeta_{t,k_0+1}(u),\ldots,\zeta_{t,p}(u)\}^{\T}$ are generated from the same Gaussian basis-expansion mechanism used for $\bvep_t(\cdot)$  in Model 1 with dimension $p-k_0$ and $\bD=\diag(1^{-2},\ldots,25^{-2})$.
% The coefficient matrix $\bB=(B_{jk})_{p\times p}$ is generated such that,  $B_{jk}\sim \text{Unif}(-1,1)$ with probability 2/3 and $B_{jk}=0$ with probability 1/3 independently for $1\leq j\neq k\leq p$, and $B_{jj}=0.8$ for $j\in[p]$.
% \item[{\bf Model 6.}] 
% $\bvep_t(u)$ is generated in the same way as in Model 5, except that $k_0=\lceil p/2\rceil$ and $\{\zeta_{t,k_0+1}(u),\ldots,\zeta_{t,p}(u)\}^{\T}$ is generated using the same mechanism as $\bvep_t(\cdot)$ in Model 2 with dimension $p-k_0$.

\item[{\bf Model 5.}]
Let $\bvep_t(u) = \bB\bfzeta_t(u)$, where $\bfzeta_t(u)=\{\zeta_{t,1}(u),\ldots,\zeta_{t,p}(u)\}^{\T}$. For $k \in [k_0]$ with  $k_0=\lceil 3p/4 \rceil,$ 
let $\zeta_{t,k}(u)=4u(1-u)a_{t,k}$, 
where $(a_{1,k},\ldots,a_{n,k})^{\T}\sim {\cal N}({\bf 0}, \bSigma^a)$, with $\bSigma^a$ having unit diagonal, entries $0.8|i-j|^{-0.5}$ for $1\leqslant i,j \leqslant\lceil n/2\rceil$, and zeros otherwise. 
For the remaining $p-k_0$ elements,  $\{ \zeta_{t,k_0+1}(u),\ldots,\zeta_{t,p}(u)\}^{\T}$ are generated from the same Gaussian basis-expansion mechanism used for $\bvep_t(\cdot)$  in Model 1 with dimension $p-k_0$ and $\bD=\diag(1^{-4},\ldots,25^{-4})$.
The coefficient matrix $\bB=(B_{jk})_{p\times p}$ is generated such that,  $B_{jk}\sim \text{Unif}(-1,1)$ with probability 2/3 and $B_{jk}=0$ with probability 1/3 independently for $1\leq j\neq k\leq p$, and $B_{jj}=0.8$ for $j\in[p]$.

\item[{\bf Model 6.}] 
$\bvep_t(u)$ is generated in the same way as in Model 5, except that $k_0=\lceil 2p/3\rceil$ and $\{\zeta_{t,k_0+1}(u),\ldots,\zeta_{t,p}(u)\}^{\T}$ is generated using the same mechanism as $\bvep_t(\cdot)$ in Model 2 with dimension $p-k_0$.
\end{description}

Models~1--2 and Models~3--6 correspond to the null and alternative hypotheses, respectively. Specifically, Model~2 extends the univariate functional white noise case in  \cite{Zhang2016} to the multivariate setting.
Models 3 and 4 consider $p$-dimensional generalizations of the univariate functional autoregression of order 1 adopted by \cite{Horvath2013} for functional white noise test. Models~5 and 6 are inspired by \cite{Chang2017}, which studied the white noise test for high-dimensional scalar time series.

\begin{table}[!htb]
\caption{Comparison of empirical sizes (Models 1--2) and powers (Models 3--6) between our proposed test based on the QS kernel and other tests for $p=1$ at the 0.05 nominal level. All numbers are multiplied by 100.}
\small
\renewcommand{\arraystretch}{0.7}
\centering
\setlength{\aboverulesep}{0pt}
\setlength{\belowrulesep}{0pt} 
\resizebox{\textwidth}{!}{ 
\begin{tabular}{ccccccccccccccc}
\toprule
& \multirow{2}{*}{$n$} & \multirow{2}{*}{$N$}
& \multicolumn{3}{c}{${T}_n$}
& \multicolumn{3}{c}{${\rm GK}$}
& \multicolumn{3}{c}{${\rm KRS}$}
& \multirow{2}{*}{${\rm CR}_{\rm Pz}$}
& \multirow{2}{*}{${\rm CR}_{\rm Bt}$}
& \multirow{2}{*}{${\rm ZH}$} \\
\cmidrule(lr){4-6}
\cmidrule(lr){7-9}
\cmidrule(lr){10-12}
& & &
$L=2$ & $L=4$ & $L=6$
& $L=2$ & $L=4$ & $L=6$
& $L=2$ & $L=4$ & $L=6$
& & & \\
\midrule
Model 1 & 200 & 25 & 7.1 & 5.5 & 6.0 & 4.9 & 5.4 & 4.8 & 5.8 & 5.6 & 5.2 & 5.5 & 5.8 & 4.8 \\
& & 51 & 6.5 & 5.6 & 5.5 & 5.8 & 5.4 & 5.2 & 5.8 & 5.8 & 5.3 & 5.6 & 5.7 & 6.2 \\
& 400 & 25 & 5.2 & 5.2 & 5.1 & 5.6 & 4.2 & 4.5 & 5.4 & 4.7 & 4.2 & 4.8 & 5.3 & 6.2 \\
& & 51 & 5.0 & 5.1 & 5.2 & 5.1 & 5.0 & 4.6 & 5.6 & 4.9 & 4.1 & 4.9 & 5.1 & 4.8 \\ 
\midrule
Model 2 & 200 & 25 & 4.3 & 3.0 & 3.2 & 4.7 & 4.9 & 5.2 & 4.6 & 4.7 & 4.9 & 6.0 & 5.4 & 3.9 \\
& & 51 & 3.8 & 3.4 & 3.5 & 6.1 & 5.6 & 5.0 & 5.6 & 5.0 & 4.6 & 7.1 & 7.3 & 5.9 \\
& 400 & 25 & 3.6 & 4.8 & 4.5 & 5.9 & 5.8 & 5.6 & 5.0 & 5.6 & 5.2 & 6.0 & 5.7 & 4.2 \\
& & 51 & 4.8 & 4.6 & 4.9 & 7.5 & 6.5 & 6.5 & 7.0 & 6.2 & 5.4 & 6.6 & 7.7 & 6.4 \\ 
\midrule
Model 3 & 200 & 25 & 97.1 & 95.5 & 93.9 & 95.6 & 86.2 & 78.7 & 96.7 & 92.2 & 87.2 & 98.7 & 97.5 & 90.7 \\
& & 51 & 97.2 & 95.8 & 93.7 & 96.2 & 92.0 & 88.4 & 96.7 & 92.7 & 87.5 & 98.9 & 97.7 & 98.1 \\
& 400 & 25 & 100 & 100 & 100 & 100 & 99.9 & 99.3 & 100 & 99.9 & 99.8 & 100 & 100 & 95.6 \\
& & 51 & 100 & 100 & 100 & 100 & 100 & 99.9 & 100 & 99.9 & 99.8 & 100 & 100 & 100 \\ 
\midrule
Model 4 & 200 & 25 & 90.7 & 80.9 & 71.6 & 99.7 & 99.6 & 99.2 & 100 & 100 & 100 & 100 & 100 & 98.4 \\
& & 51 & 90.1 & 79.7 & 69.9 & 98.9 & 98.4 & 97.1 & 100 & 100 & 100 & 100 & 100 & 99.8 \\
& 400 & 25 & 100 & 100 & 100 & 100 & 100 & 100 & 100 & 100 & 100 & 100 & 100 & 99.4 \\
& & 51 & 100 & 100 & 100 & 100 & 100 & 100 & 100 & 100 & 100 & 100 & 100 & 100 \\ 
\midrule
Model 5 & 200 & 25 & 86.8 & 82.4 & 79.8 & 94.6 & 92.5 & 89.9 & 96.8 & 96.6 & 95.5 & 98.1 & 98.2 & 84.9 \\
& & 51 & 87.4 & 83.6 & 80.6 & 98.1 & 97.6 & 96.6 & 97.4 & 96.7 & 96.1 & 98.2 & 98.8 & 90.3 \\
& 400 & 25 & 99.7 & 99.7 & 99.4 & 100 & 99.8 & 99.8 & 100 & 100 & 100 & 100 & 100 & 98.8 \\
& & 51 & 99.9 & 99.9 & 99.8 & 100 & 100 & 100 & 100 & 100 & 100 & 100 & 100 & 99.1 \\ 
\midrule
Model 6 & 200 & 25 & 86.7 & 82.7 & 80.3 & 94.5 & 92.2 & 90.6 & 96.8 & 96.4 & 95.4 & 98.1 & 98.1 & 86.6 \\
& & 51 & 87.5 & 84.1 & 80.0 & 98.1 & 97.7 & 96.6 & 97.7 & 96.7 & 96.3 & 98.2 & 98.8 & 90.4 \\
& 400 & 25 & 99.7 & 99.7 & 99.5 & 100 & 99.8 & 99.8 & 100 & 100 & 100 & 100 & 100 & 98.5 \\
& & 51 & 99.9 & 99.7 & 99.7 & 100 & 100 & 100 & 100 & 100 & 100 & 100 & 100 & 99.3 \\ 
\bottomrule
\end{tabular}
}
\label{tab:p1}
\end{table}

Table \ref{tab:p1} reports the empirical sizes and powers of the proposed and competing tests for $p=1$. Since the results for different kernels exhibit similar patterns, we present only those based on the QS kernel here, and defer the results for other kernels to  Table \ref{tab:p1_supp} in the supplementary material. 
For Models 1 and 2, all tests maintain accurate size control, with empirical sizes generally close to the nominal level across different time-series lengths ($n\in\{200,400\}$) and numbers of observation points ($N\in\{25,51\}$). 
For Models 3--6, the empirical powers of all tests increase with $n$.  
When $n=200$, the proposed test has slightly lower power than some competing methods under Models 3-6. However, as $n$ increases, these differences become smaller and negligible. Overall, the proposed test performs  comparably to existing methods for testing univariate functional white noise.

\begin{table}[htb]
\caption{Empirical sizes (Models 1--2) and powers (Models 3--6) of the proposed test ${T}_n$ based on the QS kernel for $p\in\{10,50,100\}$ at the 0.05 nominal level. All numbers   are multiplied by 100.}
\centering
\renewcommand\arraystretch{0.7}
\setlength{\aboverulesep}{0pt}
\setlength{\belowrulesep}{0pt} 
\resizebox{\textwidth}{!}{
\begin{tabular}{cc cccccc  cccccc}
\toprule
& &\multicolumn{6}{c}{$n=200$} &  \multicolumn{6}{c}{$n=400$} \\
\cmidrule(lr){3-8}\cmidrule(lr){9-14}
& &\multicolumn{2}{c}{$p=10$} & \multicolumn{2}{c}{$p=50$} & \multicolumn{2}{c}{$p=100$} & \multicolumn{2}{c}{$p=10$} & \multicolumn{2}{c}{$p=50$} & \multicolumn{2}{c}{$p=100$} \\ 
& & $N$=25 & $N$=51 & $N$=25 & $N$=51 & $N$=25 & $N$=51  & $N$=25 & $N$=51 & $N$=25 & $N$=51 & $N$=25 & $N$=51 \\
\midrule
Model 1 & $L=2$  
& 4.3 & 5.1 & 3.9 & 3.9 & 2.7 & 3.1 & 4.6 & 4.7 & 5.1 & 4.0 & 3.7 & 3.6 \\
& $L=4$  
& 4.1 & 4.1 & 3.5 & 3.4 & 2.3 & 3.2 & 5.0 & 4.1 & 4.9 & 3.8 & 3.9 & 4.7 \\
& $L=6$  
& 4.6 & 4.0 & 4.0 & 3.5 & 2.5 & 2.8 & 4.7 & 3.9 & 4.4 & 3.9 & 3.2 & 4.0 \\
\midrule
Model 2 & $L=2$  
& 3.5 & 2.1 & 2.2 & 2.0 & 1.5 & 2.0 & 3.8 & 3.9 & 2.4 & 2.5 & 2.8 & 3.5 \\
& $L=4$  
& 2.5 & 1.7 & 1.7 & 1.4 & 1.4 & 1.9 & 3.1 & 2.8 & 2.3 & 2.9 & 1.6 & 3.2 \\
& $L=6$  
& 2.4 & 1.9 & 1.4 & 1.4 & 1.2 & 1.2 & 3.1 & 2.1 & 2.2 & 2.9 & 2.4 & 3.0 \\
\midrule
Model 3 & $L=2$  
& 100 & 100 & 99.7 & 100 & 99.9 & 100 & 100 & 100 & 100 & 100 & 100 & 100 \\
& $L=4$  
& 100 & 100 & 99.6 & 100 & 99.7 & 100 & 100 & 100 & 99.9 & 100 & 100 & 100 \\
& $L=6$  
& 100 & 100 & 99.6 & 100 & 99.5 & 100 & 100 & 100 & 99.9 & 100 & 100 & 100 \\
\midrule
Model 4 & $L=2$  
& 99.3 & 99.0 & 99.0 & 100 & 98.7 & 100 & 99.7 & 99.8 & 99.5 & 100 & 99.6 & 100 \\
& $L=4$  
& 99.3 & 99.1 & 99.1 & 100 & 98.3 & 100 & 99.7 & 99.9 & 99.4 & 100 & 99.5 & 100 \\
& $L=6$  
& 99.5 & 99.1 & 98.6 & 100 & 98.0 & 100 & 99.8 & 99.8 & 99.3 & 100 & 99.5 & 99.9 \\
\midrule
Model 5 & $L=2$  
& 83.6 & 86.0 & 72.2 & 73.8 & 62.1 & 67.7 & 99.4 & 99.6 & 96.8 & 97.6 & 94.3 & 94.6 \\
& $L=4$  
& 78.6 & 80.9 & 65.3 & 67.0 & 52.9 & 57.2 & 99.4 & 99.6 & 94.1 & 96.0 & 91.3 & 92.0 \\
& $L=6$  
& 75.8 & 77.9 & 61.3 & 61.5 & 48.3 & 50.1 & 99.1 & 99.5 & 92.5 & 94.6 & 89.3 & 90.0 \\
\midrule
Model 6 & $L=2$  
& 91.7 & 92.4 & 97.2 & 97.7 & 96.5 & 98.2 & 99.9 & 100 & 99.9 & 100 & 99.8 & 99.9 \\
& $L=4$  
& 90.3 & 90.8 & 95.4 & 96.6 & 94.5 & 97.1 & 99.9 & 100 & 99.9 & 100 & 99.9 & 99.8 \\
& $L=6$  
& 89.6 & 89.6 & 94.4 & 95.4 & 92.8 & 96.2 & 100 & 100 & 100 & 99.9 & 99.5 & 99.9 \\
\bottomrule
\end{tabular}
}
\label{tab:dis.H0} 
\end{table}

The simulation results for multivariate functional time series are reported in Table \ref{tab:dis.H0}, and Tables \ref{tab:dis.H0_supp} and \ref{tab:dis.HA_supp} in the supplementary material, which summarize the empirical sizes and powers of our proposed test.
Some apparent patterns are observable.
First, for Models 1 and 2, the proposed test exhibits reasonable size control, and 
for Models 3--6, our test achieves strong empirical powers.
% Second, the rejection rates (i.e., empirical sizes and powers) {\color{red}generally} tend to decrease as $p$ increases, showing the  impact of dimensionality on the bootstrap-based approximation. 
Second, increasing $n$ improves the performance of the proposed test, and a larger $N$ also enhances performance in most cases. Specifically, the empirical sizes become closer to the nominal level, while the empirical powers approach one, providing evidence for the consistency of the proposed test. Additionally, the empirical size and power results tend to be less favorable for larger values of $p,$ reflecting the impact of dimensionality on the bootstrap-based approximation.
% Nevertheless, increasing $n$ improves the performance of the proposed test,
% whereas increasing $N$ also enhances performance in most cases. The empirical sizes
% become closer to the nominal level and the empirical powers approach one,
% supporting the consistency of the proposed test.}
% {\color{red}However, increasing $n$ or $N$ improves test performance {\color{red}in most cases}, with the empirical sizes becoming closer to the nominal level and the empirical powers approaching one, supporting the consistency of our proposed test.} 
Finally, the results are generally insensitive to the choice of kernel function and lag order $L,$ indicating robust performance. 
Overall, the proposed test performs well across all settings. The slightly undersize phenomenon in the moderate- and high-dimensional scenarios may be due to the choice of bandwidth, which is often a difficult issue in practice.

\subsection{Goodness-of-fit test for   functional factor model}\label{subsec:simu.ff}
To evaluate the finite-sample performance of the proposed goodness-of-fit test for the functional factor model, we consider both correctly specified and misspecified settings in terms of the number of factors.
We first generate multivariate functional time series of length $n\in\{200,400\}$ and dimension $p\in \{10,50,100\}$ according to the following model with the true number of factors $r_0=3$.

\begin{description}
    \item [{\bf Model 7.}] Let $\bX_t(u)=\bA \bZ_{t}(u) + \bvep_t(u)$, where $\bZ_{t}(u)=\{Z_{t,1}(u),\ldots,Z_{t,r_0}(u)\}^{\T}$ and  $\bA\in \mathbb{R}^{p\times r_0}$ has entries independently sampled from $\text{Unif}(-\sqrt{3},\sqrt{3})$. Generate $Z_{t,l}(u)=\sum_{i=1}^{25}\zeta_{t,l,i}\phi_i(u)$ for $l\in[r_0]$, where $\{\phi_i(u)\}_{i=1}^{25}$ is a 25-dimensional Fourier basis function and the basis coefficients $\boldsymbol{\zeta}_{t,i}=(\zeta_{t,1,i}, \ldots, \zeta_{t,r_0,i})^{\T} ={\bf V}\boldsymbol{\zeta}_{t-1,i}+\boldsymbol{\epsilon}_{t,i}$ with ${\bf V} = (0.45^{|\ell-\ell'|+1})_{r_0\times r_0}$ and   $\boldsymbol{\epsilon}_{t,i}=(\epsilon_{t,1,i},\ldots,\epsilon_{t,r_0,i})^{\T}$ consisting of independent $\cN(0,i^{-1.5})$ components. For each $j\in[p]$, we generate $\varepsilon_{t,j}(u)=\sum_{i=1}^{10}2^{-(i+1)}\tilde{Z}_{t,j,i}\phi_{i}(u)$, where $\tilde{Z}_{t,j,i}$'s are independent standard normal random variables. 
\end{description} 

For estimation, we set $\ell_0=4$ in \eqref{eq:hatM} and consider two cases:
(i) $r=3$, where the model is correctly specified, the idiosyncratic components are white noise, and rejection rates reflect empirical sizes;
(ii) $r=1$, where the number of factors is underspecified, so that the idiosyncratic components retain latent dynamic factor structure and exhibit temporal dependence, and rejection rates reflect the empirical powers for detecting misspecification.

Table \ref{tab:factor} and Table \ref{tab:factor_supp} in the supplementary material report the rejection rates of the proposed test for Model 7 under the correctly specified scenario ($r=3$) and misspecified scenario ($r=1$). 
The results show that, under the correctly specified scenario, the test is slightly conservative for $n=200$, particularly when $p$ is large. 
As $n$ increases to $400$, the empirical sizes move  closer to the nominal level across all dimensions, indicating improved size control. 
Under misspecification, the empirical powers for $n=200$ are relatively high across all dimensions. 
As expected, enlarging  $n$ improves the empirical powers,  which approach one across all dimensions.
% This shows the consistency of the proposed test in detecting model misspecification.
These findings support the use of the proposed test as a goodness-of-fit diagnostic for the functional factor model by assessing residual serial dependence.

\begin{table}[H]
\caption{Rejection rates of our proposed test ${T}_n$  for Model 7 based on the QS kernel  at the 0.05 nominal level. All numbers  are multiplied by 100.}
\footnotesize
\centering
\renewcommand\arraystretch{0.6} 
\setlength{\aboverulesep}{2pt}
\setlength{\belowrulesep}{2pt}
\begin{tabular}{cccccccc}
% \begin{tabular}{p{1.7em}p{1.8em}cccccc}
\toprule
& &\multicolumn{3}{c}{$r=3$ (size)} &\multicolumn{3}{c}{$r=1$ (power)} \\\cmidrule(lr){3-5}\cmidrule(lr){6-8}
$n$&  $p$ & $L=2$   & $L=4$   & $L=6$ & $L=2$   & $L=4$   & $L=6$  \\ 
\midrule 
200&  10  & 3.2     & 2.0     & 2.6    & 96.3     & 94.6      &  92.9  \\
 & 50  & 1.2    & 1.2    & 0.8    & 99.4    & 98.7     &   97.3   \\
 &100 & 1.1    & 1.0     & 0.5    & 99.4      &  98.4    &  97.7   \\ \midrule 
400&  10   & 3.9     & 3.8     & 3.6 & 100 & 100 & 99.9\\
& 50& 2.6    & 1.9    & 1.8  & 100  & 100  & 100\\
& 100 & 2.0      & 1.3    & 1.1  & 100 & 99.9 & 99.8 \\
\bottomrule
\end{tabular}
\label{tab:factor}
\end{table}

\section{Real data analysis}\label{sec.real}
In this section, we apply our testing procedure to two real-world datasets from finance and demography. 
Both datasets consist of functional time series with discrete observations, which we first smooth using a local linear method to  recover the underlying functional trajectories. We then perform the proposed white noise test  on the reconstructed curves.
If the white noise hypothesis is rejected, we further fit a high-dimensional functional time series model (i.e., functional factor model in Section~\ref{sec.yield} and  vector functional autoregressive model in Section~\ref{sec.mortality}), and apply our test to assess the goodness-of-fit and determine the model complexity (i.e., the number of factors or lags). 
The  $p$-values are computed based on Quadratic Spectral kernel, Parzen kernel, and Bartlett kernel defined in Section \ref{sec.simu}, lag order  $L=2$, and $1000$ bootstrap replications.

\subsection{Yield data}\label{sec.yield}
Our first dataset contains sovereign zero-coupon yield curves, which  is downloaded from the Bloomberg terminal. It consists of end-of-week zero-coupon rates for $p=22$ countries (list is available in Table \ref{tab.list} of the supplementary material) from January 2023 to December 2024, with a total of $n=104$ observations. For each country and observation date, the yield curve is observed  at a set of discrete maturities $120\times u_{k} \in \{3,6,12,24,36,48,60,72,84,96,108,120\}$ months, covering a broad range from 3 months to 10 years. We first conduct the white noise test and obtain the corresponding $p$-values of zero for all three kernels, 
%$p_{\rm QS}= p_{\rm Pz}= p_{\rm Bt}=0$, 
indicating that the yield data are not a white noise sequence at the 0.05 significance level.
% {\color{red}We then apply the functional factor model in Section \ref{sec.ffm}  and perform 
% the sequential testing procedure to determine the number of factors. The results show that the null hypothesis is significantly rejected for $r=1$ and $r=2$ across all three kernels, whereas for $r=3$, the corresponding $p$-values are $p_{\rm QS}= 0.075$, $p_{\rm Pz}= 0.081$ and $p_{\rm Bt}=0.063$. Thus the estimated number of factors is $\hat{r}=3$ at the 0.05 significance level. By contrast, the ratio-based estimator proposed in \cite{lam2012} and \cite{guo2026} yields $\hat{r}=1$, which seems to underestimate the  number of factors.}
We then apply the functional factor model~\eqref{eq:ffm} in Section~\ref{sec.ffm} and examine several candidate numbers of factors. The results show that the null hypothesis is rejected for $r=1$ and $r=2$ across all three kernels at the 0.05 significance level, providing evidence that one or two factors are insufficient to capture the temporal dependence in the data. For $r=3$, the corresponding $p$-values based on Quadratic Spectral kernel, Parzen kernel and Bartlett kernel are, respectively, 0.075, 0.081 and 0.063, which suggests that the functional factor model~\eqref{eq:ffm} with three factors provides an adequate fit to the data.

\begin{figure}[htbp]
\centering
\includegraphics[scale=0.8]{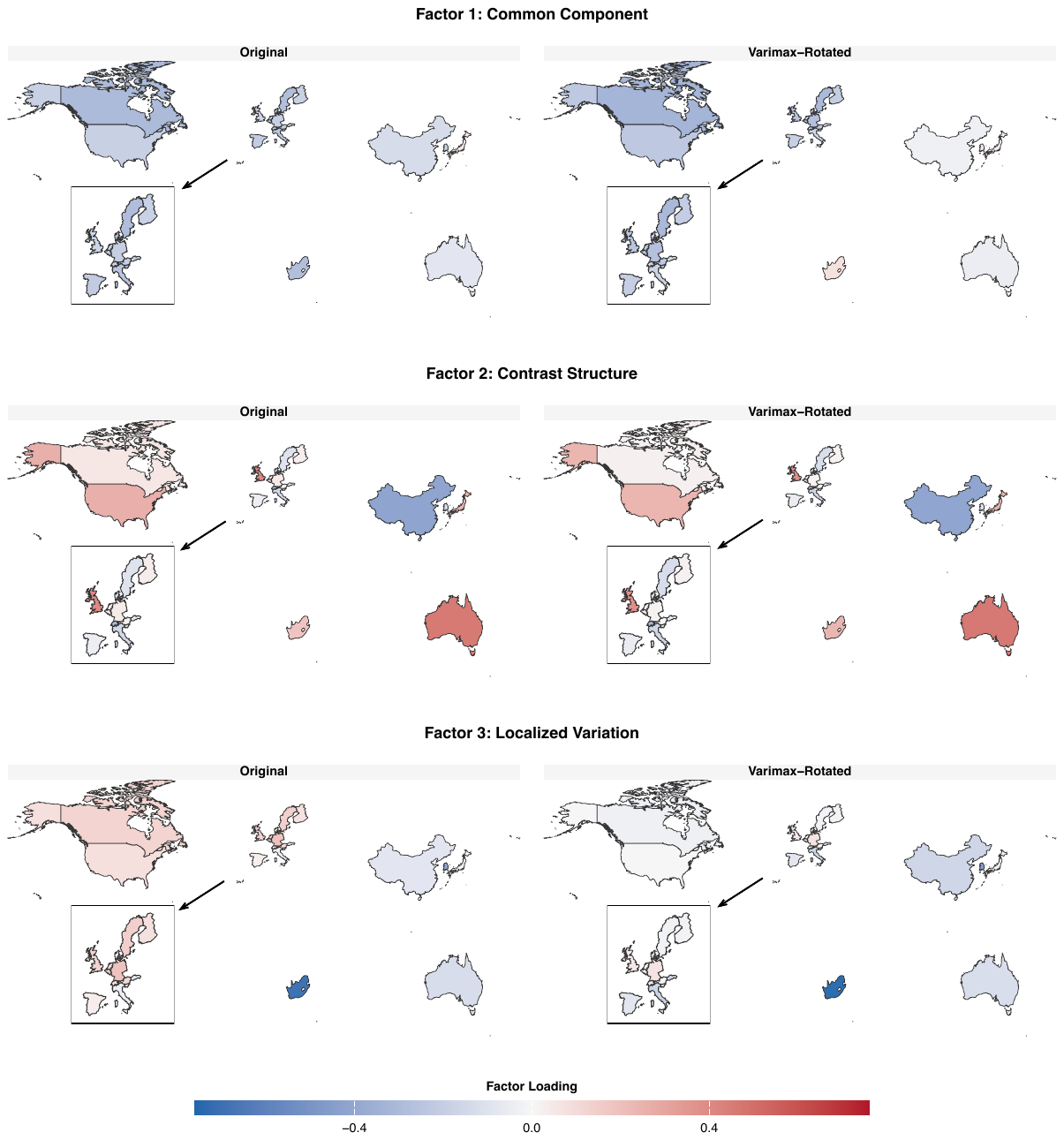}
\vspace{-4cm}
\caption{Spatial heatmaps based on estimated factor loading matrix (left column) and varimax-rotated loading matrix (right column) of 22 countries.}
\label{fig.yield}
\end{figure}

Figure \ref{fig.yield} displays spatial heatmaps of the estimated factor loading matrix and its varimax-rotated matrix, obtained by maximizing the sum of the variances of the squared loadings.
The varimax rotation clarifies the factor structure, enabling a clearer identification of each factor’s role in driving the dynamics of yield curves.
The first factor represents a global common component and influences the yield dynamics relatively uniformly across most countries. 
%although it assigns a more negative loading to Russia.
The second factor mainly captures heterogeneity across economies, distinguishing Western economies from China through loadings of opposite signs. 
Additionally, South Africa loads heavily on the third factor, suggesting that local sovereign risk primarily affects its yield dynamics.

\subsection{Age-specific mortality data}\label{sec.mortality}
This section applies our proposed testing method to a dataset of age-specific mortality rates for males, which is obtained from \cite{hmd2023}.
This dataset covers the period 1960--2013  from 24 countries, listed in Table \ref{tab.morlist} of the supplementary material. 
Let $W_{t,j,k}$ $(t\in[54], j\in[24], k\in[96])$ be the observed log mortality rate of $X_{t,j}(u_k)$ for males aged $k-1$ in the $j$-th country during year $1959+t$. 
After implementing local linear smoothers on $\{W_{t,j,k}\}_{k \in [96]}$, we conduct the proposed white noise test to the reconstructed curve  series to examine whether $\bX_t(\cdot)$ exhibits temporal dependence. The resulting $p$-values based on the three kernels are all zero,
%$p_{\rm QS}= p_{\rm Pz}= p_{\rm Bt}=0$,   
strongly rejecting the null hypothesis of white noise at the 0.05 significance level.  
To capture the temporal dependence,
we then employ our test to evaluate the goodness-of-fit of a vector functional autoregressive model of order 1:
\begin{align}\label{eq:fvar1}
	\bX_t(u) = \int_{\cU} \bA(u,v)\bX_{t-1}(v)\,{\rm d}v + \bvep_t(u)\,, \quad t=2,\ldots,n\,,
\end{align}
where $\bvep_t(\cdot)$ is a white noise sequence, and $\bA(\cdot,\cdot)=\{A_{jk}(\cdot,\cdot)\}_{p \times p}$ is a sparse functional transition matrix.
We estimate the model  using the three-step procedure in \cite{Guo2023} (see Section \ref{threesteps} of the supplementary material for details), and  obtain the residuals $\hat{\bvep}_t(\cdot)$.
We then apply the white noise test to $\{\hat{\bvep}_t(\cdot)\}$ to assess whether $\{\bvep_t(\cdot)\}$ is white noise, and obtain the corresponding $p$-values 
of 0.187, 0.161, and 0.201 for the Quadratic Spectral, Parzen,
and Bartlett kernels, respectively,
%$p_{\rm QS}=0.187$, $p_{\rm Pz}=0.161$, and $p_{\rm Bt}=0.201$, 
all exceeding the 0.05 significance level. These results provide empirical support for the adequacy  of the fitted vector functional autoregressive model with lag order 1.

We adopt the functional network Granger causality formulation under \eqref{eq:fvar1} to capture the Granger-type causal relationships among mortality rates across countries. 
Specifically, if $\|A_{jk}\|_{\cS}\neq 0$,  $\{X_{t,k}(\cdot)\}$ is said to be the  Granger-type causal for  $\{X_{t,j}(\cdot)\}$ or equivalently there is a directed edge from node $k$ to node $j$.
For better visualization,  Figure \ref{fig.mortality} presents a directed network with indegree 3 (i.e., each node receives connections from three other  nodes, excluding self‑loops).
It is obvious that the network reveals  regional clustering. In particular, former Soviet Union countries, such as RUS (Russia), UKR (Ukraine), EST (Estonia), and BLR (Belarus), form a tightly connected cluster, reflecting persistent historical and socioeconomic ties \cite[]{MV2002, Denny2010}. 
Moreover, HUN (Hungary) and ITA (Italy) are both placed in the center of the network,  yet their causal roles differ. 
HUN  acts as a  geographical bridge between Eastern and Western Europe, while ITA  serves as a reference trajectory for international mortality evolution.

\vspace{-1.5em}
\begin{figure}[htbp]
    \centering
        \includegraphics[scale=0.4]{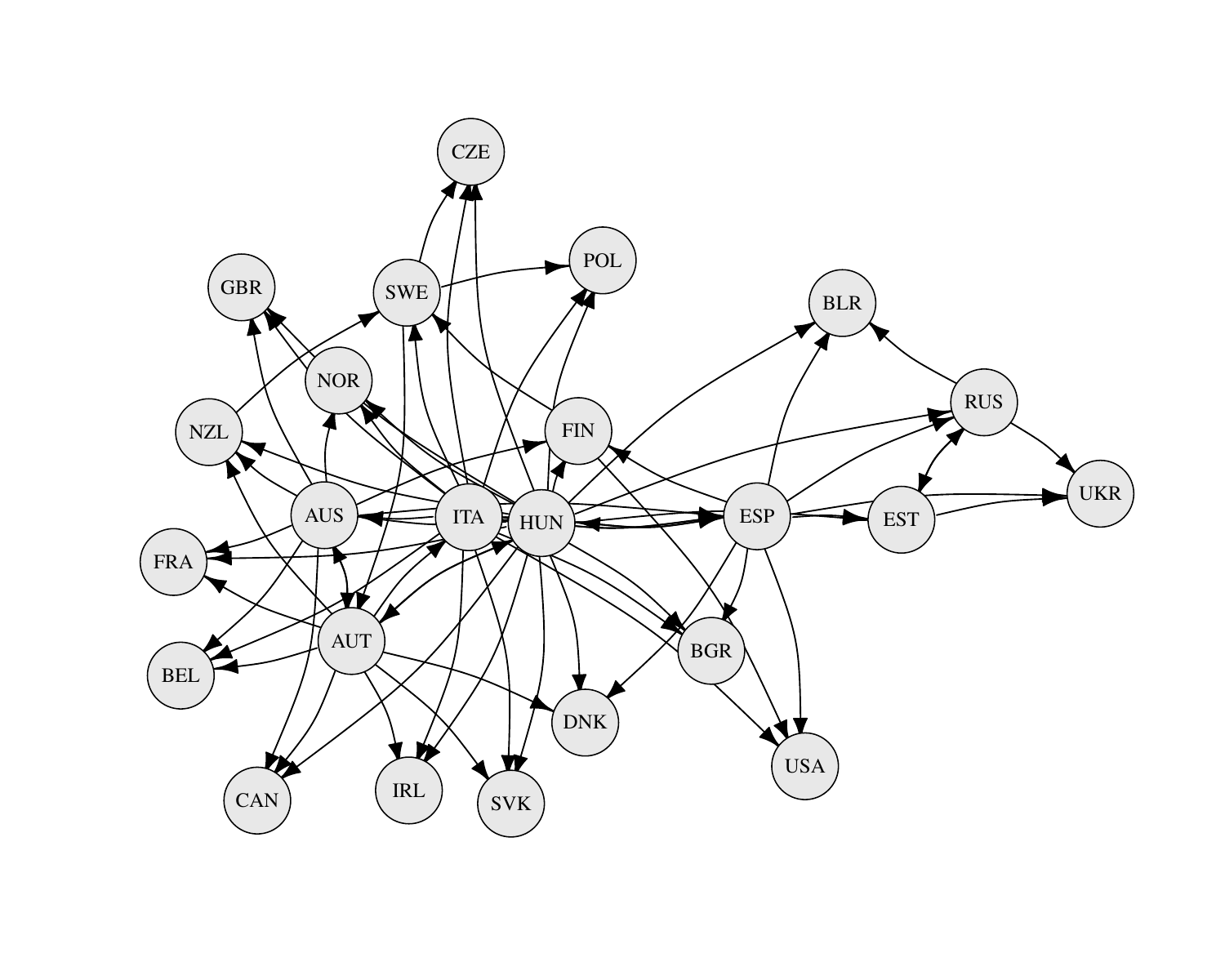}
\vspace{-3em}
     \caption{The estimated directed network with indegree $=3$ for $p=24$ countries.}
    \label{fig.mortality}
\end{figure}

\section{Discussion}
%Motivated by the rapid emergence of high-dimensional functional time series, 
In this paper,
we develop a unified error-contamination testing framework for functional white noise in high dimensions, which has not yet been explored in the literature.
To overcome both methodological and theoretical challenges induced by temporal dependence, high dimensionality, infinite-dimensional functional objects and estimation errors, we construct a supremum-type test statistic, and establish the Gaussian approximation theory 
%for high-dimensional dependent processes 
under a high-level condition. We further verify this condition in two concrete applications.
%To obtain the critical value, we develop computationally feasible parametric bootstrap procedure to test the nullity of cross-autocovariance functions.
%In addition, we apply our framework to two concrete testing scenarios and rigorously verify the corresponding high-level condition. 
%Finally, extensive simulations and empirical applications are provided to  illustrate the size and power of the proposed tests.

We identify several important directions for future research. First, for the tuning parameter $b_n$, we adopt the bandwidth selection method of \cite{Andrews1991}, which performs well in our numerical studies. Nevertheless, developing
a testing-driven bandwidth criterion to further improve size and power remains a valuable open problem.
Second, our test involves the choice of lag order $L$. Although the test is stable across different choices of $L$ in simulations, 
selecting $L$ for testing efficiency deserves further study. Alternatively, a spectral-domain approach could also be developed to avoid lag selection, as in univariate functional time series \cite[]{Zhang2016,Bagchi2018}. 
%Developing such tests for high-dimensional functional time series is a natural extension of existing spectral-domain tests for univariate functional time series \cite[]{Zhang2016,Bagchi2018}.}
Third, the proposed supremum-type test statistic $T_n$ is well suited to sparse alternatives with strong cross-autocovariance signals. For dense alternatives with weak signals accumulating over $\cU^2$, an $L_2$-type statistic $T_n'$ based on the Hilbert--Schmidt norm may be more powerful. Combining these two statistics is a promising direction for future research.
Finally, it is of great interest to investigate whether the proposed  framework can be adapted to other problems, such as,  testing stationarity in high-dimensional functional time series.  These topics are beyond the scope of the current paper and will be pursued elsewhere.

% \bigskip
% \noindent{\bf Data availability statement:}
% The data that support our findings are available from the corresponding author, upon request.

%\bibliographystyle{Chicago}
{
\setlength{\bibsep}{0pt} 
\bibliographystyle{dcu}
\bibliography{WNTest_Ref} 

\begin{thebibliography}{11}
	
	 
	
	\bibitem[Boucheron et al.(2013)]{Boucheron2013_supp}
	Boucheron, S., Lugosi, G. and Massart, P. (2013). {\sl Concentration inequalities: A Nonasymp- totic Theory of Independence}, Oxford University Press.
	
	\bibitem[Chang et al.(2024)]{ChangChenWu2023_supp}
	Chang, J., Chen, X. and Wu, M. (2024). Central limit theorems for high dimentional dependent data, {\sl Bernoulli} {\bf30}: 712--742.
	
	\bibitem[Chang et al.(2013)]{Chang2013_supp}
	Chang, J., Tang, C. and Wu, Y. (2013). Marginal empirical likelihood and sure independence feature screening, {\AS}  {\bf 41}: 2123--2148.


    \bibitem[Chernozhukov et al.(2017)]{Cher2017_supp}
	Chernozhukov, V., Chetverikov, D. and Kato, K. (2017). Central limit theorem and bootstrap approximations in high dimensions, {\AP} {\bf45}: 2309--2352.

    \bibitem[Chernozhukov  et al.(2022)]{Cher2022_supp}
	Chernozhukov, V., Chetverikov, D., Kato, K. and Koike, Y. (2022). Improved central limit theorem and bootstrap approximations in high dimensions, {\AS}  {\bf 50}: 2562--2586.


    \bibitem[Fan et al.(2018)]{Fan2018_supp}
Fan, J., Wang, W. and Zhong, Y. (2018). An $\ell_{\infty}$ eigenvector perturbation bound and its application to robust covariance estimation, {\sl Journal of Machine Learning Research} {\bf 18}: 1--42.
	
	\bibitem[Guo et al.(2025)]{Guo_etal2025_supp}
	Guo, S., Li, D., Qiao, X. and Wang, Y. (2025). From sparse to dense functional data in high dimensions: Revisiting phase transitions from a non-asymptotic perspective, {\sl Journal of Machine Learning Research} {\bf26}: 1--40.
	
	\bibitem[Guo et al.(2026)]{GQWW_2026_supp}
	Guo, S., Qiao, X., Wang, Q. and Wang, Z. (2026). Factor modeling for high-dimensional functional time series, {\JBES} {\bf 44}: 106--119.
	
	 
	
	\bibitem[Lam et al.(2011)]{Lam2011_supp}
	Lam, C., Yao, Q.  and Bathia, N. (2011). Estimation of latent factors for high-dimensional time series, {\sl Biometrika}  {\bf 98}: 901--918.
	
	\bibitem[Ledoux  and Talagrand(1991)]{Ledoux1991_supp}
	Ledoux, M.  and Talagrand, M. (1991). {\sl Probability in Banach spaces: Isoperimetry and processes}, Springer.
	
		\bibitem[Li et al.(2025)]{LQW2025_supp}
Li, D., Qiao, X.  and Wang, Y. (2025). Factor-guided estimation of large covariance matrix function with conditional functional sparsity, {\sl Journal of Econometrics} {\bf 251}: 106070.
 
	
	\bibitem[Meerschaert et al.(2013)]{MWX2013_supp}
	Meerschaert, M., Wang, W. and Xiao, Y. (2013). Fernique-type inequalities and moduli of continuity for anisotropic Gaussian random fields, {\sl Transactions of the American Mathematical Society} {\bf365}: 1081--1107.
	
	
	\bibitem[Talagrand(2014)]{Talagrand2014_supp}
	Talagrand, M. (2014). {\sl Upper and lower bounds for stochastic processes}, Springer.
	
	\bibitem[Vershynin(2018)]{Vershynin2018_supp}
	Vershynin, R. (2018). {\sl High-dimensional probability: An introduction with applications
		in data science}, Cambridge University Press.
	 
\end{thebibliography}
}

\newpage
 
\setcounter{equation}{0}
\setcounter{table}{0}
\setcounter{figure}{0}
\setcounter{lemma}{0}
\newtheorem{props}{Proposition}
\setcounter{props}{0}
\setcounter{page}{1}
\renewcommand{\thelemma}{\thesection\arabic{lemma}}
\renewcommand{\theequation}{S.\arabic{equation}}
\renewcommand{\theprops}{\thesection\arabic{props}}
\renewcommand{\thepage}{S\arabic{page}}
\renewcommand{\thetable}{S\arabic{table}}
\renewcommand{\thefigure}{S\arabic{figure}}

\spacingset{1.68}
\appendix
\begin{center}
{\bf\Large Supplementary material to ``Testing for functional white noise in high dimensions'' }
	% \begin{center}
	% 	{\noindent Jinyuan Chang, ~ Qing Jiang, ~Xinghao Qiao, ~Lin Yang}
	% \end{center}
% \bigskip
% \medskip\\
% {\bf Jinyuan Chang, ~\, Qing Jiang, ~\,Xinghao Qiao$^*$, ~\,Lin Yang}
\end{center} 

Throughout the supplementary material, we use $C \in (0,\infty) $ to denote a generic universal positive finite constant that does not depend on $(n,p,\cU)$, and  may be different in different uses. 
Denote $|x|_{+}=\max(0,x)$, $x\vee y=\max(x,y)$ and $x\wedge y=\min(x,y)$. For any positive integer $q\geq1$, we write $[q]=\{1,\ldots,q\}$ and $[q]_0=\{0,1,\ldots,q\}$.    Let $I(\cdot)$ denote the indicator function and ${\rm vec}$ be the vectorization for matrices.  
Denote by $\lfloor x \rfloor $ the largest integer not greater than $x$.
For two sequences of positive numbers $\{a_n\}$ and $\{b_n\}$, we write $a_n\lesssim b_n$ or $b_n\gtrsim a_n$ if $\limsup_{n\rightarrow\infty}a_n/b_n\leqslant c_0$ for some positive constant $c_0$. Let $a_n\asymp b_n$ if $a_n\lesssim b_n$ and $b_n\lesssim a_n$ hold simultaneously.
We write $a_n\ll b_n$ or $b_n\gg a_n $ if $\limsup_{n\rightarrow\infty}a_n/b_n=0$. Let $\cdot^{\T}$ be the transpose of $\cdot$.
For two $p$-dimensional real valued vectors $\ba=(a_1,\dots,a_p)^{\T}$ and $\bw=(w_1,\ldots,w_p)^{\T}$, we say $\ba\leq \bw$ if $a_j\leq w_j$ for any $j\in[p]$.
For a  random variable $Y$, let $\|Y\|_r=\{ \mathbb{E}(|Y|^r)\}^{1/r}$ for a positive integer $r$.  
Given $\gamma > 0$, we define the function $\psi_{\gamma}(x) := \exp(x^{\gamma}) - 1$ for any $x > 0$.
%, and let $\psi_{\gamma}^{-1}$ denote its inverse function.
For a real-valued random variable $\xi$, we define
$\|\xi\|_{\psi_{\gamma}} := \inf[\lambda > 0 : \mathbb{E}\{\psi_{\gamma}(|\xi|/\lambda)\} \leq 1]$.
For a fixed compact $r$-dimensional interval $\mathcal{V}$ equipped with a distance $d$, let $N_{\epsilon} := N(\mathcal{V}, d; \epsilon)$ denote the covering number of $\mathcal{V}$ with respect to the pseudo-metric $d$ and radius $\epsilon$. 
For a $m\times n$-matrix $\bA=(a_{ij})_{m\times n}$ and a $p$-dimensional vector $\bv=(v_1,\ldots,v_p)^{\T}$, let $\|\bA\|_{\rm op} = \sqrt{\lambda_{\max}(\bA\bA^{\T})}$, $\|\bA\|_\infty = \max_{i\in[m]}\sum_{j=1}^n |a_{ij}| $, $\|\bA\|_{\rm F}=\sqrt{\sum_{i=1}^m\sum_{j=1}^n a_{ij}^2}$, $\|\bA\|_{\max} = \max_{i\in[m]}\max_{j\in[n]} |a_{ij}| $, $|\bv|_2 = \sqrt{\sum_{j=1}^p v_j^2}$ and $|\bv|_{\max} = \max_{j\in[p]}|v_j|$.
%Denote by $\mathbb{S}^{q-1}$ the $q$-dimensional unit sphere. 
Denote  by ${\cal L}_2(\cU)$ the Hilbert space of square integrable functions defined on a compact set $\cU$, and $\cU^2=\cU\times\cU$  the Cartesian product of $\cU$.
For any $\mathcal{B}\in \mathcal{S}:=\mathcal{L}_2(\cU^2)$, we denote the supremum norm and  Hilbert-Schmidt norm by  $\|\mathcal{B}\|_\infty=\sup_{(u,v)\in\cU^2}|\mathcal{B}(u,v)|$  and $\|\mathcal{B}\|_{\mathcal{S}}=\{\int_{\cU}\int_{\cU}\mathcal{B}(u,v)^2\,{\rm d}u{\rm d}v\}^{1/2}$, respectively.
For any $\boldsymbol{\mathcal{B}}=(\mathcal{B}_{jk})_{m\times n}$ with each $\mathcal{B}_{jk}\in\mathcal{S}$, we denote its functional versions of $L_1$-norm by $\|\boldsymbol{\mathcal{B}}\|_{\cS,1} = \max_{k\in[n]}\sum_{j=1}^m\|\mathcal{B}_{jk}\|_{\cS}$. Similarly, we define $\|\boldsymbol{\mathcal{B}}\|_{\cS,\infty}=\max_{j\in[m]}\sum_{k=1}^n\|\mathcal{B}_{jk}\|_{\cS}$, $\|\boldsymbol{\mathcal{B}}\|_{\cS,\max}=\max_{j\in[m]}\max_{k\in[n]}\|\mathcal{B}_{jk}\|_{\cS}$ and   $\|\boldsymbol{\mathcal{B}}\|_{\infty,\max} = \max_{j\in[m]}\max_{k\in[n]}\|\mathcal{B}_{jk}\|_{\infty}$.  
Let $n_\ell=n-\ell$ for any  integer $\ell \geq 0$.

%\newpage

\section{Proofs of Theorems \ref{thm.Tn.H0} and \ref{thm.Tn.H1}}

\subsection{Preliminaries}\label{sec.preliminary}

For technical convenience, we introduce several intermediate statistics that will be used repeatedly in the proofs. Without loss of generality, we take $\cU=[0,1]$ and partition it into $M$ subintervals $\{B_1,\ldots,B_M\}$ of equal length $M^{-1}$. We then define the midpoint discretizations of $\cU$ as $\cU_M=\{b_1,\ldots,b_M\}$, 
where each $b_m$ is the center of the subinterval $B_m$. Recall that ${T}_{n}^{{\rm G }}=\max_{r\in [Lp^2]}\sup_{(u,v)\in\cU^2}|{\cG}_r(u,v)  |$. We define 
\begin{align}\label{Tn.G}
      \tilde{T}_{n}^{{\rm G }}=\max_{r\in [Lp^2]}\sup_{(u,v)\in\cU^2}|\tilde{\cG}_r(u,v)  |\,,
\end{align}
where $\tilde{\boldsymbol{\cG}}(u,v)=\{\tilde\cG_1(u,v),\ldots,\tilde\cG_{Lp^2}(u,v)\}^{\T}$ is defined analogously to ${\boldsymbol{\cG}}(u,v)$ before \eqref{eq:TnG}, with $\{\hat{\bvep}_t(\cdot)\}$ in $\bxi_n(u,v)$ replaced by $\{{\bvep}_t(\cdot)\}$. Based on this construction, we further define the discretized version of $\tilde{T}_{n}^{{\rm G}}$ as
\begin{align}\label{Tn.dis.G}
      \tilde{T}_{n}^{{\rm dis, G }}=\max_{r\in [Lp^2]}\max_{(u,v)\in\cU_M^2}|\tilde{\cG}_r(u,v)  |\,.
\end{align}
Throughout the proofs, we take $M=M_n\asymp (np)^{C_M}$  for a sufficiently large constant $C_M>0$.
Define the empirical versions of the cross-autocovariance functions $\bSigma_{\varepsilon\delta}^{(\ell)}(u,v)= \cov\{\bvep_{t}(u),\\ \bdelta_{t+\ell}(v)\}  $, $\bSigma_{\delta\vep}^{(\ell)}(u,v) = \cov\{\bdelta_{t}(u), \bvep_{t+\ell}(v)\} $ and $\bSigma_{\delta\delta}^{(\ell)}(u,v)= \cov\{\bdelta_{t}(u), \bdelta_{t+\ell}(v)\} $  as  
\begin{align}
 \widetilde{\bSigma}^{(\ell)}_{\vep\delta}(u,v)
 =&\, \{\widetilde{\Sigma}^{(\ell)}_{\vep\delta,jj'}(u,v)\}_{p\times p} =   \frac{1}{n_\ell} \sum_{t=1}^{n_\ell} \{\bvep_{t}(u)-\bar{\bvep}(u)\} \{\bdelta_{t+\ell}(v)-\bar{\bdelta}(v)\}^{\T}\,, \nonumber \\
  \widetilde{\bSigma}^{(\ell)}_{\delta\vep}(u,v)
 =&\, \{\widetilde{\Sigma}^{(\ell)}_{\delta\vep,jj'}(u,v)\}_{p\times p} =   \frac{1}{n_\ell} \sum_{t=1}^{n_\ell} \{\bdelta_{t}(u)-\bar{\bdelta}(u)\} \{\bvep_{t+\ell}(v)-\bar{\bvep}(v)\}^{\T}\,, \label{eq:def.sigmadv}\\
\widetilde{\bSigma}^{(\ell)}_{\delta\delta}(u,v)
 =&\, \{\widetilde{\Sigma}^{(\ell)}_{\delta\delta,jj'}(u,v)\}_{p\times p} =   \frac{1}{n_\ell} \sum_{t=1}^{n_\ell} \{\bdelta_{t}(u)-\bar{\bdelta}(u)\} \{\bdelta_{t+\ell}(v)-\bar{\bdelta}(v)\}^{\T}\,, \nonumber
\end{align}  
respectively, where $\bar{\bvep}(\cdot)=n^{-1}\sum_{t=1}^n\bvep_t(\cdot)$ and $\bar{\bdelta}(\cdot)=n^{-1}\sum_{t=1}^n\bdelta_t(\cdot)$.  Based on \eqref{eq:def.sigmadv} and the constructions of $T_n$ and $\tilde{T}_n$, we can conclude that
\begin{align}\label{eq:diff.tTn.Tn}
    |T_n-\tilde{T}_n|  
    \leq   n^{1/2}\max_{\ell\in[L]}   \big\{
\|\widetilde\bSigma_{\varepsilon\delta}^{(\ell)}\|_{\infty,\max} + \|\widetilde\bSigma_{\delta\varepsilon}^{(\ell)} \|_{\infty,\max} + \|\widetilde\bSigma_{\delta\delta}^{(\ell)} \|_{\infty,\max}\big\}\,.
\end{align}  
In addition, following the construction of ${\boldsymbol{\cG}}^*(u,v)$ in \eqref{eq:cG}, we define 
\begin{align}
  \tilde{\boldsymbol{\cG}}^*_{\vep\delta}(u,v) =&\, \frac{1}{\sqrt{n_L} }\sum_{t=1}^{n_L} \varrho_t \{ \tilde{\bfeta}^{\vep\delta}_t(u,v)  -\bar{\tilde{\bfeta}} ^{\vep\delta}(u,v)\}  \,,\nonumber\\
  \tilde{\boldsymbol{\cG}}^*_{\delta\vep}(u,v) =&\, \frac{1}{\sqrt{n_L}}\sum_{t=1}^{n_L} \varrho_t \{ \tilde{\bfeta}^{\delta\vep}_t(u,v)  -\bar{\tilde{\bfeta}} ^{\delta\vep}(u,v)\}     \,,\label{eq:def.cgdv}\\
  \tilde{\boldsymbol{\cG}}^*_{\delta\delta}(u,v) =&\, \frac{1}{\sqrt{n_L}}\sum_{t=1}^{n_L} \varrho_t \{ \tilde{\bfeta}^{\delta\delta}_t(u,v)  -\bar{\tilde{\bfeta}} ^{\delta\delta}(u,v)\}   \,,\nonumber
\end{align}
where 
\begin{align*}
    \tilde{\bfeta}_t^{\vep\delta}(u,v)
= &~\{({\rm vec} [\{\bvep_t(u)-\bar{\bvep}(u)\}\{{\bdelta}_{t+1}(v)-\bar{{\bdelta}}(v)\}^{\T}])^{\T},\\
&~~~~~\ldots, 
({\rm vec} [\{\bvep_t(u)-\bar\bvep(u)\}\{\bdelta_{t+L}(v)-\bar\bdelta(v)\}^{\T}] )^{\T}\}^{\T}\,,\\
\tilde{\bfeta}_t^{\delta\vep}(u,v)
=&~ \{({\rm vec} [\{\bdelta_t(u)-\bar{\bdelta}(u)\}\{{\bvep}_{t+1}(v)-\bar{{\bvep}}(v)\}^{\T}])^{\T},\\
&~~~~~ \ldots, 
({\rm vec} [\{\bdelta_t(u)-\bar\bdelta(u)\}\{\bvep_{t+L}(v)-\bar\bvep(v)\}^{\T}] )^{\T}\}^{\T}\,,\\
\tilde{\bfeta}_t^{\delta\delta}(u,v)
=&~ \{({\rm vec} [\{\bdelta_t(u)-\bar{\bdelta}(u)\}\{{\bdelta}_{t+1}(v)-\bar{{\bdelta}}(v)\}^{\T}])^{\T},\\
&~~~~~ \ldots, 
({\rm vec} [\{\bdelta_t(u)-\bar\bdelta(u)\}\{\bdelta_{t+L}(v)-\bar\bdelta(v)\}^{\T}] )^{\T}\}^{\T} \,,
\end{align*}  
 for any $(u,v)\in\cU^2$ and $t\in[n_L]$. Here,  $\bar{\tilde{\bfeta}}^{\vep\delta}(u,v)=n_L^{-1}\sum_{t=1}^{n_L}\tilde{\bfeta}_t^{\vep\delta}(u,v)$, $\bar{\tilde{\bfeta}}^{\delta\vep}(u,v)=n_L^{-1}\sum_{t=1}^{n_L}\tilde{\bfeta}_t^{\delta\vep}(u,v)$, and $\bar{\tilde{\bfeta}}^{\delta\delta}(u,v)= n_L^{-1}\sum_{t=1}^{n_L}\tilde{\bfeta}_t^{\delta\delta}(u,v)$.  Analogously to \eqref{eq:diff.tTn.Tn},  we  have
\begin{align}\label{eq:diff.tTnG*.TnG*}
|{T}_n^{\rm G*} - \tilde{T}_n^{\rm G*}|
\lesssim   \|\tilde{\boldsymbol{\cG}}_{\varepsilon\delta}^*\|_{\infty,\max} + \|\tilde{\boldsymbol{\cG}}_{\delta\varepsilon}^*\|_{\infty,\max}+\|\tilde{\boldsymbol{\cG}}_{\delta\delta}^*\|_{\infty,\max}   \,. 
\end{align}

\subsection{Approximation Results under the Uncontaminated Functional Time Series}\label{sec.oracle}

Recall that $\tilde{T}_n$ and $\tilde{T}_n^{{\rm G}*}$ are defined analogously to ${T}_n$ and ${T}_n^{{\rm G}*}$, respectively, with $\hat{\bvep}_t(\cdot)$ replaced by $\bvep_t(\cdot)$. According to \eqref{eq:diff.tTn.Tn} and \eqref{eq:diff.tTnG*.TnG*}, the difference between the contaminated and uncontaminated statistics is asymptotically negligible under Condition~\ref{c.rate}. Therefore, the proof of Theorem~\ref{thm.Tn.H0} reduces to the uncontaminated case, which is handled by the following lemma and proposition. Specifically, Proposition~\ref{pro.Tn.H0} establishes a Gaussian approximation result under $\{\bvep_t(\cdot)\}$. The proofs of Lemma \ref{lem.dis.Gaussian} and Proposition \ref{pro.Tn.H0} are given in Sections \ref{proof.lem.dis.Gaussian}  and \ref{proof.pro.Tn.H0}, respectively. Recall $M \asymp (np)^{C_M}$  for a sufficiently large constant $C_M>0$.

\begin{lemma}\label{lem.dis.Gaussian}
   Under Conditions {\rm\ref{c.alpha}--\ref{c.subgaussian}}, we have $\sup_{x\geq 0}|\mathbb{P}(\tilde{T}_{n}^{\rm dis,G}\leq x)-\mathbb{P}(\tilde{T}_{n}^{\rm G}\leq x)|=o(1)$.
\end{lemma}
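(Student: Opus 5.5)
Since $0\le \tilde T_n^{\rm dis,G}\le \tilde T_n^{\rm G}$ almost surely, the plan is a sandwich argument. First show that the discretization error $\Delta_M:=\tilde T_n^{\rm G}-\tilde T_n^{\rm dis,G}$ is small with high probability. Then use an anti-concentration bound for $\tilde T_n^{\rm dis,G}$, which is the maximum of finitely many Gaussians. Concretely, for any $x\ge0$ and $\epsilon>0$,
\begin{align*}
0\le \mathbb{P}(\tilde T_n^{\rm dis,G}\le x)-\mathbb{P}(\tilde T_n^{\rm G}\le x)\le \mathbb{P}(x-\epsilon<\tilde T_n^{\rm dis,G}\le x)+\mathbb{P}(\Delta_M>\epsilon)\,.
\end{align*}
The first term is controlled by Nazarov's inequality \citep[Lemma A.1]{Cher2017}, applied to the $2Lp^2M^2$ Gaussian variables $\pm\tilde\cG_r(u,v)$ with $(u,v)\in\cU_M^2$. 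Their variances equal $\var\{\tilde\xi_{n,r}(u,v)\}\ge C_3$ by Condition \ref{c.bvar}. The bound is therefore $C\epsilon\{\log(LpM)\}^{1/2}\lesssim \epsilon\{\log(np)\}^{1/2}$, using $M\asymp(np)^{C_M}$. It then suffices to show $\mathbb{P}(\Delta_M>\epsilon_n)\to0$ for some $\epsilon_n\ll \{\log(np)\}^{-1/2}$.

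For the discretization error, each $(u,v)\in\cU^2$ lies in some $B_m\times B_{m'}$ with center $(b_m,b_{m'})$. Hence
\[
\Delta_M\le \max_{r}\sup_{|u-u'|\vee|v-v'|\le M^{-1}}|\tilde\cG_r(u,v)-\tilde\cG_r(u',v')|,
\]
which is the modulus of continuity of a Gaussian field. To control it, I bound the canonical metric $d_r\{(u,v),(u',v')\}=[\mathbb{E}\{\tilde\cG_r(u,v)-\tilde\cG_r(u',v')\}^2]^{1/2}$, which equals the $L_2$ norm of the increment of $\tilde\xi_{n,r}$. The increment is $n^{1/2}n_\ell^{-1}\sum_t$ of centered products $\tilde\vep_{t,j}(u)\tilde\vep_{t+\ell,j'}(v)-\tilde\vep_{t,j}(u')\tilde\vep_{t+\ell,j'}(v')$, where $\tilde\vep$ denotes the demeaned process. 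Split it as $\{\tilde\vep_{t,j}(u)-\tilde\vep_{t,j}(u')\}\tilde\vep_{t+\ell,j'}(v)+\tilde\vep_{t,j}(u')\{\tilde\vep_{t+\ell,j'}(v)-\tilde\vep_{t+\ell,j'}(v')\}$. By Condition \ref{c.subgaussian}, the summands are sub-exponential with $\psi_1$-norm $\lesssim |u-u'|^\kappa+|v-v'|^\kappa$. Condition \ref{c.alpha} gives summable covariances, via Davydov's inequality with exponentially decaying $\alpha(m)$, costing at most a log factor. Together these give
\[
d_r\{(u,v),(u',v')\}\lesssim (|u-u'|+|v-v'|)^{\kappa}\log n,
\]
uniformly in $r$. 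Dudley's entropy bound (separability of $\tilde\cG_r$ follows from the continuity of $d_r$), combined with a union bound over $r\in[Lp^2]$ via Gaussian concentration (Borell--TIS), then yields
\[
\Delta_M=O_{\rm p}\big[M^{-\kappa}\log n\,\{\log(Lp^2M)\}^{1/2}\big].
\]
With $M\asymp(np)^{C_M}$ this is polynomially small in $np$, so we may take $\epsilon_n=(np)^{-c}$ for some $c>0$. Then $\epsilon_n\{\log(np)\}^{1/2}\to0$, and the result follows uniformly in $x\ge0$.

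The main obstacle is the variance-increment bound for $\tilde\xi_{n,r}$ under temporal dependence, and in particular keeping it uniform in $(j,j',\ell)$ and in $(u,v)$ despite the sample-mean centering. I would handle the centering by expanding $\bar{\bvep}$ terms separately, which are $O(n^{-1/2})$ smaller. For the mixing covariance sums, I would apply Davydov's inequality to products of sub-Gaussian increments using moments of order $4$ and $2+\delta$. Because $M$ is polynomial in $np$, any polylogarithmic losses here are harmless.
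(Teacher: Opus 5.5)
Your proposal is correct and follows essentially the same route as the paper. It controls $\tilde T_n^{\rm G}-\tilde T_n^{\rm dis,G}$ through the canonical-metric bound $[\mathbb{E}\{\tilde\cG_r(u,v)-\tilde\cG_r(u',v')\}^2]^{1/2}\lesssim |u-u'|^\kappa+|v-v'|^\kappa$, obtained from the $\psi_1$-increments of the lagged products plus mixing, with the sample-mean corrections treated separately, and then closes with Nazarov's inequality under Condition \ref{c.bvar}.

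The differences are only in the tools. You use Dudley's entropy bound with Borell--TIS where the paper invokes the local-increment Lemma \ref{meer}. You use Davydov's inequality where the paper uses the mixing moment bound of Lemma \ref{tail_chang}; that lemma removes your superfluous $\log n$. It also supplies the fourth moments of partial sums that the mean-correction terms actually need and that Davydov alone does not give. Finally, you use a one-sided sandwich based on $\tilde T_n^{\rm dis,G}\le\tilde T_n^{\rm G}$, while the paper argues two-sided.
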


\begin{props}\label{pro.Tn.H0}
Assume $p\geq n^{\upsilon}$ for some  constant $\upsilon>0$. The following results hold. 

{\rm (i)} Under  Conditions {\rm\ref{c.alpha}--\ref{c.subgaussian}} and the null hypothesis $H_0$, we have $\sup_{x\in \mathbb{R}}|\mathbb{P}(\tilde{T}_{n} \leq x)-\mathbb{P}(\tilde{T}_{n}^{\rm G}\leq x)|=o(1)$ provided that $\log p \ll n^{2/21}$.

{\rm (ii)} Assume $b_{n}\asymp n^{\rho}$ for some constant $\rho$ satisfying $0<\rho<(\vartheta-1)/(3\vartheta-2)$. Under  Conditions {\rm\ref{c.alpha}--\ref{c.kernelF}},  we have $\sup_{x\in \mathbb{R}}|\mathbb{P}(\tilde{T}_{n}^{\rm G}\leq x)-\mathbb{P}(\tilde{T}_n^{{\rm G}*} \leq x\,|\,\widetilde{\mathcal{D}}_n)|=o_{\rm p}(1)$, provided that $\log p \ll n^{\iota}$ for some $\iota>0$ depending only on $(\rho,\vartheta)$,  where $\widetilde{\mathcal{D}}_n=\{\bvep_t(\cdot)\}_{t\in [n]}$. 
\end{props}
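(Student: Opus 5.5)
Both parts go through the midpoint grid $\cU_M$ with $M\asymp(np)^{C_M}$. On the grid the problem becomes a high-dimensional Gaussian approximation for a sum of $\alpha$-mixing vectors of dimension $Lp^2M^2$. The price is only $\log(Lp^2M^2)\lesssim \log p+\log n\lesssim\log p$, using $p\ge n^{\upsilon}$.

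\textbf{Part (i).} I would argue in three steps.

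First, the discretization error for $\tilde T_n$. Let $\tilde T_n^{\rm dis}$ be the maximum of $|\tilde\xi_{n,r}(u,v)|$ over $r\in[Lp^2]$ and $(u,v)\in\cU_M^2$. Condition \ref{c.subgaussian} gives sub-Gaussian increments, so each product $\{\vep_{t,j}(u)-\bar\vep_j(u)\}\{\vep_{t+\ell,j'}(v)-\bar\vep_{j'}(v)\}$ has sub-exponential increments of order $|u-\tilde u|^\kappa+|v-\tilde v|^\kappa$. A chaining bound, via Dudley's entropy integral in $\psi_1$-norm together with separability, controls the modulus of continuity of $\sqrt n\,\widetilde\Sigma^{(\ell)}_{\vep\vep,jj'}$ uniformly in $(\ell,j,j')$ after a union bound. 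This yields $|\tilde T_n-\tilde T_n^{\rm dis}|=O_{\rm p}\{\sqrt n\,M^{-\kappa}(\log (np))^{C}\}$, which is $o_{\rm p}(\log^{-1/2}p)$ once $C_M$ is large.

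Second, remove the centering. Under $H_0$, $\sqrt n\,\widetilde\Sigma^{(\ell)}_{\vep\vep}$ equals $n^{-1/2}\sum_t\bvep_t(u)\bvep_{t+\ell}(v)^{\T}$ plus a term $\sqrt n\,\bar\bvep(u)\bar\bvep(v)^{\T}$ and a factor $(n/n_\ell)$. The extra term is $O_{\rm p}(n^{-1/2}\log p)$ uniformly, by the same chaining.

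Third, apply a Gaussian approximation to the grid maximum. The summands $\bvep_t(u)\bvep_{t+\ell}(v)^{\T}$ are mean zero under $H_0$. They are $\alpha$-mixing with exponential rate by Condition \ref{c.alpha}, since they are measurable functions of $(\bvep_t,\ldots,\bvep_{t+L})$. They are sub-exponential by Condition \ref{c.subgaussian}, and their variances are bounded below by Condition \ref{c.bvar}. I would invoke the high-dimensional CLT for dependent data of \cite{ChangChenWu2023} (big-block/small-block construction), which gives the Kolmogorov-distance bound over hyperrectangles, hence for $\max|\cdot|$, with rate $o(1)$ provided $\log p\ll n^{2/21}$. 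The resulting Gaussian maximum is $\tilde T_n^{\rm G}$ restricted to the grid. Lemma \ref{lem.dis.Gaussian} passes back to $\tilde T_n^{\rm G}$. The perturbations from the first two steps are absorbed by Nazarov's anti-concentration inequality, whose constant depends on $\sqrt{\log p}$ through Condition \ref{c.bvar}.

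\textbf{Part (ii).} Conditionally on $\widetilde{\mathcal D}_n$, $\tilde{\boldsymbol\cG}^*_{\vep\vep}$ is Gaussian with covariance $\bXi_n^*$ computed from $\bvep$. On the grid, I would use the Gaussian comparison inequality of \cite{Cher2022_supp} / \cite{Cher2017_supp}. It bounds the Kolmogorov distance by $C\{\Delta\log^2 p\}^{1/2}$ or $C\Delta^{1/3}\log^{2/3}p$, where
\[
\Delta=\max_{r,r'}\max_{\text{grid}}\bigl|\widetilde\Xi^*_{n,rr'}-\widetilde\Xi_{n,rr'}\bigr|
\]
is the elementwise maximal gap between the bootstrap covariance $\widetilde\Xi^*$ and the population covariance $\widetilde\Xi$ of the oracle statistic. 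The discretization error of $\tilde T_n^{\rm G*}$ is handled by the same chaining as in part (i), now conditional; the multipliers $\varrho$ are Gaussian with bounded covariance matrix $\bXi$ (operator norm $\lesssim b_n$). The gap splits into a bias and a variance term.
\begin{itemize}
\item \emph{Bias.} Bounding the truncated sum $\sum_i\{1-\mathcal W(i/b_n)\}\bH_i$ together with the lag tails gives $O(b_n^{-1})$, using the exponential mixing decay and Condition \ref{c.kernelF}.
\item \emph{Variance.} Writing $\mathcal W(i/b_n)\{\bH^*_i-\mathbb E\bH^*_i\}$ and summing over $|i|<n$ with the decay $|\mathcal W(x)|\lesssim|x|^{-\vartheta}$, a Bernstein inequality for mixing sub-Weibull products (fourth-order products, $\psi_{1/2}$), plus a union bound over $Lp^2M^2$ pairs, gives $O_{\rm p}\{b_n n^{-1/2}\log^{C}p\}$-type terms. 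The tail $|i|>b_n n^{\epsilon}$ is controlled through $\vartheta$.
\end{itemize}
Balancing these pieces produces the restriction $\rho<(\vartheta-1)/(3\vartheta-2)$ and some $\iota(\rho,\vartheta)$.

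The main obstacle is this variance step. It requires uniform concentration of the lagged fourth-moment sums over exponentially many index pairs with heavy-tailed ($\psi_{1/2}$) summands. The kernel tails make it hard to get a rate polynomial in $n$; I would truncate the lags at $|i|\le b_n n^{\epsilon}$ and apply a blocking Bernstein inequality on each lag separately.
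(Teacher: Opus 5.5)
Your plan follows the paper's proof in structure. For (i), you discretize on $\cU_M$ with $M\asymp(np)^{C_M}$, strip the sample-mean centering, and apply the dependent-data Gaussian approximation of \cite{ChangChenWu2023_supp} together with Nazarov's inequality and Lemma \ref{lem.dis.Gaussian}. For (ii), you combine a conditional discretization bound with the Gaussian comparison of \cite{Cher2022_supp}, driven by the gap $\|\widetilde\bXi^{\rm dis,*}_n-\widetilde\bXi^{\rm dis}_n\|_{\max}$.

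The differences are technical only. The paper imports the long-run covariance rate from Theorem~11 of \cite{ChangChenWu2023_supp} and treats the sample-mean terms separately, where you derive the rate by lag truncation. The paper also uses its own $\psi_{1/3}$ chaining Lemma \ref{tail_multi}, where your $\psi_1$-Dudley bound loses a factor $\sqrt n$; that loss is harmless because $M$ is polynomially large.

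Two points need tightening.

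\begin{enumerate}
\item The Gaussian vector produced by the CLT has the covariance of the uncentered sum $n_L^{-1/2}\sum_{t=1}^{n_L}\dot\bfeta_t^{\rm dis}$, not that of $\tilde T_n^{\rm dis,G}$. You therefore still need a Gaussian comparison step to reach $\tilde T_n^{\rm dis,G}$; the paper does this with a max-norm covariance gap of order $n^{-1/2}$. Before invoking the CLT, you must also transfer the variance lower bound in Condition \ref{c.bvar} from the centered statistic to that uncentered sum.
\item In your variance step, each lag-$i$ product inherits a mixing shift of order $i$. At truncation level $K$, the truncated sum is therefore of order $b_n(K/n)^{1/2}$ up to logarithms, not $b_nn^{-1/2}$. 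This loss, balanced against the kernel tail $b_n^{\vartheta}K^{1-\vartheta}$, is what produces the restriction $\rho<(\vartheta-1)/(3\vartheta-2)$.
\end{enumerate}
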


\subsection{Proof of Theorem \ref{thm.Tn.H0}}

Recall $\mathcal{D}_n=\{\hat\bvep_t(\cdot)\}_{t\in [n]}$. Let $\mathcal{F}_n$ denote the $\sigma$-field generated by $\{\bvep_t(\cdot),\bdelta_t(\cdot)\}_{t\in [n]}$. According to \eqref{err.m}, it holds that $\sigma(\mathcal{D}_n)\subseteq \mathcal{F}_n$. Thus, for any $x\in \mathbb{R}$ and $s_1>0$,
\begin{align}\label{upper.1}
    & \bbP({T}_n^{\rm G*}\leq x\,|\, \cD_n)  =  \bbE\{\bbP({T}_n^{\rm G*}\leq x\,|\, \mathcal{F}_n) \,|\, \cD_n\}\nonumber\\
     &~~~~~ \leq    \bbE\{\bbP(\tilde{T}_n^{\rm G*}\leq x + s_1\,|\, \mathcal{F}_n) \,|\, \cD_n\}+\bbE\{\bbP(|{T}_n^{\rm G*}-\tilde{T}_n^{\rm G*}| > s_1\,|\, \mathcal{F}_n) \,|\, \cD_n\}\nonumber\\
     &~~~~~\leq    \bbP(\tilde{T}_n^{\rm dis,G}\leq x + s_1 ) +\bbE\big\{\bbP(|{T}_n^{\rm G*}-\tilde{T}_n^{\rm G*}| > s_1\,|\, \mathcal{F}_n) \,|\, \cD_n\big\}\nonumber\\
     &~~~~~~~+ \bbE\bigg\{\sup_{y\in \mathbb{R}}\big|\bbP(\tilde{T}_n^{\rm G*}\leq y  \,|\, \mathcal{F}_n) - \bbP(\tilde{T}_n^{\rm dis,G}\leq y  )\big| \,\bigg|\, \cD_n\bigg\}\,.
\end{align}
On the other hand, for any $x\in \mathbb{R}$ and $s_2 >0$,
\begin{align*}
    & \bbP({T}_n\leq x) \geq   \bbP(\tilde{T}_n\leq x - s_2) - \bbP(|{T}_n-\tilde{T}_n|> s_2)\\
    &~~~~\geq  \bbP(\tilde{T}_n^{\rm dis,G}\leq x - s_2 )   - \bbP(|{T}_n-\tilde{T}_n|> s_2) -\sup_{y\in \mathbb{R}}\big|\bbP(\tilde{T}_n\leq y  )-\bbP(\tilde{T}_n^{\rm dis,G}\leq y   )\big|\,.
\end{align*}
This, together with \eqref{upper.1}, yields
\begin{align*}
    & \bbP({T}_n^{\rm G*}\leq x\,|\, \cD_n)  - \bbP({T}_n\leq x) \\
    \leq&~  \sup_{y\in \mathbb{R}}\big|\bbP(\tilde{T}_n\leq y  )-\bbP(\tilde{T}_n^{\rm dis,G}\leq y   )\big|+ \bbE\bigg\{\sup_{y\in \mathbb{R}}\big|\bbP(\tilde{T}_n^{\rm G*}\leq y  \,|\, \mathcal{F}_n) - \bbP(\tilde{T}_n^{\rm dis,G}\leq y  )\big| \,\bigg|\, \cD_n\bigg\}\\
    &~+\bbE\big\{\bbP(|{T}_n^{\rm G*}-\tilde{T}_n^{\rm G*}| > s_1\,|\, \mathcal{F}_n) \,|\, \cD_n\big\}+\bbP(|{T}_n-\tilde{T}_n|> s_2)+\bbP(   |\tilde{T}_n^{\rm dis,G}- x|\leq  s_1 +s_2 ) \,.
\end{align*}
Likewise, we can also obtain the reverse inequality. Then it holds that
\begin{align*}
    & \sup_{x\in \mathbb{R}}\big|\bbP({T}_n^{\rm G*}\leq x\,|\, \cD_n)  - \bbP({T}_n\leq x)\big| \\
    \leq&~  \sup_{x\in \mathbb{R}}\big|\bbP(\tilde{T}_n\leq x  )-\bbP(\tilde{T}_n^{\rm dis,G}\leq x   )\big|+ \bbE\bigg\{\sup_{x\in \mathbb{R}}\big|\bbP(\tilde{T}_n^{\rm G*}\leq x  \,|\, \mathcal{F}_n) - \bbP(\tilde{T}_n^{\rm dis,G}\leq x  )\big| \,\bigg|\, \cD_n\bigg\}\\
    &~ +\bbE\big\{\bbP(|{T}_n^{\rm G*}-\tilde{T}_n^{\rm G*}| > s_1\,|\, \mathcal{F}_n) \,|\, \cD_n\big\}+\bbP(|{T}_n-\tilde{T}_n|> s_2)+\sup_{x\in \mathbb{R}} \bbP(   |\tilde{T}_n^{\rm dis,G}- x|\leq  s_1 +s_2 ) \,.
\end{align*}
In what follows, we bound the five terms on the right-hand side of the above inequality separately. By Lemma \ref{lem.dis.Gaussian} and Proposition \ref{pro.Tn.H0}(i), it holds that
\begin{align}\label{bound.1}
    \sup_{x\in \mathbb{R}}\big|\bbP(\tilde{T}_n\leq x  )-\bbP(\tilde{T}_n^{\rm dis,G}\leq x   )\big| = o(1)\,,
\end{align}
provided that $\log p \ll n^{2/21}$. 
Since $\tilde T_n^{{\rm G}*}$ depends on $\mathcal F_n$ only through the collection of oracle processes $\widetilde{\mathcal D}_n=\{\bvep_t(\cdot)\}_{t\in [n]}$, we have $\mathbb P(\tilde T_n^{{\rm G}*}\le x\,|\, \mathcal F_n)
=
\mathbb P(\tilde T_n^{{\rm G}*}\le x \,|\, \widetilde{\mathcal D}_n)$. Accordingly, by Lemma \ref{lem.dis.Gaussian} and Proposition \ref{pro.Tn.H0}(ii), we have  $A_n :=  \sup_{x\in \mathbb{R}}\big|\bbP(\tilde{T}_n^{\rm G*}\leq x  \,|\, \mathcal{F}_n) - \bbP(\tilde{T}_n^{\rm dis,G}\leq x  )\big|  = o_{\rm p}(1)$, provided that $\log p \ll n^{\iota}$ for some $\iota>0$ depending only on $(\rho,\vartheta)$. Due to $0\leq A_n \leq 1$,  we have $\mathbb{E}(A_n)=o(1)$, which implies that
\begin{align}\label{bound.2}
    \bbE\bigg\{\sup_{x\in \mathbb{R}}\big|\bbP(\tilde{T}_n^{\rm G*}\leq x  \,|\, \mathcal{F}_n) - \bbP(\tilde{T}_n^{\rm dis,G}\leq x  )\big| \,\bigg|\, \cD_n\bigg\} = o_{\rm p}(1)\,.
\end{align}
 Let $s_1=Cn^{-\gamma_3/2}(\log p)^{\gamma_2}$ and  $s_2=Cn^{-\gamma_1/2+1/4}(\log p)^{\gamma_2}$  for some sufficiently large constant $C>0$. Then, using \eqref{eq:diff.tTn.Tn} and \eqref{eq:diff.tTnG*.TnG*}, and following the argument for \eqref{bound.2},  Condition~\ref{c.rate} implies that
\begin{align}\label{bound.3}
    \bbE\big\{\bbP(|{T}_n^{\rm G*}-\tilde{T}_n^{\rm G*}| > s_1\,|\, \mathcal{F}_n) \,|\, \cD_n\big\}+\bbP(|{T}_n-\tilde{T}_n|> s_2) = o_{\rm p}(1)+o(1)=o_{\rm p}(1)\,.
\end{align}
Recall that $\tilde{T}_{n}^{{\rm dis, G }}=\max_{r\in [Lp^2]}\max_{(u,v)\in\cU_M^2}|\tilde{\cG}_r(u,v)  |$, 
where the centered process $\tilde{\cG}_r(u,v)$ has variance
$\var\{\tilde\xi_{n,r}(u,v)\}$. Here, $\tilde\bxi_n(u,v)=\{\tilde\xi_{n,1}(u,v),\ldots,\tilde\xi_{n,Lp^2}(u,v)\}^{\T}$ is defined in the same way as $\bxi_n(u,v)$, except that $\{\hat\bvep_t(\cdot)\}$ is replaced by $\{\bvep_t(\cdot)\}$. Since  $M \asymp (np)^{C_M}$ and $p\geq n^{\upsilon}$, Condition~\ref{c.bvar} and the Nazarov's inequality 
\citep[see, e.g., Lemma~A.1 of][]{Cher2017_supp} yield
\begin{align}\label{bound.4}
    \sup_{x\in \mathbb{R}} \bbP(   |\tilde{T}_n^{\rm dis,G}- x|\leq  s_1 +s_2 ) \lesssim (s_1+s_2)\sqrt{\log(Lp^2M^2)} = o(1)\,,
\end{align}
provided that $\log p \ll  n^{c_1}$ for some $c_1>0$ depending only on $(\gamma_1,\gamma_2,\gamma_3)$.  Combining \eqref{bound.1}--\eqref{bound.4} yields, under $H_0$,
\begin{align}\label{bound.final}
\sup_{x\in \mathbb{R}}|\bbP({T}_n^{\rm G*}\leq x\,|\,\cD_n) -\bbP({T}_n\leq x)| =o_{\rm p}(1)\,,	
\end{align}
provided that $\log p \ll n^{\iota_1}$ for some $\iota_1>0$ depending only on $(\rho,\vartheta,\gamma_1,\gamma_2,\gamma_3)$.

For the choice of $s_2$ specified above, it follows from \eqref{bound.1}, \eqref{bound.3}  and \eqref{bound.4} that 
\begin{align*}
 &~   
\sup_{x\in\mathbb R}
 \big|
\mathbb P(T_n\leq x)-\mathbb P(\tilde T_n^{\rm dis,G}\leq x)
 \big| \\
 \leq &~   \sup_{x\in \mathbb{R}}\big|\bbP(\tilde{T}_n\leq x  )-\bbP(\tilde{T}_n^{\rm dis,G}\leq x   )\big| + \sup_{x\in\mathbb R} \bbP(   |\tilde{T}_n^{\rm dis,G}- x|\leq  s_2 ) +\bbP(|{T}_n-\tilde{T}_n|> s_2)=o(1)\,.
\end{align*}
This, together with \eqref{bound.final}, yields 
\begin{align}\label{eq:Tnstar.Gn.approx}
R_n^*
:=
\sup_{x\in\mathbb R}
\big|
\mathbb P(T_n^{\rm G*}\leq x\,|\,\mathcal D_n)
-
\mathbb P(\tilde{T}_n^{\rm dis,G}\leq x)
\big|
=o_{\rm p}(1)\,.
\end{align}
For a fixed  $\epsilon \in (0, \min\{\alpha,1-\alpha\}/2)$, there exist
two deterministic numbers $q_{n,L}(\epsilon)$ and
$q_{n,U}(\epsilon)$, with $q_{n,L}(\epsilon)<q_{n,U}(\epsilon)$, satisfying 
\begin{align}\label{eq:qLU}
\mathbb P\{\tilde{T}_n^{\rm dis,G} > q_{n,L}(\epsilon)\}
 =\alpha+2\epsilon <1 ~~\mbox{and}~~
\mathbb P\{\tilde{T}_n^{\rm dis,G} > q_{n,U}(\epsilon)\}
 =\alpha-2\epsilon>0\,.
\end{align}
Define an event $\mathcal E_{n,\epsilon}
=
\{R_n^*\leq\epsilon\}$.   \eqref{eq:Tnstar.Gn.approx} yields that $\mathbb P(\mathcal E_{n,\epsilon})\rightarrow 1$. On $\mathcal E_{n,\epsilon}$, by \eqref{eq:qLU},
\begin{align*}
&\mathbb P\{
T_n^{\rm G*}>q_{n,L}(\epsilon)
\,|\,\mathcal D_n
\}
\geq
\mathbb P\{\tilde{T}_n^{\rm dis,G}>q_{n,L}(\epsilon)\}
-\epsilon
=
\alpha+\epsilon
>
\alpha\,,\\
\mbox{and}~~&\mathbb P\{
T_n^{\rm G*}>q_{n,U}(\epsilon)
\mid\mathcal D_n
\}
\leq
\mathbb P\{\tilde{T}_n^{\rm dis,G}>q_{n,U}(\epsilon)\}
+\epsilon
=
\alpha-\epsilon
<
\alpha\,.
\end{align*}
Recall $\hat{\rm cv}_{\alpha} = \inf\{x>0:\mathbb{P}({T}_{n}^{\rm G*}> x\,|\,\cD_n)\leq \alpha\} $. We know that $q_{n,L}(\epsilon)
\leq
\hat{\rm cv}_{\alpha}
\leq
q_{n,U}(\epsilon)$ on $\mathcal E_{n,\epsilon}$. Therefore, it holds that
\begin{align*}
\mathbb P(T_n>\hat{\rm cv}_{\alpha})
 \leq&~
\mathbb P\{
T_n>\hat{\rm cv}_{\alpha},
\mathcal E_{n,\epsilon}
\}
+
\mathbb P(\mathcal E_{n,\epsilon}^{\rm c}) 
 \leq  
\mathbb P\{T_n>q_{n,L}(\epsilon)\}
+
o(1)\\
\leq &~
\mathbb P\{\tilde{T}_n^{\rm dis,G}>q_{n,L}(\epsilon)\} 
+
o(1) =
\alpha+2\epsilon+o(1)\,.
\end{align*} 
It  then follows that $\limsup_{n\to\infty}
\mathbb P(T_n>\hat{\rm cv}_{\alpha})
\leq
\alpha+2\epsilon $. On the other hand,
\begin{align*}
\mathbb P(T_n>\hat{\rm cv}_{\alpha})
 \geq &~
\mathbb P\{
T_n>q_{n,U}(\epsilon),
\mathcal E_{n,\epsilon}
\} \geq
\mathbb P\{T_n>q_{n,U}(\epsilon)\}
-
\mathbb P(\mathcal E_{n,\epsilon}^{\rm c})\\
 \geq &~
\mathbb P\{\tilde{T}_n^{\rm dis,G}>q_{n,U}(\epsilon)\}
-
o(1) =
\alpha-2\epsilon-o(1)\,.
\end{align*}
Thus $\liminf_{n\to\infty}
\mathbb P(T_n>\hat{\rm cv}_{\alpha})
\geq
\alpha-2\epsilon$.  Since $\epsilon>0$ can be chosen arbitrarily small, we have  $\mathbb P(T_n>\hat{\rm cv}_{\alpha})
\rightarrow
\alpha$ by letting $\epsilon\rightarrow 0^{+}$.  This completes the proof of Theorem \ref{thm.Tn.H0}.
\hfill $\Box$

\subsection{Proof  of Theorem \ref{thm.Tn.H1}}

By \eqref{new.def.tndisG} in Section \ref{proof.lemma_tn_dis_GA}, we know that $\tilde{T}^{\rm dis, G}_{n} = |\tilde{\boldsymbol{\cG}}^{\rm dis}|_{\max}$, where $\tilde{\boldsymbol{\cG}}^{\rm dis} \sim \cN(\0, \widetilde\bXi^{\rm dis}_{n})$. Here,     $\widetilde\bXi^{\rm dis}_{n}=(\widetilde{\Xi}^{\rm dis}_{n,k_1k_2})_{L\tilde p^2\times L\tilde p^2}$ is  specified in  Section \ref{proof.lemma_tn_dis_GA} and $\tilde{p}=pM$. Analogously to $\tilde{T}^{\rm dis, G}_{n}$,   define the discretized version of $\tilde{T}_{n}^{{\rm G*}}$ by $\tilde{T}_{n}^{{\rm dis ,G*}} = |\tilde{\boldsymbol{\cG}}^{\rm dis,*}|_{\max}$, where  $\tilde{\boldsymbol{\cG}}^{\rm dis,*} \sim \cN(\0,\widetilde\bXi^{\rm dis,*}_{n})$  conditional on $\widetilde{\mathcal D}_n=\{\bvep_t(\cdot)\}_{t\in [n]}$. Here, $\widetilde\bXi^{\rm dis,*}_{n}=(\widetilde{\Xi}^{\rm dis,*}_{n,k_1k_2})_{L\tilde p^2\times L\tilde p^2}$ is specified in  Section \ref{proof.disGAstar}. Then, by the constructions in Section~\ref{sec.preliminary}, we define the following four events:
\begin{align*} 
\Phi_1(s_1) =  \{|T_n^{\rm G*}-\tilde{T}_n^{\rm G*}|\leq s_1\}~~~~&\mbox{and}~~~~ \Phi_2(s_2) = \{|T_n-\tilde{T}_n|\leq s_2\}\,,\\
\Phi_3(s_3)=\{|\tilde{T}^{\rm G*}_{n}-\tilde{T}^{\rm dis,G*}_{n}|\leq s_3\}~~~~&\mbox{and}~~~~
\Phi_4(s_4)=\bigg\{ \max_{ k \in [L\tilde{p}^2]}\bigg|\frac{\widetilde\Xi^{\rm dis,*}_{n,kk}}{\widetilde{\Xi}_{n,kk}^{\rm dis}}-1\bigg|\leq s_4 \bigg\}\,,
\end{align*}
where $s_1=Cn^{-\gamma_3/2}(\log p)^{\gamma_2}$,  $s_2=Cn^{-\gamma_1/2+1/4}(\log p)^{\gamma_2}$, $s_3=Cn^{-1}(\log  p)^{1/2}$ and $s_4= Cn^{-{c}_1/2}(\log p)^{{c}_2}$ for some sufficiently large constant  $C >0$. Here, \(c_1\) and \(c_2\) depend only on \((\rho,\vartheta)\)
and are specified in \eqref{disSigma} in Section \ref{proof.disGAstar}. Recall   $\mathcal{D}_n=\{\hat\bvep_t(\cdot)\}_{t\in [n]}$. Let $\mathcal{F}_n$ denote the $\sigma$-field generated by $\{\bvep_t(\cdot),\bdelta_t(\cdot)\}_{t\in [n]}$. By \eqref{err.m}, it follows  that $\sigma(\mathcal{D}_n)\subseteq \mathcal{F}_n$. Thus,  
\begin{align}\label{condition.tail}
    \mathbb{P}\big({T}^{\rm G*}_{n}>x \, |\,\cD_n\big) = \bbE\big \{\mathbb{P}\big ({T}^{\rm G*}_{n}>x  \,|\,\mathcal{F}_n\big)\, |\,\cD_n\big\} 
\end{align}
for any $x\geq 0$.  By Conditions \ref{c.subgaussian}--\ref{c.rate} and Lemma \ref{lem:diff.Tn.Gstar}, we have $ \mathbb{P}\{\Phi^{\rm c}_1(s_1) \,|\,  \mathcal{F}_n\}  =o_{\rm p}(1)$ and $ \mathbb{P}\{\Phi^{\rm c}_3(s_3) \,|\,  \mathcal{F}_n \}=o_{\rm p}(1) $.  In addition, by \eqref{disSigma}  in Section \ref{proof.disGAstar}, we have 
\begin{align*} 
\max_{ k \in [L{\tilde{p}}^{2}]} |\widetilde\Xi^{{\rm dis,*}}_{n,kk}- \widetilde\Xi_{n,kk}^{\rm dis}| 
\leq \|\widetilde\bXi^{\rm dis, *}_{n}-\widetilde\bXi^{\rm dis}_{n}\|_{\max}= o_{\rm p}(s_4)   \,,
\end{align*}
provided that $0<\rho<(\vartheta-1)/(3\vartheta-2)$. By Condition \ref{c.bvar}, we know $\min_{ k \in [L\tilde{p}^2]} \widetilde\Xi_{n,kk}^{\rm dis}$ is uniformly bounded away from zero. Therefore, 
\begin{align*}
    \max_{ k \in [L\tilde{p}^2]}\bigg|\frac{ \widetilde\Xi^{{\rm dis,*}}_{n,kk}}{ \widetilde\Xi_{n,kk}^{\rm dis}}-1\bigg| \leq \frac{\max_{ k \in [L{\tilde{p}}^{2}]}| \widetilde\Xi^{{\rm dis, *}}_{n,kk}- \widetilde\Xi_{n,kk}^{\rm dis}|}{\min_{ k \in [L\tilde{p}^2]} \widetilde\Xi_{n,kk}^{\rm dis}}= o_{\rm p}(s_4) \,.
\end{align*}
Since $\Phi_4(s_4)$ is $\mathcal F_n$-measurable, this implies
$\mathbb P\{\Phi_4^{\rm c}(s_4)\,|\,\mathcal F_n\}
=
I\{\Phi_4^{\rm c}(s_4)\}
=o_{\rm p}(1)$. Hence, by the union bound and triangle inequality,
\begin{align}\label{bound.H1.1}
\mathbb P\big(T_n^{\rm G*}>x\,|\,\mathcal F_n\big)
&\leq
\mathbb P\bigg\{
T_n^{\rm G*}>x,\,
\bigcap_{j\in\{1,3,4\}}\Phi_j(s_j)
\,\bigg|\,\mathcal F_n
\bigg\}
+
\mathbb P\bigg\{
\bigcup_{j\in\{1,3,4\}}\Phi_j^{\rm c}(s_j)
\,\bigg|\,\mathcal F_n
\bigg\}\nonumber \\
&=
\mathbb P\bigg\{
T_n^{\rm G*}>x,\,
\bigcap_{j\in\{1,3,4\}}\Phi_j(s_j)
\,\bigg|\,\mathcal F_n
\bigg\}
+o_{\rm p}(1) \nonumber\\
&\leq \mathbb{P}\bigg\{\tilde{T}^{\rm dis,G*}_{n}>x-s_1-s_3\,,\bigcap_{j\in\{1,3,4\}}\Phi_j(s_j)\,\bigg|\,\mathcal F_n\bigg\}+o_{\rm p}(1) 
\end{align}
for any $x >  s_1+s_3$. Recall $\tilde{p}=pM$, $M\asymp(np)^{C_M}$ and $p\geq n^{\upsilon}$. Then $\log (L\tilde{p}^2)\lesssim \log p$.  
Let $\tilde{\varrho}= \max_{k \in [{L\tilde{p}^2}]}\widetilde\Xi_{n,kk}^{\rm dis}$. 
On the event \(\Phi_4(s_4)\), by the standard maximal inequality for Gaussian random vectors, we have 
\begin{align}\label{upper.Ex}
\mathbb{E}(\tilde{T}_{n}^{\rm dis,G*}\,|\,\mathcal F_n)\leq&\,  C_0 (\log  {p} )^{1/2}\max_{k \in [{L\tilde{p}^2}]}(\widetilde\Xi_{n,kk}^{{\rm dis, *}})^{1/2} \leq  C_0 \tilde{\varrho}^{1/2}(1+s_4)^{1/2}(\log  {p} )^{1/2}\,,
\end{align}
where $C_0 > 0$ is a constant. By Borell inequality for Gaussian random vector,
\begin{align*}
\mathbb{P}\{\tilde{T}^{\rm dis,G*}_{n}>\mathbb{E}(\tilde{T}^{\rm dis,G*}_{n}\,|\,\mathcal F_n)+x\,|\,\mathcal F_n\}
\leq 2\exp\bigg( -\frac{x^2}{2\max_{k \in [{L\tilde{p}^2}]} \widetilde\Xi^{{\rm dis,*}}_{n,kk}} \bigg)
\end{align*}
for any $x\geq 0$. Let 
\begin{align*}
    x_0 = s_1+s_3 + \tilde{\varrho}^{1/2}(1+s_4)^{1/2}[C_0(\log  {p} )^{1/2}+\sqrt{2}\{ \log (4/\alpha)\}^{1/2} ]\,.
\end{align*} 
Since the event
$\cap_{j\in\{1,3,4\}}\Phi_j(s_j)$ implies $\Phi_4(s_4)$, by \eqref{upper.Ex}, it holds that
\begin{align*}
    \mathbb{P}\bigg\{\tilde{T}^{\rm dis,G*}_{n}>x_0-s_1-s_3\,,\bigcap_{j\in\{1,3,4\}}\Phi_j(s_j)\,\bigg|\,\mathcal F_n\bigg\} \leq  2\exp\bigg\{ -\frac{2\tilde{\varrho}(1+s_4)\log(4/\alpha)}{2\tilde{\varrho}(1+s_4)} \bigg\}=\frac{\alpha}{2}\,.
\end{align*}
Following from \eqref{bound.H1.1}, it holds that $\mathbb P(T_n^{\rm G*}>x_0\,|\,\mathcal F_n)
\leq {\alpha}/{2}+o_{\rm p}(1)$.  Taking conditional expectation with respect to \(\mathcal D_n\) and using
\eqref{condition.tail}, we obtain $\mathbb P(T_n^{\rm G*}>x_0\,|\,\mathcal D_n)
\leq {\alpha}/{2}+o_{\rm p}(1)$.
Since \(\alpha\in(0,1)\) is fixed, this implies
\begin{align*}
    \mathbb P(T_n^{\rm G*}>x_0\,|\,\mathcal D_n)\leq \alpha
\end{align*}
with probability approaching one. Let $\lambda_1(\tilde{C},p)=\tilde{C}(\log  {p} )^{1/2}  $ for   some sufficiently large constant $\tilde{C}>0$ such that $\lambda_1(\tilde{C},p) \geq 2[C_0(\log  {p} )^{1/2}+ \sqrt{2}\{ \log (4/\alpha)\}^{1/2} ]$, and set $\varrho=\sup_{(u,v)\in\cU^2}\mathrm{\Psi}(u,v)$, where $\mathrm{\Psi}(u,v)$ is the largest diagonal element of $\widetilde\bXi_n(u,v,u,v)$.  Then $\tilde{\varrho} \leq \varrho$. Moreover, since  $s_4=Cn^{-c_1/2}(\log p)^{c_2} $, under
 $\log p\ll n^{\iota} $ for some  $\iota>0 $ depending only on
 $(\rho,\vartheta) $, we have  $(1+s_4)^{1/2}\leq 2 $. Hence, by the definition of  $\hat{\cv}_{\alpha} $, it holds with probability approaching one that
\begin{align}\label{upper.cv}
   \hat{\cv}_{\alpha} 
\leq   s_1+s_3 + {\varrho}^{1/2} \lambda_1(\tilde{C},p )\,,
\end{align}
provided that $\log p \ll n^{\iota}$ for some $\iota>0$ depending only on $(\rho,\vartheta)$.

Recall  $\tilde{T}_{n} = \max_{ \ell \in [L], j,j' \in [p]}\sup_{(u,v)\in\cU^2} n^{1/2} |\widetilde{\Sigma}^{(\ell)}_{jj'}(u,v)|$, where $\widetilde{\Sigma}^{(\ell)}_{jj'}(u,v)=n_{\ell}^{-1}\sum_{t=1}^{n_\ell} \{\vep_{t,j}(u)-\bar{\vep}_j(u)\} \{\vep_{t+\ell,j'}(v)-\bar{\vep}_{j'}(v)\}$. Let $(\ell_0,j_0,{j_0}',u_0,{v_0})=\arg\sup_{ \ell \in [L], j,j' \in [p],(u,v)\in\cU^2} n^{1/2} |\Sigma^{(\ell)}_{jj'}(u,v)| $. Without loss of generality, we assume ${\Sigma}^{(\ell_0)}_{j_0{j_0}'}(u_0,v_0)>0$. Let $\check{C}=2\tilde{C} \geq \tilde{C}(1+\epsilon_n)$ for some $\epsilon_n\rightarrow 0$ satisfying $\epsilon_n(\log {p})^{1/2}\rightarrow\infty$. Then  
$
n^{1/2}{\Sigma}^{(\ell_0)}_{j_0{j_0}'}(u_0,v_0) \geq \varrho^{1/2}(1+\epsilon_n)\lambda_1(\tilde{C},p)
$. Let $\mathcal A_n
=
\Phi_2(s_2)
\cap
\{
\hat{\cv}_{\alpha}
\leq
s_1+s_3+\varrho^{1/2}
\lambda_1(\tilde C,p)
\}$. On the event \(\mathcal A_n\),  it follows that 
\begin{align*}
T_{n}\geq &~   n^{1/2}\Big\{\widetilde{\Sigma}^{(\ell_0)}_{j_0{j_0}'}(u_0,v_0)-{\Sigma}^{(\ell_0)}_{j_0{j_0}'}(u_0,v_0)\Big\}+n^{1/2}{\Sigma}^{(\ell_0)}_{j_0{j_0}'}(u_0,v_0)-s_2\\
\geq &~ n^{1/2}\Big\{\widetilde{\Sigma}^{(\ell_0)}_{j_0{j_0}'}(u_0,v_0)-{\Sigma}^{(\ell_0)}_{j_0{j_0}'}(u_0,v_0)\Big\} + \varrho^{1/2}(1+\epsilon_n)\lambda_1(\tilde{C},p) - s_2\\
\geq &~ n^{1/2}\Big\{\widetilde{\Sigma}^{(\ell_0)}_{j_0{j_0}'}(u_0,v_0)-{\Sigma}^{(\ell_0)}_{j_0{j_0}'}(u_0,v_0)\Big\} + \epsilon_n{\varrho}^{1/2} \lambda_1(\tilde{C},p) + \hat{\cv}_{\alpha}  - (s_1+s_2+s_3)\,.
\end{align*}
Condition~\ref{c.rate} gives $\mathbb P\{\Phi_2^{\rm c}(s_2)\}=o(1)$. Then, by \eqref{upper.cv}, we have $\mathbb P(\mathcal A_n^{\rm c})=o(1)$, provided that $\log p \ll n^{\iota}$ for some $\iota>0$ depending only on $(\rho,\vartheta)$, which implies 
\begin{align*}
&~\mathbb{P}({T}_{n}>\hat{\cv}_{\alpha}) \geq  \mathbb{P}({T}_{n}>\hat{\cv}_{\alpha}\,,\mathcal A_n )  \\ 
\geq&~ \mathbb{P}\bigg[n^{1/2}\Big\{\widetilde{\Sigma}^{(\ell_0)}_{j_0{j_0}'}(u_0,v_0) -{\Sigma}^{(\ell_0)}_{j_0{j_0}'}(u_0,v_0)\Big\}
 > -\epsilon_n{\varrho}^{1/2} \lambda_1(\tilde{C},p)+s_1+s_2+s_3 \bigg] -\mathbb P(\mathcal A_n^{\rm c})\\
\geq&\, 1-\mathbb{P}\bigg[n^{1/2}\Big\{\widetilde{\Sigma}^{(\ell_0)}_{j_0{j_0}'}(u_0,v_0) -{\Sigma}^{(\ell_0)}_{j_0{j_0}'}(u_0,v_0)\Big\}
 \leq -\epsilon_n{\varrho}^{1/2} \lambda_1(\tilde{C},p)+s_1+s_2+s_3\bigg] -o(1)\, .
\end{align*}
Similar to \eqref{eq:sumvep} and \eqref{eq:sumvep2}  in Section \ref{proof.lemma_tn_dis_GA},   Conditions \ref{c.alpha} and \ref{c.subgaussian} yield $n^{1/2}|\widetilde{\Sigma}^{(\ell_0)}_{j_0{j_0}'}(u_0,v_0) -{\Sigma}^{(\ell_0)}_{j_0{j_0}'}(u_0,v_0)| = O_{\rm p}(1)$.  Moreover, under
 $\log p\ll n^{\iota'} $ for some  $\iota'>0 $ depending only on
 $(\gamma_1,\gamma_2,\gamma_3) $, we have  $s_1+s_2+s_3= o(1) $. Since $\epsilon_n\lambda_1(\tilde{C},p)\rightarrow  \infty$, it holds that $-\epsilon_n{\varrho}^{1/2} \lambda_1(\tilde{C},p)+s_1+s_2+s_3 \rightarrow -\infty$. Then, we have $\mathbb{P}({T}_{n}>\hat{\cv}_{\alpha}) \rightarrow 1$, provided that $\log p\ll n^{\iota_1} $ for some  $\iota_1>0 $ depending only on
 $(\rho,\vartheta,\gamma_1,\gamma_2,\gamma_3) $.  We  complete the proof of Theorem \ref{thm.Tn.H1}.
\hfill $\Box$

\section{Proof of Proposition \ref{thm.partial}}\label{sec.proof.pro1}

Similar to Section~\ref{sec.preliminary}, for some large integer $M_1>0$, let $\cU=[0,1]$ and partition it into
$M_1$ subintervals $\{B_1,\ldots,B_{M_1}\}$ of equal length $M_1^{-1}$. Let $b_m$ denote the midpoint of $B_m$.
The proof of Proposition~\ref{thm.partial} relies on following two lemmas, whose proofs are given in Sections~\ref{proof.lemma1.prop1} and~\ref{proof.lemma2.prop1}, respectively.

\begin{lemma}\label{lemma1.prop1}
Under  Condition~{\rm \ref{c.subgaussian}},
it holds that  
\begin{align*} 
&~~~~~~~ \max_{t\in[n]}\max_{j\in[p]}\sup_{u\in\cU}|\vep_{t,j}(u)|  = O_{\rm p}[\{\log(np)\}^{1/2} ]\,,\\
 &\max_{t\in[n]}\max_{j\in[p]}\max_{m\in[M_1]} \sup_{u \in{B}_m}|\vep_{t,j}(u)-\vep_{t,j}(b_m)|  = O_{\rm p}\bigg\{\frac{ (np)^2}{  {M_1}^{ \kappa/2} }   \bigg\}\,.
\end{align*} 
\end{lemma}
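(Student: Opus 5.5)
Both bounds come from a chaining argument applied to each process $\vep_{t,j}(\cdot)$, followed by a union bound over the $np$ pairs $(t,j)$. The whole argument rests on Condition~\ref{c.subgaussian}.

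\emph{First bound.} Fix $(t,j)$ and equip $\cU=[0,1]$ with the metric $d(u,v)=C_4|u-v|^\kappa$. Condition~\ref{c.subgaussian}(i) makes $\vep_{t,j}(\cdot)$ a process with sub-Gaussian increments with respect to $d$, and separability justifies taking the supremum over a countable dense set. The covering number satisfies $N(\cU,d;\epsilon)\lesssim \epsilon^{-1/\kappa}$, so Dudley's entropy integral $\int_0^{\mathrm{diam}}\{\log N(\cU,d;\epsilon)\}^{1/2}\,{\rm d}\epsilon$ is a finite universal constant. The tail form of the chaining bound (e.g.\ Vershynin 2018, Theorem 8.1.6 in its high-probability version) then gives, for all $x\ge 0$,
\[
\mathbb P\Big\{\sup_{u\in\cU}|\vep_{t,j}(u)-\vep_{t,j}(u_0)|> C(1+x)\Big\}\leq 2e^{-x^2},
\]
with $C$ universal. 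Combining this with $\|\vep_{t,j}(u_0)\|_{\psi_2}\le C_5$ from Condition~\ref{c.subgaussian}(ii) shows that $\sup_u|\vep_{t,j}(u)|$ is sub-Gaussian with a uniformly bounded $\psi_2$ norm. A union bound over $np$ indices, taking $x\asymp\{\log(np)\}^{1/2}$, yields $O_{\rm p}[\{\log(np)\}^{1/2}]$.

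\emph{Second bound.} Apply the same chaining restricted to $B_m$, which has length $M_1^{-1}$ and hence $d$-diameter $\lesssim M_1^{-\kappa}$. Measuring increments from the midpoint $b_m$, the entropy integral over $B_m$ scales like $\int_0^{CM_1^{-\kappa}}\{\log(CM_1^{-\kappa}/\epsilon)\}^{1/2}{\rm d}\epsilon\lesssim M_1^{-\kappa}$. This gives
\[
\Big\|\sup_{u\in B_m}|\vep_{t,j}(u)-\vep_{t,j}(b_m)|\Big\|_{\psi_2}\lesssim M_1^{-\kappa}.
\]
A union bound over $(t,j,m)$, i.e.\ $npM_1$ terms, then gives $O_{\rm p}[M_1^{-\kappa}\{\log(npM_1)\}^{1/2}]$.

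The stated rate $(np)^2M_1^{-\kappa/2}$ is much cruder, so the task is only to dominate the sharp rate by it. I will use $\{\log(npM_1)\}^{1/2}\lesssim (np)^2 M_1^{\kappa/2}$, which holds whenever $M_1\ge 1$ (the logarithm is at most polynomial). Alternatively, a Markov-type argument gives the crude rate directly: bound $\mathbb E\max_{t,j,m}\sup_{B_m}|\cdot|$ by the sum over indices of $\mathbb E\sup_{B_m}|\cdot|\lesssim M_1^{-\kappa}$, which is at most $npM_1\cdot M_1^{-\kappa}$. Because that expression does not match the stated rate cleanly, I will instead take the $\psi_2$ union bound and weaken the result.

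The main care point is making the localized chaining on $B_m$ rigorous. This requires anchoring at $b_m$ and confirming that the sub-Gaussian increment constant is uniform in $(t,j,m)$, which Condition~\ref{c.subgaussian}(i) provides through $\max_{t,j}$. It also requires using separability so that the supremum is measurable and the chaining bound applies.
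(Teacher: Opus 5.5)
Your proof is correct, but it takes a different route from the paper.

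\emph{The paper's route.} The paper first discretizes $\cU$ into the grid $\{b_m\}_{m\in[M]}$ and bounds $\max_{t,j,m}|\vep_{t,j}(b_m)|$ by a sub-Gaussian union bound. It then controls the within-cell oscillations with its own chaining result, Lemma~\ref{tail_multi}(i) with $\gamma=2$. In that lemma the fine scales below $D_1\asymp M^{-2}$ are handled by Markov's inequality, so the resulting tail is only polynomial, of order $np/(M^{\kappa/2}x)$. This polynomial tail is exactly the source of the crude rate $(np)^2M_1^{-\kappa/2}$. The first assertion then follows by taking $M\asymp (np)^{C_0}$, which makes the oscillation term $O_{\rm p}\{(np)^{-1}\}$.

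\emph{Your route.} You use the standard exponential-tail form of Dudley's inequality for sub-Gaussian-increment processes. For the first assertion you apply it globally from the anchor $u_0$. For the second you apply it locally on each $B_m$, anchored at $b_m$, which gives the sharper rate $M_1^{-\kappa}\{\log(npM_1)\}^{1/2}$. You then weaken this to the stated rate, which is legitimate since $\{\log(npM_1)\}^{1/2}\le C_\kappa (np)^2M_1^{\kappa/2}$ for all $M_1\ge 1$.

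\emph{Comparison.} Your argument is shorter, relies only on an off-the-shelf result, and gives a strictly better bound. The paper's choice buys uniformity of tooling. Lemma~\ref{tail_multi} is designed for increments with tails $\exp\{-c(x/d)^{\gamma}\}$ for any $\gamma>0$, including $\gamma=1/3$ for the product/mixing-sum processes in the proof of Lemma~\ref{lemma_tn_dis}, where Orlicz-norm versions of Dudley's bound (which need $\gamma\ge 1$) are unavailable. The crudeness of the resulting rate costs nothing downstream, because $M_1$ is chosen polynomially large anyway.
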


\begin{lemma}\label{lemma2.prop1}
Under  Conditions~{\rm \ref{c.subgaussian} and \ref{cond.subgauss}--\ref{cond.kern}},
it holds that 
\begin{align*} 
 \max_{t\in [n]}\max_{j\in[p]}\sup_{u\in\cU}|\delta_{t,j}(u)|   = O_{\rm p}\bigg[ \{\log(np /h )\}^{1/2} \bigg\{ \frac{1}{(Nh)^{1/2}}+ h^\kappa\bigg\}  \bigg]\,, 
\end{align*} 
provided that $\log(np/h)\ll N h$. Moreover, 
\begin{align*}
     \max_{t\in [n]}\max_{j\in[p]}\max_{m\in[{M_1}]}\sup_{u\in B_m}|\delta_{t,j}(u)-\delta_{t,j}(b_m)| = O_{\rm p}\bigg[ \frac{\{\log (np M_1)\}^{1/2}}{h^{2-\kappa} M_1}   + \frac{ (np)^2}{h M_1^{ \kappa/2} }  + \frac{     \{ \log(np)\}^{1/2}   }{h^2 M_1}\bigg]\,,
\end{align*}
provided that $\log(np M_1/h)\ll N h$.
\end{lemma}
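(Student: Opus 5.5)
The plan is to treat the statement as a bias--variance analysis of the local linear smoother, uniform over $(t,j,u)$. I would first write $\delta_{t,j}(u)=\hat\vep_{t,j}(u)-\vep_{t,j}(u)$ in the equivalent-kernel form
\[
\hat\vep_{t,j}(u)=\sum_{k=1}^{N_{t,j}}w_{t,j,k}(u)\,\vep^*_{t,j,k},
\]
with $w_{t,j,k}(u)=\{S_2-S_1(u_{t,j,k}-u)\}\calK_{h}(u_{t,j,k}-u)/(S_0S_2-S_1^2)$ and $S_r=S_{r,t,j}(u)=N_{t,j}^{-1}\sum_k (u_{t,j,k}-u)^r\calK_h(u_{t,j,k}-u)$. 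Since the weights reproduce constants and linear functions, this gives the decomposition
\[
\delta_{t,j}(u)=\sum_k w_{t,j,k}(u)\{\vep_{t,j}(u_{t,j,k})-\vep_{t,j}(u)\}+\sum_k w_{t,j,k}(u)\varsigma_{t,j,k}=:\mathrm{I}_{t,j}(u)+\mathrm{II}_{t,j}(u).
\]

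The first step controls the denominators. By Bernstein's inequality, a union bound over $(t,j)$ and a grid in $u$ (using the Lipschitz continuity of $\cK$ in Condition~\ref{cond.kern}(ii)), together with Condition~\ref{cond.fu}(i), I would show that $S_0\asymp 1$, $S_1=O(h)$ and $S_2\asymp h^2$ uniformly with probability tending to one. This needs $\log(np/h)\ll Nh$. On this event $\sum_k|w_{t,j,k}(u)|\lesssim1$ and $\max_k|w_{t,j,k}(u)|\lesssim(Nh)^{-1}$; moreover $w_{t,j,k}(u)=0$ unless $|u_{t,j,k}-u|\le h$.

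Next come the two terms.

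\textbf{Bias term $\mathrm{I}$.} It is bounded by
\[
\sum_k|w_{t,j,k}(u)|\sup_{|s-u|\le h}|\vep_{t,j}(s)-\vep_{t,j}(u)|.
\]
Condition~\ref{c.subgaussian}(i) and a Dudley chaining bound for processes with sub-Gaussian increments give
\[
\max_{t,j}\sup_{|s-u|\le h}|\vep_{t,j}(s)-\vep_{t,j}(u)|=O_{\rm p}\bigl[h^\kappa\{\log(np/h)\}^{1/2}\bigr].
\]
Here the $\{\log(np/h)\}^{1/2}$ factor comes from a union bound over $(t,j)$ and the $O(h^{-1})$ blocks of length $h$. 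Separability, assumed in Condition~\ref{c.subgaussian}, is what legitimizes the suprema.

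\textbf{Noise term $\mathrm{II}$.} Conditionally on the design points and on $\vep_{t,j}$ (by Condition~\ref{cond.fu}(ii)), $\mathrm{II}_{t,j}(u)$ is a weighted sum of independent sub-Gaussian errors (Condition~\ref{cond.subgauss}), so its $\psi_2$-norm is at most $C(\sum_k w_{t,j,k}^2)^{1/2}\lesssim(Nh)^{-1/2}$. I would bound it on a fine grid by a sub-Gaussian maximal inequality over $np$ curves times polynomially many grid points. Between grid points I would use that $w_{t,j,k}(\cdot)$ is Lipschitz with constant $\lesssim (Nh^2)^{-1}$, together with a crude bound $\max|\varsigma_{t,j,k}|=O_{\rm p}\{\log(npN)^{1/2}\}$. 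The grid must be fine enough (mesh $\ll h^{2}$, say) that the interpolation error is negligible. This yields $O_{\rm p}[\{\log(np/h)\}^{1/2}(Nh)^{-1/2}]$.

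For the second claim I would take the difference $\delta_{t,j}(u)-\delta_{t,j}(b_m)$ for $|u-b_m|\le M_1^{-1}$ and split it as follows.
\begin{enumerate}
\item Changes in the weights. Their Lipschitz constant is $\lesssim h^{-2}$ in the sup-sum sense. Multiplied by the oscillation of $\vep_{t,j}$ over an $h$-window, $O_{\rm p}\{h^\kappa\log^{1/2}(npM_1)\}$, this gives the term $\{\log(npM_1)\}^{1/2}/(h^{2-\kappa}M_1)$.
\item The increment $\vep_{t,j}(u)-\vep_{t,j}(b_m)$ appearing inside $\mathrm{I}$. It is weighted by $\sum_k|w|$, which is $\lesssim h^{-1}$ under a crude bound, and controlled via Lemma~\ref{lemma1.prop1}; this gives $(np)^2/(hM_1^{\kappa/2})$.
\item The noise part. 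It contributes $h^{-2}M_1^{-1}$ times the maximum of the noise, giving the last term $\{\log(np)\}^{1/2}/(h^2M_1)$.
\end{enumerate}
Each piece is a Lipschitz-in-$u$ bound combined with Lemma~\ref{lemma1.prop1} and sub-Gaussian maxima.

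The main obstacle I expect is the uniform (in $u$, $t$, $j$) control of the random denominator and of the weights. The sup over $u$ must be handled with data-dependent weights, which requires the discretization and Lipschitz arguments to be carried out on an event of probability tending to one. This is also where the requirement $\log(np/h)\ll Nh$ enters.
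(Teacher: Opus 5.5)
Your plan is sound, but it handles the smoothing bias by a different mechanism from the paper.

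\textbf{The paper's route.} The paper writes $\delta_{t,j}(u)=\mathbf e_0^{\top}\widehat{\mathbf S}_{t,j}(u)^{-1}\widehat{\mathbf R}_{t,j}(u)$ and secures $\lambda_{\min}\{\widehat{\mathbf S}_{t,j}(u)\}\geq c_S/2$ uniformly (the event $\mathcal E_1$). It then discretizes $\mathcal U$ into $\widetilde M$ cells.
\begin{itemize}
\item At each grid point $b_m$, the bias part of $\widehat{\mathbf R}_{t,j}(b_m)$ is, conditionally on the design and on a local-count event $\mathcal E_U$, sub-Gaussian with $\psi_2$-norm $\lesssim h^\kappa$. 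This uses the independence in Condition~\ref{cond.fu}(ii). A union bound then gives $h^\kappa\{\log(np\widetilde M)\}^{1/2}$.
\item Off the grid, it needs only Lipschitz continuity of the kernel, the crude modulus $(np)^2\widetilde M^{-\kappa/2}$ from Lemma~\ref{lemma1.prop1}, and $\max_{t,j}N_{t,j}^{-1}\sum_k|\varepsilon^*_{t,j,k}|=O_{\rm p}\{(\log np)^{1/2}\}$. These terms are made negligible by choosing $\widetilde M\asymp h^{-(2+2\kappa)/\kappa}(np)^C$.
\item The second claim is just these interpolation bounds with $\widetilde M=M_1$.
\end{itemize}

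\textbf{Your route and the trade-off.} You instead bound the bias pathwise and uniformly in $u$, by $\sum_k|w_{t,j,k}(u)|$ times the modulus of continuity of $\varepsilon_{t,j}$ at scale $h$. This avoids discretizing the bias, and it does not use independence between curves and design for that term. The cost is that it needs an exponential tail for the supremum over each $h$-window, i.e.\ Dudley's tail inequality for processes with sub-Gaussian increments. The Markov-type chaining bound of Lemma~\ref{tail_multi}(i), which the paper develops and uses in Lemma~\ref{lemma1.prop1}, would only give a polynomial union-bound factor here. The paper's pointwise route never needs a tail bound for suprema at scale $h$.

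\textbf{Details to tighten.}
\begin{itemize}
\item The denominator step must deliver $S_0S_2-S_1^2\gtrsim h^2$ uniformly. That is a lower bound on the smallest eigenvalue of the normalized local design matrix, which comes from $\inf f_U>0$ and the kernel. It must also hold at boundary points, where only half the kernel support lies in $\mathcal U$. The separate orders $S_0\asymp1$, $|S_1|\lesssim h$, $S_2\asymp h^2$ do not imply it.
\item $\max_{t,j,k}|\varsigma_{t,j,k}|$ is of order $\{\log(npN)\}^{1/2}$, and nothing bounds $N$ above in terms of $np$. Using it would therefore put $\log N$ into the rates. Instead bound the noise increment by $\max_k|w_{t,j,k}(u)-w_{t,j,k}(u')|\sum_k|\varsigma_{t,j,k}|$ and use $\max_{t,j}N_{t,j}^{-1}\sum_k|\varsigma_{t,j,k}|=O_{\rm p}\{(\log np)^{1/2}\}$. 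This is exactly where the term $\{\log(np)\}^{1/2}/(h^2M_1)$ comes from.
\item A mesh $\ll h^2$ is not fine enough for the noise interpolation. Take the mesh polynomially small in $h$ and $(np)^{-1}$, and compare the interpolation error with the $h^\kappa$ part of the target rate. Then the grid size, and hence the logarithmic factor, does not depend on $N$.
\item In your second item, because the weights sum to one, the increment term is exactly $\varepsilon_{t,j}(b_m)-\varepsilon_{t,j}(u)$. No factor $h^{-1}$ is needed there.
\end{itemize}
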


For notational simplicity, we define 
\begin{align*} 
   & a_{1n}=\{\log(np)\}^{1/2}\,, ~~ a_{2n}=\frac{ (np)^2}{  M_1^{ \kappa/2} }\,,~~b_{1n} = \{\log(np /h)\}^{1/2}  \bigg\{ 
 \frac{1}{(Nh)^{1/2}}  + h^{\kappa} \bigg\}\,,\nonumber\\
    &~~~~~~~~~~~~\mbox{and}~~~~~~ b_{2n} =     \frac{\{\log (np M_1)\}^{1/2}}{h^{2-\kappa} M_1}   + \frac{ (np)^2}{h M_1^{ \kappa/2} }  + \frac{     \{ \log(np)\}^{1/2}   }{h^2 M_1}\,.
\end{align*}
Since $p\geq n^{\upsilon}$, combining \eqref{eq:def.sigmadv} with Lemmas~\ref{lemma1.prop1} and~\ref{lemma2.prop1}, we have
\begin{align*}
    &\max_{\ell\in [L]} 
\{\|\widetilde{\bSigma}_{\varepsilon\delta}^{(\ell)}\|_{\infty,\max} + \|\widetilde{\bSigma}_{\delta\vep}^{(\ell)}\|_{\infty,\max}  \}\\
 \lesssim  &~ \bigg\{\max_{t\in[n]}\max_{j\in[p]}\sup_{u\in\cU}|\vep_{t,j}(u)|\bigg\}\cdot \bigg\{\max_{t\in [n]}\max_{j\in[p]}\sup_{u\in\cU}|\delta_{t,j}(u)|\bigg\}= O_{\rm p}(a_{1n}b_{1n})\,,
\end{align*}
provided that $\log (p/h)\ll N h$.  In addition,
\begin{align*}
    \max_{\ell\in [L]} 
  \|\widetilde\bSigma_{\delta\delta}^{(\ell)} \|_{\infty,\max} \lesssim  \bigg\{\max_{t\in [n]}\max_{j\in[p]}\sup_{u\in\cU}|\delta_{t,j}(u)|\bigg\}^2 = O_{\rm p}( b_{1n}^2)\,, 
\end{align*}
provided that $\log (p/h)\ll N h$. By Condition \ref{cond.Tij}, it holds that $Nh\rightarrow \infty$ and $h\rightarrow 0$. Then,
\begin{align}\label{Delta.Sigma}
    \Delta_{\widetilde{\bSigma}}^{\varepsilon\delta} =&~ \max_{\ell\in [L]} 
\{\|\widetilde{\bSigma}_{\varepsilon\delta}^{(\ell)}\|_{\infty,\max} + \|\widetilde{\bSigma}_{\delta\vep}^{(\ell)}\|_{\infty,\max} +  \|\widetilde\bSigma_{\delta\delta}^{(\ell)} \|_{\infty,\max} \}\nonumber\\
 =&~O_{\rm p}(a_{1n}b_{1n}+b_{1n}^2) = O_{\rm p}\bigg[  \log (N \vee p)   \bigg\{ 
 \frac{1}{(Nh)^{1/2}}  + h^{\kappa} \bigg\} \bigg]\,,
\end{align}
provided that $\log (N \vee p)\ll N h$. This completes the proof of the first part of Proposition~\ref{thm.partial}.

We next prove the second part of Proposition~\ref{thm.partial}. 
Recall   $\Delta_{\tilde{\boldsymbol{\cG}}}^{\varepsilon\delta} =  \|\tilde{\boldsymbol{\cG}}_{\varepsilon\delta}^{*}\|_{\infty,\max}+\|\tilde{\boldsymbol{\cG}}_{\delta\vep}^{*}\|_{\infty,\max}+\|\tilde{\boldsymbol{\cG}}_{\delta\delta}^{*}\|_{\infty,\max}$. We first establish the convergence rate of
$\|\tilde{\boldsymbol{\cG}}_{\varepsilon\delta}^{*}\|_{\infty,\max}$.
The convergence rates of
$\|\tilde{\boldsymbol{\cG}}_{\delta\vep}^{*}\|_{\infty,\max}$
and
$\|\tilde{\boldsymbol{\cG}}_{\delta\delta}^{*}\|_{\infty,\max}$
can be obtained in the same way. By \eqref{eq:def.cgdv}, we have
\begin{align}\label{div11}
    \|\tilde{\boldsymbol{\cG}}_{\varepsilon\delta}^{*}\|_{\infty,\max}=&~ \max_{r\in[Lp^2]}\sup_{(u,v)\in\cU^2}\bigg|\frac{1}{\sqrt{n_L} }\sum_{t=1}^{n_L} \varrho_t \{ \tilde{\eta}^{\vep\delta}_{t,r}(u,v)  -\bar{\tilde{\eta}}_r^{\vep\delta}(u,v)\} \bigg|\nonumber\\
    \leq &~ \underbrace{\max_{r\in[Lp^2]}\max_{m,m'\in [M_1] }\bigg|\frac{1}{\sqrt{n_L} }\sum_{t=1}^{n_L} \varrho_t \{ \tilde{\eta}^{\vep\delta}_{t,r}(b_m,b_{m'})  -\bar{\tilde{\eta}}_r^{\vep\delta}(b_m,b_{m'})\} \bigg|}_{{\rm R}_1}\nonumber\\
    &~+ \underbrace{\max_{r\in[Lp^2]}\max_{m,m'\in [M_1] }\sup_{(u,v)\in{B}_m\times {B}_{m'}}\bigg|\frac{1}{\sqrt{n_L} }\sum_{t=1}^{n_L} \varrho_t \{ \tilde{\eta}^{\vep\delta}_{t,r}(u,v)  -\tilde{\eta}^{\vep\delta}_{t,r}(b_m,b_{m'})\} \bigg|}_{{\rm R}_2}\nonumber\\
    &~+ \underbrace{\max_{r\in[Lp^2]}\max_{m,m'\in [M_1] }\sup_{(u,v)\in{B}_m\times {B}_{m'}}\bigg|\frac{1}{\sqrt{n_L} }\sum_{t=1}^{n_L} \varrho_t \{\bar{\tilde{\eta}}_r^{\vep\delta}(u,v)  -\bar{\tilde{\eta}}_r^{\vep\delta}(b_m,b_{m'})\} \bigg|}_{{\rm R}_3}\,,
\end{align}
where $\tilde{\eta}^{\vep\delta}_{t,r}(u,v) = \{\vep_{t,j}(u)-\bar{\vep}_j(u)\}\{{\delta}_{t+\ell,j'}(v)-\bar{{\delta}}_{j'}(v)\}$ and $\bar{\tilde{\eta}}_r^{\vep\delta}(u,v) = n_L^{-1}\sum_{t=1}^{n_L}\tilde{\eta}^{\vep\delta}_{t,r}(u,v)$. Recall that $\boldsymbol\varrho =( \varrho_1 ,\dots,\varrho_{{n_L}})^{\T} \sim \cN({\bf 0},\bXi)$, where $\bXi=(\Xi_{ij})_{n_L\times n_L}$ 
with  $\Xi_{ij}= \mathcal{W}\{(i-j)/{b_{n}}\}$.  Let $\mathcal{F}_n$ denote the $\sigma$-field generated by $\{\bvep_t(\cdot),\bdelta_t(\cdot)\}_{t\in [n]}$. Then, conditional on $\mathcal{F}_n$, $n_L^{-1/2}\sum_{t=1}^{n_L}\varrho_t\{\tilde{\eta}_{t,r}^{\vep\delta}(b_m, b_{m'})-\bar{\tilde{\eta}}_r^{\vep\delta}(b_m,b_{m'})\}\sim \mathcal{N}\{0,\Xi^{\vep\delta}_{r}(b_m,b_{m'},b_{m},b_{m'})\}$, where
 \begin{align*}
     \Xi^{\vep\delta}_{r}(u_1,v_1,u_2,{v}_2)=\sum_{i=-n_L+1}^{{n_L}-1} \mathcal{W}\bigg(\frac{i}{b_{n}}\bigg) {H}^{\vep\delta}_{i,r}(u_1,v_1,u_2,v_2) \,,
 \end{align*} 
 where ${H}^{\vep\delta}_{i,r}(u_1,v_1,u_2,v_2) = {n_L}^{-1}\sum_{t=i+1}^{n_L}
\{\tilde{\eta}^{\vep\delta}_{t,r}(u_1,v_1)-\bar{\tilde{\eta}}_r^{\vep\delta}(u_1,v_1)\} \{\tilde{\eta}^{\vep\delta}_{t-i,r}({u}_2,{v}_2)-\bar{\tilde{\eta}}_r^{\vep\delta}( u_2, v_2)\}$ when $i\geq 0$, and ${H}^{\vep\delta}_{i,r}(u_1,v_1,u_2,v_2) = {n_L}^{-1}\sum_{t=-i+1}^{n_L}
 \{\tilde{\eta}^{\vep\delta}_{t+i,r}(u_1,v_1)-\bar{\tilde{\eta}}_r^{\vep\delta}(u_1,v_1)\} \{\tilde{\eta}^{\vep\delta}_{t,r}({u}_2,{v}_2)-\bar{\tilde{\eta}}_r^{\vep\delta}( u_2, v_2)\}$ otherwise. 
Let $Q_n \asymp   n^{\rho}a_{1n}^2b_{1n}^2 \log(pM_1)$, and define the event 
\begin{align*}
    \mathcal E_{\Xi}
=
\bigg\{
\max_{r\in[Lp^2]}\max_{m,m'\in [M_1] }
|\Xi^{\vep\delta}_{r}(b_m,b_{m'},b_m,b_{m'})|
\leq Q_n
\bigg\}\,.
\end{align*}
Therefore, by the conditional Gaussian tail bound and the
union bound,
\begin{align*}
   \mathbb P({\rm R}_1\geq x\,, \mathcal E_{\Xi} ) = \bbE\{ \mathbb P({\rm R}_1\geq x, \mathcal E_{\Xi}\,|\,\mathcal F_n)\}
\lesssim
p^2M_1^2\exp\bigg(-\frac{Cx^2}{Q_n}\bigg)
\end{align*}
for any $x\geq 0$, which further implies that
\begin{align}\label{tail.R11}
    \mathbb P({\rm R}_1\geq x)
\lesssim
p^2M_1^2\exp\bigg(-\frac{Cx^2}{Q_n}\bigg)
+
\mathbb P(\mathcal E_{\Xi}^{\rm c})
\end{align}
for any $x\geq 0$. By Condition \ref{c.kernelF}, we have $ \sum_{i=-n_L+1}^{n_L-1}| \mathcal{W}({i}/{b_{n}})|\lesssim b_n \asymp  n^{\rho}$. Thus, 
\begin{align*}
   &\max_{r\in[Lp^2]}\max_{m,m'\in [M_1] }
|\Xi^{\vep\delta}_{r}(b_m,b_{m'},b_m,b_{m'}) |\\
&~~~~\lesssim   n^{\rho} \max_{-n_L+1\leq i\leq n_L-1}\max_{r\in[Lp^2]}\max_{m,m'\in [M_1] }|{H}^{\vep\delta}_{i,r}(b_m,b_{m'},b_m,b_{m'})|\,.
\end{align*}
By Lemmas \ref{lemma1.prop1} and \ref{lemma2.prop1}, we have
\begin{align*}
    \max_{t\in[n]}\max_{r\in[Lp^2]}\sup_{(u,v)\in\cU^2}| \tilde{\eta}^{\vep\delta}_{t,r}(u,v) -\bar{\tilde{\eta}}^{\vep\delta}_{r}(u,v) | = O_{\rm p}(a_{1n}b_{1n})\,,
\end{align*}
provided that $\log(np/h)\ll N h$, which implies that
\begin{align*}
     \max_{-n_L+1\leq i\leq n_L-1}\max_{r\in[Lp^2]}\max_{m,m'\in [M_1] }|{H}^{\vep\delta}_{i,r}(b_m,b_{m'},b_m,b_{m'})|= O_{\rm p}(a_{1n}^2b_{1n}^2)\,.
\end{align*}  
Then, we have $\mathbb P(\mathcal E_{\Xi}^{\rm c})=o(1)$. Together with \eqref{tail.R11}, we can conclude that
\begin{align}\label{R11.OP}
    {\rm R}_1 = O_{\rm p}\{  n^{\rho/2}a_{1n}b_{1n} \log(pM_1)  \} \,, 
\end{align} 
provided that $\log(np/h)\ll N h$. For ${\rm R}_2$ and ${\rm R}_3$, notice that 
\begin{align*}
     &|\tilde{\eta}^{\vep\delta}_{t,r}(u,v)  -\tilde{\eta}^{\vep\delta}_{t,r}(b_m,b_{m'})| \leq    | \{\vep_{t,j}(u)-\bar{\vep}_j(u)\}-\{\vep_{t,j}(b_m)-\bar{\vep}_j(b_m)\} |\cdot |{\delta}_{t+\ell,j'}(v)-\bar{{\delta}}_{j'}(v)|\\
    &~~~~~~~~~~~~~~~~~~~~~~~~ + | \{{\delta}_{t+\ell,j'}(v)-\bar{{\delta}}_{j'}(v)\}- \{{\delta}_{t+\ell,j'}(b_{m'})-\bar{{\delta}}_{j'}(b_{m'})\}| \cdot|\vep_{t,j}(b_m)-\bar{\vep}_j(b_m)|\,.
\end{align*}
Then, by Lemmas \ref{lemma1.prop1} and \ref{lemma2.prop1},  it holds that
\begin{align*}
    &\max_{t\in[n]}\max_{r\in[Lp^2]}\max_{m,m'\in [M_1] }\sup_{(u,v)\in{B}_m\times {B}_{m'}}\big| \tilde{\eta}^{\vep\delta}_{t,r}(u,v)  -\tilde{\eta}^{\vep\delta}_{t,r}(b_m,b_{m'}) \big|\\
    &~~~~\lesssim \max_{t\in[n]}\max_{j\in[p]}\max_{m\in[M_1]} \sup_{u \in{B}_m}|\vep_{t,j}(u)-\vep_{t,j}(b_m)| \cdot \max_{t\in [n]}\max_{j\in[p]}\sup_{u\in\cU}|\delta_{t,j}(u)|  \\
    &~~~~~~~+\max_{t\in [n]}\max_{j\in[p]}\max_{m\in[{M_1}]}\sup_{u\in B_m}|\delta_{t,j}(u)-\delta_{t,j}(b_m)| \cdot\max_{t\in[n]}\max_{j\in[p]}\sup_{u\in\cU}|\vep_{t,j}(u)| \\
    &~~~~ = O_{\rm p}\{ a_{1n}b_{2n}+a_{2n}b_{1n}  \}\,,
\end{align*}
provided that $\log(np M_1/h)\ll N h$. Since $ n_L^{-1/2} \sum_{t=1}^{n_L} |\varrho_t| = O_{\rm p}(n^{1/2})$, we can conclude that
\begin{align*} 
    {\rm R}_2+{\rm R}_3  = O_{\rm p}\{ n^{1/2} (a_{1n}b_{2n}+a_{2n}b_{1n})   \} \,, 
\end{align*} 
provided that $\log(np M_1/h)\ll N h$. This, together with \eqref{div11} and \eqref{R11.OP}, yields  
\begin{align*}
    \|\tilde{\boldsymbol{\cG}}_{\varepsilon\delta}^{*}\|_{\infty,\max}= O_{\rm p}\{  n^{\rho/2}a_{1n}b_{1n} \log(pM_1) + n^{1/2} (a_{1n}b_{2n}+a_{2n}b_{1n})   \}\,,
\end{align*}
provided that $\log(np M_1/h)\ll N h$. By the same argument,
$\|\tilde{\boldsymbol{\cG}}_{\delta\vep}^{*}\|_{\infty,\max}$ has the same
convergence rate. Similarly, by replacing $(a_{1n},a_{2n})$ with $(b_{1n},b_{2n})$, we obtain
\begin{align*}
    \|\tilde{\boldsymbol{\cG}}_{\delta\delta}^{*}\|_{\infty,\max}= O_{\rm p}\{  n^{\rho/2} b_{1n}^2 \log(pM_1) + n^{1/2}  b_{1n}b_{2n}   \}\,,
\end{align*} 
provided that $\log(np M_1/h)\ll N h$. Choose $M_1\asymp h^{-(2+2\kappa)/\kappa}(np)^C$ for some sufficiently large constant
$C>0$. Since $p\geq n^{\upsilon}$ and $h^{-1}\ll N$, it follows that
\begin{align}\label{Delta.g}
     \Delta_{\tilde{\boldsymbol{\cG}}}^{\varepsilon\delta}  =&~ O_{\rm p}\{  n^{\rho/2} (a_{1n}+b_{1n}) b_{1n} \log(pM_1) + n^{1/2}(a_{1n}b_{2n}+a_{2n}b_{1n}  + b_{1n}b_{2n})   \}\nonumber\\
      = &~O_{\rm p}\bigg[ n^{\rho/2}  \bigg\{ 
 \frac{1}{(Nh)^{1/2}}  + h^{\kappa} \bigg\} \{\log(N \vee p)\}^2  \bigg]\,,
\end{align}
provided that $\log( N \vee p)\ll N h$. We complete the proof of Proposition \ref{thm.partial}.
$\hfill\Box$

\section{Proof of Proposition \ref{prop.facmodel}} \label{sec.proof.pro2}

As in the proof of Proposition~\ref{thm.partial}, we partition $\cU=[0,1]$ into
$M_2$ subintervals $\{B_1,\ldots,B_{M_2}\}$ of equal length $M_2^{-1}$.  Let $b_m$ denote the midpoint of $B_m$. Recall that $\bvep_t(\cdot)=\bX_t(\cdot) - \bA\bZ_t(\cdot)$ and  $\hat{\bvep}_{t}(\cdot) = (\bI_{p}-\widehat{\bA}\widehat{\bA}^{\T})\bX_t(\cdot)$,  
where $\widehat{\bA}=(\hat{\bomega}_1,\ldots,\hat{\bomega}_r)$. Here,
$\hat{\bomega}_1,\ldots,\hat{\bomega}_r$ are the leading $r$ eigenvectors of
$\widehat{\bM}$. It holds that
\begin{align}\label{div.delta}
	\bdelta_{t}(u) = \hat{\bvep}_t(u) - \bvep_t(u) =&\, \bA\bZ_t(u) -  \widehat{\bA}\widehat{\bA}^{\T}\bX_t(u) = \bA\bZ_t(u)  - \widehat{\bA}\widehat{\bA}^{\T}\{\bA\bZ_t(u) + \bvep_t(u)\}\nonumber  \\
	=&\, (\bA -  \widehat{\bA}\widehat{\bA}^{\T}\bA) \bZ_t(u) -  \widehat{\bA}\widehat{\bA}^{\T}\bvep_t(u)
\end{align}
for any $u\in\cU$.   By Condition \ref{cond.facmat}(i), there exists  ${\bf{\Lambda}} = {\diag}(\check\lambda_{1},\ldots,\check\lambda_{r})$ and an orthogonal matrix $\bGamma\in \mathbb{R}^{r\times r}$  such that $\bA^{\T}\bA = \bGamma {\bf{\Lambda}} \bGamma^{\T}$. Here, 
$\check\lambda_{1}\geq \ldots \geq \check\lambda_{r}$ are the eigenvalues of $\bA^{\T}\bA$.  Set $\widetilde{\bA}=  \bA \bGamma {\bf{\Lambda}}^{-1/2} \in \mathbb{R}^{p\times r}$ and $\widetilde{\bZ}_t(\cdot)=  {\bf{\Lambda}}^{1/2} \bGamma^{\T} \bZ_t(\cdot)$. Then $\widetilde{\bA}^{\T}\widetilde{\bA}=\bI_r$, $\widetilde{\bA} \widetilde{\bZ}_t(\cdot)= \bA\bZ_t(\cdot)$, and  $\mathcal{C}(\widetilde{\bA})=\mathcal{C}({\bA})$, where $\mathcal C(\bB)$ denotes the linear space spanned by the columns of $\bB$.
% and $\widetilde{\bZ}_t(\cdot)= p^{-1/2} \bf{\Lambda}^{1/2} \bGamma^{\T} \bZ_t(\cdot)$ such that $\widetilde{\bA} \widetilde{\bZ}_t(\cdot)= \bA\bZ_t(\cdot)$.
The proof of Proposition~\ref{prop.facmodel} relies on following two lemmas, whose proofs are given in Sections~\ref{proof.lemma1.prop2} and~\ref{proof.lemma2.prop2}, respectively.

\begin{lemma}\label{lemma1.prop2}
  Under Conditions  {\rm \ref{cond.facerror}--\ref{cond.facmat}} and model \eqref{eq:ffm}, 
  there exists  an $r\times r$ orthogonal matrix ${\bU} $ such that  {\rm (i)} $\| \widehat{\bA}\bU -  \widetilde{\bA} \|_{\rm op} = O_{\rm p}( n^{-1/2})$; {\rm (ii)} $\| \widehat{\bA}\bU -  \widetilde{\bA} \|_{\max} = O_{\rm p} [ (np)^{-1/2}\{\log (np)\}^{1/2}  ]$, 
provided that 
$\log (np) \ll n^{1/5}$.
\end{lemma}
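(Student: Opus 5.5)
Our plan is to follow the classical eigenvector perturbation route of \cite{Lam2011_supp} and \cite{GQWW_2026_supp}. Define the population counterpart $\bM=\sum_{\ell=1}^{\ell_0}\iint\bSigma_{xx}^{(\ell)}(u,v)\bSigma_{xx}^{(\ell)}(u,v)^{\T}\,{\rm d}u{\rm d}v$. Under model \eqref{eq:ffm} and Condition~\ref{cond.facerror}(iii), together with white noise of $\bvep_t(\cdot)$ at nonzero lags, we have $\bSigma_{xx}^{(\ell)}=\bA\bSigma_{zz}^{(\ell)}\bA^{\T}=\widetilde\bA\widetilde\bSigma_{zz}^{(\ell)}\widetilde\bA^{\T}$ with $\widetilde\bSigma_{zz}^{(\ell)}={\bf\Lambda}^{1/2}\bGamma^{\T}\bSigma_{zz}^{(\ell)}\bGamma{\bf\Lambda}^{1/2}$. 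Hence $\bM=\widetilde\bA\,\bR\,\widetilde\bA^{\T}$ with $\bR=\sum_\ell\iint\widetilde\bSigma_{zz}^{(\ell)}\widetilde\bSigma_{zz}^{(\ell)\T}$. By Condition~\ref{cond.facmat}(i) and Condition~\ref{cond.facprocess}(iv), all $r$ nonzero eigenvalues of $\bM$ are of exact order $p^2$, and the column space of $\bM$ is $\mathcal C(\widetilde\bA)$.

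\textbf{Step 1: operator-norm bound on $\widehat\bM-\bM$.} We would bound $\|\widehat\bM-\bM\|_{\rm op}\le\sum_\ell\iint(2\|\bSigma_{xx}^{(\ell)}\|_{\rm op}\|\bDelta^{(\ell)}\|_{\rm op}+\|\bDelta^{(\ell)}\|_{\rm op}^2)$, where $\bDelta^{(\ell)}=\widetilde\bSigma_{xx}^{(\ell)}-\bSigma_{xx}^{(\ell)}$. We split $\bDelta^{(\ell)}$ into the blocks $\bA(\widetilde\bSigma_{zz}-\bSigma_{zz})\bA^{\T}$, $\bA\widetilde\bSigma_{z\vep}$, $\widetilde\bSigma_{\vep z}\bA^{\T}$, and $\widetilde\bSigma_{\vep\vep}$. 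Using sub-Gaussianity (Conditions~\ref{cond.facerror}--\ref{cond.facprocess}), $\alpha$-mixing, and the Hilbert--Schmidt integration, we expect the block sizes to be $O_{\rm p}(pn^{-1/2})$, $O_{\rm p}(p n^{-1/2})$ (since $\|\bA\|_{\rm op}\asymp p^{1/2}$ and $\|\widetilde\bSigma_{z\vep}\|_{\rm op}\lesssim (p/n)^{1/2}$), and $O_{\rm p}(p/n^{1/2})$ respectively, in integrated squared form. With $\|\bSigma_{xx}\|_{\rm op}\asymp p$, this gives $\|\widehat\bM-\bM\|_{\rm op}=O_{\rm p}(p^2n^{-1/2})$. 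Davis--Kahan with eigengap $\asymp p^2$ then yields an orthogonal $\bU$ satisfying (i).

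\textbf{Step 2: entrywise bound (ii).} This is the main obstacle. We would use the $\ell_\infty$ eigenvector perturbation bound of \cite{Fan2018_supp}, or more directly the identity $\widehat\bA=\widehat\bM\widehat\bA\widehat{\bf D}^{-1}$, where $\widehat{\bf D}$ is the diagonal matrix of leading eigenvalues, of order $p^2$. Writing $\widehat\bM\widehat\bA=\bM\widehat\bA+(\widehat\bM-\bM)\widehat\bA$, the first term lies in $\mathcal C(\widetilde\bA)$ and, after rotation by $\bU$, reproduces $\widetilde\bA$ up to an error controlled by (i) and the incoherence $\|\widetilde\bA\|_{\max}\lesssim p^{-1/2}$ from Condition~\ref{cond.facmat}(ii). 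The key task is to bound the rows $\max_j|\be_j^{\T}(\widehat\bM-\bM)\widehat\bA|_2$ by $O_{\rm p}\{p^{3/2}(n^{-1}\log(np))^{1/2}\}$. Each row of $\bDelta^{(\ell)}$ involves a single coordinate $X_{t,j}$ times $\bX_{t+\ell}^{\T}\widehat\bA$. Replacing $\widehat\bA$ by $\widetilde\bA\bU^{\T}$ plus the error from (i), the resulting row quantities are $r$-dimensional sample cross-covariance functions. These are handled uniformly over $j\in[p]$ and $(u,v)$ via a Bernstein inequality for $\alpha$-mixing sub-exponential sums plus a discretization/chaining step on a grid of $\cU^2$ (as in Lemma~\ref{lemma1.prop1}). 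This is where the $\log(np)$ factor and the requirement $\log(np)\ll n^{1/5}$ arise, the latter from the mixing Bernstein tail with exponent $1/5$-type. Dividing by $p^2$ gives the rate $(np)^{-1/2}\{\log(np)\}^{1/2}$. We expect that controlling the cross-term containing $\widehat\bA-\widetilde\bA\bU^{\T}$ will need a leave-one-out-free bootstrapping argument, using (i) together with $\|\bSigma_{xx}\|_{2\to\infty}\lesssim p^{1/2}$.
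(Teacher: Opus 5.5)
Your proposal is correct. For part (i) it is essentially the paper's argument: the paper also shows $\|\widehat\bM-\bM\|_{\rm op}=O_{\rm p}(p^2n^{-1/2})$ with eigengap $\nu_r\asymp p^2$. It skips your block decomposition and uses directly $\iint\|\widetilde\bSigma_{xx}^{(\ell)}-\bSigma_{xx}^{(\ell)}\|_{\rm op}^2\le\sum_{j,k}\|\widetilde\Sigma^{(\ell)}_{xx,jk}-\Sigma^{(\ell)}_{xx,jk}\|_{\cS}^2=O_{\rm p}(p^2/n)$ from Lemma~\ref{lem.fac.Sigma}(i), then Lemma 3 of Lam, Yao and Bathia (2011) in place of Davis--Kahan.

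For part (ii) the routes differ. The paper bounds the maximum absolute row sum $\|\widehat\bM-\bM\|_\infty=O_{\rm p}[p^2n^{-1/2}\{\log(np)\}^{1/2}]$ and applies Theorem 2 of Fan, Wang and Zhong (2018) as a black box. That theorem delivers its own rotation $\bU_1$, so the paper must also show $\|\bU_0-\bU_1\|_{\rm op}=O_{\rm p}[n^{-1/2}\{\log(np)\}^{1/2}]$ and use $\max_j|\be_j^{\T}\widehat\bA|_2=O_{\rm p}(p^{-1/2})$ to transfer the max-norm bound to the part-(i) rotation $\bU_0$. Your identity $\widehat\bA=\widehat\bM\widehat\bA\widehat\bD^{-1}$ works with the part-(i) rotation from the start. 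It is fully elementary and needs no reconciliation step.

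The step you single out as the main obstacle is much easier than you expect. You need no replacement of $\widehat\bA$ by $\widetilde\bA\bU^{\T}$, no new concentration for projected cross-covariances, and no leave-one-out type argument. Because $\|\widehat\bA\|_{\rm op}=1$, you have $|\be_j^{\T}(\widehat\bM-\bM)\widehat\bA|_2\le|\be_j^{\T}(\widehat\bM-\bM)|_2$. For each $\ell\in[\ell_0]$ write $\bDelta=\widetilde\bSigma_{xx}^{(\ell)}-\bSigma_{xx}^{(\ell)}$ and $\bSigma=\bSigma_{xx}^{(\ell)}$. Then the $j$-th row of $\iint(\bDelta\bSigma^{\T}+\bSigma\bDelta^{\T}+\bDelta\bDelta^{\T})$ has $\ell_2$-norm at most
\[
\sqrt{p}\,\|\bDelta\|_{\infty,\max}\iint\big(\|\bSigma\|_{\rm op}+\|\bDelta\|_{\rm op}\big)+\sup_{(u,v)\in\cU^2}|\be_j^{\T}\bSigma(u,v)|_2\iint\|\bDelta\|_{\rm op}=O_{\rm p}\big[p^{3/2}n^{-1/2}\{\log(np)\}^{1/2}\big]\,.
\]
This uses four facts:
\begin{itemize}
\item the uniform entrywise rate $\|\bDelta\|_{\infty,\max}=O_{\rm p}[n^{-1/2}\{\log(np)\}^{1/2}]$ from the proof of Lemma~\ref{lem.fac.Sigma}(ii), which is where $\log(np)\ll n^{1/5}$ enters;
\item $\iint\|\bSigma\|_{\rm op}\lesssim p$;
\item $|\be_j^{\T}\bSigma|_2\le|\ba_j|_2\|\bSigma_{zz}^{(\ell)}\|_{\rm op}\|\bA\|_{\rm op}\lesssim p^{1/2}$;
\item $\iint\|\bDelta\|_{\rm op}=O_{\rm p}(pn^{-1/2})$ from your Step 1.
\end{itemize}

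For the term $\bM\widehat\bA\widehat\bD^{-1}\bU-\widetilde\bA=\widetilde\bA(\bR\widetilde\bA^{\T}\widehat\bA\widehat\bD^{-1}\bU-\bI_r)$, you should make explicit how $\widehat\bD$ relates to $\bR$. Left-multiply $\widehat\bM\widehat\bA=\widehat\bA\widehat\bD$ by $\widetilde\bA^{\T}$ and use $\widetilde\bA^{\T}\bM=\bR\widetilde\bA^{\T}$. This gives $\bR\widetilde\bA^{\T}\widehat\bA\widehat\bD^{-1}\bU-\bI_r=\widetilde\bA^{\T}(\widehat\bA\bU-\widetilde\bA)-\widetilde\bA^{\T}(\widehat\bM-\bM)\widehat\bA\widehat\bD^{-1}\bU$. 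By (i) and Weyl's inequality ($\hat\nu_r\gtrsim p^2$ with probability approaching one), this is $O_{\rm p}(n^{-1/2})$. Multiplied by $\max_j|\tilde\ba_j|_2\lesssim p^{-1/2}$, it is within the target rate, with the same $\bU$ as in part (i).
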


\begin{lemma}\label{lemma2.prop2}
Under  Conditions  {\rm \ref{cond.facerror}--\ref{cond.facmat}} and model \eqref{eq:ffm},
it holds that 
\begin{align*}
   &~~~~~~~  \max_{t\in[n]}\max_{j\in[p]}\sup_{u\in\cU}|\delta_{t,j}(u)|   
	=  O_{\rm p}[n^{-1/2} \log(np) + p^{ -1/2}\{\log (np)\}^{1/2}] \,,\\
   & \max_{t\in[n]}\max_{j\in[p]}\max_{m\in[M_2]}\sup_{u\in B_m}|\delta_{t,j}(u)-\delta_{t,j}(b_m)| 
 = 	O_{\rm p}\bigg[\frac{ n^{3/2}p^{ 2}\{\log (np)\}^{1/2}+n^2p^{ 3/2}}{  {M_2}^{ \kappa/2} } \bigg] \,,
\end{align*}
provided that 
$\log (np) \ll n^{1/5}$.
\end{lemma}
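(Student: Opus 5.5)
We prove Lemma~\ref{lemma2.prop2} starting from the decomposition \eqref{div.delta}. Let $\bU$ be the orthogonal matrix from Lemma~\ref{lemma1.prop2} and write $\bP=\widehat{\bA}\widehat{\bA}^{\T}=(\widehat{\bA}\bU)(\widehat{\bA}\bU)^{\T}$. Since $\widetilde{\bA}\widetilde{\bA}^{\T}\bA=\bA$, we have
\[
\bA-\bP\bA=(\widetilde{\bA}\widetilde{\bA}^{\T}-\bP)\widetilde{\bA}\widetilde{\bZ}_t\text{-type term}.
\]
More precisely,
\[
\bdelta_t(u)=(\widetilde{\bA}\widetilde{\bA}^{\T}-\bP)\widetilde{\bA}\widetilde{\bZ}_t(u)-\bP\bvep_t(u).
\]
Set $\bD=\widehat{\bA}\bU-\widetilde{\bA}$. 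Then
\[
\widetilde{\bA}\widetilde{\bA}^{\T}-\bP=-\bD\widetilde{\bA}^{\T}-\widetilde{\bA}\bD^{\T}-\bD\bD^{\T},
\]
and $\widetilde{\bZ}_t=\bLambda^{1/2}\bGamma^{\T}\bZ_t$ has entries of order $p^{1/2}$ by Condition~\ref{cond.facmat}(i). The $j$-th coordinate of the first term is a sum of products. Representative pieces are
\[
\bD_{j\cdot}\widetilde{\bA}^{\T}\widetilde{\bA}\widetilde{\bZ}_t=\bD_{j\cdot}\widetilde{\bZ}_t,\qquad \widetilde{\bA}_{j\cdot}\bD^{\T}\widetilde{\bA}\widetilde{\bZ}_t,\qquad \bD_{j\cdot}\bD^{\T}\widetilde{\bA}\widetilde{\bZ}_t.
\]
We bound them using $|\bD_{j\cdot}|_{\max}=O_{\rm p}\{(np)^{-1/2}\log^{1/2}(np)\}$ from Lemma~\ref{lemma1.prop2}(ii), $\|\bD\|_{\rm op}=O_{\rm p}(n^{-1/2})$ from Lemma~\ref{lemma1.prop2}(i), $|\widetilde{\bA}_{j\cdot}|_{\max}\lesssim p^{-1/2}$ from Condition~\ref{cond.facmat}, and
\[
\max_{t,i}\sup_u|Z_{t,i}(u)|=O_{\rm p}\{\log^{1/2}(np)\},
\]
which follows by the chaining argument of Lemma~\ref{lemma1.prop1} under Condition~\ref{cond.facprocess}(i)--(ii). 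The first piece gives $p^{1/2}(np)^{-1/2}\log(np)=n^{-1/2}\log(np)$. The second piece is controlled via $|\widetilde{\bA}_{j\cdot}|_2\,\|\bD\|_{\rm op}\,|\widetilde{\bZ}_t|_2$, which is $O_{\rm p}\{n^{-1/2}\log^{1/2}(np)\}$. The third piece is of smaller order.

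For the term $\bP\bvep_t(u)$ we write $\bP\bvep_t=(\widetilde{\bA}+\bD)(\widetilde{\bA}+\bD)^{\T}\bvep_t$. The main contribution is $\widetilde{\bA}_{j\cdot}\widetilde{\bA}^{\T}\bvep_t(u)$. Each column of $\widetilde{\bA}$ is a unit vector, so Condition~\ref{cond.facerror}(i)--(ii), combined with chaining over $u$ and a union bound over $t$ and the $r$ columns, gives
\[
\max_t\sup_u|\widetilde{\bA}^{\T}\bvep_t(u)|_{\max}=O_{\rm p}\{\log^{1/2}(np)\}.
\]
Multiplying by $|\widetilde{\bA}_{j\cdot}|\lesssim p^{-1/2}$ yields the $p^{-1/2}\log^{1/2}(np)$ term.

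The cross terms involving $\bD^{\T}\bvep_t$ are the main obstacle, because $\bD$ depends on the data and so is not independent of $\bvep_t$. We handle them crudely with
\[
|\bD^{\T}\bvep_t(u)|\le \|\bD\|_{\rm op}|\bvep_t(u)|_2=O_{\rm p}\{n^{-1/2}p^{1/2}\log^{1/2}(np)\}.
\]
This is then multiplied either by $|\widetilde{\bA}_{j\cdot}|_2\lesssim p^{-1/2}$, giving $n^{-1/2}\log^{1/2}(np)$, or by $|\bD_{j\cdot}|_2\lesssim (np)^{-1/2}\log^{1/2}(np)$, giving a smaller term. Collecting all contributions gives the first display of Lemma~\ref{lemma2.prop2}. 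The restriction $\log(np)\ll n^{1/5}$ is inherited from Lemma~\ref{lemma1.prop1}-type eigenvector bounds, that is, from Lemma~\ref{lemma1.prop2}.

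For the modulus bound we apply the same decomposition to the increments $\bZ_t(u)-\bZ_t(b_m)$ and $\bvep_t(u)-\bvep_t(b_m)$. We first reproduce the argument of the second part of Lemma~\ref{lemma1.prop1} under Conditions~\ref{cond.facerror}(ii) and \ref{cond.facprocess}(ii). This argument uses a crude Markov/chaining step that yields a polynomial factor over $M_2^{\kappa/2}$, and gives
\[
\max_{t,i,m}\sup_{u\in B_m}|Z_{t,i}(u)-Z_{t,i}(b_m)|=O_{\rm p}\{(np)^2M_2^{-\kappa/2}\},
\]
with the same bound for the coordinates of $\bvep_t$. We then bound the operators deterministically on the high-probability event of Lemma~\ref{lemma1.prop2}, using $\|\bP\|_{\max}\lesssim p^{-1}+|\bD|_{\max}p^{-1/2}$ and the entries of $\bA$ bounded via Condition~\ref{cond.facmat}(ii). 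The $\bvep$-increment term is summed over $p$ coordinates against $\|\bP\|_{\max}$. The $\bZ$-increment term is multiplied by the loadings of $\bA-\bP\bA$, which are bounded entrywise up to factors of $p^{1/2}$ and $n^{1/2}$. These products account for the $n^{3/2}p^2\log^{1/2}(np)$ and $n^2p^{3/2}$ numerators in the second display; we make no attempt to optimize them, since $M_2$ will later be taken polynomially large.
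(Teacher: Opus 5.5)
Your treatment of the first display is correct and is essentially the paper's argument. The paper bounds the row norm $\max_{j}|\be_j^{\T}(\widehat\bA\widehat\bA^{\T}-\widetilde\bA\widetilde\bA^{\T})|_2=O_{\rm p}[n^{-1/2}p^{-1/2}\{\log(np)\}^{1/2}]$ through the same expansion in $\bD$. It multiplies this by $\sup_u|\bA\bZ_t(u)|_2$ and $\sup_u|\bvep_t(u)|_2$, both $O_{\rm p}[p^{1/2}\{\log(np)\}^{1/2}]$. It handles $\tilde\ba_j^{\T}\widetilde\bA^{\T}\bvep_t(u)$ by chaining on the $r$ projected processes, as you do.

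The gap is in the second display. You bound the $\bvep$-increment by $p\,\|\bP\|_{\max}\max_{k}|\varepsilon_{t,k}(u)-\varepsilon_{t,k}(b_m)|$. The $p^{-1}$ part of $\|\bP\|_{\max}$, which is the contribution of $\widetilde\bA\widetilde\bA^{\T}$, then produces $p\cdot p^{-1}\cdot(np)^2M_2^{-\kappa/2}=n^2p^2M_2^{-\kappa/2}$. This exceeds the claimed $[n^{3/2}p^2\{\log(np)\}^{1/2}+n^2p^{3/2}]M_2^{-\kappa/2}$ by a factor of order $\min[p^{1/2},\{n/\log(np)\}^{1/2}]$. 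So your closing assertion that the products ``account for'' these numerators is false. As written you only obtain a weaker modulus bound, which is harmless downstream since $M_2\asymp(np)^C$, but is not the lemma as stated.

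The fix is the device you already used in the first display. Write $\be_j^{\T}\widetilde\bA\widetilde\bA^{\T}\{\bvep_t(u)-\bvep_t(b_m)\}=\tilde\ba_j^{\T}\{\widetilde\bA^{\T}\bvep_t(u)-\widetilde\bA^{\T}\bvep_t(b_m)\}$. Then apply the chaining argument of Lemma~\ref{lemma1.prop1} directly to the $r$ scalar processes $\tilde\bb_i^{\T}\bvep_t(\cdot)$. Their increments satisfy $\|\tilde\bb_i^{\T}\{\bvep_t(u)-\bvep_t(v)\}\|_{\psi_2}\le C|u-v|^{\kappa}$ by Condition~\ref{cond.facerror}(ii), since $|\tilde\bb_i|_2=1$. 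This gives $O_{\rm p}\{(np)^2M_2^{-\kappa/2}\}$ for the projected increment with no extra factor $p^{1/2}$. Multiplying by $|\tilde\ba_j|_2\lesssim p^{-1/2}$ yields the $n^2p^{3/2}$ term.

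The remaining pieces work as you intend, once the bounds are stated precisely. First, $\|\bP-\widetilde\bA\widetilde\bA^{\T}\|_{\max}=O_{\rm p}[n^{-1/2}p^{-1}\{\log(np)\}^{1/2}]$, so $p$ times it is $O_{\rm p}[n^{-1/2}\{\log(np)\}^{1/2}]$. Second, the entries of $\bA-\bP\bA$ are also $O_{\rm p}[n^{-1/2}\{\log(np)\}^{1/2}]$. Multiplied by the coordinatewise increment bound $(np)^2M_2^{-\kappa/2}$, both give the $n^{3/2}p^2\{\log(np)\}^{1/2}$ term.
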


Recall $ \Delta_{\widetilde{\bSigma}}^{\varepsilon\delta} =  \max_{\ell\in [L]} 
\{\|\widetilde{\bSigma}_{\varepsilon\delta}^{(\ell)}\|_{\infty,\max} + \|\widetilde{\bSigma}_{\delta\vep}^{(\ell)}\|_{\infty,\max} +  \|\widetilde\bSigma_{\delta\delta}^{(\ell)} \|_{\infty,\max} \}$. We first derive the convergence rate of
$\max_{\ell\in [L]}\{\|\widetilde{\bSigma}_{\varepsilon\delta}^{(\ell)}\|_{\infty,\max} + \|\widetilde{\bSigma}_{\delta\vep}^{(\ell)}\|_{\infty,\max}\}$. By \eqref{eq:def.sigmadv}, for each $\ell\in[L]$ and $j,j'\in[p]$, we have
\begin{align*}
	&~\widetilde{\Sigma}_{\vep\delta,jj'}^{(\ell)}(u,v) 
	=  \frac{1}{n_{\ell}}\sum_{t=1}^{n_\ell}\{\vep_{t,j}(u)-\bar{\vep}_j(u)\}\{\delta_{t+\ell,j'}(v)-\bar{\delta}_{j'}(v)\} \\
	=&~ \underbrace{\frac{1}{n_{\ell}}\sum_{t=1}^{n_\ell}\vep_{t,j}(u)\delta_{t+\ell,j'}(v)}_{{\rm I}_{\ell,j,j'}(u,v)} 
	- \underbrace{\bigg[\bigg\{\frac{1}{n_\ell}\sum_{t=1}^{n_\ell}\vep_{t,j}(u)\bigg\}\bar{\delta}_{j'}(v)  
	 + \bar{\vep}_{j}(u)\bigg\{\frac{1}{n_\ell}\sum_{t=1}^{n_\ell}\delta_{t+\ell,j'}(v)\bigg\}-\bar{\vep}_j(u) \bar{\delta}_{j'}(v)\bigg]}_{{\rm II}_{\ell,j,j'}(u,v)}  \,.
\end{align*}
By \eqref{div.delta}, $\widetilde{\bA}^{\T}\widetilde{\bA}=\bI_r$, and $\widetilde{\bA} \widetilde{\bZ}_t(\cdot)= \bA\bZ_t(\cdot)$, it holds that for any $j\in[p]$, 
\begin{align*} 
	\delta_{t,j}(u)  
	=   \underbrace{\be_j^{\T} ( \widetilde{\bA}\widetilde{\bA}^{\T} -  \widehat{\bA}\widehat{\bA}^{\T})}_{\bDelta_j^{\T}}\bA \bZ_t(u) +   \be_j^{\T}(\widetilde{\bA}\widetilde{\bA}^{\T}-\widehat{\bA}\widehat{\bA}^{\T})\bvep_t(u)  - \tilde\ba_j^{\T}\widetilde{\bA}^{\T}\bvep_t(u) \,,
\end{align*}
where  $\widetilde\bA=(\tilde\ba_1,\ldots,\tilde\ba_p)^{\T}$ and $\be_j\in\mathbb R^p$ denotes the $j$-th canonical basis vector, whose $j$-th component is one and all other components are zero. Then, we have
\begin{align*}
	 {\rm I}_{\ell,j,j'}(u,v) = &~  \bDelta_{j'}^{\T}\bigg\{\frac{1}{n_\ell}\sum_{t=1}^{n_\ell}\bA\bZ_{t+\ell}(v)\vep_{t,j}(u)+\frac{1}{n_\ell}\sum_{t=1}^{n_\ell}\bvep_{t+\ell}(v)\vep_{t,j}(u)\bigg\} \\
	&~     - \tilde\ba_{j'}^{\T}\bigg\{\frac{1}{n_\ell}\sum_{t=1}^{n_\ell}\widetilde\bA^{\T}\bvep_{t+\ell}(v)\vep_{t,j}(u)\bigg\} \,.
\end{align*}
By Condition \ref{cond.facmat}, we have $|\tilde\ba_{j'}|_2 \lesssim p^{-1/2}$ for any $j'\in[p]$. Following from  \eqref{eq:diff.AA} in the proof of Lemma  \ref{lemma2.prop2} (see Section \ref{proof.lemma2.prop2}) and $p\geq n^{\upsilon}$, we have, for  any $j'\in[p]$, 
\begin{align}\label{Deltaj}
   \max_{j'\in[p]}|\bDelta_{j'}|_2= \max_{j'\in[p]}| \be_{j'}^{\T}(\widetilde\bA\widetilde\bA^{\T}-\widehat{\bA}\widehat{\bA}^{\T}   )|_2 = O_{\rm p}\big\{  n^{-1/2}p^{ -1/2} (\log  p)^{1/2}  \big\}\,,
\end{align}
provided that 
$\log  p  \ll n^{1/5}$. 
Notice that $\ba_{j'}^{\T}\bZ_{t+\ell}(v)$ is the $j'$-th component of $\bA\bZ_{t+\ell}(v)$, where $j'\in[p]$.   By Conditions \ref{cond.facerror}--\ref{cond.facmat}, we know that $\|\ba_{j'}^{\T}\bZ_{t+\ell}(v)\vep_{t,j}(u)\|_{\psi_1}\leq C$ and $\|\ba_{j'}^{\T}\bZ_{t+\ell}(v_1)\vep_{t,j}(u_1) - \ba_{j'}^{\T}\bZ_{t+\ell}(v_2)\vep_{t,j}(u_2)\|_{\psi_1}\leq C(|u_1-u_2|^{\kappa}+|v_1-v_2|^{\kappa})$. Moreover,  
\begin{align*}
    &\max_{\ell\in[L]}\max_{j,j'\in[p]}\sup_{(u,v)\in\cU^2}\bigg|\frac{1}{n_\ell}\sum_{t=1}^{n_\ell} \ba_{j'}^{\T}\bZ_{t+\ell}(v)\vep_{t,j}(u)\bigg| \leq \underbrace{\max_{\ell\in[L]}\max_{j,j'\in[p]}\max_{m,m'\in[M']}\bigg|\frac{1}{n_\ell}\sum_{t=1}^{n_\ell} \ba_{j'}^{\T}\bZ_{t+\ell}(b_{m'})\vep_{t,j}(b_m)\bigg|}_{{\rm K}_1}\\
    &~~~~~~~~~~~~ + \underbrace{\max_{\ell\in[L]}\max_{j,j'\in[p]}\max_{m,m'\in[M']}  \sup_{(u,v)\in B_{m}\times B_{m'}} \bigg|\frac{1}{n_\ell}\sum_{t=1}^{n_\ell} \{\ba_{j'}^{\T}\bZ_{t+\ell}(v)\vep_{t,j}(u) -\ba_{j'}^{\T}\bZ_{t+\ell}(b_{m'})\vep_{t,j}(b_m) \}\bigg|}_{{\rm K}_2}
\end{align*}
for some $M'>0$. Choose $M' \asymp (np)^C$ for some sufficiently large constant $C>0$. Then, following the proof of \eqref{eq:rate.I1} in Section   \ref{proof.lemma_tn_dis}, we also have  ${\rm K}_2 = O_{\rm p}(n^{-3/2})$. Moreover, by Conditions \ref{cond.facerror}(iii), \ref{cond.facprocess}(iii) and Lemma \ref{tail_chang} with $(B_n,c_n,{r}_1,{r})=(1,\ell,1,1/3)$, it holds that
\begin{align*}
    \bbP({\rm K}_1 > x) \lesssim p^2(M')^2\big\{\exp(-Cn x^2)+ \exp(-Cn^{1/3}x^{1/3})   \big\}
\end{align*}
for any $x\geq 0$, which implies that $ {\rm K}_1 = O_{\rm p}\{ n^{-1/2}(\log p)^{1/2} \} $, provided that $\log  p \ll n^{1/5}$. Then, we can conclude that 
 \begin{align}\label{term1.sup}
     \max_{\ell\in[L]}\max_{j,j'\in[p]}\sup_{(u,v)\in\cU^2}\bigg|\frac{1}{n_\ell}\sum_{t=1}^{n_\ell} \ba_{j'}^{\T}\bZ_{t+\ell}(v)\vep_{t,j}(u)\bigg| = O_{\rm p}\{ n^{-1/2} (\log  p)^{1/2} \}  
 \end{align}
provided that $\log  p \ll n^{1/5}$. Similarly, since $ \mathbb{E}\{ \bvep_t(u) \bvep_{t+\ell}(v)^{\T} \}= \mathbf{0}$ for any $\ell\neq 0$ under model \eqref{eq:ffm}, we have
\begin{align}\label{term2.sup}
     \max_{\ell\in[L]}\max_{j,j'\in[p]}\sup_{(u,v)\in\cU^2}\bigg|\frac{1}{n_\ell}\sum_{t=1}^{n_\ell} \vep_{t+\ell,j'}(v)\vep_{t,j}(u)\bigg| = O_{\rm p}\{ n^{-1/2} (\log  p)^{1/2} \}  
 \end{align}
provided that $\log  p \ll n^{1/5}$. Write $\widetilde\bA=(\tilde\bb_1,\ldots,\tilde\bb_r)$.  Then for any  $i\in[r]$, $\tilde\bb_i^{\T}\bvep_t(u)$ is the $i$-th component of $\widetilde\bA^{\T}\bvep_t(u)$. Notice that $|\tilde\bb_i|_2 = 1$ for any $i\in[r]$. Therefore, by Condition \ref{cond.facerror}, it holds that $ \|\tilde\bb_i^{\T}\bvep_t(u)\|_{\psi_2} \leq C$ and $ \|\tilde\bb_i^{\T}\bvep_t(u) - \tilde\bb_i^{\T}\bvep_t(v)\|_{\psi_2} \leq C|u-v|^{\kappa}$ for any $(t,i,u,v)$. Similarly, under model \eqref{eq:ffm}, we also obtain
\begin{align}\label{bsup}
     \max_{\ell\in[L]}\max_{i\in[r]}\max_{j \in[p]}\sup_{(u,v)\in\cU^2}\bigg|\frac{1}{n_\ell}\sum_{t=1}^{n_\ell} \tilde\bb_i^{\T}\bvep_{t+\ell}(v)\vep_{t,j}(u)\bigg| = O_{\rm p}\{ n^{-1/2} (\log  p)^{1/2} \}  
 \end{align}
provided that $\log  p \ll n^{1/5}$. Then, since $|\bc|_2 \leq q^{1/2}\max_{i\in[q]}|c_i|$ for any $\bc=(c_1,\ldots,c_q)^{\T}\in\mathbb{R}^q$, combining \eqref{Deltaj}--\eqref{bsup} yields 
\begin{align}\label{order.I}
    \max_{\ell\in[L]}\max_{j,j'\in[p]}\sup_{(u,v)\in\cU^2}|{\rm I}_{\ell,j,j'} (u,v)| = &~ O_{\rm p}\bigg\{ \bigg( \frac{  \log  p  }{np}\bigg)^{1/2}\cdot p^{1/2}\bigg( \frac{ \log   p  }{n}  \bigg)^{1/2} + p^{-1/2}\cdot \bigg( \frac{\log  p }{n } \bigg)^{1/2} \bigg\}\nonumber\\
    = &~ O_{\rm p}\bigg\{   \frac{  \log  p  }{n }  +   \bigg( \frac{  \log  p  }{np}\bigg)^{1/2} \bigg\}\,,
\end{align}
provided that $\log  p \ll n^{1/5}$. 
For  ${\rm II}_{\ell,j,j'} (u,v)$, notice that
\begin{align*}
    \bar{\delta}_{j'}(v) = \bDelta_{j'}^{\T}\bigg\{ \frac{1}{n}\sum_{t=1}^n \bA\bZ_{t}(v)+\frac{1}{n}\sum_{t=1}^n\bvep_t(v) \bigg\}  -  \tilde\ba_{j'}^{\T} \bigg\{  \frac{1}{n}\sum_{t=1}^n \widetilde\bA^{\T}\bvep_t(v) \bigg\}
\end{align*}
for any $(j',v)$. Then, analogously to \eqref{order.I}, we have 
\begin{align}\label{delta.mean.order}
    \max_{j'\in[p]}\sup_{v\in \cU}|\bar{\delta}_{j'}(v)| = O_{\rm p}\bigg\{   \frac{  \log  p  }{n }  +   \bigg( \frac{  \log  p  }{np}\bigg)^{1/2} \bigg\}\,,
\end{align}
provided that $\log  p \ll n^{1/5}$. 
In addition, we can also obtain 
\begin{align*}
     \max_{\ell\in[L]_0}\max_{j \in[p]}\sup_{ u \in\cU }\bigg|\frac{1}{n_\ell}\sum_{t=1}^{n_\ell}  \vep_{t,j}(u)\bigg| = O_{\rm p}\{ n^{-1/2} (\log  p)^{1/2} \} 
 \end{align*}
provided that $\log  p \ll n^{1/5}$. Then, we have 
\begin{align*}
    \max_{\ell\in[L]}\max_{j,j'\in[p]}\sup_{(u,v)\in\cU^2}|{\rm II}_{\ell,j,j'} (u,v)| = &~ O_{\rm p}\bigg[ \bigg( \frac{\log  p }{n } \bigg)^{1/2}\cdot\bigg\{   \frac{  \log  p  }{n }  +   \bigg( \frac{  \log  p  }{np}\bigg)^{1/2} \bigg\}\bigg] \,,
\end{align*}
provided that $\log  p \ll n^{1/5}$. 
Therefore, we can conclude that
\begin{align}\label{Sigmavepdelta.order}
\max_{\ell\in[L]}\{\|\widetilde{\bSigma}_{\varepsilon\delta}^{(\ell)}\|_{\infty,\max}+\|\widetilde{\bSigma}_{\delta\vep}^{(\ell)}\|_{\infty,\max}\} = O_{\rm p}\bigg\{   \frac{  \log  p  }{n }  +   \bigg( \frac{  \log  p  }{np}\bigg)^{1/2} \bigg\}\,, 
\end{align}
provided that $\log  p \ll n^{1/5}$.

We next derive the convergence rate of $ \max_{\ell\in[L]}\|\widetilde\bSigma_{\delta\delta}^{(\ell)} \|_{\infty,\max}  $.  Recall
\begin{align*}
	&~\widetilde{\Sigma}_{\delta\delta,jj'}^{(\ell)}(u,v) 
	= \frac{1}{n_{\ell}}\sum_{t=1}^{n_\ell}\{\delta_{t,j}(u)-\bar{\delta}_j(u)\}\{\delta_{t+\ell,j'}(v)-\bar{\delta}_{j'}(v)\} \\
	=&~ \underbrace{\frac{1}{n_{\ell}}\sum_{t=1}^{n_\ell}\delta_{t,j}(u)\delta_{t+\ell,j'}(v)}_{{\rm III}_{\ell,j,j'}(u,v)} 
	- \underbrace{\bigg[\bigg\{\frac{1}{n_\ell}\sum_{t=1}^{n_\ell}\delta_{t,j}(u)\bigg\}\bar{\delta}_{j'}(v)  
	+ \bar{\delta}_{j}(u)\bigg\{\frac{1}{n_\ell}\sum_{t=1}^{n_\ell}\delta_{t+\ell,j'}(v)\bigg\} 
	- \bar{\delta}_j(u) \bar{\delta}_{j'}(v)\bigg]}_{{\rm IV}_{\ell,j,j'}(u,v)}  \,.
\end{align*}
By \eqref{delta.mean.order}, we have 
\begin{align*}
    \max_{\ell\in[L]}\max_{j,j'\in[p]}\sup_{(u,v)\in\cU^2}|{\rm IV}_{\ell,j,j'} (u,v)| = O_{\rm p}\bigg\{  \bigg( \frac{ \log  p   }{n }\bigg)^2  +   \frac{  \log  p  }{n p }   \bigg\}\,,
\end{align*}
provided that $\log  p \ll n^{1/5}$. 
In addition, since $ \delta_{t,j}(u) = \bDelta_j^{\T} \{\bA \bZ_t(u) +  \bvep_t(u)  \}-   \tilde \ba_{j}^{\T} \widetilde\bA^{\T}\bvep_t(u) $, it holds that, for any $(\ell,j,j',u,v)$,
\begin{align*}
   & |{\rm III}_{\ell,j,j'}(u,v)|=  \bigg| \frac{1}{n_{\ell}}\sum_{t=1}^{n_\ell}\delta_{t,j}(u)\delta_{t+\ell,j'}(v) \bigg|\\
   &~~~ \leq  \bigg| \bDelta_j^{\T}\bigg\{ \frac{1}{n_{\ell}}\sum_{t=1}^{n_\ell}\{\bA \bZ_t(u) +  \bvep_t(u)  \} \{\bA \bZ_{t+\ell}(v) +  \bvep_{t+\ell}(v)  \}^{\T} \bigg\} \bDelta_{j'} \bigg|\\
   &~~~~~ + \bigg| \bDelta_j^{\T}\bigg\{ \frac{1}{n_{\ell}}\sum_{t=1}^{n_\ell}\{\bA \bZ_t(u) +  \bvep_t(u)  \}\{ \widetilde\bA^{\T}\bvep_{t+\ell }(v)  \}^{\T} \bigg\}  \tilde \ba_{j'} \bigg|\\
   &~~~~~ + \bigg|  \tilde\ba_{j}^{\T}\bigg\{ \frac{1}{n_{\ell}}\sum_{t=1}^{n_\ell}\{ \widetilde\bA^{\T}\bvep_t(u)   \} \{\bA \bZ_{t+\ell}(v) +  \bvep_{t+\ell}(v)  \}^{\T} \bigg\} \bDelta_{j'} \bigg|\\
    &~~~~~ + \bigg|  \tilde\ba_{j}^{\T}\bigg\{ \frac{1}{n_{\ell}}\sum_{t=1}^{n_\ell}\{ \widetilde\bA^{\T}\bvep_t(u)   \} \{ \widetilde\bA^{\T}\bvep_{t+\ell }(v)  \}^{\T} \bigg\}  \tilde\ba_{j'} \bigg| \,.
\end{align*}
Similar as in the proof of \eqref{bsup}, it holds under model \eqref{eq:ffm} and Condition \ref{cond.facerror}(iii) that  
\begin{align*}
    \max_{\ell\in[L]}\sup_{(u,v)\in\cU^2}\bigg\|  \frac{1}{n_{\ell}}\sum_{t=1}^{n_\ell}\{\bA \bZ_t(u) +  \bvep_t(u)  \}\{ \widetilde\bA^{\T}\bvep_{t+\ell }(v)  \}^{\T}\bigg\|_{\rm F} = &~ O_{\rm p}\{ n^{-1/2}p^{1/2} (\log p)^{1/2}\}\,,\\
    \max_{\ell\in[L]}\sup_{(u,v)\in\cU^2}\bigg\| \frac{1}{n_{\ell}}\sum_{t=1}^{n_\ell}\{ \widetilde\bA^{\T}\bvep_t(u)   \} \{\bA \bZ_{t+\ell}(v) +  \bvep_{t+\ell}(v)  \}^{\T} \bigg\|_{\rm F} = &~ O_{\rm p}\{ n^{-1/2}p^{1/2} (\log p)^{1/2}\}\,,\\
     \max_{\ell\in[L]}\sup_{(u,v)\in\cU^2}\bigg\|   \frac{1}{n_{\ell}}\sum_{t=1}^{n_\ell}\{ \widetilde\bA^{\T} \bvep_t(u)   \} \{ \widetilde\bA^{\T}\bvep_{t+\ell }(v)  \}^{\T}  \bigg\|_{\rm F} =&~ O_{\rm p}\{  n^{-1/2}   (\log  p)^{1/2} \} \,,
\end{align*} 
provided that $\log  p \ll n^{1/5}$. Notice that $|\bbE[\{\ba_j^{\T}\bZ_{t}(u)+   \vep_{t,j}(u)\}\{\ba_{j'}^{\T}\bZ_{t+\ell}(v)+   \vep_{t+\ell,j'}(v)\}  ]|\leq C$ for any $(t,\ell,j,j',u,v)$. Analogously, we also have 
\begin{align*}
     &~\max_{\ell\in[L]}\sup_{(u,v)\in\cU^2}\bigg\|   \frac{1}{n_{\ell}}\sum_{t=1}^{n_\ell}\{\bA \bZ_t(u) +  \bvep_t(u)  \} \{\bA \bZ_{t+\ell}(v) +  \bvep_{t+\ell}(v)  \}^{\T}  \bigg\|_{\rm F}\\
     \leq &~
    \max_{\ell\in[L]}\sup_{(u,v)\in\cU^2}\bigg\|   \frac{1}{n_{\ell}}\sum_{t=1}^{n_\ell} (1-\bbE)\big[ \{\bA \bZ_t(u) +  \bvep_t(u)  \} \{\bA \bZ_{t+\ell}(v) +  \bvep_{t+\ell}(v)  \}^{\T}\big]  \bigg\|_{\rm F}\\
    &~~ +\max_{\ell\in[L]}\sup_{(u,v)\in\cU^2}\bigg\|   \frac{1}{n_{\ell}}\sum_{t=1}^{n_\ell}  \bbE\big[ \{\bA \bZ_t(u) +  \bvep_t(u)  \} \{\bA \bZ_{t+\ell}(v) +  \bvep_{t+\ell}(v)  \}^{\T}\big]  \bigg\|_{\rm F}
    =   O_{\rm p} (p ) \,,
\end{align*}
provided that $\log  p \ll n^{1/5}$. Notice that $|\ba^{\T}\bB\bc|
\le
|\ba|_2 \|\bB\|_{\rm op} |\bc|_2$ 
and $\|\bB\|_{\rm op}
\le
\|\bB\|_{\rm F}$ 
for any $\ba,\bc\in\mathbb{R}^q$ and $\bB\in\mathbb{R}^{q\times q}$. Since $\max_{j\in[p]}| \tilde\ba_{j} |_2 \lesssim p^{-1/2}$,  it follows from  \eqref{Deltaj} that 
\begin{align*}
   &~ \max_{\ell\in[L]}\max_{j,j'\in[p]}\sup_{(u,v)\in\cU^2}|{\rm III}_{\ell,j,j'} (u,v)|\\
    =&~O_{\rm p}\bigg\{ \frac{\log p}{np  } \cdot  p  +  \bigg(\frac{ \log p }{np }\bigg)^{1/2} \cdot p^{-1/2} \cdot \bigg(\frac{p\log p}{n }\bigg)^{1/2} + p^{-1}\cdot  \bigg(\frac{\log p}{n} \bigg)^{1/2} \bigg\} \\
   =&~O_{\rm p}\bigg\{ \frac{ \log p }{n}+ \bigg( \frac{\log p}{np^2}\bigg)^{1/2} \bigg\}\,,
\end{align*}
 provided that $\log  p \ll n^{1/5}$. 
 Then,  we can conclude that 
\begin{align*}
    \max_{\ell\in[L]}\|\widetilde\bSigma_{\delta\delta}^{(\ell)} \|_{\infty,\max} = O_{\rm p}\bigg\{ \frac{ \log p }{n}+ \bigg( \frac{\log p}{np^2}\bigg)^{1/2} \bigg\}\,,
\end{align*}
  provided that $\log  p \ll n^{1/5}$. Combining this with \eqref{Sigmavepdelta.order}, we have  
\begin{align*}
    \Delta_{\widetilde{\bSigma}}^{\varepsilon\delta} = O_{\rm p}\{ (n^{-1} +n^{-1/2}p^{-1/2}) \log p  \}\,,
\end{align*}
 provided that $\log  p \ll n^{1/5}$.  
  This completes the proof of the first part of Proposition~\ref{prop.facmodel}.

For the second part of Proposition~\ref{prop.facmodel}, write 
\begin{align*}
  &a_{1n}^*=(\log p)^{1/2}\,, ~~ a_{2n}^*=\frac{ (np)^2}{  M_2^{ \kappa/2} }\,,~~  b_{1n}^* = \frac{  \log p}{n^{1/2}}+\bigg(\frac{   \log p}{p}\bigg)^{1/2}\,\\
  &~~~~~~~~\mbox{and}~~~~~~b_{2n}^*= \frac{ n^{3/2}p^{ 2}(\log p)^{1/2}+n^2p^{ 3/2}}{  M_2^{ \kappa/2} }\,.
\end{align*}
Notice that Condition \ref{cond.facerror} implies Condition \ref{c.subgaussian}. Then, by Condition \ref{c.kernelF}, Lemmas~\ref{lemma1.prop1} and~\ref{lemma2.prop2}, together with arguments similar to those used in the proof of \eqref{Delta.g}, we have
\begin{align*}
     \Delta_{\tilde{\boldsymbol{\cG}}}^{\varepsilon\delta}  =&~ O_{\rm p}\{  n^{\rho/2} (a_{1n}^*+b_{1n}^*) b_{1n}^* \log(pM_2) + n^{1/2}(a_{1n}^*b_{2n}^*+a_{2n}^*b_{1n}^*  + b_{1n}^*b_{2n}^*)   \}\,,
\end{align*} 
provided that $\log  p \ll n^{1/5}$. 
Choose $M_2\asymp  (np)^C$ for some sufficiently large constant
$C>0$. It follows that 
\begin{align*}
    \Delta_{\tilde{\boldsymbol{\cG}}}^{\varepsilon\delta}  = O_{\rm p}\{ n^{\rho/2}(n^{-1/2} +p^{-1/2})(\log p)^{5/2}  \}\,,
\end{align*}
 provided that $\log  p \ll n^{1/5}$.  
Hence we complete the proof of Proposition \ref{prop.facmodel}.
$\hfill\Box$

\section{Proofs of the Results in Section \ref{sec.oracle}}

\subsection{Auxiliary Lemmas}

We first introduce some useful lemmas.  Lemma \ref{meer}  is Lemma 5.4 of \cite{MWX2013_supp}, which provides exponential tail bounds for local increments of a Gaussian process. Lemma~\ref{tail_chang} follows as a special case of Lemma~L1 in \cite{ChangChenWu2023_supp}, which establishes exponential tail bounds and moment bounds for partial sums of $\alpha$-mixing random variables.
Lemma~\ref{tail_multi} extends classical tail bounds for the supremum of stochastic-process increments in \cite{Ledoux1991_supp}. Working directly with the tail condition in \eqref{psi_multi}, Lemma~\ref{tail_multi} only requires $\gamma>0$, rather than the usual $\gamma\geq1$ condition imposed in Orlicz-norm-based treatments to ensure convexity. The proofs of Lemmas \ref{tail_chang} and  \ref{tail_multi} are given in Sections \ref{sec:lemmaA3} and \ref{sec:lemma_tail_multi}, respectively.

\begin{lemma}\label{meer} 
Let $\{Z(\bt),\bt\in\mathbb{R}^q\}$ be a real-valued centered Gaussian random field satisfying $[\mathbb{E}\{|Z(\bs)-Z(\bt)|^2\}]^{1/2}\leq c_1\rho(\bs,\bt)$ for any $\bs,\bt\in\mathbb{R}^q$ with some finite universal  constant $c_1>0$, where  $\rho(\bs,\bt)=\sum_{j=1}^q|s_j-t_j|^{h_j}$ for a fixed vector $\bh=(h_1,\dots,h_q)^{\T}\in (0,1]^q$. Let $|\bs|=(|s_1|,\ldots,|s_q|)^{\T}$. Then there exist two positive and finite universal constants $x_0$ and $c_2$ such that for  $x\geq x_0$,
\begin{align*}
\sup_{\bt_0\in[0,1]^q}\mathbb{P}\Bigg\{\sup_{|\bs|\leq\ba}|Z(\bt_0+\bs)-Z(\bt_0)|\geq x\sum_{j=1}^q a_j^{h_j}\Bigg\}\leq \exp(-c_2x^2)
\end{align*}
for all $\ba=(a_1,\dots,a_q)^{\T}\in (0,1]^q$ such that $\bt_0-\ba\in[0,1]^q $ and $\bt_0+\ba\in[0,1]^q $.
\end{lemma}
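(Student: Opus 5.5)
The statement to prove is Lemma \ref{meer}, which the paper quotes from \cite{MWX2013_supp}. I would prove it directly with a chaining argument plus Gaussian concentration.

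\textbf{Step 1: reduce to one-dimensional parameters.} Fix $\bt_0$ and $\ba$, and consider the local field $Y(\bs)=Z(\bt_0+\bs)-Z(\bt_0)$ on the box $\{|\bs|\leq\ba\}$. Rescale each coordinate by $s_j=a_j w_j$ with $w_j\in[-1,1]$. The canonical metric then satisfies
\[
d(\bw,\bw')\leq c_1\sum_j a_j^{h_j}|w_j-w_j'|^{h_j}\leq c_1 D\,\bar\rho(\bw,\bw'),
\]
where $D=\sum_j a_j^{h_j}$ and $\bar\rho(\bw,\bw')=\sum_j|w_j-w_j'|^{h_j}$. The same bound gives $\sup_{\bw}\var\{Y\}\leq (c_1 D)^2$.

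\textbf{Step 2: Dudley's entropy integral.} A $\bar\rho$-ball of radius $\epsilon$ contains a product of intervals of side $(\epsilon/q)^{1/h_j}$. Hence the covering number of $[-1,1]^q$ satisfies $N(\epsilon)\lesssim\prod_j(q/\epsilon)^{1/h_j}$, so $\log N(\epsilon)\lesssim (\sum_j h_j^{-1})\log(C/\epsilon)$. The diameter in $d$ is at most $2c_1Dq$. Dudley's bound then gives $\mathbb{E}\sup|Y|\leq C c_1 D\int_0^{C}\sqrt{\log(C/\epsilon)}\,{\rm d}\epsilon\leq C' D$. Here $C'$ depends only on $q,\bh,c_1$, and not on $\bt_0$ or $\ba$. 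Separability is needed at this point; I take a separable version, which is standard.

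\textbf{Step 3: Borell--TIS concentration.} This inequality yields
\[
\mathbb{P}\Big\{\sup|Y|\geq \mathbb{E}\sup|Y|+y\Big\}\leq \exp\{-y^2/(2c_1^2D^2)\}.
\]
Take $x_0=2C'$, and for $x\geq x_0$ set $y=xD-C'D\geq xD/2$. The probability is then at most $\exp\{-x^2/(8c_1^2)\}$. This gives the claim with $c_2=1/(8c_1^2)$. The bound is uniform in $\bt_0$, and the constraint that $\bt_0\pm\ba$ lie in $[0,1]^q$ only serves to keep the box inside the domain.

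The main obstacle is ensuring that the chaining constant $C'$ is uniform in $\ba$. This holds because rescaling to the unit box makes the entropy integral free of $\ba$; the only $\ba$-dependence enters through the factor $D$.
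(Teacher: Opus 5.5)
Your proof is correct. The paper itself does not prove this lemma. It imports it verbatim as Lemma~5.4 of \cite{MWX2013_supp}, so your argument replaces a citation rather than competing with an in-paper proof. Your argument has three steps: rescale the anisotropic box to $[-1,1]^q$, bound $\mathbb{E}\sup|Y|\leq C'\sum_j a_j^{h_j}$ via Dudley's entropy integral, and apply Borell--TIS with variance proxy $c_1^2(\sum_j a_j^{h_j})^2$.

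Relative to the citation, your argument gives two things. First, the constants are explicit and depend only on $(q,\bh,c_1)$, namely $x_0=2C'$ and $c_2=1/(8c_1^2)$. Second, it uses only the values of $Z$ on the box $\{\bt_0+\bs:|\bs|\leq\ba\}$ and the increment bound for pairs of points in that box. The variance bound is just the increment bound against the centre $\bt_0$.

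This second feature matches how the paper actually uses the lemma. The processes $\tilde{\cG}_r$ and $\tilde{\cG}^*_r$ live on $\cU^2$ rather than $\mathbb{R}^q$. In the proof of Lemma~\ref{lem:diff.Tn.Gstar}, the conditional increment bound is verified only for pairs within one cell $B_m\times B_{m'}$, and only on the event $\cE(\widetilde{\mathcal{D}}_n)$. The cited statement formally requires the bound for all $\bs,\bt\in\mathbb{R}^q$, whereas your local version covers these applications directly. The citation buys only brevity.

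Three small points to tidy up:
\begin{itemize}
\item Dudley controls $\mathbb{E}\sup Y$ rather than $\mathbb{E}\sup|Y|$. Since $Y(\mathbf{0})=0$ and $-Y$ has the same law as $Y$, you have $\mathbb{E}\sup|Y|\leq\mathbb{E}\sup_{\bs,\bt}|Y(\bs)-Y(\bt)|=2\,\mathbb{E}\sup Y$.
\item Borell--TIS applies to $\sup|Y|$ by viewing it as the supremum of the Gaussian process $\{\pm Y(\bs)\}$, which has the same maximal variance.
\item The separable version you invoke exists. Gaussian moment equivalence turns the $L^2$ bound into $\mathbb{E}|Z(\bs)-Z(\bt)|^m\lesssim\rho(\bs,\bt)^m$ for every $m$, so for $m$ large Kolmogorov's criterion even gives a continuous modification.
\end{itemize}
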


\begin{lemma}
\label{tail_chang}
Let $\{ Z_t \}_{t=1}^{n}$ be an $\alpha$-mixing sequence of centered random variables with $\alpha$-mixing coefficients $\{\alpha(m)\}_{m\geq1}$.
Assume   there exist some universal constants $a_1,a_2,b_1,b_2,r_1 \\ >0$ such that
{\rm (i)} $\max_{t\in[n]}\mathbb{P}(|Z_t|>x)\leq a_1\exp(-a_2x^{ r_1}B_{n}^{-  r_1})$ for any $x\geq 0$, where $B_{n}>0$ may diverge with $n$;
{\rm (ii)}  $\alpha(m)\leq b_1\exp(-b_2|m-c_n|_{+})$ for any integer $m\geq 1$, where $0\leq c_n=o(n)$ may diverge with $n$. Let ${r}=(3+| r_1^{-1}-1|_+)^{-1}$. It then holds that
\begin{align*}
    \mathbb{P}\bigg(\bigg|\sum_{t=1}^{n}Z_t\bigg|\geq x\bigg)\lesssim \exp\bigg\{-\frac{Cx^2}{(1+ c_n)B_n^{2}n} \bigg\}+\exp\bigg\{ -\frac{ Cx^{r} }{ (1+c_n)^{r}B_n^{r} } \bigg\}
\end{align*}
for any $x\geq 0$. Furthermore, if  $c_n$ is a fixed finite integer, we have $\|\sum_{t=1}^{n}Z_t\|_s\lesssim B_n n^{1/2}$ for any finite positive integer $s$.
\end{lemma}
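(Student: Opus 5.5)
The plan is to reduce to the case $B_n=1$ and $c_n=0$, prove a Bernstein-type inequality there by truncation plus a Bernstein inequality for bounded geometrically $\alpha$-mixing sequences, and then undo the reductions.

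\emph{Step 1 (rescaling).} Replacing $Z_t$ by $Z_t/B_n$ and $x$ by $x/B_n$ shows it suffices to take $B_n=1$. Condition (i) then becomes $\max_t\mathbb P(|Z_t|>x)\le a_1\exp(-a_2x^{r_1})$.

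\emph{Step 2 (removing the shift $c_n$).} Split $[n]$ into the $1+c_n$ interleaved subsequences $\mathcal I_k=\{k,k+(1+c_n),k+2(1+c_n),\ldots\}$, $k\in[1+c_n]$. Along $\mathcal I_k$, consecutive indices are $1+c_n$ apart, so the induced mixing coefficients satisfy
$$\alpha_k(m)\le\alpha\{m(1+c_n)\}\le b_1\exp[-b_2\{m(1+c_n)-c_n\}]\le b_1\exp(-b_2m).$$
Each subsequence therefore has unshifted geometric mixing and length $n_k\asymp n/(1+c_n)$. I would not use a crude union bound with threshold $x/(1+c_n)$ applied to the probabilities, because that produces a prefactor $(1+c_n)$. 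Instead I would bound the moment generating function, or the high moments, of $\sum_t Z_t=\sum_k S_k$ through H\"older's inequality: $\mathbb E\exp(\lambda\sum_kS_k)\le\prod_k\{\mathbb E\exp((1+c_n)\lambda S_k)\}^{1/(1+c_n)}$. A bound for each $S_k$ of the form $\exp\{C\lambda^2(1+c_n)^2n_k\}$ on a suitable range of $\lambda$ gives the variance proxy $(1+c_n)^2n_k\asymp(1+c_n)n$ and the scale $(1+c_n)$ in the heavy-tail term. These are exactly the factors appearing in the claimed bound.

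\emph{Step 3 (the unshifted case).} Truncate at a level $M=M(x)$. Write $Z_t=Z_t^{\le}+Z_t^{>}$, where $Z_t^{\le}=Z_tI(|Z_t|\le M)-\mathbb E\{Z_tI(|Z_t|\le M)\}$. For the bounded part, apply the Bernstein inequality for geometrically $\alpha$-mixing bounded sequences (Merlevède, Peligrad and Rio). This gives a tail of order $\exp\{-Cx^2/(n+Mx\log^2n)\}$, or equivalently a moment generating function bound $\exp(C\lambda^2n)$ for $\lambda\lesssim (M\log^2 n)^{-1}$. For the unbounded part, use Markov's inequality together with the tail condition: $\mathbb P(\sum_t|Z_t^{>}|\ge x/2)\lesssim nx^{-1}\exp(-a_2M^{r_1}/2)$. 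Then choose $M\asymp x^{\beta}$ so that three things hold simultaneously: the term $Mx\log^2n$ is dominated (giving the $\exp(-Cx^r)$ branch), the prefactor $n$ and the $\log^2 n$ factor are absorbed, and the exponent from the unbounded part is at least $x^r$. Balancing $M^{r_1}$ against $x/M$ with one extra power reserved for the logs is what forces $r=(3+|r_1^{-1}-1|_+)^{-1}$.

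\emph{Main obstacle.} The main obstacle is this absorption of the $n$ and $\log n$ factors. My first attempt would be a case split: for $x\le n^{c}$ with $c$ small, the Gaussian branch dominates and the logarithmic factors are harmless after adjusting $C$; for $x>n^{c}$ we have $\log n\lesssim\log x$, and the deliberately crude exponent $r$ leaves room to absorb $\log^2 n$ and $\log n$ into $x^{r}$.

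\emph{Step 4 (moment bound).} If $c_n$ is a fixed finite integer, integrate the tail bound:
$$\mathbb E\Big|\sum_t Z_t\Big|^s=\int_0^\infty sx^{s-1}\,\mathbb P\Big(\Big|\sum_t Z_t\Big|\ge x\Big)\,{\rm d}x\lesssim (B_n^2n)^{s/2}+B_n^s.$$
Hence $\|\sum_tZ_t\|_s\lesssim B_nn^{1/2}$.
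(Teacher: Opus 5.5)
Your argument is correct in outline, but it follows a genuinely different route from the paper. The paper does not prove the tail bound at all. It invokes Lemma L1 of Chang, Chen and Wu (2024) with $(\tilde n,r_1,r_2,r,\tilde B_{\tilde n},\tilde L_{\tilde n},\tilde j_{\tilde n})=(n,r_1,1,r,B_n,1,c_n)$, and then gets the moment bound by integrating the tail, exactly as in your Step 4.

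You instead rebuild the inequality from scratch, using truncation, the Merlev\`ede--Peligrad--Rio Bernstein inequality for bounded geometrically strong mixing sequences, and a split into $1+c_n$ interleaved subsequences. This buys a self-contained proof and shows exactly where the factors $1+c_n$ enter. It also shows the stated $r$ is not sharp. Your window $\beta\in(r/r_1,\,1-r)$ for $M\asymp x^{\beta}$ is nonempty precisely when $r<r_1/(1+r_1)$. So $r=(3+|r_1^{-1}-1|_+)^{-1}$ is just one admissible value, and nothing ``forces'' it.

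The cost is that two points need more care than your sketch gives them.

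First, the reduction in Step 2 cannot be run on moment generating functions of the untruncated $S_k$. For $r_1<1$, $\mathbb E\exp(\lambda S_k)$ may be infinite for every $\lambda>0$. The clean fix is the moment version you mention:
\begin{enumerate}
\item Prove the unshifted bound of Step 3 as a stand-alone statement, valid for every sequence length with universal constants.
\item Integrate it to get $\|S_k\|_q\lesssim (qn_k)^{1/2}+q^{1/r}$.
\item Apply Minkowski's inequality over the $1+c_n$ subsequences to get $\|\sum_tZ_t\|_q\lesssim \{q(1+c_n)n\}^{1/2}+(1+c_n)q^{1/r}$.
\item Finish with Markov's inequality at an optimized $q$.
\end{enumerate}
This yields exactly the claimed form, with no prefactor.

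Second, suppose you instead truncate globally and apply H\"older only to the bounded part. Then the logarithms and prefactors must be absorbed relative to $n_k\asymp n/(1+c_n)$ and $x/(1+c_n)$, not relative to $n$ and $x$ as in your case split. Since $c_n=o(n)$ allows $n/(1+c_n)$ to grow arbitrarily slowly, $\log n$ need not be dominated by powers of $x/(1+c_n)$ in the nontrivial range $x\gtrsim\{(1+c_n)n\}^{1/2}$. In that range, a union bound $n\exp(-a_2M^{r_1})$ with $M$ a power of $x/(1+c_n)$ can fail. Your Markov bound does work there, because its prefactor $n/x$ is at most $\{n/(1+c_n)\}^{1/2}$. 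The factor $\log^2 n_k$ from applying the Bernstein inequality to each subsequence is likewise absorbable.
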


\begin{lemma} \label{tail_multi}
Let $(\mathcal T,d)$ be a totally bounded metric space with
diameter $ D=\sup_{\bs,\bt\in\mathcal T}d(\bs,\bt)\\\in(0,C)$ for some constant $C< \infty$.
Assume that the process $\{Z(\bt):\bt\in\mathcal T\}$ is separable
with respect to $ d$ and satisfies the following increment tail condition: there exist constants $c_{1n}\ge 1$, $c_{2n}>0$ and
$\gamma>0$ such that, for any $\bs,\bt\in\mathcal T$ with
$\bs\neq\bt$ and any $x\ge0$, 
\begin{align}\label{psi_multi}
\mathbb{P}\big\{|Z(\bs)-Z(\bt)|>x\big\}\leq c_{1n} \exp\biggl[-c_{2n} \biggl\{\frac{x}{d(\bs,\bt)} \biggr\}^{\gamma}\biggr] \,.
\end{align}
Let $\Lambda_u= 1+\log_2(1/u) $ for   $u\in (0,1]$, and let   $N_{\epsilon} := N(\mathcal{T}, d; \epsilon)$ denote the covering number of $\mathcal{T}$ with respect to the  metric $d$ and radius $\epsilon$.   The following statements hold.
\begin{itemize}
\item[ ${\rm (i)}$ ] For any $D_1 \in (0, D\wedge 1)$, we have
\begin{align*}
&\mathbb{P}\bigg\{ \sup_{\bs,\bt\in\cT}|Z(\bs)-Z(\bt)| > x\bigg\} \lesssim    \frac{c_{1n}C_{\gamma,1 }}{c_{2n}^{1/\gamma} x} \int_{0}^{D_1}\{\log ( e  N_{\epsilon}    )\}^{1/\gamma} \,{\rm d}\epsilon \\
&~~~~~~~~~~~~~~~~~~~~~~~~~~~~~+ c_{1n} N_{D_1}\Lambda_{D_1}\exp\biggl\{ - c_{2n} C_{\gamma,2}\biggl(\frac{x}{D} \biggr)^{\gamma}\biggr\}
\end{align*}
for any $x > 0$, where $C_{\gamma,1},C_{\gamma,2}$ are positive constants depending only on $\gamma$.
\item[ ${\rm (ii)}$ ] For any $\beta \in (0,1)$ and   $D_1, D_2 \in (0, D\wedge 1)$ with $D_2 < D_1$, we have
\begin{align*}
&\bbP\biggl[ \sup_{\substack{\bs,\bt\in\cT\\ \bs\neq \bt}}\frac{|Z(\bs)-Z(\bt)|}{\{d(\bs,\bt)\}^{\beta}} >x \biggr]  \lesssim \frac{ c_{1n} C_{\gamma,\beta,1 }}{c_{2n}^{1/\gamma}x}   \bigg[\int_{0}^{D_1}\frac{\{\log( e N_{\epsilon})\}^{1/\gamma}}{\epsilon^{\beta}}\, {\rm d}\epsilon + \frac{\int_{0}^{D_2} \{\log(eN_{\epsilon})\}^{1/\gamma}\,{\rm d}\epsilon}{D_1^{\beta}} \bigg]\\
&~~~~~~~~~~~~~~~~~~~~~~~~~~~~~~~ + c_{1n}\big \{N_{D_1}^2\Lambda_{D_1}+N_{D_2} \Lambda_{D_1}\Lambda_{D_2}\big \}\exp\biggl\{ - c_{2n} C_{\gamma,\beta,2}\biggl(\frac{x}{D^{1-\beta}} \biggr)^{\gamma}\biggr\}
\end{align*}
for any $x > 0$, where $ C_{\gamma,\beta,1} ,C_{\gamma,\beta,2 }$ are positive constants depending only on $( \gamma,\beta)$.
\end{itemize} 	
\end{lemma}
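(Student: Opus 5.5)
The plan for Lemma~\ref{tail_multi} is a classical dyadic chaining argument (as in \cite{Ledoux1991_supp}, Chapter~11). Throughout we work with the tail condition \eqref{psi_multi} directly rather than with an Orlicz norm, so convexity of $\psi_\gamma$ (i.e.\ $\gamma\ge1$) is never used. The one tool we need is a maximal inequality obtained by integrating tails. If $Y_1,\dots,Y_N$ satisfy $\mathbb P(|Y_i|>x)\le c_{1n}\exp\{-c_{2n}(x/\sigma)^\gamma\}$, then
\begin{align*}
\mathbb E\max_{i\le N}|Y_i|\le \int_0^\infty \min\bigl[1,\,Nc_{1n}\exp\{-c_{2n}(x/\sigma)^\gamma\}\bigr]\,{\rm d}x\lesssim c_{1n}C_\gamma\,\sigma\, c_{2n}^{-1/\gamma}\{\log(eN)\}^{1/\gamma}.
\end{align*}
To see this, split the integral at $x_*=\sigma\{\log(eN)/c_{2n}\}^{1/\gamma}$. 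The part above $x_*$ contributes $\int_{x_*}^\infty Nc_{1n}e^{-c_{2n}(x/\sigma)^\gamma}{\rm d}x$, and the substitution $y=(x/\sigma)^\gamma$ bounds it by $C_\gamma$ times the claimed quantity; the factor $c_{1n}\ge1$ is simply pulled out. This bound holds for every $\gamma>0$.

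\emph{Proof of (i).} Set $\epsilon_k=D_12^{-k}$ for $k\ge0$, and let $\mathcal T_k$ be a minimal $\epsilon_k$-net, so $|\mathcal T_k|=N_{\epsilon_k}$, with nearest-point maps $\pi_k$. By separability it suffices to bound the supremum over a countable dense set, and there we use the telescoping identity $Z(\bs)-Z(\pi_0\bs)=\sum_{k\ge0}\{Z(\pi_{k+1}\bs)-Z(\pi_k\bs)\}$, whose links satisfy $d(\pi_{k+1}\bs,\pi_k\bs)\le 2\epsilon_k$. This gives
\begin{align*}
\sup_{\bs,\bt}|Z(\bs)-Z(\bt)|\le 2\,\text{(fine part)}+\text{(coarse part)},
\end{align*}
where the fine part is $\sup_{\bs}|Z(\bs)-Z(\pi_0\bs)|$ and the coarse part is $\max_{\bs',\bt'\in\mathcal T_0}|Z(\bs')-Z(\bt')|$. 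For the fine part we apply the maximal inequality level by level, with at most $N_{\epsilon_{k+1}}^2$ links at level $k$ and $\sigma=2\epsilon_k$. Summing over $k$ and converting the sum into an integral (using $\log N_{\epsilon_{k+1}}^2\le 2\log N_{\epsilon_{k+1}}$ and $\epsilon_k\asymp\int_{\epsilon_{k+2}}^{\epsilon_{k+1}}{\rm d}\epsilon$) bounds its expectation by $c_{1n}C_{\gamma,1}c_{2n}^{-1/\gamma}\int_0^{D_1}\{\log(eN_\epsilon)\}^{1/\gamma}{\rm d}\epsilon$. Markov's inequality then produces the first term of (i). For the coarse part, the diameter $D$ may exceed $D_1$, so we chain once more through the intermediate scales $D2^{-i}$, $i\le\Lambda_{D_1}$. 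At each of these scales we use a union bound with threshold $x_i=x\,2^{-i/2}c_\gamma$, chosen so that $\sum_i x_i\le x/2$. Since $d\le D$ on every pair, the union bound gives at most $c_{1n}N_{D_1}\Lambda_{D_1}\exp\{-c_{2n}C_{\gamma,2}(x/D)^\gamma\}$. Here the terms from the different scales are absorbed into the count $\Lambda_{D_1}$, using that the ratio $x_i/(D2^{-i})$ is largest at the finest scale and at least $x/D$ throughout. The geometric choice of weights is what makes this work, and that choice is the one I need to verify carefully.

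\emph{Proof of (ii).} We peel over dyadic distance shells. Pairs with $d(\bs,\bt)\ge D_1$ have $d^{-\beta}\le D_1^{-\beta}$ and are handled by the coarse union bound from (i), now over pairs in $\mathcal T_0\times\mathcal T_0$; this is the source of the $N_{D_1}^2\Lambda_{D_1}$ term, and $x$ is rescaled by $D^{1-\beta}$ because $|Z(\bs)-Z(\bt)|/d^\beta$ has scale $d^{1-\beta}\le D^{1-\beta}$. For pairs with $\epsilon_{k+1}<d(\bs,\bt)\le\epsilon_k$, we chain both points down from level $k$. Each link at level $j\ge k$ has size $\epsilon_j$ and is divided by $d^\beta\ge\epsilon_{k+1}^\beta\ge\epsilon_j^\beta 2^{-\beta}$, so by the same expectation computation the fine contribution becomes $\sum_j\epsilon_j^{1-\beta}\{\log(eN_{\epsilon_j})\}^{1/\gamma}\asymp\int_0^{D_1}\epsilon^{-\beta}\{\log(eN_\epsilon)\}^{1/\gamma}{\rm d}\epsilon$. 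The remaining piece is the unnormalized tail chaining below $D_2$, weighted by the worst factor $D_1^{-\beta}$; this yields the second integral in (ii) and, through a union bound over $\mathcal T_{D_2}$ and both level counts, the term $N_{D_2}\Lambda_{D_1}\Lambda_{D_2}$.

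I expect the main obstacle to be the bookkeeping in (ii). The weights across shells must be chosen so that the peeled union bounds still sum to a single $\exp\{-c_{2n}C(x/D^{1-\beta})^\gamma\}$ with only polynomial prefactors in the covering numbers and the $\Lambda$'s, rather than an extra factor of the shell count in the exponent. I would first try thresholds $x_k\propto x(1+k)^{-2}$ and check whether the resulting loss can be absorbed into $C_{\gamma,\beta,2}$ and $\Lambda_{D_1}\Lambda_{D_2}$.
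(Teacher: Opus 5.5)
Part (i) follows the paper's argument. You chain from scale $D$ down to the finest level and split near $D_1$. Below $D_1$ you use Markov's inequality with a tail-integrated maximal inequality, and over the $O(\Lambda_{D_1})$ coarse levels you use a weighted union bound. The paper uses weights proportional to $2^{-\ell}$, which make every level-wise exponent equal to $c_{2n}(x/\bar c)^\gamma$ with $\bar c\lesssim D$. Your weights $2^{-i/2}$ also work, because the ratios $x_i/(D2^{-i})$ only increase with $i$. The per-level count is at most $N_{D_1}$ because every coarse chain starts at a point of the $D_1$-net; the paper gets the same count from its nested maps $k_\ell$. One technical point: without continuity of $Z$, the telescoping series over a countably infinite set need not converge to $Z(\bs)-Z(\pi_0\bs)$. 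Reduce to finite subsets first, as the paper does.

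For (ii) your route differs from the paper's and is simpler.

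The paper's route: it bounds the supremum over shells by the sum over shells and takes the base pair of shell $\ell$ at level $\ell$. This produces the double sum $\sum_\ell 2^{\beta\ell}\sum_{j>\ell}\sup_{\bs}|Z\{k_j(\bs)\}-Z\{k_{j-1}(\bs)\}|$ and forces the split into ${\rm II}_{21}$, ${\rm II}_{22}$, ${\rm II}_{23}$. That split is exactly where $D_2$, the term $D_1^{-\beta}\int_0^{D_2}$, and the factor $N_{D_2}\Lambda_{D_1}\Lambda_{D_2}$ come from.

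Your route, for $d(\bs,\bt)<D_1$: the bound $d^{-\beta}\le 2^\beta\epsilon_j^{-\beta}$ holds for every link $j\ge k$, so the link contribution is bounded uniformly in the shell and no sum over shells is needed. The base pairs $(\pi_k\bs,\pi_k\bt)$, at distance at most $3\epsilon_k$, go into the same expectation bound summed over $k$.

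Your route, for $d(\bs,\bt)\ge D_1$: project onto the $D_1$-net $\mathcal T_0$ and use $d(\pi_0\bs,\pi_0\bt)\le 3d(\bs,\bt)$. Then apply one direct union bound over the at most $N_{D_1}^2$ pairs of $\mathcal T_0$, each pair normalized by its own distance; this gives the exponent $c_{2n}C(x/D^{1-\beta})^\gamma$. The fine remainder is divided by $D_1^\beta$ and is bounded by $\int_0^{D_1}\epsilon^{-\beta}\{\log(eN_\epsilon)\}^{1/\gamma}\,{\rm d}\epsilon$.

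You therefore get (ii) without any $D_2$ terms, which is a stronger statement. Your last paragraph and your concern about shell weights (for example $x_k\propto(1+k)^{-2}$) can be dropped, since the only union bound is the single one over $\mathcal T_0\times\mathcal T_0$. That bound must be per-pair, though. If you instead reuse the chained coarse bound from (i) and divide by $D_1^\beta$, the exponent becomes $c_{2n}(xD_1^\beta/D)^\gamma$ rather than $c_{2n}(x/D^{1-\beta})^\gamma$, which is too weak.
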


\subsection{Proof of Lemma \ref{lem.dis.Gaussian}}\label{proof.lem.dis.Gaussian}

By the definitions of $\tilde{T}_{n}^{\rm G}$ and 
$\tilde{T}_{n}^{\rm dis,G}$ in \eqref{Tn.G} and \eqref{Tn.dis.G}, we have 
\begin{align*}
    |\tilde{T}^{\rm G}_{n}-\tilde{T}^{\rm dis,G}_{n}| \leq \max_{r\in[Lp^2]}\max_{m,m'\in[M]}\sup_{(u,v)\in B_m\times B_{m'}}|\tilde\cG_r(u,v)-\tilde\cG_r(b_m,b_{m'})|\,.
\end{align*}
We claim that, for any $r\in[Lp^2]$ and $(u_1,v_1),(u_2,v_2)\in \cU^2$, 
\begin{align}\label{2nd.moment}
    [\mathbb{E}\{|\tilde\cG_r(u_1,v_1)-\tilde\cG_r(u_2,v_2)|^2\}]^{1/2} \leq  C(|u_1-u_2|^{\kappa}+|v_1-v_2|^{\kappa}) \,.
\end{align}
    Since $\max\{|u-b_m|,|v-b_{m'}|\}\leq (2M)^{-1}$ for any $(u,v)\in B_m\times B_{m'}$, by Lemma \ref{meer} with $q=2$, $(a_1,a_2)^{\T}=((2M)^{-1},(2M)^{-1})^{\T}$ and $\bh=(\kappa ,\kappa )^{\T}$, it holds that
\begin{align}\label{tail.1}
\mathbb{P}(|\tilde{T}^{\rm G}_{n}-\tilde{T}^{\rm dis,G}_{n}|>x)
\leq&\, L p^2M^2\max_{r\in[Lp^2]}\max_{m,m'\in[M]}\mathbb{P}\bigg\{\sup_{(u,v)\in B_m\times B_{m'}}|\tilde\cG_r(u,v)-\tilde\cG_r(b_m,b_{m'})|>x\bigg\}\nonumber\\
\lesssim &\,  p^2M^2\exp(-CM^{2 \kappa}x^2)
\end{align}
for any $x\geq CM^{-\kappa }$. In addition, for any $\delta>0$,
\begin{align*}
      &\mathbb{P}(\tilde{T}^{\rm G}_{n}\leq x)-\mathbb{P}(\tilde{T}^{\rm dis, G}_{n}\leq x)\\
       &~~\leq  \mathbb{P}(\tilde{T}^{\rm G}_{n}\leq x,|\tilde{T}^{\rm G}_{n}-\tilde{T}^{\rm dis,G}_{n}| \leq \delta)+\mathbb{P}( |\tilde{T}^{\rm G}_{n}-\tilde{T}^{\rm dis,G}_{n}|> \delta)-\mathbb{P}(\tilde{T}^{\rm dis,G}_{n}\leq x)  \\
        &~~\leq    \mathbb{P}(x < \tilde{T}^{\rm dis, G}_{n}\leq x+ \delta)+\mathbb{P}( |\tilde{T}^{\rm G}_{n}-\tilde{T}^{\rm dis,G}_{n}|> \delta)\,.
\end{align*}
On the other hand, 
\begin{align*}
&\mathbb{P}(\tilde{T}^{\rm G}_{n}\leq x)
-\mathbb{P}(\tilde{T}^{\rm dis,G}_{n}\leq x) \\
&~~ \geq
\mathbb{P}(\tilde{T}^{\rm dis,G}_{n}\leq x-\delta)
-
\mathbb{P}(
|\tilde{T}^{\rm G}_{n}
-\tilde{T}^{\rm dis,G}_{n}|>\delta
)-\mathbb{P}(\tilde{T}^{\rm dis,G}_{n}\leq x)\\
&~~ =
-\mathbb{P}(
x-\delta < \tilde{T}^{\rm dis,G}_{n}\leq x
)
-
\mathbb{P}(
|\tilde{T}^{\rm G}_{n}
-\tilde{T}^{\rm dis,G}_{n}|>\delta
)\,.
\end{align*}
Then, by \eqref{tail.1}, Condition \ref{c.bvar} and the Nazarov's inequality 
\citep[see, e.g., Lemma~A.1 of][]{Cher2017_supp}, we can conclude that  
\begin{align*}
 \sup_{x\in \mathbb{R}}|\mathbb{P}(\tilde{T}^{\rm G}_{n}\leq x)-\mathbb{P}(\tilde{T}^{\rm dis, G}_{n}\leq x)| \leq &~\sup_{x\in \mathbb{R}}\mathbb{P}( |\tilde{T}^{\rm dis,G}_{n}-x|\leq \delta)+\mathbb{P}(|\tilde{T}^{\rm G}_{n}-\tilde{T}^{\rm dis,G}_{n}|>\delta) \\
\lesssim &~\delta\sqrt{\log ({ p M } )} +p^2M^2\exp(-CM^{2\kappa}\delta^2 )
\end{align*}
for any $\delta\geq CM^{-\kappa}$. Recall $M\asymp (np)^{C_M}$.  Letting $\delta=C' M^{-\kappa}\sqrt{\log ( pM) }$ for some sufficiently large constant $C'>0$, we have
$\sup_{x\in \mathbb{R}}|\mathbb{P}(\tilde{T}^{\rm G}_{n}\leq x)-\mathbb{P}(\tilde{T}^{\rm dis, G}_{n}\leq x)|=o(1)$. To confirm Lemma \ref{lem.dis.Gaussian},
it remains to show that \eqref{2nd.moment} holds.

\noindent\underline{Proof of \eqref{2nd.moment}.} Recall  $\tilde{\boldsymbol\cG}(\cdot,\cdot)=\{\tilde\cG_1(\cdot,\cdot),\ldots,\tilde\cG_{Lp^2}(\cdot,\cdot)\}^{\T}$ is a centered Gaussian process defined on $\cU^2$ with covariance function  $$\cov\{\tilde{\boldsymbol\cG}(u,v),\tilde{\boldsymbol\cG}(\tilde u, \tilde v)\}=\cov\{\tilde{\bxi}_n(u,v),\tilde{\bxi}_n(\tilde u,\tilde v)\}\,,$$
where $(u,v),(\tilde{u},\tilde{v})\in\cU^2$ and $\tilde{\bxi}_n(u,v) =n^{1 / 2} ([{\rm vec}\{\widetilde\bSigma^{(1)}(u,v)\}]^{\T},\ldots, [{\rm vec}\{\widetilde\bSigma^{(L)}(u,v)\}]^{\T} )^{\T}$, with   $\widetilde\bSigma^{(\ell)}(u,v)=\{\widetilde\Sigma_{jj'}^{(\ell)}(u,v)\}_{p\times p} =  n_\ell^{-1}  \sum_{t=1}^{n_\ell} \{\bvep_{t}(u)-\bar{\bvep}(u)\} \{\bvep_{t+\ell}(v)-\bar{\bvep}(v)\}^{\T}$ for each $\ell \in [L]$. For notational simplicity, for any fixed $r\in[Lp^2]$, we write
$(\ell,j,j')=(\ell_r,j_r,j'_r)$ for its associated unique triple. Therefore, for any $r\in [Lp^2]$ and $(u_1,v_1),(u_2,v_2)\in\cU^2$, we have
\begin{align}\label{eq:covg}
&n^{-1}\mathbb{E}\{|\tilde\cG_r(u_1,v_1)-\tilde\cG_r(u_2,v_2)|^2\} =  \var\big\{\widetilde\Sigma_{jj'}^{(\ell)}(u_1,v_1) -\widetilde\Sigma_{jj'}^{(\ell)}(u_2,v_2) \big\}   \,.
\end{align}
For any $\ell\in[L]$, $j,j'\in[p]$ and $u,v\in\cU$, we write 
\begin{equation}\label{eq:bareps.decomp}
\begin{aligned}
\bar{\vep}^{(1)}_{\ell,j,j'}(u,v)
=&~\frac{1}{n_{\ell}} \sum_{t=1}^{n_{\ell}}
[\vep_{t,j}(u)\vep_{t+\ell,j'}(v)
-\mathbb{E}\{ \vep_{t,j}(u)\vep_{t+\ell,j'}(v)\}]\,,\\
\bar{\vep}^{(2)}_{\ell,j,j'}(u,v)
=&~-\frac{1}{n_{\ell}}  \sum_{t=1}^{n_{\ell}}
[\vep_{t,j}(u)\bar{\vep}_{j'}(v)
-\mathbb{E}\{\vep_{t,j}(u)\bar{\vep}_{j'}(v) \}]\,,\\
\bar{\vep}^{(3)}_{\ell,j,j'}(u,v)
=&~-\frac{1}{n_{\ell}}  \sum_{t=1}^{n_{\ell}}
[\vep_{t+\ell,j'}(v)\bar{\vep}_{j}(u)
-\mathbb{E}\{\vep_{t+\ell,j'}(v)\bar{\vep}_{j}(u) \}]\,,\\
\bar{\vep}^{(4)}_{\ell,j,j'}(u,v)
=&~ \bar{\vep}^{(4)}_{j,j'}(u,v)
=\bar{\vep}_{j}(u)\bar{\vep}_{j'}(v)
-\mathbb{E}\{ \bar{\vep}_{j}(u)\bar{\vep}_{j'}(v) \}\,,
\end{aligned}
\end{equation}
where $\bar{\vep}_{j}(u)=n^{-1}\sum_{t=1}^n {\vep}_{t,j}(u)$. Then, for any $(\ell,j,j',u,v)$,
\begin{align*}
\widetilde\Sigma^{(\ell)}_{jj'}(u,v) - \mathbb{E}\{\widetilde\Sigma^{(\ell)}_{jj'}(u,v)\}   = \sum_{k=1}^4\bar{\vep}^{(k)}_{\ell,j,j'}(u,v)\,. 
\end{align*}
By \eqref{eq:covg}, it holds that 
\begin{align}\label{i16}
 n^{-1/2}[\mathbb{E}\{|\tilde\cG_r(u_1,v_1)-\tilde\cG_r(u_2,v_2)|^2\}]^{1/2}
\leq \sum_{k =1}^{4} \|\bar{\vep}^{(k)}_{\ell,j,j'}(u_1,v_1)-\bar{\vep}^{(k)}_{\ell,j,j'}(u_2,v_2)  \|_2  \,.
\end{align} 
Write $Z_{t,\ell,j,j'}(u,v) = \vep_{t,j}(u)\vep_{t+\ell,j'}(v)
-\mathbb{E}\{ \vep_{t,j}(u)\vep_{t+\ell,j'}(v)\}$. Let $d\{(u_1,v_1),(u_2,v_2)\} = |u_1-u_2|^{\kappa} + |v_1-v_2|^{\kappa}$ for any $(u_1,v_1),(u_2,v_2)\in \cU^2$.  By   Condition \ref{c.subgaussian}, it holds that
\begin{align}\label{eq:tailprob.vep2}
    &\|Z_{t,\ell,j,j'}(u_1,v_1)-Z_{t,\ell,j,j'}(u_2,v_2)\|_{\psi_1} \lesssim \|\varepsilon_{t,j}(u_1)\varepsilon_{t+\ell,j'}(v_1) - \varepsilon_{t,j}(u_2)\varepsilon_{t+\ell,j'}(v_2)\|_{\psi_1} \nonumber\\
    &~~~~\lesssim \|\varepsilon_{t,j}(u_1) - \varepsilon_{t,j}(u_2)\|_{\psi_2} \times \|\varepsilon_{t+\ell,j'}(v_1)\|_{\psi_2} +\|\varepsilon_{t+\ell,j'}(v_1) - \varepsilon_{t+\ell,j'}(v_2)\|_{\psi_2} \times  \|\varepsilon_{t,j}(u_2)\|_{\psi_2}\nonumber\\
    &~~~~\leq C( |u_1-u_2|^{\kappa} + |v_1-v_2|^{\kappa}) =  C d\{(u_1,v_1),(u_2,v_2)\}
\end{align}
for any $(u_1,v_1),(u_2,v_2)\in\cU^2$. Following from Condition \ref{c.alpha}, \eqref{eq:tailprob.vep2} and Lemma \ref{tail_chang} with $(B_n,c_n,{r}_1,{r})=(d\{(u_1,v_1),(u_2,v_2)\}, \ell, 1, 1/3)$, we have 
\begin{align}\label{eq:diffvareps2}
 \big\| \bar{\vep}^{(1)}_{\ell,j,j'}(u_1,v_1)-\bar{\vep}^{(1)}_{\ell,j,j'}(u_2,v_2)  \big\|_2 \lesssim n^{-1/2}d\{(u_1,v_1),(u_2,v_2)\}\,.
\end{align}

 For $\|\bar{\vep}^{(2)}_{\ell,j,j'}(u_1,v_1)
-\bar{\vep}^{(2)}_{\ell,j,j'}(u_2,v_2)\|_2$, by  \eqref{eq:bareps.decomp}, we have 
\begin{align*} 
 & \,  \|\bar{\vep}^{(2)}_{\ell,j,j'}(u_1,v_1)
-\bar{\vep}^{(2)}_{\ell,j,j'}(u_2,v_2)\|_2  \notag\\
\lesssim &\,  \bigg\|\bigg\{\frac{1}{n_{\ell}}\sum_{t=1}^{n_\ell}\varepsilon_{t,j}(u_1)\bigg\}\bigg\{\frac{1}{n}\sum_{t=1}^n\varepsilon_{t,j'}(v_1)\bigg\} -  \bigg\{\frac{1}{n_{\ell}}\sum_{t=1}^{n_\ell}\varepsilon_{t,j}(u_2)\bigg\}\bigg\{\frac{1}{n}\sum_{t=1}^n\varepsilon_{t,j'}(v_2)\bigg\}\bigg\|_2  \notag\\
\lesssim&\,   \bigg\| \frac{1}{n_{\ell}}\sum_{t=1}^{n_\ell}\varepsilon_{t,j}(u_1) \bigg\|_4 \bigg\|\frac{1}{n}\sum_{t=1}^n\big\{\varepsilon_{t,j'}(v_1)-\varepsilon_{t,j'}(v_2)\big\}\bigg\|_4  \notag\\
&\, +   \bigg\| \frac{1}{n}\sum_{t=1}^{n}\varepsilon_{t,j'}(v_2) \bigg\|_4 \bigg\|\frac{1}{n_\ell}\sum_{t=1}^{n_\ell}\big\{\varepsilon_{t,j}(u_1)-\varepsilon_{t,j}(u_2)\big\}\bigg\|_4    \,.
\end{align*}
By Conditions \ref{c.alpha} and \ref{c.subgaussian}, and   Lemma \ref{tail_chang} with $(B_n,c_n,{r}_1,{r})=(1, 0, 2, 1/3)$, we obtain $\|n_\ell^{-1}\sum_{t=1}^{n_\ell}\varepsilon_{t,j}(u_1)\|_4\lesssim n^{-1/2}$ and $\|n^{-1}\sum_{t=1}^{n}\varepsilon_{t,j'}(v_2)\|_4\lesssim n^{-1/2}$. Moreover, similar to the proof of \eqref{eq:diffvareps2}, we have
\begin{align*}
    \bigg\|\frac{1}{n}\sum_{t=1}^n\big\{\varepsilon_{t,j'}(v_1)-\varepsilon_{t,j'}(v_2)\big\}\bigg\|_4 \lesssim &~ n^{-1/2}|v_1-v_2|^{\kappa}\,,\\
    \bigg\|\frac{1}{n_\ell}\sum_{t=1}^{n_\ell}\big\{\varepsilon_{t,j}(u_1)-\varepsilon_{t,j}(u_2)\big\}\bigg\|_4\lesssim &~ n^{-1/2}|u_1-u_2|^{\kappa}\,,
\end{align*}
which implies that 
\begin{align}\label{bound.bar22}
    \|\bar{\vep}^{(2)}_{\ell,j,j'}(u_1,v_1)
-\bar{\vep}^{(2)}_{\ell,j,j'}(u_2,v_2)\|_2 \lesssim n^{-1}d\{(u_1,v_1),(u_2,v_2)\}\,.
\end{align} 
By taking similar arguments as in the proof of \eqref{bound.bar22}, we can conclude that
\begin{align}\label{bound.I234}
\sum_{k =2}^{4} \|\bar{\vep}^{(k)}_{\ell,j,j'}(u_1,v_1)-\bar{\vep}^{(k)}_{\ell,j,j'}(u_2,v_2)  \|_2 \lesssim n^{-1}d\{(u_1,v_1),(u_2,v_2)\}\,.
\end{align} 
Combining \eqref{i16}, \eqref{eq:diffvareps2}, and \eqref{bound.I234} yields 
\begin{align*}
    [\mathbb{E}\{|\tilde\cG_r(u_1,v_1)-\tilde\cG_r(u_2,v_2)|^2\}]^{1/2} \lesssim d\{(u_1,v_1),(u_2,v_2)\} 
\end{align*}
for any $r\in [Lp^2]$ and $(u_1,v_1),(u_2,v_2)\in\cU^2$, which yields \eqref{2nd.moment}. Hence, we
complete the proof of Lemma \ref{lem.dis.Gaussian}. 
$\hfill \Box$

\subsection{Proof of Proposition \ref{pro.Tn.H0}}\label{proof.pro.Tn.H0}

\subsubsection{Proof of Proposition \ref{pro.Tn.H0}(i)}

Recall  $\tilde{T}_{n} = \max_{\ell\in[L]}\max_{j,j'\in[p]}\sup_{(u,v)\in\cU^2}n^{1/2} | \widetilde\Sigma^{(\ell)}_{jj'}(u,v)  |$, 
with $\widetilde\Sigma^{(\ell)}_{jj'}(u,v) = n_{\ell}^{-1}\sum_{t=1}^{n_\ell} \{\vep_{t,j}(u)\\-\bar{\vep}_j(u)\} \{\vep_{t+\ell,j'}(v)-\bar{\vep}_{j'}(v)\}$. Analogously to \eqref{Tn.dis.G},  define the discretized version of $\tilde{T}_{n}$ as
\begin{align}\label{Tn.dis}
      \tilde{T}_{n}^{{\rm dis }}=\max_{\ell\in[L]}\max_{j,j'\in[p]}\max_{(u,v)\in\cU_M^2}n^{1/2} | \widetilde\Sigma^{(\ell)}_{jj'}(u,v)  |\,.
\end{align}
To prove   Proposition \ref{pro.Tn.H0}(i), we need the following lemmas. The proofs of Lemmas \ref{lemma_tn_dis} and \ref{lemma_tn_dis_GA}  are given in Sections \ref{proof.lemma_tn_dis} and \ref{proof.lemma_tn_dis_GA}, respectively. Recall $M\asymp (np)^{C_M}$ for a sufficiently large constant $C_M>0$.

\begin{lemma}\label{lemma_tn_dis}
Under Conditions~{\rm\ref{c.alpha}} and~{\rm\ref{c.subgaussian}}, 
and the null hypothesis $H_0$,  we have  $|\tilde{T}_{n}-\tilde{T}^{\rm dis}_{n}| =O_{\rm p}(n^{-1})$ provided that $\log (np)  \ll n^{1/3}$.
\end{lemma}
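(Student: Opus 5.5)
The plan is to bound $|\tilde T_n-\tilde T_n^{\rm dis}|$ by the worst oscillation of the sample cross-autocovariance functions over the cells $B_m\times B_{m'}$. Since both statistics are maxima of absolute values,
\begin{align*}
|\tilde T_n-\tilde T_n^{\rm dis}|\leq n^{1/2}\max_{\ell\in[L]}\max_{j,j'\in[p]}\max_{m,m'\in[M]}\sup_{(u,v)\in B_m\times B_{m'}}\big|\widetilde\Sigma^{(\ell)}_{jj'}(u,v)-\widetilde\Sigma^{(\ell)}_{jj'}(b_m,b_{m'})\big|\,.
\end{align*}
We will use the crude product decomposition
\begin{align*}
&\{\vep_{t,j}(u)-\bar\vep_j(u)\}\{\vep_{t+\ell,j'}(v)-\bar\vep_{j'}(v)\}-\{\vep_{t,j}(b_m)-\bar\vep_j(b_m)\}\{\vep_{t+\ell,j'}(b_{m'})-\bar\vep_{j'}(b_{m'})\}\\
&~~=\{\text{increment in }u\}\times\{\text{level at }v\}+\{\text{level at }b_m\}\times\{\text{increment in }v\}\,.
\end{align*}
With this, the right-hand side above is at most $4n^{1/2}$ times
\[
\Big\{\max_{t,j}\sup_{u}|\vep_{t,j}(u)|\Big\}\cdot\Big\{\max_{t,j,m}\sup_{u\in B_m}|\vep_{t,j}(u)-\vep_{t,j}(b_m)|\Big\}\,.
\]
This avoids any mixing or centering argument: the averages over $t$ are dominated by the maxima, and the same holds for the centered terms $\bar\vep_j$.

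Both factors are then controlled by chaining, as in Lemma \ref{lemma1.prop1}.

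The first factor is handled via Condition \ref{c.subgaussian}. Part (ii) gives sub-Gaussianity at $u_0$, and part (i) gives sub-Gaussian increments in the metric $|u-v|^\kappa$. Lemma \ref{tail_multi}(i), with $\gamma=2$ and covering numbers $N_\epsilon\lesssim\epsilon^{-1/\kappa}$, together with a union bound over the $np$ pairs $(t,j)$, yields $\max_{t,j}\sup_u|\vep_{t,j}(u)|=O_{\rm p}\{(\log(np))^{1/2}\}$.

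The second factor is handled via the H\"older-type modulus bound of Lemma \ref{tail_multi}(ii). With some $\beta\in(0,1)$ (e.g.\ $\beta=1/2$), it gives $\sup_{u\neq v}|\vep_{t,j}(u)-\vep_{t,j}(v)|/|u-v|^{\kappa\beta}=O_{\rm p}$ uniformly, after a union bound costing only a $\log(np)$ factor in the exponent. Taking $x$ to be a large multiple of $\{\log(np)\}^{1/2}$, we obtain
\[
\max_{t,j,m}\sup_{u\in B_m}|\vep_{t,j}(u)-\vep_{t,j}(b_m)|=O_{\rm p}\big[\{\log(np)\}^{1/2}M^{-\kappa\beta}\big]\,.
\]
Alternatively, one can accept the cruder polynomial bound $(np)^2M^{-\kappa/2}$ of Lemma \ref{lemma1.prop1}, obtained by Markov's inequality.

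Combining, $|\tilde T_n-\tilde T_n^{\rm dis}|=O_{\rm p}\{n^{1/2}\log(np)\,M^{-\kappa\beta}\}$, or $O_{\rm p}\{n^{1/2}(np)^2\{\log(np)\}^{1/2}M^{-\kappa/2}\}$ with the cruder bound. Since $M\asymp(np)^{C_M}$ with $C_M$ as large as we like, either expression is $O(n^{-1})$ once $C_M\kappa\beta$ exceeds a fixed constant. The restriction $\log(np)\ll n^{1/3}$ is not really needed in this route. It only enters if one instead centers the products and uses Lemma \ref{tail_chang} (with $r=1/3$) to bound the sums of increments by mixing arguments; there the mixed tail term $\exp\{-C(nx)^{1/3}\}$ must be dominated after the union bound over $p^2M^2$ cells. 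I will keep that as a fallback.

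The main obstacle is the uniform-in-$(t,j)$ control of the local modulus of a non-Gaussian sub-Gaussian-increment process. This requires checking that Lemma \ref{tail_multi}(ii) applies with constants $c_{1n},c_{2n}$ uniform in $(t,j)$, so that the union bound over $np$ processes only inflates $x$ by $\{\log(np)\}^{1/2}$. The separability assumption in Condition \ref{c.subgaussian} is what legitimizes the suprema here. Everything else is bookkeeping with the polynomially large $M$.
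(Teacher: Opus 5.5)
Your argument is correct, and it takes a genuinely different and shorter route than the paper.

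\emph{The paper's route.} The paper splits the cell oscillation into the product term $\mathrm{I}_1$ and three mean-correction terms $\mathrm{I}_2$--$\mathrm{I}_4$. It treats $n^{1/2}n_\ell^{-1}\sum_t\{\vep_{t,j}(u_1)\vep_{t+\ell,j'}(v_1)-\vep_{t,j}(u_2)\vep_{t+\ell,j'}(v_2)\}$ as a single process on each cell $B_m\times B_{m'}$. Under $H_0$ these summed increments are centered with $\psi_1$-scale $d\{(u_1,v_1),(u_2,v_2)\}$, so the mixing Bernstein bound of Lemma~\ref{tail_chang} gives increment tails $\exp[-C\{x/d\}^{1/3}]$. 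Chaining via Lemma~\ref{tail_multi}(i) with $\gamma=1/3$, plus a union bound over the $Lp^2M^2$ cells, finishes the product term. The mean terms are handled in the same spirit, and that is where $\log(np)\ll n^{1/3}$ enters. This route exploits averaging over $t$: the normalized sum of centered increments is of order $d\lesssim M^{-\kappa}$, with no extra factor $n^{1/2}$.

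\emph{Your route.} Your pathwise bound discards that averaging. It is $n^{1/2}$ times a constant times $\max_{t,j}\sup_u|\vep_{t,j}(u)|$ times the cell modulus; the constant is $8$ rather than $4$ once both factors carry their centering, which is immaterial. You also accept the crude $(np)^2$ in Lemma~\ref{lemma1.prop1}. None of this matters here. $M\asymp(np)^{C_M}$ with $C_M$ at your disposal, and the later Gaussian approximation sees $M$ only through $\log(pM)\lesssim\log p$. Taking $C_M>8/\kappa$ absorbs everything and even gives $o_{\rm p}(n^{-1})$.

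\emph{What your route buys.} It uses only Condition~\ref{c.subgaussian}. Neither Condition~\ref{c.alpha}, nor $H_0$, nor $\log(np)\ll n^{1/3}$ is needed. It also reuses Lemma~\ref{lemma1.prop1}, whose second assertion is proved for arbitrary large $M_1$, so you may take $M_1=M$. This is exactly the device the paper itself uses for ${\rm R}_2$ and ${\rm R}_3$ in the proof of Proposition~\ref{thm.partial}.

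\emph{One precision on your sharper variant.} In Lemma~\ref{tail_multi}(ii) the first term is a Markov-type $O(1/x)$ bound, not an exponential one. So the union over the $np$ processes is not absorbed merely by inflating $x$ to a multiple of $\{\log(np)\}^{1/2}$. You must also take $D_1,D_2$ polynomially small in $np$, which then costs only a $\log(np)$ in the exponential term. This is harmless, and your fallback through Lemma~\ref{lemma1.prop1} already makes the proof complete.
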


\begin{lemma}\label{lemma_tn_dis_GA}
Assume $p\geq n^{\upsilon}$ for some  constant $\upsilon>0$. Under Conditions {\rm\ref{c.alpha}}--{\rm \ref{c.subgaussian}} and the null hypothesis $H_0$,   we have    $\sup_{x\in \mathbb{R}} |\mathbb{P}(\tilde{T}^{\rm dis}_{n}\leq x)-\mathbb{P}(\tilde{T}^{\rm dis,G}_{n}\leq x)|=o(1)$, provided that $\log p  \ll n^{2/21}$.
\end{lemma}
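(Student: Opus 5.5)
We need to prove Lemma \ref{lemma_tn_dis_GA}: a Gaussian approximation for the maximum of a high-dimensional vector of sums of dependent products, compared with its Gaussian analogue on the grid $\cU_M^2$.

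The plan is to recast $\tilde T_n^{\rm dis}$ as the sup-norm of a normalized sum of an $\alpha$-mixing sequence of $L\tilde p^2$-dimensional vectors, where $\tilde p=pM$, and then invoke the high-dimensional CLT for dependent data of \cite{ChangChenWu2023_supp}. Concretely, for each index $k\leftrightarrow(\ell,j,j',b_m,b_{m'})$ I set $Y_{t,k}=\vep_{t,j}(b_m)\vep_{t+\ell,j'}(b_{m'})$, which is centered under $H_0$. I also write
\[
n^{1/2}\widetilde\Sigma^{(\ell)}_{jj'}(b_m,b_{m'})=\frac{n^{1/2}}{n_\ell}\sum_{t=1}^{n_\ell}Y_{t,k}-R_{n,k},
\]
where $R_{n,k}$ gathers the mean-correction terms $\bar\vep^{(2)}$, $\bar\vep^{(3)}$ and $\bar\vep^{(4)}$ of \eqref{eq:bareps.decomp}. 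To use a common summation range, I pad with $t\le n_L$ and absorb the at most $L$ boundary terms, together with the factor $n^{1/2}/n_\ell$ versus $n^{-1/2}$, into the remainder.

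\textbf{Step 1 (remainder).} Show $\max_k|R_{n,k}|=O_{\rm p}\{n^{-1/2}\log(pM)\}$. Each piece is a product of two sample means, and each sample mean is of order $n^{-1/2}(\log pM)^{1/2}$ uniformly by Lemma \ref{tail_chang} with $r_1=2$ and a union bound over $pM$ indices. The boundary terms are $n^{-1/2}\max|Y_{t,k}|=O_{\rm p}\{n^{-1/2}\log(npM)\}$.

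\textbf{Step 2 (dependent CLT).} The vector sequence $\{(Y_{t,k})_k\}_t$ is $\alpha$-mixing with coefficients $\alpha(m-L)$, since it is a function of $(\bvep_t,\dots,\bvep_{t+L})$. This matches condition (ii) of Lemma \ref{tail_chang} with $c_n=L$. The coordinates are sub-exponential ($\psi_1$-bounded) by Condition \ref{c.subgaussian}. Condition \ref{c.bvar}, together with Step 1 showing that the covariance of the main sum differs from $\widetilde\bXi^{\rm dis}_n$ by $o(1)$ entrywise, gives variances bounded away from zero. Theorem 1 of \cite{ChangChenWu2023_supp} for $\alpha$-mixing sequences with exponentially decaying coefficients and sub-exponential tails then yields a Kolmogorov-distance bound for the max statistic of order $n^{-c}$ times a polylog in the dimension $L\tilde p^2$. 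Since $M\asymp(np)^{C_M}$ and $p\ge n^\upsilon$, we have $\log(L\tilde p^2)\lesssim\log p$, so this bound is $o(1)$ whenever $\log p\ll n^{2/21}$. The specific exponent $2/21$ is expected to come from their rate with $\gamma=1$ tails, roughly $n^{-1/9}(\log p)^{\cdot}$ style terms after blocking. The Gaussian vector with covariance equal to that of the main sum must also be compared with $\tilde{\boldsymbol\cG}^{\rm dis}\sim\cN(\0,\widetilde\bXi^{\rm dis}_n)$. I would handle this with the Gaussian comparison inequality (Chernozhukov et al.), using a max-entry covariance difference of $O\{n^{-1/2}\log p\}$.

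\textbf{Step 3 (combine).} Apply Nazarov's inequality (Condition \ref{c.bvar}) to $\tilde T^{\rm dis,G}_n$ to absorb the remainder: take $\delta=n^{-1/2}(\log p)^{2}$, so that $\delta\sqrt{\log p}=o(1)$. This converts the perturbation from Step 1 into an $o(1)$ Kolmogorov error.

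The main obstacle is Step 2: verifying the precise hypotheses of the dependent-data CLT at the dimension $L p^2M^2$ (a polynomially inflated dimension). The crux is tracking the rate to obtain exactly $\log p\ll n^{2/21}$. If the cited theorem requires a lower bound on variances of block sums rather than of the full sum, I would instead verify this through Condition \ref{c.bvar} plus the summability of the mixing coefficients.
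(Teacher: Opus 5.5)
Your proposal is correct and follows essentially the paper's route. You reduce $\tilde T_n^{\rm dis}$ to the uncentered product sums $n_L^{-1/2}\sum_{t\le n_L}\vep_{t,j}(b_m)\vep_{t+\ell,j'}(b_{m'})$ (the paper's $\dot{\bfeta}^{\rm dis}_t$), apply Theorem 1 of \cite{ChangChenWu2023_supp}, compare Gaussians via \cite{Cher2022_supp}, and absorb the $O_{\rm p}(n^{-1/2}\log p)$ remainder with Nazarov's inequality. The paper only differs in splitting your single remainder into two stages ($Z\to\check Z\to\dot Z$, with matching Gaussian comparisons $V\to\check V\to\dot V$).

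One fix is needed: the variance lower bound required by the dependent-data CLT, and the covariance-difference bound, cannot be read off from the $O_{\rm p}$ statement in your Step 1. They need coordinatewise second-moment bounds on the remainder. These follow from the moment part of Lemma \ref{tail_chang} together with Cauchy--Schwarz, and give a max-entry covariance error of order $n^{-1/2}$, as in the paper.
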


For any $s >0$ and $x\in \mathbb{R}$, it holds that
\begin{align*}
    &\bbP(\tilde{T}_n\leq x) 
\leq \bbP(\tilde{T}_n^{\rm dis}\leq x+s) + \mathbb{P}(|\tilde{T}_{n}-\tilde{T}^{\rm dis}_{n}| > s)\\
&~~~\leq \mathbb{P}(\tilde{T}^{\rm dis,G}_{n}\leq x+s) + \sup_{y\in \mathbb{R}} |\mathbb{P}(\tilde{T}^{\rm dis}_{n}\leq y)-\mathbb{P}(\tilde{T}^{\rm dis,G}_{n}\leq y)|+\mathbb{P}(|\tilde{T}_{n}-\tilde{T}^{\rm dis}_{n}| > s)\\
&~~~\leq \bbP(\tilde{T}_n^{\rm G}\leq x) + \sup_{y\in\mathbb{R}}\bbP( |\tilde{T}^{\rm dis,G}_{n}-y|\leq s )+ \sup_{y\in \mathbb{R}} |\mathbb{P}(\tilde{T}^{\rm dis,G}_{n}\leq y)-\mathbb{P}(\tilde{T}^{\rm  G}_{n}\leq y)|\\
&~~~~~~+ \sup_{y\in \mathbb{R}} |\mathbb{P}(\tilde{T}^{\rm dis}_{n}\leq y)-\mathbb{P}(\tilde{T}^{\rm dis,G}_{n}\leq y)|+\mathbb{P}(|\tilde{T}_{n}-\tilde{T}^{\rm dis}_{n}| > s)\,.
\end{align*}
Likewise, we can also obtain the reverse inequality. Then we can conclude that 
\begin{align*}
    \sup_{x\in \mathbb{R}}|\mathbb{P}(\tilde{T}_{n} \leq x)-\mathbb{P}(\tilde{T}_{n}^{\rm G}\leq x)|\leq& \sup_{x\in \mathbb{R}}\bbP( |\tilde{T}^{\rm dis,G}_{n}-x|\leq s )+ \sup_{x\in \mathbb{R}} |\mathbb{P}(\tilde{T}^{\rm dis,G}_{n}\leq x)-\mathbb{P}(\tilde{T}^{\rm  G}_{n}\leq x)|\\
    &+\sup_{x\in \mathbb{R}} |\mathbb{P}(\tilde{T}^{\rm dis}_{n}\leq x)-\mathbb{P}(\tilde{T}^{\rm dis,G}_{n}\leq x)|+\mathbb{P}(|\tilde{T}_{n}-\tilde{T}^{\rm dis}_{n}| > s)\,.
\end{align*}
Taking $s=Cn^{-1}(\log p)^{1/2}$ for a sufficiently large constant $C>0$, 
\eqref{bound.4} yields
\begin{align*}
    \sup_{x\in \mathbb{R}}\bbP( |\tilde{T}^{\rm dis,G}_{n}-x|\leq s )   =o(1)\,,
\end{align*}
provided that $\log p\ll n$. Finally, combining this result with Lemmas~\ref{lem.dis.Gaussian}, 
\ref{lemma_tn_dis}, and~\ref{lemma_tn_dis_GA} proves 
Proposition~\ref{pro.Tn.H0}(i).  $\hfill \Box$

\subsubsection{Proof of Proposition \ref{pro.Tn.H0}(ii)}

Recall $\tilde{T}_{n}^{{\rm G*}}=\max_{r\in [Lp^2]}\sup_{(u,v)\in\cU^2}|\tilde{\cG}^*_r(u,v)|$, where
\begin{align*}
    \tilde{\boldsymbol\cG}^*(u,v) = \{\tilde{\cG}^*_1(u,v),\ldots,\tilde{\cG}^*_{Lp^2}(u,v)\}^{\T}= \frac{1}{n_L^{1/2}}\sum_{t=1}^{{n_L}}\varrho_t\{\tilde{\bfeta}_{t}(u,v)-\bar{\tilde{\bfeta}}(u,v)\}\,.
\end{align*}
Here, $\boldsymbol\varrho=(\varrho_1,\dots,\varrho_{{n_L}})^{\T}\sim \cN({\bf 0},\bXi)$  with $\Xi_{ij}= \mathcal{W}\{(i-j)/{b_{n}}\}$,   $
\tilde{\bfeta}_t(u,v) = \{({\rm vec} [\{\bvep_t(u)-\bar\bvep(u)\}\{\bvep_{t+1}(v)-\bar\bvep(v)\}^{\T}])^{\T},  \ldots,  ({\rm vec}[\{\bvep_t(u)-\bar\bvep(u)\}\{\bvep_{t+L}(v)-\bar\bvep(v)\}^{\T}] )^{\T}\}^{\T} $, and $\bar{\tilde{\bfeta}}(u,v)=n_L^{-1}\sum_{t=1}^{{n_L}}\tilde{\bfeta}_{t}(u,v)$. Analogously to \eqref{Tn.dis.G},  define the discretized version of $\tilde{T}_{n}^{{\rm G*}}$ as
\begin{align}\label{Tn.dis.Gstar}
      \tilde{T}_{n}^{{\rm dis ,G*}}=\max_{r\in [Lp^2]}\max_{(u,v)\in\cU_M^2}|\tilde{\cG}^*_r(u,v)|\,.
\end{align}
To prove Proposition \ref{pro.Tn.H0}(ii), we need the following lemmas.  The proofs of Lemmas \ref{lem:diff.Tn.Gstar} and \ref{disGAstar}  are given in Sections \ref{proof.lem:diff.Tn.Gstar} and \ref{proof.disGAstar}, respectively.   Recall $\widetilde{\mathcal{D}}_n=\{\bvep_t(\cdot)\}_{t\in [n]}$ and $b_n \asymp n^{\rho}$.

\begin{lemma}\label{lem:diff.Tn.Gstar}
Assume $p\geq n^{\upsilon}$ for some   constant $\upsilon>0$. Under Conditions   {\rm \ref{c.subgaussian}} and {\rm \ref{c.kernelF}}, it holds that 	
$	\mathbb{P}\{|\tilde{T}^{\rm G*}_{n}-\tilde{T}_{n}^{\rm dis,G*}|>C n^{-1}(\log {p})^{1/2}\,|\,\widetilde{\mathcal{D}}_n\}=o_{\rm p}(1)$ for some  constant $C>0$.
\end{lemma}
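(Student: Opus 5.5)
The approach mirrors the proof of Lemma~\ref{lem.dis.Gaussian}, applied conditionally on $\widetilde{\mathcal D}_n$. Conditional on $\widetilde{\mathcal D}_n$, each $\tilde\cG^*_r(\cdot,\cdot)$ is a centered Gaussian field on $\cU^2$, and
\[
|\tilde T^{\rm G*}_n-\tilde T^{\rm dis,G*}_n|\le \max_{r\in[Lp^2]}\max_{m,m'\in[M]}\sup_{(u,v)\in B_m\times B_{m'}}|\tilde\cG^*_r(u,v)-\tilde\cG^*_r(b_m,b_{m'})| .
\]
The key step is a conditional increment bound of the form
\[
[\mathbb E\{|\tilde\cG^*_r(u_1,v_1)-\tilde\cG^*_r(u_2,v_2)|^2\,|\,\widetilde{\mathcal D}_n\}]^{1/2}\le \Lambda_n\, d\{(u_1,v_1),(u_2,v_2)\},
\]
with $d=|u_1-u_2|^\kappa+|v_1-v_2|^\kappa$ and a random Lipschitz factor $\Lambda_n$ that is polynomial in $(n,p)$ with probability tending to one.

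To obtain $\Lambda_n$, write the conditional variance of the increment as $n_L^{-1}\mathbf{w}^\T\bXi\mathbf{w}$, where $w_t$ is the centered increment of $\tilde\eta_{t,r}$. This is at most $\lambda_{\max}(\bXi)\,n_L^{-1}\sum_t w_t^2$. Under Condition~\ref{c.kernelF}, $\lambda_{\max}(\bXi)\le\max_i\sum_j|\mathcal W\{(i-j)/b_n\}|\lesssim b_n\le n$.

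Next, bound the increment of $\tilde\eta_{t,r}$ by products of sup-norms and H\"older moduli of $\vep_{t,j}$. Here Lemma~\ref{lemma1.prop1} is too coarse, since it discretizes rather than giving a H\"older modulus. Instead, I would apply Lemma~\ref{tail_multi}(ii) to each $\vep_{t,j}$, with increment tails from Condition~\ref{c.subgaussian}(i) ($\gamma=2$, metric $|u-v|^\kappa$) and any fixed $\beta\in(0,1)$. Combined with a union bound over $t,j$, this gives
\[
\max_{t,j}\sup_{u\ne v}\frac{|\vep_{t,j}(u)-\vep_{t,j}(v)|}{|u-v|^{\kappa\beta}}=O_{\rm p}\{(\log np)^{1/2}\},
\]
and also $\max_{t,j}\sup_u|\vep_{t,j}(u)|=O_{\rm p}\{(\log np)^{1/2}\}$. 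Consequently $\Lambda_n=O_{\rm p}\{n^{1/2}\log(np)\}$ with respect to the metric $d^\beta$.

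Define $\mathcal E_n$ as the $\widetilde{\mathcal D}_n$-measurable event on which $\Lambda_n\le (np)^{C}$; it has probability tending to one. On $\mathcal E_n$, apply Lemma~\ref{meer} conditionally, with exponents $\kappa\beta$ and boxes of half-width $(2M)^{-1}$, and take a union bound over the $Lp^2M^2$ cells. This gives
\[
\mathbb P(|\tilde T^{\rm G*}_n-\tilde T^{\rm dis,G*}_n|>x\,|\,\widetilde{\mathcal D}_n)\lesssim p^2M^2\exp\{-CM^{2\kappa\beta}x^2/\Lambda_n^2\}.
\]
Lemma~\ref{meer} holds with universal constants once its constant $c_1$ is normalised to $\Lambda_n$ by rescaling the field.

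Set $x=Cn^{-1}(\log p)^{1/2}$. Since $M\asymp(np)^{C_M}$ with $C_M$ large and $p\ge n^\upsilon$, we have $M^{2\kappa\beta}x^2/\Lambda_n^2\ge (np)^{c}$, which dominates $\log(p^2M^2)$. The conditional probability is therefore $o(1)$ on $\mathcal E_n$, and hence $o_{\rm p}(1)$ overall.

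The main obstacle is obtaining a pathwise (random) H\"older constant for the $\vep_{t,j}$ uniformly over $t,j$, since the conditional covariance depends on the data. My first attempt would be Lemma~\ref{tail_multi}(ii) as described; if that is awkward, a fallback is to use Lemma~\ref{lemma1.prop1} and direct chaining for the Gaussian field on finer grids, which costs only polynomial factors absorbed by large $C_M$.
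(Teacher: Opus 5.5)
Your proposal is correct and follows the paper's proof. Both arguments reduce to cellwise increments, bound the conditional increment variance by $O(b_n)$ times the squared centered increments of $\tilde\eta_{t,r}$, restrict to a $\widetilde{\mathcal D}_n$-measurable high-probability event, apply Lemma~\ref{meer} conditionally after rescaling the field, and finish with a union bound using $C_M$ large.

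The only variation is in how the random modulus is controlled. The paper applies Lemma~\ref{tail_multi}(ii) with $\gamma=1$, $\beta=1/2$ directly to the product processes $\mathring{\tilde\eta}_{t,r}$ on each small cell, obtaining a constant bound on the $d^{1/2}$-H\"older ratio. You instead chain the sub-Gaussian scalars $\vep_{t,j}$ globally with $\gamma=2$ and use the product rule, which yields a polynomially large $\Lambda_n$ that is absorbed by taking $C_M$ large; both routes work.
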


\begin{lemma}\label{disGAstar}
Assume $p\geq n^{\upsilon}$ for some   constant $\upsilon>0$. Under Conditions {\rm \ref{c.alpha}}--{\rm \ref{c.kernelF}}, it holds that $\sup_{x\in \mathbb{R}} |\mathbb{P}(\tilde{T}^{\rm dis,G*}_n\leq x\,|\,\widetilde{\mathcal{D}}_n) -\mathbb{P}(\tilde{T}^{\rm dis,G}_{n}\leq x)|=o_{\rm p}(1)$,  provided that $0<\rho<(\vartheta-1)/(3\vartheta-2)$ and   $\log p \ll n^\iota$ for some $\iota>0$ depending only on $(\rho,\vartheta)$.
\end{lemma}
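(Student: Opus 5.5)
The plan is to reduce the conditional comparison to a comparison of covariance matrices of two finite-dimensional Gaussian vectors. On the grid $\cU_M^2$, $\tilde T_n^{\rm dis,G*}=|\tilde{\boldsymbol\cG}^{\rm dis,*}|_{\max}$, where, conditional on $\widetilde{\mathcal D}_n$, $\tilde{\boldsymbol\cG}^{\rm dis,*}\sim\cN(\0,\widetilde\bXi^{\rm dis,*}_n)$ is of dimension $L\tilde p^2$ with $\tilde p=pM$. Its covariance is the kernel estimator $\widetilde\bXi^{\rm dis,*}_n=\sum_{i}\mathcal W(i/b_n)\widetilde{\bH}_i^{\rm dis}$ built from the grid evaluations of $\tilde\bfeta_t$. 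Likewise, $\tilde T_n^{\rm dis,G}=|\tilde{\boldsymbol\cG}^{\rm dis}|_{\max}$ with $\tilde{\boldsymbol\cG}^{\rm dis}\sim\cN(\0,\widetilde\bXi^{\rm dis}_n)$, where $\widetilde\bXi^{\rm dis}_n$ is the covariance of $\tilde\bxi_n$ on the grid. By the Gaussian comparison inequality of \cite{Cher2022_supp} (or Lemma 3.1 of \cite{Cher2013}), together with Condition~\ref{c.bvar} which bounds the diagonal of $\widetilde\bXi^{\rm dis}_n$ away from zero, we get
\[
\sup_{x}\big|\mathbb P(\tilde T_n^{\rm dis,G*}\le x\,|\,\widetilde{\mathcal D}_n)-\mathbb P(\tilde T_n^{\rm dis,G}\le x)\big|\lesssim \big\{\|\widetilde\bXi^{\rm dis,*}_n-\widetilde\bXi^{\rm dis}_n\|_{\max}\log^2 p\big\}^{1/2}.
\]
Here we use $\log(L\tilde p^2)\lesssim\log p$, which follows from $M\asymp(np)^{C_M}$ and $p\ge n^{\upsilon}$. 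It therefore suffices to show $\|\widetilde\bXi^{\rm dis,*}_n-\widetilde\bXi^{\rm dis}_n\|_{\max}=O_{\rm p}\{n^{-c_1}(\log p)^{c_2}\}$ for some $c_1>0$ depending on $(\rho,\vartheta)$. This is the bound \eqref{disSigma} invoked elsewhere, and $\log p\ll n^{\iota}$ then makes the right-hand side vanish.

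For the max-norm bound, I split the error into a bias part and a stochastic part. First, $\widetilde\bXi^{\rm dis}_n$ equals $\sum_{|i|<n_L}(1-|i|/n_L)\bGamma_i$ up to $O(n^{-1/2})$ corrections from the centering by $\bar\bvep$ and the $n/n_\ell$ factors, where $\bGamma_i$ is the population autocovariance of $\tilde\bfeta_t$ at lag $i$. Under Condition~\ref{c.alpha} and the sub-Gaussian bounds of Condition~\ref{c.subgaussian}, Davydov's inequality gives $\|\bGamma_i\|_{\max}\lesssim e^{-C|i|}$. The bias $\sum_i|1-\mathcal W(i/b_n)|\,\|\bGamma_i\|_{\max}$ is then $O(b_n^{-1})$ by the Lipschitz property of $\mathcal W$ near $0$ (Condition~\ref{c.kernelF}), and the $|i|/n_L$ term contributes $O(n^{-1})$.

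The stochastic part is $\sum_i\mathcal W(i/b_n)(\widetilde{\bH}_i^{\rm dis}-\mathbb E\widetilde{\bH}_i^{\rm dis})$. I truncate the lags at $|i|\le \ell_n=n^{\tau}$, with $\tau$ chosen so that $\rho<\tau<1$. For the tail lags $|i|>\ell_n$, I use $|\mathcal W(x)|\lesssim|x|^{-\vartheta}$ together with crude bounds $\max_{t}|\tilde\eta_{t,k}|=O_{\rm p}(\log(np))$ from Lemma~\ref{lemma1.prop1}; this gives a contribution of order $b_n^{\vartheta}\ell_n^{1-\vartheta}\log^2 p$, plus an exponentially small mean. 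For each fixed $|i|\le\ell_n$, the entries of $\widetilde{\bH}_i^{\rm dis}$ are averages of products of four sub-Gaussian factors. Each product is therefore sub-Weibull with $r_1=1/2$, and the sequence is $\alpha$-mixing with shift $c_n=|i|+L$. Lemma~\ref{tail_chang} combined with a union bound over $L^2\tilde p^4\ell_n$ entries gives a max deviation of order $\{(1+\ell_n)\log p/n\}^{1/2}$ plus a sub-Weibull term of order $\ell_n(\log p)^{C}/n$. Multiplying by $\sum_i|\mathcal W(i/b_n)|\lesssim b_n$ and adding the centering terms yields
\[
\|\widetilde\bXi^{\rm dis,*}_n-\widetilde\bXi^{\rm dis}_n\|_{\max}=O_{\rm p}\big\{b_n^{-1}+b_n\ell_n^{1/2}n^{-1/2}(\log p)^{C}+b_n^{\vartheta}\ell_n^{1-\vartheta}(\log p)^{2}\big\}.
\]

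The main obstacle is this last balancing step: the three terms must all be polynomially small simultaneously. Substituting $b_n=n^{\rho}$ and $\ell_n=n^{\tau}$, we need $\rho+\tau/2<1/2$ and $\vartheta\rho<(\vartheta-1)\tau$. Such a $\tau$ exists precisely when $\vartheta\rho/(\vartheta-1)<(1-2\rho)$, i.e. $\rho<(\vartheta-1)/(3\vartheta-2)$, which matches the stated restriction and serves as a consistency check. The bookkeeping of the $\log p$ powers then determines $\iota$. A secondary difficulty is the dependence induced by centering with $\bar{\tilde\bfeta}$ and $\bar\bvep$; I would handle it by expanding as in \eqref{eq:bareps.decomp} and showing that the extra terms are $O_{\rm p}(b_n n^{-1}\log p)$.
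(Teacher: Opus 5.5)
Your proposal is correct and follows the paper's skeleton. The Gaussian comparison of Chernozhukov et al.\ (2022), together with Condition~\ref{c.bvar}, reduces everything to the max-norm bound \eqref{disSigma}. That bound is then split into kernel bias, a lag-truncated stochastic part, and sample-mean centering corrections.

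The difference is in the core step. The paper passes to the uncentered $\dot\bfeta^{\rm dis}_t$ via \eqref{var_dis}--\eqref{var_dis2}. It then imports $\|\dot\bXi^{\rm dis}_n-\dot\bXi^{\rm dis,*}_n\|_{\max}=O_{\rm p}\{n^{-c_1'}(\log p)^{c_2'}\}+O(n^{-\rho})$ from Theorem~11 of Chang, Chen and Wu (2024), and spends its own effort on the centering terms $Q^{(a)}_{t,k}$. You instead prove that kernel-consistency bound directly. You use Davydov plus the Lipschitz property of $\mathcal W$ for the $O(b_n^{-1})$ bias, Lemma~\ref{tail_chang} with shift $c_n=|i|+L$ on lags $|i|\le n^{\tau}$, and crude $O_{\rm p}(\log^2 p)$ bounds on the remaining lags. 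This is self-contained and shows exactly why $\rho<(\vartheta-1)/(3\vartheta-2)$ is needed: it is the condition for some $\tau$ with $\vartheta\rho/(\vartheta-1)<\tau<1-2\rho$ to exist. The paper's route is shorter but leaves this constraint inside the citation.

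Two minor corrections.

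First, your claimed $O_{\rm p}(b_n n^{-1}\log p)$ for the centering terms is too optimistic. A cross term such as the paper's $H_{1,2}$ equals $\bar\vep^{\rm dis}_{j_2'}$ times a kernel-weighted average of triple products $\vep^{\rm dis}_{t,j_1}\vep^{\rm dis}_{t+\ell_1,j_1'}\vep^{\rm dis}_{t-i,j_2}$. The means of these triple products need not vanish, since neither a third-moment condition nor $H_0$ is assumed, so the term can be of exact order $n^{-1/2}$. The crude bound $O_{\rm p}\{b_n n^{-1/2}(\log p)^2\}$ still suffices, because $\rho<(\vartheta-1)/(3\vartheta-2)<1/3$.

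Second, weak stationarity does not make fourth moments shift invariant, so $\bGamma_i$ is not well defined as you use it. Write the target as $n_L^{-1}\sum_{t,s}\cov(\dot\bfeta^{\rm dis}_t,\dot\bfeta^{\rm dis}_s)$ rather than $\sum_i(1-|i|/n_L)\bGamma_i$. The bias argument is unchanged, since each covariance is still $O(e^{-C|t-s|})$.
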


Recall $\tilde p=pM$, $M \asymp (np)^{C_M}$ and $p\geq n^{\upsilon}$. By Lemma \ref{disGAstar}, Condition \ref{c.bvar} and the Nazarov's inequality 
\citep[see, e.g., Lemma~A.1 of][]{Cher2017_supp}, it holds that
\begin{align*}
\mathbb{P}(\tilde{T}^{\rm dis,G}_{n}\leq x)
=&\, \mathbb{P}(\tilde{T}^{\rm dis,G}_{n}\leq x-s)+ \mathbb{P}(x-s<\tilde{T}^{\rm dis,G}_{n}\leq x)\\
\leq&\, \mathbb{P}(\tilde{T}^{\rm dis,G*}_{n}\leq x-s\,|\,\widetilde{\mathcal{D}}_n)+ Cs(\log   {p} )^{1/2}+o_{\rm p}(1) \\
\leq&\, \mathbb{P}(\tilde{T}^{\rm G*}_{n}\leq x\,|\,\widetilde{\mathcal{D}}_n)+\mathbb{P}(|\tilde{T}^{\rm G*}_{n}-\tilde{T}_{n}^{\rm dis,G*}|>s\,|\,\widetilde{\mathcal{D}}_n)+ Cs(\log   {p} )^{1/2}+o_{\rm p}(1) 
\end{align*}
for any $x\in \mathbb{R}$ and $s>0$, provided that $0<\rho<(\vartheta-1)/(3\vartheta-2)$ and   $\log p \ll n^\iota$ for some $\iota>0$ depending only on $(\rho,\vartheta)$. Likewise, we can also obtain the reverse inequality.
Hence,
\begin{align*}
    \sup_{x\in \mathbb{R}}|\mathbb{P}(\tilde{T}_{n}^{\rm G*}\leq x\,|\,\widetilde{\mathcal{D}}_n)-\mathbb{P}(\tilde{T}^{\rm dis,G}_{n}\leq x)| \leq \mathbb{P}(|\tilde{T}^{\rm G*}_{n}-\tilde{T}_{n}^{\rm dis,G*}|>s\,|\,\widetilde{\mathcal{D}}_n) + Cs(\log   {p} )^{1/2}+o_{\rm p}(1)  \,.
\end{align*}
Choose $s=C'n^{-1}(\log p)^{1/2}$ for some sufficiently large constant
$C'>0$. Then, by Lemma~\ref{lem:diff.Tn.Gstar}, 
\begin{align*}
    \sup_{x\in \mathbb{R}}|\mathbb{P}(\tilde{T}_{n}^{\rm G*}\leq x\,|\,\widetilde{\mathcal{D}}_n)-\mathbb{P}(\tilde{T}^{\rm dis,G}_{n}\leq x)| = o_{\rm p}(1)  \,,
\end{align*}
provided that $0<\rho<(\vartheta-1)/(3\vartheta-2)$ and   $\log p \ll n^\iota$ for some $\iota>0$ depending only on $(\rho,\vartheta)$. This, together with Lemma~\ref{lem.dis.Gaussian}, yields Proposition~\ref{pro.Tn.H0}(ii). $\hfill \Box$

\section{Proofs of Technical Lemmas}

\subsection{Proof of Lemma \ref{tail_chang}} \label{sec:lemmaA3}
By Lemma L1 of \cite{ChangChenWu2023_supp} with $(\tilde n, r_1, r_2, r, \tilde B_{\tilde n}, \tilde{L}_{\tilde n}, \tilde{j}_{\tilde n})=(n,  r_1, 1,  r, B_n, 1, c_n)$,
it holds that $\mathbb{P}(|\sum_{t=1}^{n}Z_t|\geq x)\lesssim \exp\{-C(1+c_n)^{-1}B_n^{-2}n^{-1}x^2\}+\exp\{ -C(1+c_n)^{-{r}}B_n^{-{r}}x^{{r}} \}$ for any $x\geq 0$.
Notice that $\mathbb{E}(|X|^s)=s\int_{0}^{\infty}x^{s-1}\mathbb{P}(|X|>x)\,\md x$ for any finite positive integer $s$. If $c_n$ is fixed, following some basic calculations,    $\mathbb{E}(|\sum_{t=1}^{n}Z_t|^s)\lesssim B_n^sn^{s/2}$, which implies  $\|\sum_{t=1}^{n}Z_t\|_s\lesssim B_nn^{1/2}$. We have completed the proof of Lemma \ref{tail_chang}.
\hfill $\Box$

\subsection{Proof of Lemma \ref{tail_multi}}\label{sec:lemma_tail_multi}   
Throughout the proof, we use $C_{\gamma} \in (0,\infty)$ and $C_{\gamma,\beta} \in (0,\infty)$ to denote two positive constants depending only on $\gamma$ and $(\gamma,\beta)$, respectively, which may differ from line to line.    The proof follows the chaining argument used
in \cite{Ledoux1991_supp} and \cite{Talagrand2014_supp}. By separability, the relevant suprema may be taken over a countable separability set. Hence it is enough to prove the bounds uniformly over
finite subsets of this set and then let an increasing sequence of such
finite subsets exhaust it. We therefore assume throughout the proof that $\mathcal T$ is finite.   Recall $D = \sup_{\bs,\bt\in \cT} d(\bs,\bt) \leq C$. Let $\ell_0 $ be the largest integer $\ell$ such that $2^{-\ell} \geq  2D$, and $\ell_1$
be the smallest integer $\ell$ such that the open balls $B(\bs, 2^{-\ell})$ with center $\bs$ and
radius $2^{-\ell}$ in the metric $d$ are reduced to exactly one point, i.e.,
\begin{align*}
\ell_1 := \min \bigg\{ \ell \in \mathbb{Z}:  2^{-\ell} < \inf_{\bs,\bt\in\cT,\bs\neq \bt } d(\bs,\bt)  \bigg\}\,.
\end{align*}
For each $\ell\in\{\ell_0,\ldots,\ell_1\}$, let $\cT_{\ell}\subset \cT$  of cardinality $N_{2^{-\ell}}=N(\cT, d;2^{-\ell})$ such that the balls $\{B(\bs,2^{-\ell}): \bs\in \cT_{\ell}\}
$	form a covering of $\cT$. By induction, for each $\ell\in\{\ell_0+1,\ldots,\ell_1\}$, define map $h_{\ell}: \cT_{\ell} \rightarrow \cT_{\ell-1} $
such that $\bs\in B(h_{\ell}(\bs),2^{-\ell+1})$. Set then $k_{\ell}:\cT \rightarrow \cT_{\ell}$ by $k_{\ell}=h_{\ell+1}\tilde\circ\cdots\tilde\circ h_{\ell_1}$ for each $\ell\in\{\ell_0,\ldots,\ell_1\}$ ($k_{\ell_1}={\rm identity}$). Here, `$\tilde\circ$' denotes the composition of maps. Since $ 2^{-\ell_0}>D$, one open ball of radius $2^{-\ell_0}$
covers $\mathcal T$. Hence $N_{2^{-\ell_0}}=1$, and we take
$\mathcal T_{\ell_0}=\{\bs_0\}$.   From the above construction, we obtain the following properties:
\begin{align}
& d\{k_{\ell}(\bs),\bs\} \leq  2^{-\ell +1} ~~ \mbox{for all}~~\ell \in \{\ell_0,\ldots,\ell_1 \}~ \mbox{and}~ \bs\in\cT\, ; \label{P2} \\
& d\{ k_{\ell+1}(\bs), k_{\ell}(\bs) \} \leq 2^{-\ell}~~ \mbox{for all}~~\ell \in \{\ell_0,\ldots,\ell_1-1\}~ \mbox{and}~ \bs\in\cT\,;\label{P3}\\
& {\rm Card} \{  ( k_{\ell+1}(\bs), k_{\ell}(\bs) ) :\bs\in\cT  \} \leq N_{2^{-(\ell+1)}}~~\mbox{for all}~~ \ell \in \{\ell_0,\ldots,\ell_1-1\}\,. \label{P4}
\end{align}
In what follows, we provide the proofs of Lemmas \ref{tail_multi}(i) and \ref{tail_multi}(ii) in Sections \ref{proof.a} and \ref{proof.b}, respectively.

\subsubsection{Proof of Lemma \ref{tail_multi}(i)} \label{proof.a}

Notice that, for any $\bs\in\cT$,
\begin{align*}
    Z(\bs) - Z(\bs_0) = \sum_{\ell=\ell_0 }^{\ell_1-1}[ Z\{k_{\ell+1}(\bs)\}-Z\{k_{\ell }(\bs)\}]\,,
\end{align*} 
which implies that
\begin{align}\label{target.1}
    \sup_{\bs,\bt\in\cT}|Z(\bs)-Z(\bt)|\leq 2\sum_{\ell=\ell_0 }^{\ell_1-1}\sup_{\bs\in\cT}|Z\{k_{\ell+1}(\bs)\}-Z\{k_{\ell}(\bs)\}|\,.
\end{align} 
For some  integer $\widetilde{M}\in \{\ell_0 ,\ldots,\ell_1-1\}$  to be determined later, we have
\begin{align}\label{div.1}
&\sum_{\ell=\ell_0 }^{\ell_1-1}\sup_{\bs\in\cT}|Z\{k_{\ell+1}(\bs)\}-Z\{k_{\ell }(\bs)\}|\nonumber \\
&~~~=  \underbrace{\sum_{\ell=\ell_0 }^{\widetilde{M}}\sup_{\bs\in\cT}|Z\{k_{\ell+1}(\bs)\}-Z\{k_{\ell }(\bs)\}|}_{\rm I_1} +\underbrace{\sum_{\ell=\widetilde{M}+1}^{\ell_1-1}\sup_{\bs\in\cT}|Z\{k_{\ell+1}(\bs)\}-Z\{k_{\ell }(\bs)\}|}_{\rm I_2}\,.
\end{align}
Combining the union bound with \eqref{target.1} and \eqref{div.1}, we have, for any $x > 0$,
\begin{align}\label{div.or}
& 	\mathbb{P}\bigg\{  \sup_{\bs,\bt\in\cT}|Z(\bs)-Z(\bt)| > x\bigg\} \leq    \bbP\biggl({\rm I_1} > \frac{x}{4} \biggr)+\bbP\biggl({\rm I_2} > \frac{x}{4} \biggr)\,.
\end{align}
We claim that, for any $x > 0$,
\begin{align}
\bbP\biggl({\rm I_1} > \frac{x}{4} \biggr)  	\lesssim&~ c_{1n}(\widetilde{M}-\ell_0+1)N_{2^{-(\widetilde{M}+1)}} \exp\biggl\{ - c_{2n} C_{\gamma}\biggl(\frac{ x}{D} \biggr)^{\gamma}  \biggr\}\,,\label{Delta1}\\
\bbP\biggl({\rm I_2} > \frac{x}{4} \biggr)  	\lesssim&~ \frac{ c_{1n} C_{\gamma} \int_{0}^{2^{-(\widetilde{M}+1)}}\{ \log (e  N_{\epsilon}    )\}^{1/\gamma} \,\mbox{d}\epsilon}{c_{2n}^{1/\gamma}  x}\,.\label{Delta2}
\end{align}
For any $D_1\in(0,D\wedge 1)$, let 
$m_1=\lfloor-\log_2D_1\rfloor$. Then   
$D_1\leq 2^{-m_1}<2D_1$. Since $  D_1<D$ and $2^{-\ell_0}\ge 2D$, we have
$
2^{-m_1}<2D_1<2D\le 2^{-\ell_0}$, 
and hence $m_1\ge \ell_0+1$.   We choose $\widetilde M= \min\{ m_1-1, \ell_1-1\}$  so that $\widetilde M \in [\ell_0,\ell_1-1]$. Recall $\Lambda_u=1+\log_2(1/u)$. Since $D$ is upper bounded by a constant, it holds that   $\ell_0\geq -c_0$ for some constant $c_0>0$ and $\widetilde{M}-\ell_0+1 \lesssim \Lambda_{D_1}$. Notice that ${\rm I}_2=0$ when $\widetilde M = \ell_1-1$. Combining \eqref{div.or}--\eqref{Delta2} yields
\begin{align}\label{fin.1}
&\mathbb{P}\bigg\{ \sup_{\bs,\bt\in\cT}|Z(\bs)-Z(\bt)| > x\biggr\} \lesssim   \frac{c_{1n} C_{\gamma,1 }}{c_{2n}^{1/\gamma} x} \int_{0}^{D_1}\{\log ( e N_{\epsilon}    )\}^{1/\gamma} \,\mbox{d}\epsilon \nonumber\\
&~~~~~~~~~~~~~~~~~~~~~~~~~~~~~~+	  c_{1n} N_{D_1}\Lambda_{D_1}\exp\biggl\{ - c_{2n} C_{\gamma,2}\biggl(\frac{ x}{D} \biggr)^{\gamma}  \biggr\} 
\end{align}
for any $x > 0$.  Therefore,  the proof of Lemma~\ref{tail_multi}(i) will be complete once \eqref{Delta1} and \eqref{Delta2} are established.

\noindent\underline{Proof of \eqref{Delta1}.}
By \eqref{psi_multi} and \eqref{P3}, we know that for any $\bs\in\cT$ and $x\geq 0$,
\begin{align}\label{tail.psi_multi}
\mathbb{P}\big[ |Z\{k_{\ell+1}(\bs)\}-Z\{k_{\ell}(\bs)\}|>x\big]\leq c_{1n} \exp\biggl\{ -  c_{2n}\biggl(\frac{x}{ 2^{-\ell}} \biggr)^{ \gamma}  \biggr\}
\end{align}
for all $\ell \in \{\ell_0 ,\ldots, \ell_1-1\}$. Let $\bar{c}=\sum_{\ell=\ell_0}^{\widetilde{M}}2^{-\ell+2}$ and $c_{\ell}= 2^{-\ell+2}/\bar{c}$ for each $\ell\in\{ \ell_0,\ldots,\widetilde{M}\}$. Hence,   $\sum_{\ell=\ell_0}^{\widetilde{M}}c_{\ell}= 1$. By the definition of $\ell_0$, we know that $2^{-(\ell_0+2)}<D$. Then
\begin{align}\label{barc}
\bar{c} \leq 2^{2}\times \frac{2^{-  \ell_0 }}{1-2^{-1}}  < 2^5 D \,.
\end{align}
Together with the union bound, \eqref{P4},   \eqref{tail.psi_multi} and \eqref{barc}, we have
\begin{align*}
 \bbP\biggl({\rm I_1} > \frac{x}{4} \biggr)   
=&\,  \mathbb{P}\bigg[  \sum_{\ell=\ell_0}^{\widetilde{M}}\sup_{\bs\in\cT}|Z\{k_{\ell+1}(\bs)\}-Z\{k_{\ell}(\bs)\}| > \sum_{\ell=\ell_0}^{\widetilde{M}} \frac{ c_{\ell} x}{2^2}\bigg] \notag\\
\leq&\,  \sum_{\ell=\ell_0 }^{\widetilde{M}}{\mathbb{P}\bigg[   \sup_{\bs\in\cT}|Z\{k_{\ell+1}(\bs)\}-Z\{k_{\ell}(\bs)\}| >  \frac{c_{\ell} x}{2^2}\bigg] } \notag\\ 
\leq&\,   c_{1n}\sum_{\ell=\ell_0}^{\widetilde{M}}N_{2^{-(\ell+1)}} \exp\biggl\{ - c_{2n} \biggl(\frac{c_{\ell}x}{  2^{-\ell+2}} \biggr)^{ \gamma}  \biggr\}\notag\\
 \leq&\,   c_{1n}\sum_{\ell=\ell_0}^{\widetilde{M}}N_{2^{-(\ell+1)}} \exp\biggl\{ - c_{2n}C_{\gamma}\biggl(\frac{x}{D} \biggr)^{\gamma}  \biggr\}  \notag\\
\lesssim&\, c_{1n}({\widetilde{M}}-\ell_0+1)N_{2^{-(\widetilde{M}+1)}} \exp\biggl\{ - c_{2n}C_{\gamma}\biggl(\frac{ x}{D} \biggr)^{\gamma}  \biggr\}\,,%\label{boundI_multi}
\end{align*}
where   the last inequality follows from the fact that $N_{2^{-\ell}}$ is increasing in $\ell$. We complete the proof of \eqref{Delta1}.

\noindent\underline{Proof of \eqref{Delta2}.} For any $\ell \in \{ \widetilde{M}+1, \ldots, \ell_1-1 \}$, write $W_{\ell}=\sup_{\bs\in\cT}|Z\{k_{\ell+1}(\bs)\}-Z\{k_{\ell }(\bs)\}|$.
For some  $Q_{\ell} \gtrsim c_{2n}^{-1/\gamma}2^{-\ell}$,  \eqref{tail.psi_multi} implies that
\begin{align*}
\mathbb{E}(W_{\ell}) = &~	\mathbb{E}\{W_{\ell } I(W_{\ell}\leq Q_{\ell}) \}+\mathbb{E}\{ W_{\ell} I(W_{\ell}> Q_{\ell}) \}\\
\leq &~ \mathbb{E}\{W_{\ell } I(W_{\ell}\leq Q_{\ell}) \} + Q_{\ell}\mathbb{P}( W_{\ell} > Q_{\ell} )+ \int_{Q_{\ell}}^{\infty}\mathbb{P}( W_{\ell} > s )\, \mbox{d}s\\
\leq &~ Q_{\ell}+ c_{1n} N_{2^{-(\ell+1)}}\int_{Q_{\ell}}^{\infty}\exp\biggl\{ - c_{2n} \biggl(\frac{s}{  2^{-\ell}} \biggr)^{\gamma}  \biggr\}\,  \mbox{d}s\\
\leq &~  Q_{\ell}+ c_{1n} c_{2n}^{-1/\gamma} 2^{-\ell} N_{2^{-(\ell+1)}} \int_{  c_{2n} Q_{\ell}^{\gamma}  2^{ \gamma\ell }}^{\infty} e^{-t}\, \mbox{d}(t^{1/\gamma})\\
\leq &~ Q_{\ell} + C_{\gamma} c_{1n} c_{2n}^{-1} 2^{-\gamma\ell}  N_{2^{-(\ell+1)}}Q_\ell^{1-\gamma} \exp\biggl\{-  c_{2n}\biggl(\frac{Q_{\ell}}{  2^{-\ell}}  \biggr)^{ {\gamma}} \biggr \} \,.
\end{align*}
Selecting $Q_{\ell}^{\gamma} =c_{2n}^{-1} 2^{-\gamma\ell} \log \{ C_{\gamma} N_{2^{-(\ell+1)}}\}$, we have
\begin{align*}
    \mathbb{E}\bigg[\sup_{\bs\in\cT}|Z\{k_{\ell+1}(\bs)\}-Z\{k_{\ell}(\bs)\}|  \bigg] =\mathbb{E}(W_{\ell })  \leq  {C}_{\gamma} c_{1n} c_{2n}^{-1/\gamma}   2^{-\ell} [\log \{ e N_{2^{-(\ell+1)}}\}]^{1/\gamma} \,,
\end{align*} 
which  implies that
\begin{align}\label{integrate}
 &\sum_{\ell=\widetilde{M}+1}^{\ell_1-1}\mathbb{E}\bigg[ \sup_{\bs\in\cT}|Z\{k_{\ell+1}(\bs)\}-Z\{k_{\ell}(\bs)\}|  \bigg]
\leq    {C}_{\gamma}  c_{1n} c_{2n}^{-1/\gamma}   \sum_{\ell=\widetilde{M}+1}^{\ell_1-1} 2^{-\ell} [ \log \{ e N_{2^{-(\ell+1)}}\}]^{1/\gamma} \nonumber\\
 &~~~ \lesssim   {C}_{\gamma}c_{1n} c_{2n}^{-1/\gamma}   \sum_{\ell=\widetilde{M}+1}^{\ell_1-1} \int_{2^{-\ell-2}}^{2^{-\ell-1}}\{ \log ( e N_{\epsilon}    )\}^{1/\gamma} \, \mbox{d}\epsilon  
\lesssim    {C}_{\gamma}  c_{1n} c_{2n}^{-1/\gamma}    \int_{0}^{2^{-(\widetilde{M}+1)}} \{ \log( e N_{\epsilon}    ) \}^{1/\gamma}  \, \mbox{d}\epsilon \,.
\end{align}
This, together with Markov's inequality, yields
\begin{align*} %\label{boundII_multi}
\bbP\biggl({\rm I_2} > \frac{x}{4} \biggr)   \lesssim \frac{1}{x} \sum_{\ell=\widetilde{M}+1}^{\ell_1-1}\mathbb{E}\bigg[ \sup_{\bs\in\cT}|Z\{ k_{\ell+1}(\bs)\}-Z\{k_{\ell}(\bs)\}|  \bigg]
\lesssim \frac{ c_{1n} {C}_{\gamma}  \int_{0}^{2^{-(\widetilde{M}+1)}}\{ \log  ( e N_{\epsilon}    )\}^{1/\gamma}  \,\mbox{d}\epsilon}{ c_{2n}^{1/\gamma} x}
\end{align*}
for any $x >0$. This proves \eqref{Delta2} and hence completes the proof of Lemma~\ref{tail_multi}(i). \hfill $\Box$

\subsubsection{Proof of Lemma \ref{tail_multi}(ii)} \label{proof.b}

Throughout the proof, ratios of the form 
$|Z(\bs)-Z(\bt)|/\{d(\bs,\bt)\}^{\beta}$
are set to be zero when $\bs=\bt$. Hence, for notational simplicity, we write
$\sup_{\bs,\bt\in\cT}$ without explicitly excluding the case $\bs=\bt$.
For each $\ell \in \{ \ell_0,\ldots,\ell_1 -1\}$, define
\begin{align*}
\cS_{\ell} = \{ (\bs,\bt)\in \cT^2: 2^{-(\ell+1)}< d(\bs,\bt)\leq 2^{-\ell} \}\,.
\end{align*}
It holds that
\begin{align}\label{decomposition_pre}
\sup_{\bs,\bt\in\cT}\frac{|Z(\bs)-Z(\bt)|}{\{d(\bs,\bt)\}^{\beta}}  
\leq&\, \max_{\ell\in \{\ell_0,\ldots,\ell_1 -1\}} \sup_{(\bs,\bt)\in \cS_{\ell}}\frac{|Z(\bs)-Z(\bt)|}{\{d(\bs,\bt)\}^{\beta}} \notag\\
\leq&\, \sum_{\ell=\ell_0}^{\ell_1-1}2^{(\ell+1)\beta}\sup_{(\bs,\bt)\in \cS_{\ell}} |Z(\bs)-Z(\bt)|\,.
\end{align}
By triangle inequality and the decomposition $Z(\bs)- Z\{k_{\ell}(\bs)\}=Z\{k_{\ell_1}(\bs)\}- Z\{k_{\ell}(\bs)\} = \sum_{j=\ell+1}^{\ell_1}[Z\{k_{j}(\bs)\}- Z\{k_{j-1}(\bs)\}]$ for any $\bs \in\cT$, we have
\begin{align}\label{Z_uv}
|Z(\bs)-Z(\bt)| \leq&~  |Z\{k_{\ell}(\bs)\}- Z\{k_{\ell}(\bt)\}|+|Z(\bs)- Z\{k_{\ell}(\bs)\}|+|Z(\bt)- Z\{k_{\ell}(\bt)\} \notag|\\
\leq &~  |Z\{k_{\ell}(\bs)\}- Z\{k_{\ell}(\bt)\}|+ 2\sup_{\bs\in\cT}|Z(\bs)- Z\{k_{\ell}(\bs)\}| \notag\\
\leq &~  |Z\{k_{\ell}(\bs)\}- Z\{k_{\ell}(\bt)\}|+ 2\sup_{\bs\in\cT}\bigg\{  \sum_{j=\ell+1}^{\ell_1}|Z\{k_{j}(\bs)\}- Z\{k_{j-1}(\bs)\}|\bigg\} \notag \\ 
\leq &~  |Z\{k_{\ell}(\bs)\}- Z\{k_{\ell}(\bt)\}|+ 2\sum_{j=\ell+1}^{\ell_1} \sup_{\bs\in\cT} |Z\{k_{j}(\bs)\}- Z\{k_{j-1}(\bs)\}|   
\end{align}
for each pair $(\bs,\bt)\in \cT^2$. Then  by \eqref{P2}, for each pair $(\bs,\bt)\in	\cS_{\ell}$,
\begin{align*}
d\{ k_{\ell}(\bs),k_{\ell}(\bt)\}  \leq d\{ k_{\ell}(\bs),\bs\}  +d\{  k_{\ell}(\bt),\bt\}  +d(  \bs ,\bt) \leq 2^{-\ell+3}\,.
\end{align*}
Following from \eqref{Z_uv}, it holds that
\begin{align*}
\sup_{(\bs,\bt)\in \cS_{\ell}} |Z(\bs)-Z(\bt)| \leq \max_{\substack{\bs,\bt\in\cT_{\ell},\\ d( \bs ,\bt) \leq 2^{-\ell+3}}}|Z(\bs)-Z(\bt)|+ 2\sum_{j=\ell+1}^{\ell_1}\sup_{\bs\in\cT}|Z\{k_{j}(\bs)\}- Z\{k_{j-1}(\bs)\}|\,.
\end{align*}
Together with \eqref{decomposition_pre},  we have
\begin{align*}%\label{decomposition}
\sup_{\bs,\bt\in\cT}\frac{|Z(\bs)-Z(\bt)|}{\{d(\bs,\bt)\}^{\beta}}  
\leq &\,  \underbrace{\sum_{\ell=\ell_0}^{\ell_1-1}2^{(\ell+1)\beta}\max_{\substack{\bs,\bt\in\cT_{\ell} ,\\ d( \bs ,\bt) \leq 2^{-\ell+3}}}  |Z(\bs)-Z(\bt)|  }_{\rm II_1} \notag\\
&\,  +\underbrace{\sum_{\ell=\ell_0}^{\ell_1-1}2^{(\ell+1)\beta+1}\sum_{j=\ell+1}^{\ell_1}\sup_{\bs\in\cT}   |Z\{k_{j}(\bs)\}- Z\{k_{j-1}(\bs)\}|  }_{\rm II_2}\,,
\end{align*}
which implies that, for any $x > 0$,
\begin{align}\label{tail.target}
\bbP\biggl[  \sup_{\bs,\bt\in\cT}\frac{|Z(\bs)-Z(\bt)|}{\{d(\bs,\bt)\}^{\beta}} >x  \biggr] \leq \bbP\biggl({\rm II}_{1}>\frac{x}{2} \biggr)  + \bbP\biggl({\rm II}_{2}>\frac{x}{2} \biggr)\,.
\end{align}
 We claim that, for any $x > 0$,
\begin{align}
\bbP\biggl({\rm II}_{1}>\frac{x}{2} \biggr)
\lesssim &\,          \frac{c_{1n} C_{\gamma,\beta,1}}{c_{2n}^{1/\gamma}x} \int_{0}^{D_1}\frac{\{ \log( e  N_{\epsilon}    )\}^{1/\gamma}}{ \epsilon^{\beta}} \,\mbox{d}\epsilon   + c_{1n}  N_{D_1}^2\Lambda_{D_1} \exp\biggl\{ - c_{2n} C_{\gamma,\beta,2}\biggl(\frac{ x}{D^{1-\beta}} \biggr)^{ {\gamma}}  \biggr\}\,,\label{II1}\\
\bbP\biggl({\rm II}_{2}>\frac{x}{2} \biggr)  \lesssim &\,    \frac{ c_{1n}C_{\gamma,\beta,1} }{c_{2n}^{1/\gamma}x} \bigg[\int_{0}^{D_1}\frac{\{\log(  e N_{\epsilon}    )\}^{1/\gamma}}{\epsilon^{\beta}}\,\mbox{d}\epsilon +  \frac{ \int_{0}^{D_2} \{\log( e  N_{\epsilon}    )\}^{1/\gamma}\,\mbox{d}\epsilon}{D_1^{\beta}} \bigg]  \notag\\
&\,  + c_{1n} N_{D_2} \Lambda_{D_1}\Lambda_{D_2}  \exp\biggl\{ - c_{2n} C_{\gamma,\beta,2} \biggl(\frac{ x}{D^{1-\beta}} \biggr)^{ {\gamma}}  \biggr\}\,,\label{II2}
\end{align}
respectively,
where  $D_1,D_2 \in (0, D\wedge 1)$ with $D_2<D_1$. 
By \eqref{tail.target}--\eqref{II2}, we can conclude that, for any $x > 0$,
\begin{align*}
&~\bbP\biggl[  \sup_{\bs,\bt\in\cT}\frac{|Z(\bs)-Z(\bt)|}{\{d(\bs,\bt)\}^{\beta}} >x  \biggr] 
\lesssim  \frac{c_{1n} {C}_{\gamma,\beta,1}}{c_{2n}^{1/\gamma}x}   \bigg[\int_{0}^{D_1 }\frac{\{\log( e  N_{\epsilon}    )\}^{1/\gamma}}{\epsilon^{\beta}}\, \mbox{d}\epsilon + \frac{\int_{0}^{D_2} \{\log( e  N_{\epsilon}    )\}^{1/\gamma}\,\mbox{d}\epsilon }{D_1^{\beta}} \bigg]\\
&~~~~~~~~~~~~~~~~~~~~~~~~~~~~~~~~~+ c_{1n}\big \{N_{D_1}^2\Lambda_{D_1}+N_{D_2} \Lambda_{D_1}\Lambda_{D_2}\big \} \exp\biggl\{ -c_{2n} {C}_{\gamma,\beta,2}\biggl(\frac{ x}{D^{1-\beta}} \biggr)^{ {\gamma}}  \biggr\}\,.
\end{align*}
 Finally, the proof of Lemma~\ref{tail_multi}(ii) will be completed once \eqref{II1} and \eqref{II2} are established.

\noindent\underline{Proof of \eqref{II1}.}  For some integer $M_1\in \{ \ell_0,\ldots,\ell_1-1 \}$, we have
\begin{align*}
{\rm II}_{1} =&\,  \underbrace{\sum_{\ell=\ell_0}^{M_1}2^{(\ell+1)\beta}\max_{\substack{\bs,\bt\in\cT_{\ell},\\ d( \bs ,\bt) \leq 2^{-\ell+3}}}|Z(\bs)-Z(\bt)|}_{\rm II_{11}} 
+  \underbrace{\sum_{\ell=M_1+1}^{\ell_1-1}2^{(\ell+1)\beta}\sup_{\substack{\bs,\bt\in\cT_{\ell},\\ d( \bs ,\bt) \leq 2^{-\ell+3}}}|Z(\bs)-Z(\bt)|}_{\rm II_{12}}\,.
\end{align*}
Notice that $\rm II_{12} =0$ when $M_1 = \ell_1-1$. The union bound yields that
\begin{align}\label{div.I1}
\bbP\biggl({\rm II}_{1}> \frac{x}{2} \biggr) \leq &~\bbP\biggl({\rm II}_{11}> \frac{x}{4} \biggr) + \bbP\biggl({\rm II}_{12}>\frac{x}{4} \biggr) 
\end{align}
 for any $x> 0$. For any $\bs,\bt\in\cT_{\ell}$ satisfying $d( \bs ,\bt) \leq 2^{-\ell+3}$, \eqref{psi_multi} implies that
\begin{align}\label{tail.divz}
\mathbb{P}\{ |Z(\bs)-Z(\bt)|>x\}
\leq c_{1n} \exp\biggl[ - c_{2n}\biggl\{\frac{x}{  d(\bs,\bt)} \biggr\}^{ {\gamma}}  \biggr]\leq c_{1n} \exp\biggl\{ - c_{2n}\biggl(\frac{x}{  2^{-\ell+3}} \biggr)^{ {\gamma}}  \biggr\}
\end{align}
for any $x\geq 0$. Let $\tilde{c}'=\sum_{\ell=\ell_0 }^{M_1}2^{-(1-\beta)\ell+\beta+5} $ and $c_{\ell}'= 2^{-(1-\beta)\ell+\beta+5 }/\tilde{c}'$ for  $\ell \in  \{ \ell_0 ,\ldots, M_1\}$. Then   $\sum_{\ell=\ell_0 }^{M_1}c_{\ell}'= 1$.  By the definition of $\ell_0$, we know that $2^{-(\ell_0+2)}<D$. Hence,
\begin{align}\label{tildec}
\tilde{c}' \leq 2^{\beta+5}\times \frac{2^{-(1-\beta)\ell_0}}{1-2^{-(1-\beta)}} = \frac{2^6}{1-2^{-(1-\beta)}}\times 2^{-(1-\beta)(\ell_0+1)} < \frac{2^7 D^{1-\beta}}{1-2^{-(1-\beta)}} \,.
\end{align}
Notice that ${\rm Card}(\cT_{\ell})= N_{2^{-\ell}}$ for any $\ell \in \{\ell_0,\ldots,\ell_1\}$. Following from the union bound, \eqref{tail.divz} and \eqref{tildec}, it holds that
\begin{align}
\bbP\biggl({\rm II}_{11}> \frac{x}{4} \biggr) = &\,	 {\mathbb{P}\biggl\{\sum_{\ell=\ell_0}^{M_1}2^{(\ell+1)\beta}\max_{\substack{\bs,\bt\in\cT_{\ell},\\ d( \bs ,\bt) \leq 2^{-\ell+3}}}|Z(\bs)-Z(\bt)|> \sum_{\ell=\ell_0 }^{M_1} \frac{c_{\ell}' x}{2^2}\biggr\}} \notag\\
\leq &\, \sum_{\ell=\ell_0 }^{M_1}{\mathbb{P}\biggl\{\max_{\substack{\bs,\bt\in\cT_{\ell},\\ d( \bs ,\bt) \leq 2^{-\ell+3}}}|Z(\bs)-Z(\bt)| >  \frac{c_{\ell}' x}{2^{(\ell+1)\beta+2}}\biggr\}}\notag\\
\leq &\, \sum_{\ell=\ell_0 }^{M_1}N_{2^{-\ell}}^2\max_{\substack{\bs,\bt\in\cT_{\ell},\\ d( \bs ,\bt) \leq 2^{-\ell+3}}}{\mathbb{P}\biggl\{|Z(\bs)-Z(\bt)| >  \frac{c_{\ell}' x}{2^{(\ell+1)\beta+2}}\biggr\}}\notag\\
\leq &\, c_{1n} \sum_{\ell=\ell_0 }^{M_1}N_{2^{-\ell}}^2 \exp\biggl[ - c_{2n} \bigg\{ \frac{c_{\ell}' x}{  2^{-(1-\beta)\ell+\beta+5}} \bigg\}^{ {\gamma}}  \biggr] \notag\\
\lesssim  &\, c_{1n} ({M_1}-\ell_0+1)N_{2^{-M_1}}^2 \exp\biggl\{ - c_{2n} C_{\gamma,\beta}\biggl(\frac{ x}{D^{1-\beta}} \biggr)^{ {\gamma}}  \biggr\}\,. \label{boundI11}
\end{align}
Moreover, by Markov's inequality, we have
\begin{align}\label{tail.I12_pre}
\bbP\biggl({\rm II}_{12}>\frac{x}{4} \biggr) \lesssim  \frac{1}{x}\sum_{\ell=M_1+1}^{\ell_1-1}2^{(\ell+1)\beta }\,\bbE\Bigg\{\max_{\substack{\bs,\bt\in\cT_{\ell},\\ d( \bs ,\bt) \leq 2^{-\ell+3}}}|Z(\bs)-Z(\bt)|\Bigg\} 
\end{align}
for any $x> 0$. For any $\ell \in \{M_1+1,\ldots,\ell_1-1\}$, let $W_{\ell,1} = \max_{ {\bs,\bt\in\cT_{\ell}, d( \bs ,\bt) \leq 2^{-\ell+3}}}|Z(\bs)-Z(\bt)|$.
For some  $Q_{\ell,1}\gtrsim c_{2n}^{-1/\gamma}2^{-\ell+3}$, following from \eqref{tail.divz}, 
\begin{align}\label{moment.disz}
\bbE (W_{\ell,1} )  = &\, \mathbb{E}\{W_{\ell,1} I( W_{\ell,1}\leq Q_{\ell,1})\}+\mathbb{E}\{W_{\ell,1} I(W_{\ell,1}> Q_{\ell,1})\}\notag\\
\leq &\,  Q_{\ell,1} + \int_{Q_{\ell,1}}^{\infty} \mathbb{P} (W_{\ell,1}> s  ) \, \mbox{d}s \notag\\
\leq  &\, Q_{\ell,1}+ c_{1n} N_{2^{-\ell}}^2\int_{Q_{\ell,1}}^{\infty}\exp\biggl\{ - c_{2n} \biggl(\frac{s}{  2^{-\ell+3}} \biggr)^{ {\gamma}}  \biggr\}  \,\mbox{d}s \notag\\
\leq &\,  Q_{\ell,1}+ c_{1n} c_{2n}^{-1/\gamma} 2^{-\ell+3}N_{2^{-\ell}}^2   \int_{  c_{2n} Q_{\ell,1}^{\gamma}  2^{-\gamma(-\ell+3)} }^{\infty} e^{-t}\,\mbox{d} (t^{1/\gamma}) \notag\\
\leq   &~  Q_{\ell,1} + C_{\gamma}c_{1n} c_{2n}^{-1}  2^{\gamma(-\ell+3)} N_{2^{-\ell}}^2 Q_{\ell,1}^{1-\gamma} \exp\biggl\{-  c_{2n} \biggl(\frac{Q_{\ell,1}}{  2^{-\ell+3}}  \biggr)^{ {\gamma}} \bigg \} \,.
\end{align}
Let $Q_{\ell,1}^{\gamma} =c_{2n}^{-1}2^{\gamma(-\ell+3)} \log  (C_{\gamma }N_{2^{-\ell}}^2 )   $. Then we have
\begin{align*}
    \bbE\Bigg\{ \max_{\substack{\bs,\bt\in\cT_{\ell},\\ d( \bs ,\bt) \leq 2^{-\ell+3}}}|Z(\bs)-Z(\bt)|\Bigg\}=\bbE (W_{\ell,1} )  \leq {C}_{\gamma }  c_{1n} c_{2n}^{-1/\gamma}   2^{-\ell} \{ \log(  e N_{2^{-\ell}})\}^{1/\gamma} \,.
\end{align*}
We can conclude that
\begin{align*}
\sum_{\ell=M_1+1}^{\ell_1-1}2^{(\ell+1)\beta }\bbE\Biggl\{ \max_{\substack{\bs,\bt\in\cT_{\ell},\\ d( \bs ,\bt) \leq  2^{-\ell+3}}}|Z(\bs)-Z(\bt)|\Bigg\} 
\leq&\,    c_{1n} c_{2n}^{-1/\gamma}  {C}_{\gamma ,\beta}  \sum_{\ell=M_1+1}^{\ell_1} 2^{-(1-\beta)\ell}\{ \log( e  N_{2^{-\ell}}  )\}^{1/\gamma} \\
\overset{({\rm i})}{\lesssim}&\,  c_{1n} c_{2n}^{-1/\gamma}  {C}_{\gamma ,\beta}    \sum_{\ell>M_1} \int_{2^{-\ell-1}}^{2^{-\ell}}\epsilon^{-\beta}\{ \log( e  N_{\epsilon}     ) \}^{1/\gamma} \,\mbox{d}\epsilon \\
\lesssim&\,    c_{1n} c_{2n}^{-1/\gamma}  {C}_{\gamma ,\beta}   \int_{0}^{2^{-M_1}}\epsilon^{-\beta} \{\log(  e N_{\epsilon}     )\}^{1/\gamma} \,\mbox{d}\epsilon\,,
\end{align*}
where  ${\rm (i)}$ follows from
\begin{align*}
\int_{2^{-\ell-1}}^{2^{-\ell}}\epsilon^{-\beta}\{ \log( e  N_{\epsilon}     )\}^{1/\gamma}  \,\mbox{d}\epsilon   \geq 2^{\ell \beta} \{\log (  e N_{2^{-\ell}})\}^{1/\gamma}     \times 	\int_{2^{-\ell-1}}^{2^{-\ell}} 1\, \mbox{d}\epsilon = 2^{-(1-\beta)\ell-1 } \{ \log( e  N_{2^{-\ell}}     )\}^{1/\gamma} \,.
\end{align*}
Together with \eqref{tail.I12_pre}, we have
\begin{align}\label{boundI12}
\bbP\biggl({\rm II}_{12}>\frac{x}{4} \biggr) \lesssim  \frac{c_{1n}{C}_{\gamma,\beta} }{c_{2n}^{1/\gamma}x}   \int_{0}^{2^{-M_1}}\frac{\{\log ( e  N_{\epsilon}    )\}^{1/\gamma}}{\epsilon^{\beta}} \,\mbox{d}\epsilon \,.
\end{align}
Combining with \eqref{div.I1}, \eqref{boundI11} and \eqref{boundI12}, we have, for any $x>0$, 
\begin{align*}
\bbP\biggl({\rm II}_{1}> \frac{x}{2} \biggr)
\lesssim&~     \frac{c_{1n}{C}_{\gamma,\beta} }{c_{2n}^{1/\gamma}x}   \int_{0}^{2^{-M_1}}\frac{\{\log (  e N_{\epsilon}    )\}^{1/\gamma}}{\epsilon^{\beta}} \,\mbox{d}\epsilon \\
&~+  c_{1n} ({M_1}-\ell_0+1)N_{2^{-M_1}}^2 \exp\biggl\{ - c_{2n} C_{\gamma,\beta}\biggl(\frac{ x}{D^{1-\beta}} \biggr)^{ {\gamma}}  \biggr\}\,.
\end{align*}
Similar to the proof of \eqref{fin.1}, by taking $M_1=\min \{ \lfloor -\log_2  D_1  \rfloor ,\ell_1-1\}$ for some $ 0< D_1< D\wedge 1$, we can conclude that
\begin{align*}
\bbP\biggl({\rm II}_{1}> \frac{x}{2} \biggr)
\lesssim &  \frac{c_{1n}{C}_{\gamma,\beta,1}  }{c_{2n}^{1/\gamma}x}   \int_{0}^{D_1}\frac{\{ \log( e  N_{\epsilon}  )\}^{1/\gamma}}{\epsilon^{\beta}} \,\mbox{d}\epsilon + c_{1n} N_{D_1}^2\Lambda_{D_1} \exp\biggl\{ - c_{2n} C_{\gamma,\beta,2}\biggl(\frac{ x}{D^{1-\beta}} \biggr)^{ {\gamma}}  \biggr\} 
\end{align*}
 for any $x> 0$, which completes the proof of \eqref{II1}.

\noindent\underline{Proof of \eqref{II2}.} 
Let $M_2,M_3$ be integers satisfying $\ell_0 \le M_2  \le \ell_1-1$ and $ M_2  \le M_3\le \ell_1$. We adopt the convention that any sum with the lower limit exceeding the upper
limit is zero. It holds that
\begin{align*}
{\rm II_2}=&\, \underbrace{\sum_{\ell=\ell_0}^{M_2}2^{(\ell+1)\beta+1}\sum_{j=\ell+1}^{M_3}\sup_{\bs\in\cT}|Z\{k_{j}(\bs)\}- Z\{k_{j-1}(\bs)\}|}_{\rm II_{21}} \notag\\
&\, +\underbrace{\sum_{\ell=M_2+1}^{\ell_1-1}2^{(\ell+1)\beta+1}\sum_{j=\ell+1}^{\ell_1}\sup_{\bs\in\cT}|Z\{k_{j}(\bs)\}- Z\{k_{j-1}(\bs)\}|}_{\rm II_{22}} \notag\\
&\, +\underbrace{\sum_{\ell=\ell_0}^{M_2}2^{(\ell+1)\beta+1}\sum_{j=M_3+1}^{\ell_1}\sup_{\bs\in\cT}|Z\{k_{j}(\bs)\}- Z\{k_{j-1}(\bs)\}|}_{\rm II_{23}}\,.
\end{align*}
Notice that $\rm II_{23} =0$ when $M_3 = \ell_1$. By the union bound, we have  for any $x > 0$,
\begin{align}\label{div.I2}
\bbP\biggl({\rm II}_{2}> \frac{x}{2} \biggr)   \leq &~\bbP\biggl({\rm II}_{21}> \frac{x}{4} \biggr) + \bbP\biggl({\rm II}_{22}>\frac{x}{8} \biggr) + \bbP\biggl({\rm II}_{23}>\frac{x}{8} \biggr) \,.
\end{align}
Following from  \eqref{psi_multi} and \eqref{P3}, for  $j\in\{ \ell_0+1,\ldots,\ell_1\}$, we have
\begin{align}\label{tail.divz.new}
\mathbb{P}\big[ |Z\{k_{j}(\bs)\}- Z\{k_{j-1}(\bs)\}|>x\big]
\leq &\, c_{1n} \exp\biggl( - c_{2n} \biggl[\frac{x}{  d\{ k_{j}(\bs),k_{j-1}(\bs)\}} \biggr]^{ {\gamma}}  \biggr) \notag\\
\leq &\, c_{1n} \exp\biggl\{ - c_{2n} \biggl( \frac{x}{  2^{-j+1}} \biggr)^{ {\gamma}}  \biggr\}  
\end{align}
for any  $x\geq 0$. Let $\check{c}=\sum_{\ell=\ell_0}^{M_2}\sum_{j=\ell+1}^{M_3}2^{-j+\beta\ell+\beta+4}$ and $c_{\ell,j}= 2^{-j+\beta\ell+\beta+4}/\check{c}$ for each $\ell\in\{\ell_0,\ldots, M_2\}$ and $j\in\{ \ell+1,\ldots,M_3\}$ such that   $\sum_{\ell=\ell_0}^{M_2}\sum_{j=\ell+1}^{M_3}c_{\ell,j}= 1$.   By the definition of $\ell_0$ and \eqref{tildec}, we have
\begin{align*}
\check{c} = \sum_{\ell=\ell_0}^{M_2}\bigg\{2^{ \beta\ell+\beta+4} \bigg(\sum_{j=\ell+1}^{M_3}2^{-j }\bigg)\bigg\} 
\leq \sum_{\ell=\ell_0}^{M_2}2^{ \beta\ell+\beta+4-\ell }  < \frac{2^6D^{1-\beta}}{1-2^{-(1-\beta)}} \,.
\end{align*}
By  \eqref{P4}  and \eqref{tail.divz.new}, we have
\begin{align}
\bbP\biggl({\rm II}_{21}> \frac{x}{4} \biggr) 
= &\,	 {\mathbb{P}\biggl[\sum_{\ell=\ell_0}^{M_2}2^{(\ell+1)\beta+1}\sum_{j=\ell+1}^{M_3}\sup_{\bs\in\cT}|Z\{k_{j}(\bs)\}- Z\{k_{j-1}(\bs)\}|> \sum_{\ell=\ell_0}^{M_2} \sum_{j=\ell+1}^{M_3}\frac{c_{\ell,j} x}{2^2}\biggr]} \notag\\
\leq &\, \sum_{\ell=\ell_0}^{M_2}{\mathbb{P}\bigg[\sum_{j=\ell+1}^{M_3}\sup_{\bs\in\cT}|Z\{k_{j}(\bs)\}- Z\{k_{j-1}(\bs)\}| >  \sum_{j=\ell+1}^{M_3}\frac{c_{\ell,j} x}{2^{(\ell+1)\beta+3}}\bigg]}\notag\\
\leq &\, \sum_{\ell=\ell_0}^{M_2}\sum_{j=\ell+1}^{M_3}N_{2^{-j}} \sup_{\bs\in\cT}{\mathbb{P}\bigg[|Z\{k_{j}(\bs)\}- Z\{k_{j-1}(\bs)\}| >  \frac{c_{\ell,j} x}{2^{(\ell+1)\beta+3}}\bigg]}\notag\\
\leq &\,  c_{1n}\sum_{\ell=\ell_0}^{M_2}\sum_{j=\ell+1}^{M_3}N_{2^{-j}} \exp\biggl\{ - c_{2n}\biggl(\frac{c_{\ell,j}x}{  2^{-j+\beta\ell+\beta+4}} \biggr)^{ {\gamma}}  \biggr\} \notag\\
\lesssim  &~ c_{1n} ({M_2}-\ell_0+1)(M_3-\ell_0+1)N_{2^{-M_3}}  \exp\biggl\{ - c_{2n} C_{\gamma,\beta} \biggl(\frac{ x}{D^{1-\beta}} \biggr)^{ {\gamma}}  \biggr\}\,.\label{boundI21}
\end{align}
For  $j\in \{\ell_0+1,\ldots,\ell_1\}$, denote $W_{j,2} =\sup_{\bs\in\cT}|Z\{k_{j}(\bs)\}- Z\{k_{j-1}(\bs)\}|$. Given some  $Q_{j,2} \gtrsim c_{2n}^{-1/\gamma}2^{-j+1}$,  \eqref{P4}  and \eqref{tail.divz.new}  imply that 
\begin{align*}
\bbE(W_{j,2})= &\,	\mathbb{E}\{W_{j,2} I(W_{j,2} \leq Q_{j,2})\}  +\mathbb{E}\{W_{j,2} I(W_{j,2} > Q_{j,2})\}\\
\leq &\,  Q_{j,2} + \int_{Q_{j,2}}^{\infty}\mathbb{P} (W_{j,2}> s  )\, \mbox{d}s\\
\leq &\,  Q_{j,2}+ c_{1n} N_{2^{-j}} \int_{Q_{j,2}}^{\infty}\exp\biggl\{ - c_{2n} \biggl(\frac{s}{  2^{-j+1}} \biggr)^{ {\gamma}}  \biggr\} \, \mbox{d}s\\
\leq &\,  Q_{j,2}+ c_{1n} c_{2n}^{-1/\gamma} 2^{-j+1} N_{2^{-j}}\int_{ c_{2n}Q_{j,2}^{ {\gamma}}   2^{-\gamma(-j+1)}  }^{\infty} e^{-t} \,  \mbox{d}(t^{1/\gamma})\\
\leq  &\, Q_{j,2} + C_{\gamma}c_{1n} c_{2n}^{-1}  N_{2^{-j}}2^{\gamma(-j+1)} Q_{j,2}^{1-\gamma} \exp\biggl\{-  c_{2n} \biggl(\frac{Q_{j,2}}{  2^{-j+1}}  \biggr)^{ {\gamma}} \bigg \}  \,.
\end{align*}
Let $Q_{j,2}^{\gamma} =c_{2n}^{-1}2^{\gamma(-j+1)}  \log  (C_{\gamma} N_{2^{-j}} )  $. Then
\begin{align*}
    \bbE\bigg[\sup_{\bs\in\cT}|Z\{k_{j}(\bs)\}- Z\{k_{j-1}(\bs)\}|\bigg] =\bbE(W_{j,2})\leq {C}_{\gamma} c_{1n} c_{2n}^{-1/\gamma}    2^{-j} \{\log( e   N_{2^{-j}} )\}^{1/\gamma} \,,
\end{align*}
 which implies that
\begin{align*}
\bbE({\rm II}_{22})=&\, \sum_{\ell=M_2+1}^{\ell_1-1}2^{(\ell+1)\beta+1}\sum_{j=\ell+1}^{\ell_1}\bbE\bigg[ \sup_{\bs\in\cT}|Z\{k_{j}(\bs)\}- Z\{k_{j-1}(\bs)\}|\bigg] \\
\leq &\,   c_{1n} c_{2n}^{-1/\gamma}  {C}_{\gamma}  \sum_{\ell=M_2+1}^{\ell_1-1}2^{(\ell+1)\beta+1}
\sum_{j=\ell+1}^{\ell_1} 2^{-j} \{\log( e  N_{2^{-j}}  )\}^{1/\gamma} \\
=  &\,  c_{1n} c_{2n}^{-1/\gamma}  {C}_{\gamma}  \sum_{j=M_2+2}^{\ell_1}2^{-j} \{\log(  e N_{2^{-j}}  )\}^{1/\gamma} \sum_{\ell=M_2+1}^{j-1}2^{(\ell+1)\beta+1}\\
\lesssim &\, c_{1n} c_{2n}^{-1/\gamma} {C}_{\gamma,\beta} \sum_{j=M_2+2}^{\ell_1}2^{-(1-\beta)j} \{\log(  e N_{2^{-j}} )\}^{1/\gamma} \\
\lesssim &\, c_{1n} c_{2n}^{-1/\gamma} {C}_{\gamma,\beta}  \int_{0}^{2^{-M_2}}\epsilon^{-\beta} \{ \log( e N_{\epsilon}  )\}^{1/\gamma} \,\mbox{d}\epsilon\,.
\end{align*}
By Markov's inequality, for any $x> 0$,
\begin{align}\label{boundI22}
\bbP\biggl({\rm II}_{22}>\frac{x}{8} \biggr) \lesssim   \frac{ c_{1n} {C}_{\gamma,\beta} }{c_{2n}^{1/\gamma}x} \int_{0}^{2^{-M_2}} \frac{\{ \log( e  N_{\epsilon}    )\}^{1/\gamma}}{\epsilon^{\beta}}\,\mbox{d}\epsilon \,.
\end{align}
Similar to \eqref{integrate}, we can also obtain
\begin{align*}
\bbE({\rm II}_{23})=&\, \sum_{\ell=\ell_0}^{M_2}2^{(\ell+1)\beta+1}\sum_{j=M_3+1}^{\ell_1}\bbE\biggl[ \sup_{\bs\in\cT}|Z\{k_{j}(\bs)\}- Z\{k_{j-1}(\bs)\}|\biggr] \\
\leq &\, c_{1n} c_{2n}^{-1/\gamma}  {C}_{\gamma} \sum_{\ell=\ell_0}^{M_2}2^{(\ell+1)\beta+1}\sum_{j=M_3+1}^{\ell_1}2^{-j} \{\log( e  N_{2^{-j}}  )\}^{1/\gamma}\\
\lesssim &\, c_{1n} c_{2n}^{-1/\gamma}  {C}_{\gamma,\beta}  2^{\beta M_2} \int_{0}^{2^{-M_3}} \{\log( e  N_{\epsilon})\}^{1/\gamma} \,\mbox{d}\epsilon\,.
\end{align*}
Together with Markov's inequality, for any $x > 0$,
\begin{align}\label{boundI23}
\bbP\biggl({\rm II}_{23}>\frac{x}{8} \biggr) \lesssim   \frac{ c_{1n} {C}_{\gamma,\beta} }{c_{2n}^{1/\gamma}x} 2^{\beta M_2} \int_{0}^{2^{-M_3}} \{ \log( e   N_{\epsilon} )\}^{1/\gamma} \,\mbox{d}\epsilon \,.
\end{align}
Combining \eqref{div.I2} with \eqref{boundI21}--\eqref{boundI23}, we can conclude that, for any $x > 0$,
\begin{align}\label{I2.tail.pre}
&\bbP\biggl({\rm II}_{2}> \frac{x}{2} \biggr)  \lesssim \frac{ c_{1n} {C}_{\gamma,\beta,1} }{c_{2n}^{1/\gamma}x} \biggl[ \int_{0}^{2^{-M_2}} \frac{\{\log( e   N_{\epsilon}    )\}^{1/\gamma}}{\epsilon^{\beta}}\,\mbox{d}\epsilon +  2^{\beta M_2} \int_{0}^{2^{-M_3}} \{\log(  e N_{\epsilon}    )\}^{1/\gamma}\,\mbox{d}\epsilon\biggr] \notag\\
&~~~~~~~~~~~~~~~~ + c_{1n}({M_2}-\ell_0+1)(M_3-\ell_0+1)N_{2^{-M_3}}  \exp\biggl\{ - c_{2n} C_{\gamma,\beta,2} \biggl(\frac{ x}{D^{1-\beta}} \biggr)^{ {\gamma}}  \biggr\}\,.
\end{align}
Taking $M_2=\min\{\lfloor -\log_2 D_1\rfloor ,\ell_1-1\}$ and $M_3=\min\{\lfloor -\log_2 D_2\rfloor ,\ell_1 \}$ with  $0<D_2<D_1<D\wedge 1$. By \eqref{I2.tail.pre}, it holds that for any $x > 0$,
\begin{align*}
\bbP\biggl({\rm II}_{2}> \frac{x}{2} \biggr)   \lesssim &\,     \frac{ c_{1n} {C}_{\gamma,\beta,1} }{c_{2n}^{1/\gamma}x} \biggl\{\int_{0}^{D_1}\frac{ \{\log( e  N_{\epsilon}    )\}^{1/\gamma}}{\epsilon^{\beta}}\,\mbox{d}\epsilon +  \frac{ \int_{0}^{D_2} \{\log(  e N_{\epsilon}    )\}^{1/\gamma}\,\mbox{d}\epsilon}{D_1^{\beta}} \biggr\}  \notag\\
&\,~~~~~ + c_{1n} N_{D_2}\Lambda_{D_1}\Lambda_{D_2}  \exp\biggl\{ - c_{2n} C_{\gamma,\beta,2} \biggl(\frac{ x}{D^{1-\beta}} \biggr)^{ {\gamma}}  \biggr\}\,.
\end{align*}
This proves \eqref{II2} and hence completes the proof of Lemma~\ref{tail_multi}(ii). \hfill $\Box$

\subsection{Proof of Lemma \ref{lemma_tn_dis}}\label{proof.lemma_tn_dis}

Notice that
\begin{align*}%\label{dis}
|\tilde{T}_{n}-\tilde{T}^{\rm dis}_{n}|\leq \max_{ \ell \in [L] }\max_{ {j,j'} \in [p]}\max_{ {m,m'} \in [M]} \sup_{(u,v)\in B_m\times B_{m'}} n^{1/2}|\widetilde\Sigma^{(\ell)}_{jj'}(u,v)-\widetilde\Sigma^{(\ell)}_{jj'}(b_{m},b_{m'})| \,.
\end{align*}
Following from triangle inequality, we have    
\begin{align*} %\label{dis_I4}
& |\widetilde \Sigma^{(\ell)}_{jj'}(u,v)-\widetilde \Sigma^{(\ell)}_{jj'}(b_m,b_{m'})|  \notag\\
 =&\bigg|\frac{1}{n_{\ell}}\sum_{t=1}^{n_{\ell}}\big[\{\varepsilon_{t,j}(u) -\bar{\varepsilon}_{j}(u)\}\{\varepsilon_{t+\ell,j'}(v)-\bar{\varepsilon}_{j'}(v)\}
-\{\varepsilon_{t,j}(b_m)-\bar{\varepsilon}_{j}(b_m)\} \{\varepsilon_{t+\ell,j'}(b_{m'})-\bar{\varepsilon}_{j'}(b_{m'})\}\big]\bigg| \notag\\
  \leq &\underbrace{\bigg|\frac{1}{n_{\ell}}\sum_{t=1}^{n_{\ell}} \{\varepsilon_{t,j}(u)\varepsilon_{t+\ell,j'}(v) -\varepsilon_{t,j}(b_m)\varepsilon_{t+\ell,j'}(b_{m'})\}\bigg|}_{\textrm{I}_1(\ell,j,j',u,v)} +\underbrace{\bigg|\frac{1}{n_{\ell}}\sum_{t=1}^{n_{\ell}}\{\varepsilon_{t,j}(u)\bar{\varepsilon}_{j'}(v) -\varepsilon_{t,j}(b_m)\bar{\varepsilon}_{j'}(b_{m'})\}\bigg|}_{\textrm{I}_2(\ell,j,j',u,v)} \notag\\
&~~~~ ~+\underbrace{\bigg|\frac{1}{n_{\ell}}\sum_{t=1}^{n_{\ell}}\{\varepsilon_{t+\ell,j'}(v) \bar{\varepsilon}_{j}(u)-\varepsilon_{t+\ell,j'}(b_{m'})\bar{\varepsilon}_{j}(b_m)\}\bigg|}_{\textrm{I}_3(\ell,j,j',u,v)} +\underbrace{|\bar{\varepsilon}_{j}(u) \bar{\varepsilon}_{j'}(v) -\bar{\varepsilon}_{j}(b_m)\bar{\varepsilon}_{j'}(b_{m'})|}_{\textrm{I}_4(\ell,j,j',u,v)}
\end{align*}
for any   $\ell\in[L]$, $j,j'\in[p]$, $m,m'\in[M]$ and $(u,v)\in B_m\times B_{m'}$, which implies that 
\begin{align}\label{dis}
|\tilde{T}_{n}-\tilde{T}^{\rm dis}_{n}|\leq \sum_{k=1}^4 \max_{ \ell \in [L] }\max_{ {j,j'} \in [p]}\max_{ {m,m'} \in [M]} \sup_{(u,v)\in B_m\times B_{m'}} n^{1/2}\textrm{I}_k(\ell,j,j',u,v) \,.
\end{align}
In what follows, we will bound the four terms on the right-hand side of \eqref{dis}.

Let $d\{(u_1,v_1),(u_2,v_2)\} = |u_1-u_2|^{\kappa} + |v_1-v_2|^{\kappa}$ for any $(u_1,v_1),(u_2,v_2)\in \cU^2$. For the first term, by Condition \ref{c.subgaussian} and  \eqref{eq:tailprob.vep2} in Section \ref{proof.lem.dis.Gaussian}, it holds that
\begin{align*} 
     \|\varepsilon_{t,j}(u_1)\varepsilon_{t+\ell,j'}(v_1) - \varepsilon_{t,j}(u_2)\varepsilon_{t+\ell,j'}(v_2)\|_{\psi_1}  \leq   C(  |u_1-u_2|^{\kappa} + |v_1-v_2|^{\kappa}) \lesssim d\{(u_1,v_1),(u_2,v_2)\}  
\end{align*}
for any $(t,\ell,j,j')$ and   $(u_1,v_1),(u_2,v_2)\in \cU^2$. For any given $(\ell,j,j',u_1,v_1,u_2,v_2)$, Condition \ref{c.alpha}  implies that $\{\varepsilon_{t,j}(u_1)\varepsilon_{t+\ell,j'}(v_1) - \varepsilon_{t,j}(u_2)\varepsilon_{t+\ell,j'}(v_2)\}$ is an $\alpha$-mixing sequence
with  $\alpha$-mixing coefficients $\tilde{\alpha}(m)\leq \alpha(|m-\ell|_{+})\lesssim \exp(-C|m-\ell|_{+})$. Then, by  Lemma \ref{tail_chang} with $(B_n,c_n,{r}_1,{r})=(d\{(u_1,v_1),(u_2,v_2)\},\ell,1,1/3)$,  we have under $H_0$ that 
\begin{align}\label{psi.tail}
&  \bbP\bigg[\bigg| \frac{n^{1/2}}{n_\ell}\sum_{t=1}^{n_\ell}\{\varepsilon_{t,j}(u_1)\varepsilon_{t+\ell,j'}(v_1) - \varepsilon_{t,j}(u_2)\varepsilon_{t+\ell,j'}(v_2) \}  \bigg|>x \bigg]\nonumber \\
&~~~~ \lesssim \exp\bigg(-C\bigg[\frac{x}{d\{(u_1,v_1),(u_2,v_2)\}}\bigg]^2\bigg)
+ \exp\bigg(-C\bigg[\frac{n^{1/2}x}{d\{(u_1,v_1),(u_2,v_2)\}}\bigg]^{1/3}\bigg) \nonumber\\
&~~~~ \lesssim \exp\bigg(-C\bigg[\frac{x}{d\{(u_1,v_1),(u_2,v_2)\}}\bigg]^{1/3}\bigg) 
\end{align}
for any $x\geq 0$ and $(\ell,j,j')$. Since $|u-b_m|\leq M^{-1}$ and $|v-b_{m'}|\leq M^{-1}$ for any $m,m'\in[M]$ and $(u,v)\in B_m\times B_{m'}$, Then for any given $m,m'\in[M]$, $D= \sup\{d\{(u_1,v_1),(u_2,v_2)\}: (u_1,v_1),(u_2,v_2) \in B_m\times B_{m'}\}\leq CM^{-\kappa}$. Notice that  $N_\epsilon = N(B_m\times B_{m'}, d;\epsilon) \leq CM^{-2}\epsilon^{-2/\kappa}$ for any $\epsilon\leq D'$ and $m,m'\in[M]$. Then, by \eqref{psi.tail} 
and Lemma \ref{tail_multi}(i) with $(c_{1n},c_{2n},\gamma)=(C,C,1/3)$, for any given $(\ell,j,j',m,m')$, we have for $0<D'\ll M^{-\kappa}\asymp D$ and any $x > 0$,
\begin{align}\label{I1.sup}
& \bbP\bigg\{\sup_{(u,v)\in B_m\times B_{m'}}n^{1/2}\textrm{I}_1(\ell,j,j',u,v) > x\bigg\} \nonumber \\
&~~~~~~~~~~~~ \lesssim N_{D'}\log_2\bigg(\frac{1}{D'}\bigg)\exp(-CM^{\kappa/3} x^{1/3}) + \frac{\int_0^{D'} \{\log(eN_\epsilon)\}^3 \,{\rm d}\epsilon}{ x} \nonumber\\
&~~~~~~~~~~~~ \lesssim \bigg(\frac{1}{M^{\kappa}D'} \bigg)^{2/\kappa}\log_2\bigg(\frac{1}{D'}\bigg)\exp(-CM^{\kappa/3} x^{1/3}) + \frac{\int_0^{D'}\{ \log(CM^{-2}\epsilon^{-2/\kappa})\}^3\,{\rm d}\epsilon}{ x} \nonumber \\
&~~~~~~~~~~~~ \lesssim_{\rm (a)}  \bigg(\frac{1}{M^{\kappa}D'} \bigg)^{2/\kappa}\log_2\bigg(\frac{1}{D'}\bigg)\exp(-CM^{\kappa/3} x^{1/3}) +  \frac{D' [\log\{1/(M^{\kappa}D')\}]^3}{ x} \nonumber\\
&~~~~~~~~~~~~ \lesssim_{\rm (b)}  \bigg(\frac{1}{M^{\kappa}D'} \bigg)^{2/\kappa}\log_2\bigg(\frac{1}{D'}\bigg)\exp(-CM^{\kappa/3} x^{1/3})  +  \frac{M^{-\kappa\delta}(  D')^{1-\delta}}{ x}
\end{align}
for some sufficiently small constant $\delta >0$, where ${\rm (a)}$ follows from $M^{\kappa}D'\ll 1$ and 
\begin{align*}
    \int_0^{D'}\{\log(CM^{-2}\epsilon^{-2/\kappa})\}^3\,{\rm d}\epsilon 
  =&\,  \int_{+\infty}^{\log\{CM^{-2}(D')^{-2/\kappa}\}} y^3 C^{\kappa/2} M^{-\kappa}(-\kappa/2)\exp(-\kappa y /2)\,{\rm d}y  \\
  \lesssim&\, M^{-\kappa}\int_{\log\{CM^{-2}(D')^{-2/\kappa}\}}^{+\infty} y^3\exp(-\kappa y/2)\,{\rm d}y \\
  \lesssim&\, M^{-\kappa} \int_{\log\{CM^{-\kappa}(D')^{-1}\}}^{+\infty} s^3\exp(-s)\,{\rm d}s \\
  \lesssim&\, M^{-\kappa}\bigg\{\log\bigg(\frac{1}{M^{\kappa}D'}\bigg)\bigg\}^{3} M^{\kappa}D' \\
  \lesssim&\, D' \bigg\{\log\bigg(\frac{1}{M^{\kappa}D'}\bigg)\bigg\}^{3} \,,
\end{align*}
and ${\rm (b)}$ follows from
\begin{align*}
    D' \bigg\{\log\bigg(\frac{1}{M^{\kappa}D'}\bigg)\bigg\}^{3}  
  \lesssim  D' (M^{\kappa}D')^{-\delta} 
  = M^{-\kappa \delta} (D')^{1-\delta}\,.
\end{align*}
Combining \eqref{I1.sup} with Bonferroni's inequality, it holds that
\begin{align*}
    &\bbP\bigg\{ \max_{\ell\in[L]}\max_{j,j'\in[p]}\max_{m,m'\in[M]}\sup_{(u,v)\in B_m\times B_{m'}}n^{1/2}\textrm{I}_1(\ell,j,j',u,v) > x\bigg\} \notag\\
&~~~~~~~~~~ \leq Lp^2M^2 \max_{\ell\in[L]}\max_{j,j'\in[p]}\max_{m,m'\in[M]} \bbP\bigg\{\sup_{(u,v)\in B_m\times B_{m'}}n^{1/2}\textrm{I}_1(\ell,j,j',u,v) > x\bigg\} \notag\\
&~~~~~~~~~~ \lesssim p^2\bigg(\frac{1}{D'}\bigg)^{2/\kappa}\log_2\bigg(\frac{1}{D'}\bigg) \exp(-CM^{\kappa/3} x^{1/3}) + \frac{p^2M^{2-\kappa\delta}(D')^{1-\delta}}{ x}
\end{align*}
for any $x\geq0$. Recall $M\asymp (np)^{C_M}$ for some sufficiently large constant $C_M>0$. Since $\kappa \leq 1$, we can select $ D' =C (n^2p^2M^{2-\kappa\delta})^{1/(\delta-1)} $ such that $D'\ll M^{-\kappa}$ and 
\begin{align}\label{eq:rate.I1}
\max_{\ell\in[L]}\max_{j,j'\in[p]}\max_{m,m'\in[M]}\sup_{(u,v)\in B_m\times B_{m'}} n^{1/2}\textrm{I}_1(\ell,j,j',u,v) = O_{\rm p}(n^{-1})\,.
\end{align}

For the second term, triangle inequality yields
\begin{align*}
&\textrm{I}_2(\ell,j,j',u,v)
= \bigg|\frac{1}{n_{\ell}}\sum_{t=1}^{n_{\ell}}\{\varepsilon_{t,j}(u)
\bar{\varepsilon}_{j'}(v)-\varepsilon_{t,j}(b_m)\bar{\varepsilon}_{j'}(b_{m'})\}\bigg|\\
&~~~=   \bigg|\frac{1}{n_{\ell}}\sum_{t=1}^{n_{\ell}}[\{\varepsilon_{t,j}(u)-\varepsilon_{t,j}(b_m)\}
\bar{\varepsilon}_{j'}(v)+\{\bar{\varepsilon}_{j'}(v)-\bar{\varepsilon}_{j'}(b_{m'})\}
\varepsilon_{t,j}(b_m)]\bigg|\\
&~~~\leq   \underbrace{\bigg|\frac{1}{n_{\ell}}\sum_{t=1}^{n_{\ell}}\{\varepsilon_{t,j}(u)
	-\varepsilon_{t,j}(b_m)\}\bigg|\bigg|\frac{1}{n}\sum_{t=1}^n\varepsilon_{t,j'}(v)\bigg|}_{\textrm{I}_2^{(1)}(\ell,j,j',u,v)}
+\underbrace{\bigg|\frac{1}{n} \sum_{t=1}^{n}\{\varepsilon_{t,j'}(v)-\varepsilon_{t,j'} (b_{m'})\}\bigg|\bigg|\frac{1}{n_{\ell}}\sum_{t=1}^{n_{\ell}} \varepsilon_{t,j}(b_m)\bigg|}_{\textrm{I}_2^{(2)}(\ell,j,j',u,v)}
\end{align*}
for any $(\ell,j,j',m,m')$ and $(u,v)\in B_m\times B_{m'}$. Following from the triangle inequality again, 
\begin{align}\label{eq:I21}
\textrm{I}_2^{(1)}(\ell,j,j',u,v)
\leq&\, \bigg|\frac{1}{n_{\ell}}\sum_{t=1}^{n_{\ell}}\{\varepsilon_{t,j}(u)
-\varepsilon_{t,j}(b_m)\}\bigg| \bigg|\frac{1}{n}\sum_{t=1}^n \{\varepsilon_{t,j'}(v)- \varepsilon_{t,j'}(b_{m'})\}\bigg| \notag\\
&\, + \bigg|\frac{1}{n_{\ell}}\sum_{t=1}^{n_{\ell}}\{\varepsilon_{t,j}(u)-\varepsilon_{t,j}(b_m)\}\bigg|
\bigg|\frac{1}{n}\sum_{t=1}^n  \varepsilon_{t,j'}(b_{m'})\bigg|\,.
\end{align}
Let $d(u,v) = |u-v|^{\kappa}$ for any $u,v\in \cU$. According to Condition \ref{c.subgaussian},  we have
\begin{align*} 
     \|\varepsilon_{t,j}(u) - \varepsilon_{t,j}(b_m) \|_{\psi_2}  \leq   C   |u-b_m|^{\kappa} \lesssim d(u,b_m)   
\end{align*}
for any $(t,j,m)$ and   $u\in B_m$.
Since $\mathbb{E}\{\varepsilon_{t,j}(u)\}=0$ for any $(t,j,u)$,   Condition \ref{c.alpha} and Lemma \ref{tail_chang} with $(B_n,c_n,{r}_1,{r})=(d(u,b_m),0,2,1/3)$ yield 
\begin{align}\label{psi.tail2}
&  \bbP\bigg[\bigg| \frac{n^{1/2}}{n_\ell}\sum_{t=1}^{n_\ell}\{\varepsilon_{t,j}(u) - \varepsilon_{t,j}(b_m) \}  \bigg|>x \bigg]\nonumber \\
 \lesssim & \exp\bigg[-C\bigg\{\frac{x}{d(u,b_m)}\bigg\}^2\bigg]
+ \exp\bigg[-C\bigg\{\frac{n^{1/2}x}{d(u,b_m)}\bigg\}^{1/3}\bigg]  \lesssim \exp\bigg[-C\bigg\{\frac{x}{d(u,b_m)}\bigg\}^{1/3}\bigg]
\end{align}
for any $x\geq 0$, $(\ell,j,m)$ and $u\in B_m$. Similar to the proof of \eqref{eq:rate.I1}, we can also have
\begin{align}\label{eq:diff.eps}
\max_{\ell\in[L]_0}\max_{j\in[p]}\max_{m\in [M]}\sup_{u\in B_m}\bigg|\frac{n^{1/2}}{n_{\ell}}\sum_{t=1}^{n_{\ell}}\{\varepsilon_{t,j}(u)
-\varepsilon_{t,j}(b_m)\}\bigg| = O_{\rm p}(n^{-1}) \,.
\end{align}
Condition \ref{c.subgaussian} yields that, for any $x\geq 0$, 
\begin{align}\label{tail.vep}
\max_{t\in[n]}\max_{j\in[p]}\sup_{u\in\cU}\bbP\{|\varepsilon_{t,j}(u)|>x\}\lesssim \exp(-Cx^2)\,.
\end{align}
By Bonferroni's inequality, Condition \ref{c.alpha} and Lemma \ref{tail_chang} with $(B_n,c_n,{r}_1,{r})=(1,0,2,1/3)$,
\begin{align*}
\bbP\bigg\{\max_{\ell\in[L]_0}\max_{j\in[p]}\max_{m\in[M]}\bigg|\frac{1}{n_{\ell}}\sum_{t=1}^{n_\ell}  \varepsilon_{t,j}(b_{m})\bigg| \geq x\bigg\} \lesssim pM\exp(-Cnx^2) + pM\exp(-Cn^{1/3}x^{1/3})
\end{align*}
for any $x\geq 0 $. Recall that  $M\asymp (np)^{C_M}$. Then  we have
\begin{align*}
\max_{\ell\in[L]_0}\max_{j\in[p]}\max_{m\in[M]} \bigg|\frac{1}{n_{\ell}}\sum_{t=1}^{n_{\ell}}  \varepsilon_{t,j}(b_{m})\bigg| 
=O_{\rm p}\big[n^{-1/2}\{\log(n p)\}^{1/2} \vee n^{-1}\{\log(np)\}^3 \big] \,.
\end{align*}
This, together with \eqref{eq:I21} and \eqref{eq:diff.eps}, yields
\begin{align*}
\max_{\ell\in[L]}\max_{ {j,j'} \in [p]}\max_{m,m'\in[M]} \sup_{(u,v)\in B_m\times B_{m'}}n^{1/2} {\rm I}_2^{(1)}(\ell,j,j',u,v) = O_{\rm p}(n^{-1})
\end{align*}
provided that $\log (np)  \ll n^{1/3}$. Analogously, we can also obtain
\begin{align*}
\max_{\ell\in[L]}\max_{ {j,j'} \in [p]}\max_{m,m'\in[M]} \sup_{(u,v)\in B_m\times B_{m'}}n^{1/2} {\rm I}_2^{(2)}(\ell,j,j',u,v) = O_{\rm p}(n^{-1})
\end{align*}
provided that $\log (np)  \ll n^{1/3}$, which further implies 
\begin{align}\label{eq:I2}
\max_{\ell\in[L]}\max_{j,j'\in[p]}\max_{m,m'\in[M]}\sup_{(u,v)\in B_m\times B_{m'}} n^{1/2} \textrm{I}_2(\ell,j,j',u,v) = O_{\rm p}(n^{-1})
\end{align}
provided that $\log (np)  \ll n^{1/3}$. Using  similar arguments as in the proof of \eqref{eq:I2},  we have 
\begin{align*}
& \max_{\ell\in[L]}\max_{ {j,j'} \in [p]}\max_{m,m'\in[M]}\sup_{(u,v)\in B_m\times B_{m'}} n^{1/2}\textrm{I}_3(\ell,j,j',u,v)=  O_{\rm p}(n^{-1}) \,, \\
& \max_{\ell\in[L]}\max_{ {j,j'} \in [p]}\max_{m,m'\in[M]}\sup_{(u,v)\in B_m\times B_{m'}} n^{1/2}\textrm{I}_4(\ell,j,j',u,v)=  O_{\rm p}(n^{-1})\,,
\end{align*}
provided that $\log (np)  \ll n^{1/3}$. This, together with \eqref{dis}, \eqref{eq:rate.I1} and \eqref{eq:I2}, yields $|\tilde{T}_{n}-\tilde{T}^{\rm dis}_{n}|=O_{\rm p}(n^{-1})$ provided that  $\log (np)  \ll n^{1/3}$. We   complete  the proof of Lemma \ref{lemma_tn_dis}.
\hfill $\Box$

\subsection{Proof of Lemma \ref{lemma_tn_dis_GA}}\label{proof.lemma_tn_dis_GA}

Let  ${\bvep}^{\rm dis}_t=({\vep}^{\rm dis}_{t,1},\ldots,{\vep}^{\rm dis}_{t,\tilde{p}})^{\T} =\{\vep_{t,1}(b_1),\ldots,\vep_{t,p}(b_1),\vep_{t,1}(b_2), \ldots,\vep_{t,p}(b_2),\ldots, \vep_{t,p}(b_M)\}^{\T}$, where $\tilde{p}=pM$. Moreover,   we denote $\tilde{\bfeta}_{t}^{\rm dis}=(\tilde{\eta}_{t,1}^{\rm dis},\ldots,\tilde{\eta}_{t,L\tilde{p}^2}^{\rm dis})^{\T}$ with
$\tilde{\eta}_{t,k}^{\rm dis}= (\varepsilon^{\rm dis}_{t,j}-\bar{{\varepsilon}}^{\rm dis}_{ j}) (\varepsilon^{\rm dis}_{t+\ell,j'}-\bar{\varepsilon}^{\rm dis}_{j'})$, where $\bar{\vep}^{\rm dis}_j=n^{-1}\sum_{t=1}^{n} \vep^{\rm dis}_{t,j}$ for any $j\in[\tilde{p}]$. Here, for any fixed $k \in[L\tilde{p}^2]$, we write
$(\ell,j,j')=(\ell_k,j_k,j'_k)$ for its associated unique triple  for notational simplicity. In addition, we define $\bUpsilon =(\Upsilon_1,\dots,\Upsilon_{L\tilde p^2})^{\T}= ([{\rm{vec}}\{{\widetilde{\bSigma}}^{(1), {\rm dis}}\}]^{\T}, \ldots, [{\rm{vec}}\{\widetilde{\bSigma}^{(L), {\rm dis}}\}]^{\T})^{\T}$, where $\widetilde\bSigma^{(\ell), {\rm dis}}=\{\widetilde\Sigma^{(\ell), {\rm dis}}_{jj'}\}_{\tilde{p}\times\tilde p}$ with $\widetilde\Sigma^{(\ell),{\rm dis}}_{j{j}'} = n_\ell^{-1}\sum_{t=1}^{n_\ell} \tilde{\eta}_{t,k}^{\rm dis}$. Then, the   statistic  $\tilde{T}_n^{\rm dis}$ defined in \eqref{Tn.dis} can be rewritten  as
\begin{align*}
\tilde{T}^{\rm dis}_{n}=\max_{ \ell \in [L]}\max_{ j,j' \in [\tilde{p}]} \bigg| \frac{n^{1/2}}{n_\ell}\sum_{t=1}^{n_\ell} (\varepsilon^{\rm dis}_{t,j}-\bar{{\varepsilon}}^{\rm dis}_{ j}) (\varepsilon^{\rm dis}_{t+\ell,j'}-\bar{\varepsilon}^{\rm dis}_{j'})  \bigg|   = n^{1/2}\max_{k\in[L\tilde{p}^2]}|\Upsilon_k|  \,.
\end{align*}
At the same time, by \eqref{Tn.dis.G}, we know that
\begin{align}\label{new.def.tndisG}
    \tilde{T}^{\rm dis, G}_{n}= |\tilde{\boldsymbol{\cG}}^{\rm dis}|_{\max} = \max_{k\in[L\tilde{p}^2]}|\tilde\cG^{\rm dis}_k| \,,
\end{align}
where  $\tilde{\boldsymbol{\cG}}^{\rm dis}=(\tilde\cG^{\rm dis}_1,\dots,\tilde\cG^{\rm dis}_{L\tilde p^2})^{\T}\sim \cN(\0, \widetilde\bXi^{\rm dis}_{n})$ with $\widetilde\bXi^{\rm dis}_{n}
%=(\widetilde{\Xi}^{\rm dis}_{n,k_1k_2})_{L\tilde p^2\times L\tilde p^2} 
= \var(n^{1 / 2}\bUpsilon)$.   Therefore,
\begin{align*}%\label{eq:diff.Tndis}
\sup_{x\in \mathbb{R}}|\mathbb{P}(\tilde{T}_n^{\rm dis}\leq x) - \mathbb{P}(\tilde{T}_n^{\rm dis,G}\leq x)|= \sup_{x\in \mathbb{R}}\bigg|\mathbb{P}\bigg(n^{1/2}\max_{k\in[L\tilde{p}^2]}|\Upsilon_k| \leq x\bigg) - \mathbb{P}\bigg(\max_{k\in[L\tilde{p}^2]}|\tilde\cG^{\rm dis}_k|\leq x\bigg)\bigg| \,.
\end{align*}
Let $Z= n^{1 / 2}\max_{ k \in [L\tilde p^2]} \Upsilon_{k}$, $ V=\max_{k \in [L\tilde p^2]} \tilde\cG^{\rm dis}_{k}$ and $\check{Z}=\max_{ k \in [L\tilde p^2] }n_L^{-1/2}\sum_{t=1}^{n_L} {\tilde\eta}_{t,k}^{\rm dis}$. We claim that
\begin{align}\label{lem.claim1}
    d_{1n}:=\sup_{x\in \mathbb{R}}|\mathbb{P}(\check{Z}\leq x)-\mathbb{P}(V\leq x)|=o(1)\,,
\end{align}
provided that $\log \tilde{p} \ll n^{2/21}$, and that with $\epsilon_1=C_0n^{-1/2} \log \tilde{p} $ for some sufficiently large constant $C_0>0$, 
\begin{align}\label{lem.claim2}
    \mathbb{P}(|Z-\check{Z}|> \epsilon_1)+\sup_{x\in \mathbb{R}}\mathbb{P}(| V-x| \leq \epsilon_1) =o(1)\,,
\end{align} 
provided that $\log\tilde{p} \ll n^{3/10}$.  Then, by \eqref{lem.claim1} and \eqref{lem.claim2}, we can conclude that
\begin{align}\label{dn}
    \sup_{x\in \mathbb{R}}|\mathbb{P}({Z}\leq x)-\mathbb{P}(V\leq x)| \leq d_{1n}+ \mathbb{P}(|Z-\check{Z}|> \epsilon_1)+\sup_{x\in \mathbb{R}}\mathbb{P}(| V-x| \leq \epsilon_1) = o(1)\,,
\end{align}
provided that $\log \tilde{p} \ll n^{2/21}$. Denote $\tilde{\boldsymbol{\cG}}^{\rm dis, ext}=\{(\tilde{\boldsymbol{\cG}}^{\rm dis})^{\T},(-\tilde{\boldsymbol{\cG}}^{\rm dis})^{\T}\}^{\T} =(\tilde\cG^{\rm dis, ext}_{1},\dots,\tilde\cG^{\rm dis, ext}_{2{L\tilde{p}^2}})^{\T}$ and
${\boldsymbol{\Upsilon}}^{\rm{ext}}=\{{\boldsymbol{\Upsilon}}^{\T},(-{\boldsymbol{\Upsilon}})^{\T}\}^{\T}=(\Upsilon_{1}^{\rm{ext}},\dots,\Upsilon_{2{L\tilde{p}^2}}^{\rm{ext}})^{\T}$. Then
\begin{align*}
\sup_{x\in \mathbb{R}}|\mathbb{P}(\tilde{T}_{n}^{\rm dis}\leq x)-\mathbb{P}(\tilde{T}_{n}^{\rm dis,G}\leq x)|
=& \sup_{x\in \mathbb{R}}\bigg|\mathbb{P}\bigg(\max_{k \in [2{L\tilde{p}^2}]} n^{1/2} {\Upsilon}_{k}^{\rm{ext}}\leq x\bigg) -\mathbb{P}\bigg(\max_{k \in [2{L\tilde{p}^2}]} \tilde\cG_{k}^{\rm{dis,ext}}\leq x\bigg)\bigg|\,.
%=:& \tilde{d}_n \,.
\end{align*} 
Recall $M\asymp (np)^{C_M}$. Following the analogous arguments as in the proof of \eqref{dn}, we also have $\sup_{x\geq 0}|\mathbb{P}(\tilde{T}_{n}^{\rm dis}\leq x)-\mathbb{P}(\tilde{T}_{n}^{\rm dis,G}\leq x)|=o(1)$ provided that $\log   p \ll n^{2/21}$. Hence, the proof of Lemma \ref{lemma_tn_dis_GA} will be completed once \eqref{lem.claim1} and \eqref{lem.claim2} are established.

\noindent\underline{Proof of \eqref{lem.claim1}.} Define $\check{{\boldsymbol{\cG}}}^{\rm dis}=(\check{\cG}_1^{\rm dis},\ldots, \check{\cG}_{{L\tilde p^2}}^{\rm dis})^{\T} \sim \cN({\bf 0}, \check{\bXi}_{n}^{\rm dis})$, where 
\begin{align*}
    \check{\bXi}_{n}^{\rm dis}=(\check{\Xi}_{n,k_1k_2}^{\rm dis})_{L\tilde p^2\times L\tilde p^2}= \var\bigg( \frac{1}{n_L^{1/2}} \sum_{t=1}^{n_L}  \tilde\bfeta_t^{\rm dis} \bigg) \,.
\end{align*}
Let $\check{V}=\max _{k \in [L\tilde p^2]} \check{\cG}_{k}^{\rm dis}$ and ${d}_{2n}:=\sup_{x\in \mathbb{R}}|\mathbb{P}(\check{Z}\leq x)-\mathbb{P}(\check{V}\leq x)|$. It holds that
\begin{align}\label{d1ntod2n}
d_{1n}  
\leq \sup_{x\in \mathbb{R}} |\mathbb{P}(\check{V}\leq x)-\mathbb{P}(V\leq x) | +d_{2n}\, .
\end{align}
Recall $ V=\max_{k \in [L\tilde p^2]} \tilde\cG^{\rm dis}_{k}$, where $\tilde{\boldsymbol{\cG}}^{\rm dis}=(\tilde\cG^{\rm dis}_1,\dots,\tilde\cG^{\rm dis}_{L\tilde p^2})^{\T}\sim \cN(\0, \widetilde\bXi^{\rm dis}_{n})$. Condition \ref{c.bvar} implies $\min_{k\in [L\tilde p^2]}\widetilde\Xi^{\rm dis}_{n,kk}\geq C$. By Proposition 2.1 of \cite{Cher2022_supp}, we have
\begin{align}
\label{comparison}
\sup_{x\in \mathbb{R}} |\mathbb{P}(\check{V}\leq x)-\mathbb{P}(V\leq x) |
\lesssim  \Delta_{0}^{1 / 2} \log  (L\tilde p^2 )  \, ,
\end{align}
where $\Delta_{0}=\|\widetilde\bXi^{\rm dis}_{n}-\check{\bXi}_{n}^{\rm dis}\|_{\max}$.  Notice that $\widetilde\bXi^{\rm dis}_{n}=(\widetilde{\Xi}^{\rm dis}_{n,k_1k_2})_{L\tilde p^2\times L\tilde p^2} =\var(n^{1/2}\bUpsilon)$, where  $\bUpsilon=(\Upsilon_1,\ldots,\Upsilon_{L\tilde p^2})^{\T}$ with $\Upsilon_k=n_{\ell_k}^{-1}\sum_{t=1}^{n_{\ell_k}}\tilde\eta_{t,k}^{\rm dis}$ for any $k\in[L\tilde p^2]$. Let $\mathring{\bfeta}_t^{\rm dis} =(\mathring{\eta}_{t,1}^{\rm dis},\dots, \mathring{\eta}_{t,L\tilde p^2}^{\rm dis})^{\T}=\tilde\bfeta_t^{\rm dis}-\mathbb{E}(\tilde\bfeta_t^{\rm dis})$ for any $t\in[n]$. Then, the $(k_1,k_2)$-th component of $|\widetilde\bXi^{\rm dis}_{n}-\check{\bXi}_{n}^{\rm dis}|$ is
\begin{align} \label{eq:deltasplit}
& \bigg|\frac{n}{n_{\ell_1}n_{\ell_2}}\mathbb{E}\bigg\{\bigg(\sum_{t=1}^{n_{\ell_1}}\mathring{\eta}_{t,k_1}^{\rm dis}\bigg)
\bigg(\sum_{t=1}^{n_{\ell_2}}\mathring{\eta}_{t,k_2}^{\rm dis}\bigg)\bigg\}
-\frac{1}{{n_L}}\mathbb{E}\bigg\{\bigg(\sum_{t=1}^{{n_L}}\mathring{\eta}_{t,k_1}^{\rm dis}\bigg)
\bigg(\sum_{t=1}^{{n_L}}\mathring{\eta}_{t,k_2}^{\rm dis}\bigg)\bigg\} \bigg|  \\
&~ \leq \bigg|\frac{n_{\ell_1}n_{\ell_2}-n{n_L}}{{n_L}n_{\ell_1}n_{\ell_2}}\mathbb{E}\bigg\{\bigg(\sum_{t=1}^{{n_L}}
\mathring{\eta}_{t,k_1}^{\rm dis}\bigg)
\bigg(\sum_{t=1}^{{n_L}}\mathring{\eta}_{t,k_2}^{\rm dis}\bigg)\bigg\}\bigg|
+\bigg|\frac{n}{n_{\ell_1}n_{\ell_2}}\mathbb{E}\bigg\{\bigg(\sum_{t=1}^{{n_L}}\mathring{\eta}_{t,k_1}^{\rm dis}\bigg)
\bigg(\sum_{t={n_L}+1}^{n_{\ell_2}}\mathring{\eta}_{t,k_2}^{\rm dis}\bigg)\bigg\}\bigg|
\notag\\
&~~~~~ +\bigg|\frac{n}{n_{\ell_1}n_{\ell_2}}\mathbb{E}\bigg\{\bigg(\sum_{t={n_L}+1}^{n_{\ell_1}}\mathring{\eta}_{t,k_1}^{\rm dis}\bigg)
\bigg(\sum_{t=1}^{{n_L}}\mathring{\eta}_{t,k_2}^{\rm dis}\bigg)\bigg\}\bigg|+\bigg|\frac{n}{n_{\ell_1}n_{\ell_2}}
\mathbb{E}\bigg\{\bigg(\sum_{t={n_L}+1}^{n_{\ell_1}}\mathring{\eta}_{t,k_1}^{\rm dis}\bigg)
\bigg(\sum_{t={n_L}+1}^{n_{\ell_2}}\mathring{\eta}_{t,k_2}^{\rm dis}\bigg)\bigg\} \bigg| \, ,   \notag
\end{align}
where $ \ell_i =\ell_{k_i}$ for $i\in\{1,2\}$. Recall $\tilde{\eta}_{t,k}^{\rm dis}= (\varepsilon^{\rm dis}_{t,j}-\bar{{\varepsilon}}^{\rm dis}_{ j}) (\varepsilon^{\rm dis}_{t+\ell,j'}-\bar{\varepsilon}^{\rm dis}_{j'})$. 
For any $k\in[L\tilde p^2]$,
\begin{align*}%\label{eq:sumetac}
\sum_{t=1}^{{n_L}}\mathring\eta_{t,k}^{\rm dis}
=&\,\sum_{t=1}^{{n_L}}\{ \vep_{t,j}^{\rm dis}
\vep_{t+\ell,j'}^{\rm dis}-\mathbb{E}(\vep_{t,j}^{\rm dis}
\vep_{t+\ell,j'}^{\rm dis}) \}\\
&\,- \frac{1}{n}\bigg(\sum_{t=1}^{n_L}\vep_{t,j}^{\rm dis}\bigg)
\bigg(\sum_{t=1}^n\vep_{t,j'}^{\rm dis}\bigg)
+\frac{1}{n}\mathbb{E}\bigg\{\bigg(\sum_{t=1}^{n_L}\vep_{t,j}^{\rm dis}\bigg)
\bigg(\sum_{t=1}^n\vep_{t,j'}^{\rm dis}\bigg)\bigg\} \notag\\
&\,
-\frac{1}{n}\bigg(\sum_{t=1}^{n_L}\vep_{t+\ell,j'}^{\rm dis}\bigg)\bigg(\sum_{t=1}^n \vep_{t,j}^{\rm dis}\bigg)
+\frac{1}{n}\mathbb{E}\bigg\{\bigg(\sum_{t=1}^{n_L}\vep_{t+\ell,j'}^{\rm dis}\bigg)
\bigg(\sum_{t=1}^n\vep_{t,j}^{\rm dis}\bigg)\bigg\} \notag\\
&\, +\frac{n_L}{n^2}\bigg(\sum_{t=1}^{n}\vep_{t,j}^{\rm dis}\bigg)
\bigg(\sum_{t=1}^n\vep_{t,j'}^{\rm dis}\bigg) -\frac{n_L}{n^2}\mathbb{E}\bigg\{\bigg(\sum_{t=1}^{n}\vep_{t,j}^{\rm dis}\bigg)
\bigg(\sum_{t=1}^n\vep_{t,j'}^{\rm dis}\bigg)\bigg\}\,.
\end{align*}
Cauchy-Schwarz inequality and Jensen's inequality yield that
\begin{align*}%\label{eq:esumetac}
\mathbb{E}\bigg\{\bigg(\sum_{t=1}^{{n_L}}\mathring{{\eta}}_{t,k}^{\rm dis}\bigg)^2\bigg\}
\lesssim&\, \mathbb{E}\bigg(\bigg[\sum_{t=1}^{{n_L}}\{ \vep_{t,j}^{\rm dis}\vep_{t+\ell,j'}^{\rm dis}-\mathbb{E}(\vep_{t,j}^{\rm dis}\vep_{t+\ell,j'}^{\rm dis}) \}\bigg]^2\bigg)
+\frac{1}{n^2}\mathbb{E}\bigg\{\bigg(\sum_{t=1}^{n_L}\vep_{t,j}^{\rm dis}\bigg)^2
\bigg(\sum_{t=1}^n\vep_{t,j'}^{\rm dis}\bigg)^2\bigg\} \notag\\
&\, +\frac{1}{n^2}\mathbb{E}\bigg\{\bigg(\sum_{t=1}^{n_L}\vep_{t+\ell,j'}^{\rm dis}\bigg)^2
\bigg(\sum_{t=1}^n\vep_{t,j}^{\rm dis} \bigg)^2\bigg\}+\frac{1}{n^2}\mathbb{E}\bigg\{\bigg(\sum_{t=1}^{n}\vep_{t,j}^{\rm dis}\bigg)^2
\bigg(\sum_{t=1}^n\vep_{t,j'}^{\rm dis} \bigg)^2\bigg\}\,.
\end{align*}
By Condition \ref{c.subgaussian} and \eqref{tail.vep}, we have
\begin{align}\label{eq:tail2}
    \mathbb{P}\{|\vep_{t,j}^{\rm dis}\vep_{t+\ell,j'}^{\rm dis} -\mathbb{E}(\vep_{t,j}^{\rm dis} \vep_{t+\ell,j'}^{\rm dis} )|>x\}\lesssim \exp(-Cx )
\end{align}  
for any $x\geq 0$. Then, Condition \ref{c.alpha} and Lemma \ref{tail_chang} with $( B_n, c_n, r_1, {r}) =(1, \ell,1 , 1/3)$ yield
\begin{align}\label{eq:vep2}
\max_{\ell\in[L]}\max_{j,j'\in[\tilde p]} \bigg\|\sum_{t=1}^{{n_L}}\{ \vep_{t,j}^{\rm dis} \vep_{t+\ell,j'}^{\rm dis} -\mathbb{E}(\vep_{t,j}^{\rm dis}\vep_{t+\ell,j'}^{\rm dis}) \} \bigg\|_s\lesssim n^{1/2}
\end{align}
for any integer $s>0$. Similarly, for any integer $s>0$,
\begin{align}\label{eq:vep1}
    \max_{\ell\in[L]_0}\max_{j\in[\tilde p]} \bigg\|\sum_{t=1}^{{n_\ell}}\vep_{t+\ell,j}^{\rm dis}\bigg\|_s \lesssim n^{1/2}\,.
\end{align}
 Then, by Cauchy-Schwarz inequality again, it holds that
$  n^{-2} [\mathbb{E}\{(\sum_{t=1}^{n_L}\vep_{t,j}^{\rm dis})^2(\sum_{t=1}^n\vep_{t,j'}^{\rm dis})^2\}
+\mathbb{E}\{(\sum_{t=1}^{n_L}\vep_{t+\ell,j'}^{\rm dis})^2(\sum_{t=1}^n\vep_{t,j}^{\rm dis})^2\}
+\mathbb{E}\{(\sum_{t=1}^{n}\vep_{t,j}^{\rm dis})^2(\sum_{t=1}^n\vep_{t,j'}^{\rm dis})^2\}] \leq C$. We can conclude  that
\begin{align}\label{moment}
\max_{k\in[L\tilde p^2]}\bigg\|\sum_{t=1}^{{n_L}}\mathring{\eta}_{t,k}^{\rm dis}\bigg\|_2\lesssim n^{1/2}  \,.
\end{align}
Since $L$ is a fixed integer, it follows by Condition \ref{c.subgaussian} that for any $x\geq 0$,
\begin{align} 
&\, \max_{\ell\in[L]_0}\max_{j'\in[\tilde p]} \mathbb{P}\bigg(\sum_{t=n_L+1}^{n_{\ell}}|{\vep}^{\rm dis}_{t+\ell,j'}|>x\bigg)\lesssim \exp(-Cx^2)  \,, \label{eq:tailsum1} \\
&\max_{\ell\in[L]}\max_{j,j'\in[\tilde p]} \mathbb{P}\bigg(\sum_{t=n_L+1}^{n_{\ell}}|{\vep}^{\rm dis}_{t,j}{\vep}^{\rm dis}_{t+\ell,j'}|>x\bigg) \lesssim \exp(-Cx) \,. \label{eq:tailsum2}
\end{align}
By   \eqref{eq:tailsum1} and \eqref{eq:tailsum2}, we have  $\max_{\ell\in[L]}\max_{j,j'\in[\tilde p]}\|\sum_{t=n_L+1}^{{n_\ell}}\{ \vep_{t,j}^{\rm dis} \vep_{t+\ell,j'}^{\rm dis}-\mathbb{E}(\vep_{t,j}^{\rm dis} \vep_{t+\ell,j'}^{\rm dis}) \}\|_s\leq C$ and $\max_{\ell\in[L]_0}\max_{ j'\in[\tilde p]} \|\sum_{t=n_L+1}^{{n_\ell}}\vep_{t+\ell,j'}^{\rm dis}\|_s\leq C$      for any integer $s>0$.  
Following the similar arguments as in the proof of \eqref{moment}, 
it holds that
\begin{align}\label{moment2} 
\max_{k\in[L\tilde p^2]} \bigg\|\sum_{t=n_L+1}^{{n_\ell}}\mathring{\eta}_{t,k}^{\rm dis}\bigg\|_2\lesssim 1   \,.
\end{align}
Combining \eqref{eq:deltasplit}, \eqref{moment} and \eqref{moment2}, we can conclude that
\begin{align*} 
\bigg|\frac{n}{n_{\ell_1}n_{\ell_2}}\mathbb{E}\bigg\{\bigg(\sum_{t=1}^{n_{\ell_1}}\mathring{\eta}_{t,k_1}^{\rm dis}\bigg)
\bigg(\sum_{t=1}^{n_{\ell_2}}\mathring{\eta}_{t,k_2}^{\rm dis}\bigg)\bigg\}
-\frac{1}{{n_L}}\mathbb{E}\bigg\{\bigg(\sum_{t=1}^{{n_L}}\mathring{\eta}_{t,k_1}^{\rm dis}\bigg)
\bigg(\sum_{t=1}^{{n_L}}\mathring{\eta}_{t,k_2}^{\rm dis}\bigg)\bigg\} \bigg|\lesssim n^{-1/2}
\end{align*}
for any  $k_1,k_2\in[L\tilde p^2]$, which further implies that $|\widetilde{\Xi}^{\rm dis}_{n,k_1k_2}-\check{\Xi}_{n,k_1k_2}^{\rm dis}|\lesssim n^{-1/2}$ for any $k_1,k_2\in[L\tilde p^2]$.  Note that this upper bound  does not depend on $(k_1,k_2)$. Thus,
\begin{align}\label{var_dis}
\Delta_{0}=\|\widetilde{\bXi}_{n}^{\rm dis}-\check{\bXi}_{n}^{\rm dis}\|_{\max}\lesssim n^{-1/2}\,.
\end{align}
Together with \eqref{comparison}, we have
\begin{align}\label{eq:diff.V.Vc}
\sup_{x\in \mathbb{R}}|\mathbb{P}(\check{V}\leq x)-\mathbb{P}(V\leq x)|\lesssim n^{-1 / 4} \log  (L\tilde p^2 )=o(1)\,,
\end{align}
provided that $\log \tilde{p} \ll n^{1/4}$. This, together with \eqref{d1ntod2n}, yields
\begin{align}\label{d1ntod2n.2}
    d_{1n}  
\leq   d_{2n} + o(1)\,,
\end{align}
provided that $\log \tilde{p} \ll n^{1/4}$.

Next, we will show that  $d_{2n}=\sup_{x\in \mathbb{R}}|\mathbb{P}(\check{Z}\leq x)-\mathbb{P}(\check{V}\leq x)|=o(1)$, where $\check{Z}=\max_{ k \in [L\tilde p^2] }n_L^{-1/2}\sum_{t=1}^{n_L} {\tilde\eta}_{t,k}^{\rm dis}$ and $\check{V}=\max _{k \in [L\tilde p^2]} \check{\cG}_{k}^{\rm dis}$, where $\check{{\boldsymbol{\cG}}}^{\rm dis}=(\check{\cG}_1^{\rm dis},\ldots, \check{\cG}_{{L\tilde p^2}}^{\rm dis})^{\T} \sim \cN({\bf 0}, \check{\bXi}_{n}^{\rm dis})$ with $\check{\bXi}_{n}^{\rm dis} = \var( {n_L^{-1/2}} \sum_{t=1}^{n_L}  \tilde\bfeta_t^{\rm dis})$.  For any $t\in[n_L]$, let
\begin{align}\label{dot.eta}
    \dot\bfeta_t^{\rm dis}=(\dot\eta_{t,1}^{\rm dis},\dots,\dot\eta_{t,L\tilde p^2}^{\rm dis})^{\T} =([\rm{vec}\{\bvep_t^{\rm dis} (\bvep_{t+1}^{{\rm dis}})^{\T}\}]^{\T}, \ldots, [\rm{vec}\{\bvep_t^{\rm dis} (\bvep_{t+L}^{{\rm dis}})^{\T}\}]^{\T})^{\T} \,.
\end{align}
 Define $\dot{Z}=\max_{k\in[L\tilde p^2]}n_L^{-1/2}\sum_{t=1}^{{n_L}}\dot\eta_{t,k}^{\rm dis}$ and $\dot{V}=\max_{k \in [L\tilde p^2]}\dot{\cG}_k^{\rm dis}$, where $\dot{\boldsymbol{\cG}}^{\rm dis}=(\dot{\cG}_1^{\rm dis},\dots,\dot{\cG}_{L\tilde p^2}^{\rm dis})^{\T}\sim \cN({\bf{0}},\dot{\bXi}_{n}^{\rm dis})$, with $\dot{\bXi}_{n}^{\rm dis} = \var( {n_L^{-1/2}} \sum_{t=1}^{n_L}  \dot\bfeta_t^{\rm dis})$.
 Denote $d_{3n}:=\sup_{x \in \mathbb{R}}|\mathbb{P}(\dot{Z}\leq x)-\mathbb{P}(\dot{V}\leq x)|$. Analogous to \eqref{dn}, it holds that, for any $\epsilon_2 > 0$,
\begin{align}\label{pre_gau}
d_{2n} \leq&\,\mathbb{P}(|\check{Z}-\dot{Z}|>\epsilon_2)
+\sup_{x\in \mathbb{R}} \mathbb{P}( |\check{V}-x|\leq  \epsilon_2) +\sup_{x\in \mathbb{R}}|\mathbb{P}(\check{V}\leq x)-\mathbb{P}(\dot{V}\leq x)|+ d_{3n}\,.
\end{align}
We next bound the four terms on the right-hand side of the above inequality separately.  
Let ${\bf W}_t^{\rm dis}=(W_{t,1}^{\rm dis},\dots,W_{t,L\tilde p^2}^{\rm dis})^{\T}=\tilde{\bfeta}_t^{\rm dis}-\dot\bfeta_t^{\rm dis}$ for any $t\in[n_L]$, where $W_{t,k}^{\rm dis}=-\vep_{t,j}^{\rm dis} \bar{\vep}^{\rm dis}_{j'}-\bar{\vep}^{\rm dis}_{j}\vep_{t+\ell,j'}^{\rm dis}+\bar{\vep}^{\rm dis}_{j} \bar{\vep}^{\rm dis}_{j'}$. Then, it holds that
\begin{align*}
    |\check{Z}-\dot{Z}|\leq \max_{k \in [L\tilde p^2] } \bigg|\frac{1}{n_L^{1/2}}\sum_{t=1}^{{n_L}}(\tilde\eta_{t,k}^{\rm dis}-\dot\eta_{t,k}^{\rm dis})\bigg|
= \max_{k \in [L\tilde p^2] } \bigg|\frac{1}{n_L^{1/2}}\sum_{t=1}^{{n_L}}W_{t,k}^{\rm dis}\bigg|\,.
\end{align*}
By Conditions \ref{c.alpha} and \ref{c.subgaussian}, and Lemma \ref{tail_chang} with $( B_n,c_n,  r_1, {r}) =( 1,0, 2, 1/3)$, we have
\begin{align}\label{eq:sumvep}
\max_{\ell\in[L]_0}\max_{j\in[\tilde p]}\mathbb{P}\bigg( \bigg| \sum_{t=1}^{n_\ell}{\vep}^{\rm dis}_{t,j} \bigg|>x   \bigg) \lesssim \exp(-Cn^{-1}x^2)+\exp(-Cx^{1/3})
\end{align}
for any $x\geq 0$. Since 
\begin{align}\label{Wt}
\bigg|\frac{1}{n_L^{1/2}}\sum_{t=1}^{{n_L}}W_{t,k}^{\rm dis}\bigg|\lesssim & \frac{1}{n^{3/2}}\bigg(\bigg|\sum_{t=1}^{n_L}{\vep}^{\rm dis}_{t,j}\bigg|\bigg|\sum_{t=1}^{n}{\vep}^{\rm dis}_{t,j'}\bigg| + \bigg|\sum_{t=1}^{n_L}{\vep}^{\rm dis}_{t+\ell,j'}\bigg|\bigg|\sum_{t=1}^{n}{\vep}^{\rm dis}_{t,j}\bigg|+ \bigg|\sum_{t=1}^{n}{\vep}^{\rm dis}_{t,j}\bigg|\bigg|\sum_{t=1}^{n}{\vep}^{\rm dis}_{t,j'}\bigg|\bigg)\,,
\end{align}
by Bonferroni's inequality and \eqref{eq:sumvep},
\begin{align*}
\mathbb{P}(|\check{Z}-\dot{Z}|>\epsilon_2)
\lesssim&\, \tilde{p}^2\max_{k \in [L\tilde{p}^2] } \mathbb{P}\bigg(\bigg|\frac{1}{{n_L}^{1/2}}\sum_{t=1}^{{n_L}}W_{t,k}^{\rm dis}\bigg|>\epsilon_2\bigg) \\
\lesssim&\,  \tilde{p}^2\exp(-C n^{1/2}\epsilon_2) + \tilde{p}^2 \exp(-Cn^{1/4}\epsilon_2^{1/6})\,.
\end{align*}
By Condition \ref{c.bvar} and \eqref{var_dis}, we know that $\min_{k\in [L\tilde p^2]}\check\Xi^{\rm dis}_{n,kk}\geq  C$. 
In addition, by the Nazarov's inequality 
\citep[see, e.g., Lemma~A.1 of][]{Cher2017_supp}, we have
\begin{align*}
    \sup_{x\in \mathbb{R}} \mathbb{P}( |\check{V}-x|\leq  \epsilon_2) \lesssim \epsilon_2 \sqrt{\log(L\tilde{p}^2)}\,.
\end{align*}
Selecting $\epsilon_2=Cn^{-1/2}\log \tilde{p}$ for some sufficiently large constant $C>0$ yields
\begin{align}\label{first.2}
    \mathbb{P}(|\check{Z}-\dot{Z}|>\epsilon_2)
+\sup_{x\in \mathbb{R}} \mathbb{P}( |\check{V}-x|\leq  \epsilon_2) =o(1)\,,
\end{align}
provided that $\log \tilde{p} \ll n^{1/5}$.

Following from  Proposition 2.1 of \cite{Cher2022_supp},  we have
\begin{align}
\label{comp_2}
\sup_{x\in\mathbb{R}} |\mathbb{P}(\check{V}\leq x)-\mathbb{P}(\dot{V}\leq x) |
\lesssim \check{\Delta}_{0}^{1 / 2}  \log  ({L\tilde{p}^2} ) \, ,
\end{align}
where $\check{\Delta}_{0}=\|\check{\bXi}_{n}^{\rm dis}-\dot{\bXi}_{n}^{\rm dis}\|_{\max}$. Notice that, the $(k_1,k_2)$-th component of $|\check{\bXi}_{n}^{\rm dis}-\dot{\bXi}_{n}^{\rm dis}|$ is
\begin{align*}
|\check\Xi_{n,k_1k_2}^{\rm dis}-\dot\Xi_{n,k_1k_2}^{\rm dis}|
\leq&\,  \underbrace{\bigg| \frac{1}{{n_L}}\sum_{t_1,t_2=1}^{n_L}\mathbb{E}[\{\dot\eta_{t_1,k_1}^{\rm dis}-\mathbb{E}(\dot\eta_{t_1,k_1}^{\rm dis})\}\{W_{t_2,k_2}^{\rm dis}-\mathbb{E}(W_{t_2,k_2}^{\rm dis})\}]
	\bigg|}_{{\rm I}_{1,k_1,k_2}}\\
&\, +\underbrace{\bigg| \frac{1}{{n_L}}\sum_{t_1,t_2=1}^{n_L}\mathbb{E}[
	\{W_{t_1,k_1}^{\rm dis}-\mathbb{E}(W_{t_1,k_1}^{\rm dis})\}\{\dot\eta_{t_2,k_2}^{\rm dis}-\mathbb{E}(\dot\eta_{t_2,k_2}^{\rm dis})\}]
	\bigg|}_{{\rm I}_{2,k_1,k_2}}\\
&\, +\underbrace{\bigg| \frac{1}{{n_L}}\sum_{t_1,t_2=1}^{n_L}\mathbb{E}[
	\{W_{t_1,k_1}^{\rm dis}-\mathbb{E}(W_{t_1,k_1}^{\rm dis})\}\{W_{t_2,k_2}^{\rm dis}-\mathbb{E}(W_{t_2,k_2}^{\rm dis})\}]
	\bigg|}_{{\rm I}_{3,k_1,k_2}} \,.
\end{align*}
Following from \eqref{eq:vep1}, \eqref{Wt} and Cauchy-Schwarz inequality, we have
\begin{align}\label{eq:Wnorm2}
& \mathbb{E}\bigg(\bigg[ \sum_{t=1}^{n_L} \{W_{t,k}^{\rm dis}-\mathbb{E}(W_{t,k}^{\rm dis})\}\bigg]^2\bigg) 
\lesssim  \mathbb{E}\bigg\{\bigg(\sum_{t=1}^{n_L} W_{t,k}^{\rm dis}\bigg)^2\bigg\} \notag\\
&~~~~ \lesssim \frac{1}{n^2}\mathbb{E}\bigg\{\bigg(\sum_{t=1}^{n_L}\vep_{t,j}^{\rm dis}\bigg)^2
\bigg(\sum_{t=1}^n\vep_{t,j'}^{\rm dis}\bigg)^2\bigg\}
+\frac{1}{n^2}\mathbb{E}\bigg\{\bigg(\sum_{t=1}^{n_L}\vep_{t+\ell,j'}^{\rm dis}\bigg)^2
\bigg(\sum_{t=1}^n\vep_{t,j}^{\rm dis}\bigg)^2\bigg\}  \notag\\
&~~~~~~~~  +\frac{1}{n^2}\mathbb{E}\bigg\{\bigg(\sum_{t=1}^{n}\vep_{t,j}^{\rm dis}\bigg)^2
\bigg(\sum_{t=1}^n\vep_{t,j'}^{\rm dis}\bigg)^2\bigg\}  \lesssim 1
\end{align}
for any  $k\in[L\tilde{p}^2]$. In addition, following from \eqref{eq:vep2}, it holds that
\begin{align}\label{eq:cetanorm2}
\bigg\| \sum_{t=1}^{n_L}\{ \dot\eta_{t,k}^{\rm dis}-\mathbb{E}(\dot\eta_{t,k}^{\rm dis}) \} \bigg\|_2
=  \bigg\|\sum_{t=1}^{{n_L}}\{ \vep_{t,j}^{\rm dis}\vep_{t+\ell,j'}^{\rm dis}-\mathbb{E}(\vep_{t,j}^{\rm dis} \vep_{t+\ell,j'}^{\rm dis}) \} \bigg\|_2\lesssim n^{1/2} 
\end{align}
for any  $k\in[L\tilde{p}^2]$. By combining \eqref{eq:Wnorm2}, \eqref{eq:cetanorm2} and Cauchy-Schwarz inequality, we have ${\rm I}_{1,k_1,k_2}\lesssim n^{-1/2}$, ${\rm I}_{2,k_1,k_2}\lesssim n^{-1/2}$ and ${\rm I}_{3,k_1,k_2}\lesssim n^{-1}$ for any $k_1,k_2\in[L\tilde{p}^2]$, which implies $|\check{\Xi}_{n,k_1k_2}^{\rm dis}-\dot{\Xi}_{n,k_1k_2}^{\rm dis}|\lesssim n^{-1/2}$ for any $k_1,k_2\in[L\tilde{p}^2]$. Since this upper bound does not depend on $(k_1,k_2)$, we can conclude that
\begin{align}\label{var_dis2}
\check{\Delta}_{0}=\|\check{\bXi}_{n}^{\rm dis}-\dot{\bXi}_{n}^{\rm dis}\|_{\max}\lesssim n^{-1/2} \,.
\end{align}
This, together with \eqref{comp_2}, yields
\begin{align}\label{comp_2_bound}
    \sup_{x\in \mathbb{R}} |\mathbb{P}(\check{V}\leq x)-\mathbb{P}(\dot{V}\leq x) |
\lesssim n^{-1 / 4}  \log  ({L\tilde{p}^2} ) =o(1)\,,
\end{align}
provided that $\log \tilde{p} \ll n^{1/4}$.

Finally, recall  $d_{3n}=\sup_{x \in  \mathbb{R}}|\mathbb{P}(\dot{Z}\leq x)-\mathbb{P}(\dot{V}\leq x)|$, where $\dot{Z}=\max_{k\in[L\tilde{p}^2]}n_L^{-1/2}\sum_{t=1}^{{n_L}}\dot\eta_{t,k}^{\rm dis}$ and $\dot{V}=\max_{k \in [L\tilde p^2]}\dot{\cG}_k^{\rm dis}$, with $ \dot{\cG}_{k}^{\rm dis} \sim \cN(0,  \var( {n_L^{-1/2}} \sum_{t=1}^{n_L}  \dot\eta_{t,k}^{\rm dis}))$. Let $A_{1,k}=n_L^{-1/2}\sum_{t=1}^{n_L}\dot\eta_{t,k}^{\rm dis}$, $A_{2,k}=n_L^{-1/2}\sum_{t=1}^{n_L}W_{t,k}^{\rm dis}$ and $B_k=n_L^{-1/2}\sum_{t=n_L+1}^{n_{\ell_k}}\tilde\eta_{t,k}^{\rm dis}$ for any $k\in[L\tilde p^2]$. Note that 
\begin{align}\label{low.1}
    \var( A_{1,k}+A_{2,k}+B_k ) = \frac{n_{\ell}^2}{n n_L}\var\bigg(  \frac{n^{1/2}}{n_{\ell}}\sum_{t=1}^{n_\ell} \tilde\eta_{t,k}^{\rm dis}\bigg) \geq C
\end{align}
for any $k\in[L\tilde p^2]$, where the last inequality follows from Condition \ref{c.bvar} and  $ n_{\ell}^2 (n n_L)^{-1} \asymp 1 $. On the other hand, by  \eqref{moment2} and \eqref{eq:Wnorm2}, it holds that $\var(B_{k}) \lesssim n^{-1}$ and $\var(A_{2,k})\lesssim n^{-1}$  for any $k\in[L\tilde p^2]$. Then, it holds that
\begin{align*}
     \var( A_{1,k}+A_{2,k}+B_k ) \leq 3\var( A_{1,k}) +3\var(A_{2,k}) + 3\var(B_{k}) \leq 3\var( A_{1,k}) + Cn^{-1}  
\end{align*}
for any $k\in[L\tilde p^2]$. This, together with \eqref{low.1}, yields
\begin{align}\label{bvar}
 \var(A_{1,k}) =  \var\bigg( \frac{1}{n_L^{1/2}} \sum_{t=1}^{n_L}  \dot\eta_{t,k}^{\rm dis}\bigg)\geq C 
\end{align} 
for any $k\in[L\tilde p^2]$ and sufficiently large $n$. By \eqref{eq:tail2}, we have $\|\dot\eta_{t,k}^{\rm dis}\|_{\psi_{1 }}\leq C$ for any $k\in[L\tilde p^2]$. Since $p\geq n^{\upsilon}$ for some constant $\upsilon>0$ and $L$ is fixed,  by \eqref{bvar} and Condition \ref{c.alpha}, and following similar arguments as the proof of Theorem 1 in \cite{ChangChenWu2023_supp}, it holds  that
\begin{align*}%\label{GA}
d_{3n}=\sup_{x\in \mathbb{R}}|\mathbb{P}(\dot{Z}\leq x)-\mathbb{P}(\dot{V}\leq x)|=o(1)\,,
\end{align*}
provided that $\log \tilde{p} \ll n^{2/21}$. This, together with \eqref{d1ntod2n.2}, \eqref{pre_gau}, \eqref{first.2} and \eqref{comp_2_bound}, yields 
\begin{align*} 
    d_{1n} =\sup_{x\in \mathbb{R}}|\mathbb{P}(\check{Z}\leq x)-\mathbb{P}(V\leq x)|=o(1)\,,
\end{align*}
provided that $\log \tilde{p} \ll n^{2/21}$. Hence, we complete the proof of \eqref{lem.claim1}.

\noindent\underline{  Proof of \eqref{lem.claim2}.} Let ${\bf R}_n =(R_{n,1},\ldots,R_{n,L\tilde p^2})^{\T}=\bUpsilon-n_L^{-1}\sum_{t=1}^{{n_L}}\tilde\bfeta_t^{\rm dis}$. Recall that $Z= n^{1 / 2}\max_{ k \in [L\tilde p^2]} \Upsilon_{k}$ and $\check{Z}=\max_{ k \in [L\tilde p^2] }n_L^{-1/2}\sum_{t=1}^{n_L} {\tilde\eta}_{t,k}^{\rm dis}$. Then, by triangle inequality, we have
\begin{align*}
|Z-\check{Z}| 
=&\, \bigg|(n^{1/2}-n_L^{1/2}) \max _{ k \in [L\tilde p^2]} \Upsilon_{k}+n_L^{1/2} \max _{ k \in [L\tilde p^2]} \Upsilon_{k}-n_L^{1/2}\max_{ k \in [L\tilde p^2]}\bar{\tilde\eta}_k^{\rm dis}\bigg|   \notag\\
\leq&\, (n^{1/2}-n_L^{1/2})\max _{ k \in [L\tilde p^2]} |\Upsilon_{k}|
+n_L^{1/2} \, \max _{ k \in [L\tilde p^2]} |\Upsilon_{k}-\bar{\tilde\eta}_k^{\rm dis}|   \notag\\
\leq &\,  Cn^{-1/2} \max _{ k \in [L\tilde p^2]} |\Upsilon_{k}|
+Cn^{1/2}\max _{ k \in [L\tilde p^2]}|R_{n,k}|   \,,
\end{align*}
where $\bar{\tilde\eta}_k^{\rm dis} = n_L^{-1}\sum_{t=1}^{{n_L}}\tilde\eta_{t,k}^{\rm dis}$. Following from Bonferroni's inequality, it holds that
\begin{align} \label{eq:diff.Z.Zc}
\mathbb{P}(|Z-\check{Z}|> \epsilon_1) 
\leq \mathbb{P}\bigg( n^{-1/2} \max _{ k \in [L\tilde p^2]} |\Upsilon_{k}|> \frac{\epsilon_1}{2C} \bigg)
+\mathbb{P}\bigg(n^{1/2}\max _{ k \in [L\tilde p^2]}|R_{n,k}|> \frac{\epsilon_1}{2C} \bigg)\,.
\end{align} 
Recall    $\Upsilon_{k} = n_\ell^{-1}\sum_{t=1}^{n_\ell} (\varepsilon^{\rm dis}_{t,j}-\bar{{\varepsilon}}^{\rm dis}_{ j}) (\varepsilon^{\rm dis}_{t+\ell,j'}-\bar{\varepsilon}^{\rm dis}_{j'})$.
We have
$|n^{-1/2}\Upsilon_{k}| \lesssim {n^{-3/2}}| \sum_{t=1}^{n_\ell}{\vep}^{\rm dis}_{t,j}{\vep}^{\rm dis}_{t+\ell,j'} |+ {n^{-5/2}}( |\sum_{t=1}^{n_\ell}{\vep}^{\rm dis}_{t,j} | |\sum_{t=1}^{n}{\vep}^{\rm dis}_{t,j'} |+| \sum_{t=1}^{n_\ell}{\vep}^{\rm dis}_{t+\ell,j'} || \sum_{t=1}^{n}{\vep}^{\rm dis}_{t,j}|+ | \sum_{t=1}^{n}{\vep}^{\rm dis}_{t,j} || \sum_{t=1}^{n}{\vep}^{\rm dis}_{t,j'} |)$. By \eqref{eq:tail2}, Condition \ref{c.alpha} and Lemma \ref{tail_chang} with $( B_n,c_n,  r_1, {r} ) =( 1, \ell,1, 1/3)$,    
it holds under $H_0$ that 
\begin{align}\label{eq:sumvep2}
\max_{\ell\in[L]}\max_{j,j'\in[\tilde p]}\mathbb{P}\bigg( \bigg| \sum_{t=1}^{n_L}{\vep}^{\rm dis}_{t,j}{\vep}^{\rm dis}_{t+\ell,j'} \bigg|>x   \bigg) \lesssim \exp(-Cn^{-1}x^2)+\exp(-Cx^{1/3})  
\end{align}
for any $x\geq 0$. Moreover, by \eqref{tail.vep}, Condition \ref{c.alpha} and Lemma \ref{tail_chang} with $( B_n,c_n,  r_1, {r} ) =( 1, 0,2, 1/3)$, we have, for any $x\geq 0$, 
\begin{align}\label{eq:sumvep1}
\max_{\ell\in[L]_0}\max_{j \in[\tilde p]}\mathbb{P}\bigg( \bigg| \sum_{t=1}^{n_{\ell}}{\vep}^{\rm dis}_{t+\ell,j}  \bigg|>x   \bigg) \lesssim \exp(-Cn^{-1}x^2)+\exp(-Cx^{1/3})  \,.
\end{align}
Then, by \eqref{eq:sumvep2}, \eqref{eq:sumvep1} and Bonferroni's inequality, it holds that
\begin{align}\label{tail.R1}
     &\mathbb{P}\bigg( n^{-1/2} \max _{ k \in [L\tilde p^2]} |\Upsilon_{k}|> \frac{\epsilon_1}{2C} \bigg) \lesssim \tilde p^2\max _{ k \in [L\tilde p^2]}\mathbb{P}\bigg(    |n^{-1/2}\Upsilon_{k}|> \frac{\epsilon_1}{2C} \bigg)\nonumber\\
     &~~~\lesssim \tilde p^2\big\{ \exp(-Cn^2\epsilon_1^2) + \exp(-Cn^{1/2}\epsilon_1^{1/3}) +   \exp(-Cn^{3/2}\epsilon_1) +\exp(-Cn^{5/12}\epsilon_1^{1/6})  \big\}\,.
\end{align}
Recall $\tilde{\eta}_{t,k}^{\rm dis}= (\varepsilon^{\rm dis}_{t,j}-\bar{{\varepsilon}}^{\rm dis}_{ j}) (\varepsilon^{\rm dis}_{t+\ell,j'}-\bar{\varepsilon}^{\rm dis}_{j'})$  for any $k\in [L\tilde{p}^2]$.  We have
\begin{align}\label{eq:Rnsplit}
    &n^{1/2}|R_{n,k}|  =  n^{1/2}\bigg| \frac{1}{n_{\ell}}\sum_{t=1}^{n_{\ell}} \tilde\eta_{t,k}^{\rm dis} - \frac{1}{n_L}\sum_{t=1}^{n_L} \tilde\eta_{t,k}^{\rm dis}\bigg |\lesssim \frac{1}{n^{3/2}}\bigg| \sum_{t=1}^{n_L}\tilde\eta_{t,k}^{\rm dis} \bigg|+\frac{1}{n^{1/2}}\bigg| \sum_{t=n_L+1}^{n_{\ell}}\tilde\eta_{t,k}^{\rm dis} \bigg| \nonumber\\
    \lesssim &\, \frac{1}{n^{3/2}}\bigg| \sum_{t=1}^{n_L}{\vep}^{\rm dis}_{t,j}{\vep}^{\rm dis}_{t+\ell,j'} \bigg|
+ \frac{1}{n^{5/2}}\bigg(\bigg| \sum_{t=1}^{n_L}{\vep}^{\rm dis}_{t,j} \bigg|\bigg| \sum_{t=1}^{n}{\vep}^{\rm dis}_{t,j'} \bigg|+\bigg| \sum_{t=1}^{n_L}{\vep}^{\rm dis}_{t+\ell,j'} \bigg|\bigg| \sum_{t=1}^{n}{\vep}^{\rm dis}_{t,j}\bigg|+ \bigg| \sum_{t=1}^{n}{\vep}^{\rm dis}_{t,j} \bigg|\bigg| \sum_{t=1}^{n}{\vep}^{\rm dis}_{t,j'} \bigg|\bigg)  \nonumber\\
  &\, +\frac{1}{n^{1/2}}\bigg| \sum_{t=n_L+1}^{n_{\ell}}{\vep}^{\rm dis}_{t,j}{\vep}^{\rm dis}_{t+\ell,j'} \bigg|+\frac{1}{n^{3/2}}\bigg(\bigg| \sum_{t=n_L+1}^{n_{\ell}}{\vep}^{\rm dis}_{t,j} \bigg|\bigg| \sum_{t=1}^{n}{\vep}^{\rm dis}_{t,j'} \bigg|+\bigg| \sum_{t=n_L+1}^{n_{\ell}}{\vep}^{\rm dis}_{t+\ell,j'} \bigg|\bigg| \sum_{t=1}^{n}{\vep}^{\rm dis}_{t,j}\bigg|\bigg)  
\end{align}
for any $k\in [L\tilde{p}^2]$. By \eqref{eq:tailsum1}, \eqref{eq:sumvep} and Bonferroni's inequality,  it holds that
\begin{align*}
& \max_{j,j'\in[\tilde p]}\mathbb{P}\bigg( \bigg| \sum_{t=1}^{n_L}{\vep}^{\rm dis}_{t,j} \bigg|\bigg| \sum_{t=1}^{n}{\vep}^{\rm dis}_{t,j'} \bigg|>x   \bigg) \lesssim \exp(-Cn^{-1}x)+\exp(-Cx^{1/6}) \,, \\
&  \max_{\ell\in[L]_0}\max_{j,j'\in[\tilde p]} \mathbb{P}\bigg( \bigg| \sum_{t=n_L+1}^{n_{\ell}}{\vep}^{\rm dis}_{t,j} \bigg|\bigg| \sum_{t=1}^{n}{\vep}^{\rm dis}_{t,j'} \bigg|>x   \bigg) \lesssim \exp(-Cn^{-1/2}x )+\exp(-Cn^{1/12}x^{1/6})
\end{align*}
for any $x\geq 0$. This, together with \eqref{eq:tailsum2} and \eqref{eq:sumvep2}, yields 
\begin{align*}
    \bbP\bigg(\frac{1}{n^{3/2}}\bigg| \sum_{t=1}^{n_L}{\vep}^{\rm dis}_{t,j}{\vep}^{\rm dis}_{t+\ell,j'} \bigg| > C\epsilon_1\bigg) \lesssim&~ \exp(-Cn^2\epsilon_1^2) + \exp(-Cn^{1/2}\epsilon_1^{1/3})\,,\\
    \bbP\bigg(\frac{1}{n^{5/2}} \bigg| \sum_{t=1}^{n_L}{\vep}^{\rm dis}_{t,j} \bigg|\bigg| \sum_{t=1}^{n}{\vep}^{\rm dis}_{t,j'} \bigg| > C\epsilon_1\bigg)
	\lesssim&~ \exp(-Cn^{3/2}\epsilon_1) + \exp(-Cn^{5/12}\epsilon_1^{1/6})\,,\\
    \bbP\bigg(\frac{1}{n^{1/2}}\bigg| \sum_{t=n_L+1}^{n_{\ell}}{\vep}^{\rm dis}_{t,j}{\vep}^{\rm dis}_{t+\ell,j'} \bigg| >C\epsilon_1 \bigg)
	\lesssim&~ \exp(-Cn^{1/2}\epsilon_1)\,,\\
    \bbP\bigg(\frac{1}{n^{3/2}}\bigg| \sum_{t=n_L+1}^{n_{\ell}}{\vep}^{\rm dis}_{t,j} \bigg|\bigg| \sum_{t=1}^{n}{\vep}^{\rm dis}_{t,j'} \bigg| > C\epsilon_1\bigg)
	\lesssim& ~\exp(-Cn\epsilon_1) + \exp(-Cn^{1/3}\epsilon_1^{1/6}) \,.
\end{align*}
Following from \eqref{eq:Rnsplit} and Bonferroni's inequality, we can conclude that 
\begin{align}\label{tail.R2}
     &\mathbb{P}\bigg(n^{1/2}\max _{ k \in [L\tilde p^2]}|R_{n,k}|> \frac{\epsilon_1}{2C} \bigg) \lesssim \tilde p^2\max _{ k \in [L\tilde p^2]}\mathbb{P}\bigg(   n^{1/2} |R_{n,k}|> \frac{\epsilon_1}{2C} \bigg)\nonumber\\
     &~~~\lesssim \tilde p^2\big\{ \exp(-Cn^{2}\epsilon_1^{2})+\exp(-Cn^{1/2}\epsilon_1^{1/3})+\exp(-Cn^{1/2}\epsilon_1 )+\exp(-Cn^{1/3}\epsilon_1^{1/6})    \big\}\,.
\end{align}
Combining \eqref{eq:diff.Z.Zc}, \eqref{tail.R1}, and \eqref{tail.R2}, and recalling that 
$\epsilon_1 = C_0n^{-1/2}\log \tilde{p}$, we obtain
\begin{align} \label{Zmins} 
\mathbb{P}(|Z-\check{Z}|> \epsilon_1)  =o(1)\,,
\end{align} 
provided that $\log \tilde{p} \ll n^{3/10}$. By   Condition \ref{c.bvar} and the Nazarov's inequality 
\citep[see, e.g., Lemma~A.1 of][]{Cher2017_supp}, we have
\begin{align*}
    \sup_{x\in \mathbb{R}}\mathbb{P}(| V-x| \leq \epsilon_1) \lesssim \epsilon_1 \sqrt{\log(L\tilde{p}^2)} =o(1)\,,
\end{align*}
provided that $\log \tilde{p} \ll n^{1/3}$. Combining this with \eqref{Zmins} proves \eqref{lem.claim2}, 
and hence completes the proof of Lemma~\ref{lemma_tn_dis_GA}.   $\hfill \Box$

\subsection{Proof of Lemma \ref{lem:diff.Tn.Gstar}}\label{proof.lem:diff.Tn.Gstar}

By the definitions of $\tilde{T}_{n}^{{\rm G*}}$ and 
$\tilde{T}^{\rm dis,G*}_{n}$, it holds that
\begin{align*}
    |\tilde{T}^{\rm G*}_{n}-\tilde{T}^{\rm dis,G*}_{n}| \leq \max_{r\in[Lp^2]}\max_{m,m'\in[M]}\sup_{(u,v)\in B_m\times B_{m'}}|\tilde\cG_r^*(u,v)-\tilde\cG_r^*(b_m,b_{m'})|\,.
\end{align*}
   For some sufficiently large constant $\tilde{C}>0$, define an event
\begin{align*}
\cE(\widetilde{\mathcal{D}}_n) := \left\{\max_{r\in[Lp^2]}\max_{m,m'\in[M]}\sup_{\substack{(u_1,v_1),(u_2,v_2)\in B_m\times B_{m'}\\ (u_1,v_1)\neq (u_2,v_2)}} \frac{\mathbb{E}[\{\tilde{\cG}^*_r(u_1,v_1)-\tilde{\cG}^*_r(u_2,v_2) \}^2\,|\,\widetilde{\mathcal{D}}_n]}{|u_1-u_2|^{\kappa }+|v_1-v_2|^{\kappa }} \leq \tilde{C}n^\rho  \right\}\,.
\end{align*}
 We   claim that 
\begin{align}\label{event}
    \bbP\{  \cE^{\rm c}(\widetilde{\mathcal{D}}_n) \} = o(1)\,.
\end{align}   
For any $r\in[Lp^2]$, $m,m'\in[M]$ and $(u_1,v_1),(u_2,v_2)\in B_m\times B_{m'}$, it holds under $\cE(\widetilde{\mathcal{D}}_n)$ that 
\begin{align*}
    \frac{(\mathbb{E}[\{\tilde{\cG}^*_r(u_1,v_1)- \tilde{\cG}^*_r(u_2,v_2)\}^2\,|\,\widetilde{\mathcal{D}}_n])^{1/2}}{n^{\rho/2}}
\lesssim |u_1-u_2|^{\kappa/2 } + |v_1-v_2|^{\kappa/2 } \,.
\end{align*}
By \eqref{event}, we have $\bbP\{  \cE^{\rm c}(\widetilde{\mathcal{D}}_n)\,|\,\widetilde{\mathcal{D}}_n \} =o_{\rm p}(1) $. Since  $\max\{|u-b_m|,|v-b_{m'}|\}\leq (2M)^{-1}$ for any $(u,v)\in B_m\times B_{m'}$, by Lemma \ref{meer} with $q=2$, $(a_1,a_2)^{\T}=((2M)^{-1},(2M)^{-1})^{\T}$ and $\bh=(\kappa/2,\kappa/2)^{\T}$,  it holds that
\begin{align*}
& \mathbb{P}(|\tilde{T}^{\rm G*}_{n}-\tilde{T}^{\rm dis,G*}_{n}|>\delta\,|\,\widetilde{\mathcal{D}}_n)  \notag\\
&~~~~ \leq \mathbb{P}\bigg\{\max_{r\in[Lp^2]}\max_{m,m'\in[M]}\sup_{(u,v)\in B_m\times B_{m'}}|\tilde{\cG}^*_r(u,v)-\tilde{\cG}^*_r(b_m,b_{m'})|>\delta\,\bigg|\,\widetilde{\mathcal{D}}_n\bigg\} \notag\\
&~~~~\leq Lp^2M^2\max_{r\in[Lp^2]}\max_{m,m'\in[M]}\mathbb{P}\bigg\{\sup_{(u,v)\in B_m\times B_{m'}}\bigg|\frac{\tilde{\cG}^*_r(u,v)}{n^{\rho/2}}-\frac{\tilde{\cG}^*_r(b_m,b_{m'})}{n^{\rho/2}}\bigg|> \frac{\delta}{n^{\rho/2}},\,\cE(\widetilde{\mathcal{D}}_n)\,\bigg|\,\widetilde{\mathcal{D}}_n\bigg\}  \\
&~~~~~~~~ +\mathbb{P}\{\cE^{\rm c}(\widetilde{\mathcal{D}}_n)\,|\,\widetilde{\mathcal{D}}_n\} \nonumber\\
&~~~~ \lesssim p^2M^2\exp(-CM^{\kappa}n^{-\rho}\delta^2)+o_{\rm p}(1) 
\end{align*}
for any $\delta\geq CM^{-\kappa/2} n^{\rho/2}$. Recall $M\asymp (np)^{C_M}$ and $p\geq n^{\upsilon}$.  Letting $\delta=C_0 M^{-\kappa/2}n^{\rho/2}(\log   {p})^{1/2} \\\lesssim n^{-1}(\log  {p})^{1/2}$ for some sufficiently large constant $C_0>0$, we have
\begin{align*}
    \mathbb{P}\{|\tilde{T}^{\rm G*}_{n}-\tilde{T}^{\rm dis,G*}_{n}|>Cn^{-1}(\log   {p})^{1/2}\,|\,\widetilde{\mathcal{D}}_n\}  = o_{\rm p}(1)\,.
\end{align*}
  Therefore, to confirm Lemma \ref{lem:diff.Tn.Gstar},
it remains to show that \eqref{event} holds.

\noindent\underline{Proof of \eqref{event}.}  Recall that, for any $r\in[Lp^2]$ and $(u,v)\in\cU^2$,  
\begin{align*}
     \tilde{\cG}^*_r(u,v) = \frac{1}{n_L^{1/2}}\sum_{t=1}^{{n_L}}\varrho_t\{\tilde{\eta}_{t,r}(u,v)-\bar{\tilde{\eta}}_r(u,v)\}\,,
\end{align*}
where $\bar{\tilde{\eta}}_r(u,v)=n_L^{-1}\sum_{t=1}^{{n_L}}\tilde{\eta}_{t,r}(u,v)$ with $\tilde{\eta}_{t,r}(u,v)= \{\vep_{t,j}(u)-\bar\vep_j(u)\}\{\vep_{t+\ell,j'}(v)-\bar\vep_{j'}(v)\}$, and $\boldsymbol\varrho=(\varrho_1,\dots,\varrho_{{n_L}})^{\T}\sim \cN({\bf 0},\bXi)$  with $\Xi_{ij}= \mathcal{W}\{(i-j)/{b_{n}}\}$. Let $\dot{\tilde{\eta}}_{t,r}(u,v) = \tilde{\eta}_{t,r}(u,v)-\bar{\tilde{\eta}}_r(u,v)$ for any $t\in[n_L]$, $r\in[Lp^2]$ and $(u,v)\in \cU^2$. Then
\begin{align*}
     \mathbb{E}[\{\tilde{\cG}^*_r(u_1,v_1)-\tilde{\cG}^*_r(u_2,v_2) \}^2\,|\,\widetilde{\mathcal{D}}_n]= \sum_{i=-n_L+1}^{n_L-1} \mathcal{W}\left(\frac{i}{b_{n}}\right) \widetilde{H}^{*}_{i,r}(u_1,v_1,u_2,v_2) 
\end{align*} 
where $\widetilde{H}^{*}_{i,r}(u_1,v_1,u_2,v_2)= {n_L}^{-1}\sum_{t=i+1}^{n_L}
\{\dot{\tilde{\eta}}_{t,r}(u_1,v_1)-\dot{\tilde{\eta}}_{t,r}(u_2,v_2)\}
\{\dot{\tilde{\eta}}_{t-i,r}(u_1,v_1)-\dot{\tilde{\eta}}_{t-i,r}(u_2,v_2)\}$ for $i\geq 0$, and $\widetilde{H}^{*}_{i,r}(u_1,v_1,u_2,v_2)= {n_L}^{-1}\sum_{t=-i+1}^{n_L}
\{\dot{\tilde{\eta}}_{t+i,r}(u_1,v_1)-\dot{\tilde{\eta}}_{t+i,r}(u_2,v_2)\}
\{\dot{\tilde{\eta}}_{t,r}(u_1,v_1)-\dot{\tilde{\eta}}_{t,r}(u_2,v_2)\}$ otherwise. 
Condition \ref{c.kernelF} yields $ \sum_{i=-n_L+1}^{n_L-1}| \mathcal{W}({i}/{b_{n}})|\lesssim b_n \asymp n^{\rho}$. Then, for any $(r,u_1,v_1,u_2,v_2)$, it holds that 
\begin{align*}
    \mathbb{E}[\{\tilde{\cG}^*_r(u_1,v_1)-\tilde{\cG}^*_r(u_2,v_2) \}^2\,|\,\widetilde{\mathcal{D}}_n] \lesssim  n^{\rho} \max_{t\in[n_L]} |\dot{\tilde{\eta}}_{t,r}(u_1,v_1)-\dot{\tilde{\eta}}_{t,r}(u_2,v_2)  |^2\,.
\end{align*}
Moreover, for any $(u_1,v_1),(u_2,v_2)\in \cU^2$, we define $d\{(u_1,v_1),(u_2,v_2)\} =   |u_1-u_2|^{\kappa}+|v_1-v_2|^{\kappa}$. Therefore, it suffices, for proving \eqref{event}, to establish that, for some sufficiently large constant $C>0$,
 \begin{align}\label{Delta.1}
     \bbP\left\{  \max_{t\in[n_L]}\max_{r\in[Lp^2]}\max_{m,m'\in[M]}\sup_{\substack{(u_1,v_1),(u_2,v_2)\in B_m\times B_{m'}\\(u_1,v_1)\neq (u_2,v_2) }}  \frac{|\dot{\tilde{\eta}}_{t,r}(u_1,v_1)-\dot{\tilde{\eta}}_{t,r}(u_2,v_2)  | }{ [d\{(u_1,v_1),(u_2,v_2)\}]^{1/2} } > C \right\} = o(1)\,.
 \end{align}
Notice that $\tilde{\eta}_{t ,r}(u_1,v_1)-\tilde{\eta}_{t ,r}(u_2,v_2)=\{ \vep_{t ,j}(u_1)\vep_{t +\ell,j'}(v_1)-\vep_{t ,j}(u_2)\vep_{t +\ell,j'}(v_2) \}-\{\vep_{t ,j}(u_1)\bar\vep_{j'}(v_1)-\vep_{t ,j}(u_2)\bar\vep_{j'}(v_2)  \}-\{ \bar\vep_{j}(u_1)\vep_{t +\ell,j'}(v_1)-\bar\vep_{j}(u_2)\vep_{t +\ell,j'}(v_2) \}+\{\bar\vep_{j}(u_1)\bar\vep_{j'}(v_1)-\bar\vep_{j}(u_2)\bar\vep_{j'}(v_2)  \}$. By Condition \ref{c.subgaussian} and  \eqref{eq:tailprob.vep2} in Section \ref{proof.lem.dis.Gaussian}, we have $\|\tilde{\eta}_{t ,r}(u_1,v_1)-\tilde{\eta}_{t ,r}(u_2,v_2)\|_{\psi_{1}}\lesssim d\{(u_1,v_1),(u_2,v_2)\}$ and  $\| \bar{\tilde{\eta}}_r(u_1,v_1)-\bar{\tilde{\eta}}_r(u_2,v_2)\|_{\psi_{1}}\lesssim d\{(u_1,v_1),(u_2,v_2)\}$, which implies 
\begin{align}\label{tail.eta}
    \|  \dot{\tilde{\eta}}_{t ,r}(u_1,v_1)- \dot{\tilde{\eta}}_{t ,r}(u_2,v_2) \|_{\psi_{1}}  \lesssim d\{(u_1,v_1),(u_2,v_2)\}
\end{align} 
for any  $t\in[n_L]$, $r\in[Lp^2]$ and $(u_1,v_1),(u_2,v_2)\in \cU^2$. This further implies $|\bbE\{\dot{\tilde{\eta}}_{t ,r}(u_1,v_1)- \dot{\tilde{\eta}}_{t ,r}(u_2,v_2) \}| \lesssim  d\{(u_1,v_1),(u_2,v_2)\} $ for any $t\in[n_L]$, $r\in[Lp^2]$ and $(u_1,v_1),(u_2,v_2)\in \cU^2$. Since $\sup_{(u_1,v_1),(u_2,v_2)\in B_m\times B_{m'}} d\{(u_1,v_1),(u_2,v_2)\} \lesssim 1$ for any $(m,m')$, by \eqref{Delta.1}, it suffices to prove that, for some sufficiently large constant $C>0$,
 \begin{align}\label{Delta.1.new}
     \bbP\left\{  \max_{t\in[n_L]}\max_{r\in[Lp^2]}\max_{m,m'\in[M]}\sup_{\substack{(u_1,v_1),(u_2,v_2)\in B_m\times B_{m'}\\(u_1,v_1)\neq (u_2,v_2) }}  \frac{|\mathring{\tilde{\eta}}_{t,r}(u_1,v_1)-\mathring{\tilde{\eta}}_{t,r}(u_2,v_2)  | }{ [d\{(u_1,v_1),(u_2,v_2)\}]^{1/2} } > C \right\}  = o(1) \,,
 \end{align} 
where $\mathring{\tilde{\eta}}_{t,r}(u,v)= \dot{\tilde{\eta}}_{t,r}(u,v)- \bbE\{\dot{\tilde{\eta}}_{t,r}(u,v)\}$.  By \eqref{tail.eta}, we have for any $x\geq 0$,  
 \begin{align*} 
      \bbP\big\{ |\mathring{\tilde{\eta}}_{t,r}(u_1,v_1)-\mathring{\tilde{\eta}}_{t,r}(u_2,v_2)  |   > x \big\} \leq C\exp\bigg[- \frac{Cx}{d\{(u_1,v_1),(u_2,v_2)\} }\bigg] \,.
 \end{align*}
 Write $D = \max_{m,m'\in[M]}\sup_{(u_1,v_1),(u_2,v_2)\in B_m\times B_{m'}} d\{(u_1,v_1),(u_2,v_2)\} $. By Condition \ref{c.subgaussian}, the process
$\{\mathring{\tilde\eta}_{t,r}(u,v):(u,v)\in\mathcal U^2\}$
is separable with respect to $d$. Therefore, by Lemma \ref{tail_multi}(ii)   with $(c_{1n},c_{2n},\gamma,\beta)=(C, C , 1, 1/2)$ and metric $d$, it holds for any $x >0$ that 
\begin{align*}
&\bbP\left(\sup_{\substack{(u_1,v_1),(u_2,v_2)\in B_m\times B_{m'}\\(u_1,v_1)\neq (u_2,v_2) }}  \frac{|\mathring{\tilde{\eta}}_{t,r}(u_1,v_1)-\mathring{\tilde{\eta}}_{t,r}(u_2,v_2)  | }{[d  \{(u_1,v_1),(u_2,v_2)\}]^{1/2}} >x \right)  \\
&~~~~~~~~  	\lesssim \frac{1}{x}\bigg[\int_0^{D_1}\frac{ \log(eN_\epsilon) }{\epsilon^{1/2}}\, {\rm d}\epsilon + \frac{\int_0^{D_2}  \log(eN_\epsilon) \,{\rm d}\epsilon}{D_1^{1/2}}  \bigg] \\
&~~~~~~~~~~~  +  \bigg\{N_{D_1}^2 \log_2 \bigg(\frac{1}{D_1}\bigg) + N_{D_2}\log_2\bigg(\frac{1}{D_1}\bigg) \log_2\bigg(\frac{1}{D_2}\bigg) \bigg\}	 \exp\bigg(-\frac{Cx }{ D^{1/2}}\bigg)  \\
&~~~~~~~~  	\lesssim  \frac{1}{x} \bigg(\frac{ D_1^{1/4}}{M^{\kappa/4}} + \frac{ D_2^{3/4}}{M^{\kappa/4}D_1^{1/2}} \bigg) 
+  \frac{1}{M^{4}D_1^{4/\kappa}} \log_2\bigg(\frac{1}{D_1}\bigg) \exp\bigg(-\frac{Cx }{ D^{1/2}}\bigg)
 \\
&~~~~~~~~~~~ +  \frac{1}{M^{2} D_2^{2/\kappa}}\log_2\bigg(\frac{1}{D_1}\bigg) \log_2\bigg( \frac{1}{D_2}\bigg)   \exp\bigg(-\frac{Cx }{ D^{1/2}}\bigg)
\end{align*}
for some $D_1,D_2\in (0, D\wedge 1)$ with $D_2<D_1$, where   the last inequality follows from $N_{\epsilon} \leq \max_{m,m'\in[M]} N(B_m\times B_{m'}, d ;\epsilon) \lesssim M^{-2}\epsilon^{-2/\kappa}$,  
\begin{align*}
   &\int_0^{D_1} \frac{ \log(eN_{\epsilon}) }{\epsilon^{1/2}}\,{\rm d}\epsilon
   \leq   \int_0^{D_1}\frac{ \log(CM^{-2}\epsilon^{-2/\kappa}) }{\epsilon^{1/2}}\,{\rm d}\epsilon \\ 
  &~~~~~~~~\lesssim   M^{-\kappa/2 }\int^{+\infty}_{\log(CM^{-2}D_1^{-2/\kappa})} y e^{-\kappa y/4}\,{\rm d}y  
   \lesssim    \frac{ D_1^{1/4}}{M^{\kappa/4}} \,,
\end{align*} 
 and $\int_0^{D_2}  \log(eN_\epsilon) \,{\rm d}\epsilon \lesssim D_2^{3/4}M^{-\kappa/4}$.
Notice that $D\asymp M^{- \kappa}$ and $M\asymp (np)^{C_M}$. Choose
$D_2=D_1/2$  and set $D_1\asymp M^{-a}$ for some
constant $a> \kappa$ large enough such that $np^2M^2 \ll M^{(a+\kappa)/4}$.
Then $D_1\ll D$. 
Consequently,
 \begin{align*} 
     &\bbP\left( \max_{t\in[n_L]} \max_{r\in[Lp^2]}\max_{m,m'\in[M]}\sup_{\substack{(u_1,v_1),(u_2,v_2)\in B_m\times B_{m'}\\(u_1,v_1)\neq (u_2,v_2) }}  \frac{|\mathring{\tilde{\eta}}_{t,r}(u_1,v_1)-\mathring{\tilde{\eta}}_{t,r}(u_2,v_2)  | }{[d  \{(u_1,v_1),(u_2,v_2)\}]^{1/2}} > C \right) \\
     &~~\lesssim  np^2M^2 \max_{t\in[n_L]}\max_{r\in[Lp^2]}\max_{m,m'\in[M]} \bbP\left(\sup_{\substack{(u_1,v_1),(u_2,v_2)\in B_m\times B_{m'}\\(u_1,v_1)\neq (u_2,v_2) }} \frac{|\mathring{\tilde{\eta}}_{t,r}(u_1,v_1)-\mathring{\tilde{\eta}}_{t,r}(u_2,v_2)  | }{[d  \{(u_1,v_1),(u_2,v_2)\}]^{1/2}}   > C \right) \\
     &~~\lesssim  np^2M^2 \frac{ D_1^{1/4}}{M^{\kappa/4}}
+  \bigg[\frac{np^2}{M^{2}D_1^{4/\kappa}} \log_2\bigg(\frac{1}{D_1}\bigg) +\frac{np^2}{ D_1^{2/\kappa}}\bigg\{\log_2\bigg(\frac{1}{D_1}\bigg)  \bigg\}^2  \bigg]\exp\bigg(-\frac{C }{ D^{1/2}}\bigg) = o(1)\,.
 \end{align*} 
This proves \eqref{Delta.1.new}, hence implies \eqref{event} and completes the proof of Lemma~\ref{lem:diff.Tn.Gstar}.  
% We first show that $ \|\sum_{i=1}^s  X_i\|_{\psi_{1/2}}\lesssim \sum_{i=1}^s\|  X_i\|_{\psi_{1/2}}$ for any positive integer $s$. Write $K_i  = \|  X_i\|_{\psi_{1/2}}$ and $K = \sum_{i=1}^s K_i$. Notice that $K_i = \||X_i|^{1/2}\|_{\psi_1}^2$. Then, by the equivalence in Proposition 2.7.1 of  \cite{Vershynin2018_supp}, it holds that, for any $q\geq 1$,
% \begin{align*}
%     \|X_i\|_q = \||X_i|^{1/2}\|_{2q}^2 \lesssim (2q K_i^{1/2})^2  \lesssim q^2 K_i\,.
% \end{align*}
% Therefore, by Minkowski inequality, we have $\|  \sum_{i=1}^s  X_i \|_{q} \leq \sum_{i=1}^s   \|    X_i \|_{q} \lesssim q^2 K$ for any $q\geq 1$, which implies  
% \begin{align*}
%     \bigg\| \bigg| \sum_{i=1}^s  X_i \bigg|^{1/2}   \bigg\|_{q} = \bigg\|  \sum_{i=1}^s  X_i \bigg\|_{q/2}^{1/2} \lesssim q K^{1/2}  
% \end{align*}
%  for any $q \geq 2$. In addition, for $1\leq  q <2$, we also have $\|  | \sum_{i=1}^s  X_i |^{1/2}    \|_{q} \leq \|  | \sum_{i=1}^s  X_i |^{1/2}    \|_{2} \lesssim q K^{1/2}$. Applying the same equivalence in Proposition 2.7.1 of \cite{Vershynin2018_supp} yields $\| | \sum_{i=1}^s  X_i |^{1/2} \|_{\psi_{1}} \lesssim K^{1/2}$, which implies 
%  \begin{align}\label{psi_triangle}
%      \bigg\|\sum_{i=1}^s  X_i\bigg\|_{\psi_{1/2}} = \bigg\|\bigg| \sum_{i=1}^s  X_i \bigg|^{1/2} \bigg\|_{\psi_{1}}^2 \lesssim K = \sum_{i=1}^s\|  X_i\|_{\psi_{1/2}}\,.
%  \end{align}
$\hfill\Box$

\subsection{Proof of Lemma \ref{disGAstar}}\label{proof.disGAstar}

Recall $\tilde{T}^{\rm dis, G}_{n} = \max_{k\in[L\tilde{p}^2]}|\tilde\cG^{\rm dis}_k|$ and $\tilde{T}_{n}^{\rm dis,G*}=\max_{ k \in [L\tilde p^2]} |\tilde\cG^{\rm dis,*}_{k}|$, 
where $\tilde{p}=pM$, $\tilde{\boldsymbol{\cG}}^{\rm dis}=(\tilde\cG^{\rm dis}_1,\dots,\tilde\cG^{\rm dis}_{L\tilde p^2})^{\T}\sim \cN(\0, \widetilde\bXi^{\rm dis}_{n})$ and $\tilde{\boldsymbol{\cG}}^{\rm dis,*} =(\tilde\cG^{\rm dis,*}_1,\dots,\tilde\cG^{\rm dis,*}_{L\tilde p^2})^{\T} \sim \cN(\0,\widetilde\bXi^{\rm dis,*}_{n})$ conditional on $\widetilde{\mathcal{D}}_n$, respectively. Here, $\widetilde\bXi^{\rm dis}_{n}$ is specified in the proof of Lemma \ref{lemma_tn_dis_GA} (Section \ref{proof.lemma_tn_dis_GA}) and $\widetilde\bXi^{\rm dis, *}_{n}=\sum_{i=-{n_L}+1}^{{n_L}-1} \mathcal{W}(i/b_{n}) \widetilde{\bH}^{\rm dis,*}_{i}$, where
\begin{align*} 
\widetilde{\bH}^{\rm dis,*}_{i}
=
\begin{cases}
\displaystyle
\frac{1}{n_L}\sum_{t=i+1}^{n_L}
(\tilde\bfeta_{t}^{\rm dis}-\bar{\tilde\bfeta}^{\rm dis})( \tilde\bfeta_{t-i}^{\rm dis}-\bar{\tilde\bfeta}^{\rm dis})^{\T}\,,
  &i\geq 0\,, \\ 
\displaystyle
\frac{1}{n_L}\sum_{t=-i+1}^{n_L}
(\tilde\bfeta_{t+i}^{\rm dis}-\bar{\tilde\bfeta}^{\rm dis})( \tilde\bfeta_{t}^{\rm dis}-\bar{\tilde\bfeta}^{\rm dis})^{\T}\,,
 & i<0 \,,
\end{cases} 
\end{align*}
where $\{\tilde{\bfeta}_{t}^{\rm dis}\}$ is  specified in the proof of   
Lemma \ref{lemma_tn_dis_GA} and $\bar{\tilde\bfeta}^{\rm dis} = n_L^{-1}\sum_{t=1}^{n_L} \tilde{\bfeta}_{t}^{\rm dis}$. We claim  that there exist two positive constants ${c}_1$ and ${c}_2$ depending only on $(\rho,\vartheta)$ such that
\begin{align}\label{disSigma}
\hat{\Delta}_{0}:=\|\widetilde\bXi^{\rm dis}_{n}-\widetilde{\bXi}^{\rm dis,*}_{n}\|_{\max }= O_{\rm p}\{n^{-
	{c}_1}(\log  {p})^{{c}_2}\}\,,
\end{align}
provided that $0<\rho< (\vartheta-1)/(3\vartheta-2)$. Recall $\tilde{p}=pM$, $M\asymp (np)^{C_M}$ and $p\geq n^{\upsilon}$. By Condition \ref{c.bvar} and  Proposition 2.1 of \cite{Cher2022_supp}, it holds that
\begin{align*} 
\sup_{x\in \mathbb{R}} |\mathbb{P}(\tilde{T}_n^{\rm dis,G*}\leq x\,|\,\widetilde{\mathcal{D}}_n)-\mathbb{P}(\tilde{T}^{\rm dis,G}_{n}\leq x) | 
\lesssim  \hat{\Delta}_{0}^{1 / 2} \log  (L\tilde p^2 ) =o_{\rm p}(1)  \, ,
\end{align*}
provided that $0<\rho< (\vartheta-1)/(3\vartheta-2)$
and $\log p \ll n^\iota$ for some $\iota>0$ depending only on $(\rho,\vartheta)$. Therefore, to confirm Lemma \ref{disGAstar},
it remains to show that \eqref{disSigma} holds.

\noindent\underline{Proof of \eqref{disSigma}.} 
Define $\dot{\bXi}_{n}^{\rm dis} = \var( {n_L^{-1/2}} \sum_{t=1}^{n_L}  \dot\bfeta_t^{\rm dis})$, where  $\{\dot\bfeta_t^{\rm dis}\}$ is specified in \eqref{dot.eta}. By \eqref{var_dis} and \eqref{var_dis2}, it holds that
\begin{align}\label{dis.theta1}
    \hat{\Delta}_{0} \lesssim \|\dot\bXi^{\rm dis}_{n}-\widetilde{\bXi}^{\rm dis,*}_{n}\|_{\max } + n^{-1/2}\,.
\end{align}
Analogously, define $\dot\bXi^{\rm dis, *}_{n}=\sum_{i=-{n_L}+1}^{{n_L}-1} \mathcal{W}(i/b_{n}) \dot{\bH}^{\rm dis,*}_{i}$, where
\begin{align*} 
\dot{\bH}^{\rm dis, *}_{i}
=
\begin{cases}
\displaystyle
\frac{1}{n_L}\sum_{t=i+1}^{n_L}
(\dot\bfeta_{t}^{\rm dis}-\bar{\dot\bfeta}^{\rm dis})( \dot\bfeta_{t-i}^{\rm dis}-\bar{\dot\bfeta}^{\rm dis})^{\T}\,,
  &i\geq 0\,, \\ 
\displaystyle
\frac{1}{n_L}\sum_{t=-i+1}^{n_L}
(\dot\bfeta_{t+i}^{\rm dis}-\bar{\dot\bfeta}^{\rm dis})( \dot\bfeta_{t}^{\rm dis}-\bar{\dot\bfeta}^{\rm dis})^{\T}\,,
 & i<0 \,,
\end{cases} 
\end{align*}
where $\bar{\dot\bfeta}^{\rm dis} = n_L^{-1}\sum_{t=1}^{n_L} \dot{\bfeta}_{t}^{\rm dis}$. Since $\{\bvep_t(\cdot)\}_{t=1}^{n}$ is a weakly stationary functional time series, it holds that $\dot\bfeta_{t}^{\rm dis}-\bar{\dot\bfeta}^{\rm dis} = \dot\bfeta_{t}^{\rm dis}-  \mathbb{E} (\dot\bfeta_{t}^{\rm dis})  - \{\bar{\dot\bfeta}^{\rm dis} -\mathbb{E}(\bar{\dot\bfeta}^{\rm dis})\} $ for any $t\in[n_L]$. Therefore, by Conditions \ref{c.alpha}, \ref{c.subgaussian}, \ref{c.kernelF} and following similar arguments as the proof of Theorem 11 in \cite{ChangChenWu2023_supp} with $B_n=1$, it also holds that, if $0<\rho<(\vartheta-1)/(3\vartheta-2)$, there exist two constants $c_1',c_2'>0$ depending only on $(\rho, \vartheta)$ such that
\begin{align*} 
\|\dot{\bXi}_{n}^{\rm dis}-\dot\bXi^{\rm dis, *}_{n}\|_{\max}=O_{\rm p}\{ n^{-c_1'}(\log p)^{c_2'} \}+O(n^{-\rho}) \,.
\end{align*}
This, together with \eqref{dis.theta1}, yields
\begin{align}\label{dis.theta2}
    \hat{\Delta}_{0} \lesssim \|\dot{\bXi}^{\rm dis,*}_{n}-\widetilde{\bXi}^{\rm dis,*}_{n}\|_{\max } + O_{\rm p}\{ n^{-c_1'}(\log p)^{c_2'} \}+O(n^{-\rho})\,.
\end{align}
For notational simplicity, for any fixed $k\in[L\tilde{p}^2]$, we write
$(\ell,j,j')=(\ell_k,j_k,j'_k)$ for its associated unique triple. For any $t\in[n_L]$ and $k\in[L\tilde{p}^2]$, denote
\begin{align*}
    &~Q_{t,k}^{(1)}=\vep_{t,j}^{\rm dis}\vep_{t+\ell,j'}^{\rm dis}\,,~~Q_{t,k}^{(2)}=-\vep_{t,j}^{\rm dis}\bar{\vep}^{\rm dis}_{j'}\,,~~Q_{t,k}^{(3)}=-\bar{\vep}^{\rm dis}_{j}\vep_{t+\ell,j'}^{\rm dis}\,,\\
    &~Q_{t,k}^{(4)}=-
\frac{1}{n_L}\sum_{t'=1}^{n_L}\vep_{t',j}^{\rm dis}\vep_{t'+\ell,j'}^{\rm dis}\,,~~Q_{t,k}^{(5)}=\bigg(\frac{1}{n_L}\sum_{t'=1}^{n_L}\vep_{t',j}^{\rm dis}\bigg)\bar{\vep}^{\rm dis}_{j'} \,,~~Q_{t,k}^{(6)}=\bar{\vep}^{\rm dis}_{j}\bigg(\frac{1}{n_L}\sum_{t'=1}^{n_L}\vep_{t'+\ell,j'}^{\rm dis}\bigg)\,,
\end{align*}
where $\bar{\vep}^{\rm dis}_{j} = n^{-1}\sum_{t=1}^n {\vep}^{\rm dis}_{t,j}$.
 Then it holds that $\tilde\eta_{t,k}^{\rm dis} -\bar{\tilde\eta}_{k}^{\rm dis}=\sum_{a=1}^6 Q_{t,k}^{(a)}$ and $\dot{\eta}_{t,k}-\bar{\dot{\eta}}_{k}=Q_{t,k}^{(1)}+Q_{t,k}^{(4)}$ for any $t\in[n_L]$ and $k\in[L\tilde{p}^2]$. Let 
\begin{align*}
    \mathcal A
=
\bigl\{(a_1,a_2)\in[6]^2:
(a_1,a_2)\notin\{(1,1),(1,4),(4,1),(4,4)\}\bigr\}\,.
\end{align*}
Then, we have $(\tilde\eta_{t_1,k_1}^{\rm dis} -\bar{\tilde\eta}_{k_1}^{\rm dis})(\tilde\eta_{t_2,k_2}^{\rm dis} -\bar{\tilde\eta}_{k_2}^{\rm dis})-(\dot\eta_{t_1,k_1}^{\rm dis} -\bar{\dot\eta}_{k_1}^{\rm dis})(\dot\eta_{t_2,k_2}^{\rm dis} -\bar{\dot\eta}_{k_2}^{\rm dis}) = \sum_{(a_1,a_2)\in \mathcal{A}}Q_{t_1,k_1}^{(a_1)}Q_{t_2,k_2}^{(a_2)}$, which implies that
\begin{align}\label{dis.theta3.pre}
    \|\dot{\bXi}^{\rm dis,*}_{n}-\widetilde{\bXi}^{\rm dis,*}_{n}\|_{\max } 
    \leq &~ \sum_{(a_1,a_2)\in \mathcal{A}}\max_{k_1,k_2\in[L\tilde{p}^2]}\bigg|\frac{1}{n_L}\sum_{t_1,t_2=1}^{n_L} \mathcal{W}\bigg(\frac{t_1-t_2}{b_{n}}\bigg)Q_{t_1,k_1}^{(a_1)}Q_{t_2,k_2}^{(a_2)}\bigg|\nonumber\\
     	= &~  \underbrace{\sum_{(a_1,a_2)\in \mathcal{A}}\max_{k_1,k_2\in[L\tilde{p}^2]} \bigg|\sum_{i=0}^{n_L-1}\mathcal{W}\bigg(\frac{i}{b_{n}}\bigg) \bigg\{\frac{1}{n_L}\sum_{t=i+1}^{n_L}Q_{t,k_1}^{(a_1)}Q_{t-i,k_2}^{(a_2)}\bigg\}\bigg|}_{{\rm I}} \\
&~  +\underbrace{\sum_{(a_1,a_2)\in \mathcal{A}}\max_{k_1,k_2\in[L\tilde{p}^2]}\bigg|\sum_{i=-n_L+1}^{-1}\mathcal{W}\bigg(\frac{i}{b_{n}}\bigg) \bigg\{\frac{1}{n_L}\sum_{t=-i+1}^{n_L}Q_{t+i,k_1}^{(a_1)}Q_{t,k_2}^{(a_2)}\bigg\}\bigg|}_{{\rm II}}\,.\nonumber
\end{align}
Notice that $Q_{t,k}^{(2)}$ and $Q_{t,k}^{(3)}$ are of the same type, and so are
$Q_{t,k}^{(5)}$ and $Q_{t,k}^{(6)}$. Moreover, using the symmetry of
$\mathcal W$, the terms corresponding to $(a_1,a_2)$ and $(a_2,a_1)$ can be
handled in the same way by interchanging $(t_1,k_1)$ and $(t_2,k_2)$. Therefore,
it suffices to establish the desired convergence rate for the following
representative pairs:
\begin{align*}
    \mathcal{A}^*
=
\{(1,2),(1,5),(2,2),(2,4),(2,5),(4,5),(5,5)\}\,.
\end{align*} 
By the preceding symmetry argument, it suffices to establish the convergence
rate for the following representative part of ${\rm I}$:
\begin{align}\label{Istar}
{\rm I}^*
=
\sum_{(a_1,a_2)\in \mathcal A^*}
\underbrace{\max_{k_1,k_2\in[L\tilde p^2]}
\bigg|
\sum_{i=0}^{n_L-1}
\mathcal W\bigg(\frac{i}{b_n}\bigg)
\bigg\{
\frac{1}{n_L}
\sum_{t=i+1}^{n_L}
Q_{t,k_1}^{(a_1)}
Q_{t-i,k_2}^{(a_2)}
\bigg\}
\bigg|}_{H_{a_1,a_2}} = \sum_{(a_1,a_2)\in \mathcal A^*}H_{a_1,a_2} \,.
\end{align}

To handle $H_{1,2}$, choose $M_{1n}\in\mathbb N$ such that
$b_n\ll M_{1n}\ll n$. By the triangle inequality and Bonferroni's inequality,
we have 
\begin{align} \label{diff.cov.I1}
\mathbb{P}(H_{1,2}>x)\leq&\, \mathbb{P}\bigg[\max_{k_1,k_2\in[L\tilde{p}^2]}\bigg|\sum_{i=0}^{M_{1n}} \mathcal{W}\bigg(\frac{i}{b_{n}}\bigg)\bigg\{\frac{1}{n_L}\sum_{t=i+1}^{n_L}Q_{t,k_1}^{(1)} Q_{t-i,k_2}^{(2)}\bigg\}\bigg|>\frac{x}{2}\bigg]\notag\\
&\, +\mathbb{P}\bigg[\max_{k_1,k_2\in[L\tilde{p}^2]}\bigg|\sum_{i=M_{1n}+1}^{n_L-1} \mathcal{W}\bigg(\frac{i}{b_{n}}\bigg)\bigg\{\frac{1}{n_L}\sum_{t=i+1}^{n_L}Q_{t,k_1}^{(1)} Q_{t-i,k_2}^{(2)}\bigg\}\bigg|>\frac{x}{2}\bigg]
\end{align}
for any $x\geq0$. Since $\sum_{i=0}^{M_{1n}}|\mathcal{W}({i}/{b_{n}})|\lesssim b_{n}\asymp n^{\rho}$, Bonferroni's inequality yields
\begin{align} \label{trun.M1n}
&\mathbb{P}\bigg[\max_{k_1,k_2\in[L\tilde{p}^2]}\bigg|\sum_{i=0}^{M_{1n}} \mathcal{W}\bigg(\frac{i}{b_{n}}\bigg)\bigg\{\frac{1}{n_L}\sum_{t=i+1}^{n_L} Q_{t,k_1}^{(1)}Q_{t-i,k_2}^{(2)}\bigg\}\bigg|>\frac{x}{2}\bigg] \notag\\
&~~~~	\lesssim  \tilde{p}^4M_{1n}\max_{k_1,k_2\in[L\tilde{p}^2]}\max_{i\in [M_{1n}]_0}\mathbb{P}\bigg\{\bigg| \frac{1}{n_L}\sum_{t=i+1}^{n_L}Q_{t,k_1}^{(1)}Q_{t-i,k_2}^{(2)} \bigg|>\frac{Cx}{n^{\rho}}\bigg\} \\
&~~~~		\lesssim   \tilde{p}^4M_{1n}\max_{\ell_1\in[L]}\max_{j_1,j'_1,j_2,j'_2\in[\tilde{p}]}\max_{i\in [M_{1n}]_0}\mathbb{P}\bigg(\bigg| \frac{1}{nn_L}\sum_{t_1=i+1}^{n_L}\sum_{t_2=1}^n\vep_{t_1,j_1}^{\rm dis} \vep_{t_1+\ell_1,j'_1}^{\rm dis} \vep_{t_1-i,j_2}^{\rm dis}\vep_{t_2,j'_2}^{\rm dis} \bigg|>\frac{Cx}{n^{\rho}}\bigg)  \notag
\end{align}
for any $x\geq0$. For any given $(i,\ell_1,j_1,j'_1,j_2)$, by Condition \ref{c.alpha}, $\{\vep_{t,j_1}^{\rm dis} \vep_{t+\ell_1,j'_1}^{\rm dis} \vep_{t-i,j_2}^{\rm dis}\}$ is an $\alpha$-mixing sequence with $\alpha$-mixing coefficients $\check{\alpha}(m)\leq \alpha(|m-i-\ell_1|_{+})\lesssim \exp(-C|m-i-\ell_1|_+)$. In addition, it holds by Condition \ref{c.subgaussian} that $|\sum_{t=i+1}^{n_L}\bbE(\vep_{t,j_1}^{\rm dis}\vep_{t+\ell_1,j'_1}^{\rm dis} \vep_{t-i,j_2}^{\rm dis})|\lesssim n$ and 
\begin{align*}
 \max_{i+1\leq t\leq n_L}\mathbb{P}\big\{\big|\vep_{t,j_1}^{\rm dis}\vep_{t+\ell_1,j'_1}^{\rm dis}\vep_{t-i,j_2}^{\rm dis} -\mathbb{E}(\vep_{t,j_1}^{\rm dis} \vep_{t+\ell_1,j_1'}^{\rm dis} \vep_{t-i,j_2}^{\rm dis})\big|>x\big\}
\lesssim \exp(-Cx^{2/3})
\end{align*}
for any $x\geq 0$ and $(i, \ell_1,j_1,j'_1,j_2)$. By Lemma \ref{tail_chang} with $( B_n,c_n,  r_1, {r} ) =( 1, i+\ell_1,2/3, 2/7)$,  
\begin{align*}
  \mathbb{P}\bigg(\bigg|\sum_{t=i+1}^{n_L}\vep_{t,j_1}^{\rm dis}\vep_{t+\ell_1,j'_1}^{\rm dis} \vep_{t-i,j_2}^{\rm dis}\bigg|\geq x\bigg) \leq &~ \mathbb{P}\bigg(\bigg|\sum_{t=i+1}^{n_L}(1-\bbE)(\vep_{t,j_1}^{\rm dis}\vep_{t+\ell_1,j'_1}^{\rm dis} \vep_{t-i,j_2}^{\rm dis})\bigg|\geq x - Cn\bigg) \\
 \lesssim &~ \exp\bigg( -\frac{Cx^2}{M_{1n}n} \bigg)  + \exp\bigg( -\frac{Cx^{2/7}}{M_{1n}^{2/7}}\bigg) 
\end{align*}
for any $x>Cn$, $i\in [M_{1n}]_0$ and $( \ell_1,j_1,j'_1,j_2)$. By Conditions \ref{c.alpha} and \ref{c.subgaussian}, it follows from Lemma \ref{tail_chang} with $( B_n,c_n,  r_1, {r} ) =( 1, 0,2, 1/3)$   that 
\begin{align*}
    \mathbb{P}\bigg(\bigg|\sum_{t=1}^{n}\vep_{t,j_2'}^{\rm dis}\bigg| \geq x\bigg)\lesssim \exp\bigg( -\frac{Cx^2}{ n} \bigg)+\exp(-Cx^{1/3})
\end{align*}
for any $x\geq 0$ and $j_2'\in[\tilde{p}]$. Then, we can conclude that
\begin{align*}
&\mathbb{P}\bigg(\bigg| \frac{1}{nn_L}\sum_{t_1=i+1}^{n_L}\sum_{t_2=1}^n \vep_{t_1,j_1}^{\rm dis} \vep_{t_1+\ell_1,j_1'}^{\rm dis} \vep_{t_1-i,j_2}^{\rm dis} \vep_{t_2,j_2'}^{\rm dis} \bigg|>\frac{Cx}{n^{\rho}}\bigg)\\
&~~~~		\leq  \mathbb{P}\bigg(\bigg| \sum_{t=i+1}^{n_L} \vep_{t,j_1}^{\rm dis} \vep_{t+\ell_1,j_1'}^{\rm dis} \vep_{t-i,j_2}^{\rm dis} \bigg| >\frac{C nx^{1/2}M_{1n}^{1/4}}{ {n}^{\rho/2}}\bigg)+\mathbb{P}\bigg(\bigg| \sum_{t=1}^{n}\vep_{t,j_2'}^{\rm dis} \bigg|>\frac{Cnx^{1/2}}{ {n}^{\rho/2}M_{1n}^{1/4}}\bigg)\\
&~~~~		\lesssim \exp\bigg(-\frac{Cn^{1- \rho }x}{M_{1n}^{1/2}}\bigg)  +\exp\bigg\{-\frac{Cn^{(2-\rho)/7}x^{1/7}}{M_{1n}^{3/14}}\bigg\} +\exp\bigg\{-\frac{Cn^{(2- \rho)/6}x^{1/6}}{M_{1n}^{1/12}}\bigg\} 
\end{align*}
for $x>Cn^{\rho}M_{1n}^{-1/2}$, $i\in [M_{1n}]_0$  and $(i, \ell_1,j_1,j'_1,j_2,j_2')$. Together with \eqref{trun.M1n}, we have 
\begin{align*}
&\mathbb{P}\bigg[\max_{k_1,k_2\in[L\tilde{p}^2]}\bigg|\sum_{i=0}^{M_{1n}}\mathcal{W} \bigg(\frac{i}{b_{n}}\bigg)\bigg\{\frac{1}{n_L}\sum_{t=i+1}^{n_L}Q_{t,k_1}^{(1)}Q_{t-i,k_2}^{(2)}\bigg\}\bigg|>\frac{x}{2}\bigg]\notag\\
&~~~~ 		\lesssim  \tilde{p}^4M_{1n}\bigg[\exp\bigg(-\frac{Cn^{1- \rho }x}{M_{1n}^{1/2}}\bigg) +\exp\bigg\{-\frac{Cn^{(2-\rho)/7}x^{1/7}}{M_{1n}^{3/14}}\bigg\} +\exp\bigg\{-\frac{Cn^{(2- \rho)/6}x^{1/6}}{M_{1n}^{1/12}}\bigg\}\bigg]
\end{align*}
for $x>Cn^{\rho }M_{1n}^{-1/2}$. Since $b_n\asymp n^{\rho}$, by Condition \ref{c.kernelF}, we have $\sum_{i={M_{1n}}+1}^{n_L-1}|\mathcal{W}({i}/{b_{n}})|\lesssim n^{\vartheta\rho} M_{1n}^{ 1-\vartheta }$. 
Then, Condition \ref{c.subgaussian}, triangle inequality and Bonferroni's inequality yield
\begin{align*}
&\mathbb{P}\bigg[\max_{k_1,k_2\in[L\tilde{p}^2]}\bigg|\sum_{i=M_{1n}+1}^{n_L-1}\mathcal{W}\bigg(\frac{i}{b_{n}}\bigg)\bigg\{\frac{1}{n_L}\sum_{t=i+1}^{n_L}Q_{t,k_1}^{(1)}Q_{t-i,k_2}^{(2)}\bigg\}\bigg|>\frac{x}{2}\bigg]\notag\\ 
&~~~~		\lesssim   \tilde{p}^4\max_{k_1,k_2\in[L\tilde{p}^2]}\sum_{i=M_{1n}+1}^{n_L-1} \mathbb{P}\bigg\{\bigg|\frac{1}{n_L}\sum_{t=i+1}^{n_L}Q_{t,k_1}^{(1)}Q_{t-i,k_2}^{(2)}\bigg|>\frac{CxM_{1n}^{\vartheta-1}}{n^{\vartheta\rho}}\bigg\}\\
&~~~~		\lesssim  \tilde{p}^4\max_{\ell_1\in[L]}\max_{j_1,j_1',j_2,j_2'\in[\tilde{p}]}\sum_{i=M_{1n}+1}^{n_L-1} \mathbb{P}\bigg(\bigg|\sum_{t_1=i+1}^{n_L}\sum_{t_2=1}^n \vep_{t_1,j_1}^{\rm dis} \vep_{t_1+\ell_1,j_1'}^{\rm dis} \vep_{t_1-i,j_2}^{\rm dis} \vep_{t_2,j_2'}^{\rm dis}\bigg|> \frac{Cn^2xM_{1n}^{\vartheta-1}}{n^{\vartheta\rho}}\bigg)\\
&~~~~		\lesssim  \tilde{p}^4\max_{\ell_1\in[L]}\max_{j_1,j_1',j_2\in[\tilde{p}]}\sum_{i=M_{1n}+1}^{n-1} \sum_{t=i+1}^{n_L}\mathbb{P}\bigg\{| \vep_{t,j_1}^{\rm dis} \vep_{t+\ell_1,j_1'}^{\rm dis} \vep_{t-i,j_2}^{\rm dis} |>\frac{Cx^{3/4}M_{1n}^{(3\vartheta-3)/4}}{n^{3\vartheta\rho/4}}\bigg\}\\
&~~~~~~~ + \tilde{p}^4\max_{j_2'\in[\tilde{p}]}\sum_{i=M_{1n}+1}^{n-1}\sum_{t=1}^{n} \mathbb{P}\bigg\{|\vep_{t,j_2'}^{\rm dis}|>\frac{Cx^{1/4}M_{1n}^{(\vartheta-1)/4}} {n^{\vartheta\rho/4}}\bigg\} \\	 &~~~~	\lesssim \tilde{p}^4n^2\exp\bigg\{-\frac{Cx^{1/2}M_{1n}^{(\vartheta-1)/2}}{n^{\vartheta\rho/2}}\bigg\}
\end{align*}
for any $x\geq 0$. Due to $M_{1n}\ll n$, by \eqref{diff.cov.I1}, it holds that
\begin{align*}
\mathbb{P}(H_{1,2}>x)\lesssim&~  \tilde{p}^4n^2 \bigg[\exp\bigg(-\frac{Cn^{1- \rho }x}{M_{1n}^{1/2}}\bigg) +\exp\bigg\{-\frac{Cn^{(2-\rho)/7}x^{1/7}}{M_{1n}^{3/14}}\bigg\} \\
&~~~~~~~~~~+\exp\bigg\{-\frac{Cn^{(2- \rho)/6}x^{1/6}}{M_{1n}^{1/12}}\bigg\}
 + \exp\bigg\{-\frac{CM_{1n}^{(\vartheta-1)/2}x^{1/2}}{n^{\vartheta\rho/2}}\bigg\}\bigg] 
\end{align*}
for any $x>Cn^{\rho }M_{1n}^{-1/2}$. If $0<\rho<\min\{1/2 ,(\vartheta-1)/\vartheta\}$, we have $\max\{2\rho,\rho\vartheta/(\vartheta-1)\} <\min\{2-2\rho,(4-2\rho)/3, 1\}$. Therefore, we can choose $a_1$ such that $\max\{2\rho,\rho\vartheta/(\vartheta-1)\}<a_1<\min\{2-2\rho,(4-2\rho)/3, 1\}$. Recall $\tilde{p}=pM$ and $M\asymp (np)^{C_M}$. Taking $M_{1n}\asymp n^{a_1}$ yields
\begin{align}\label{eq:rateH12}
H_{1,2}=O_{\rm p}\{n^{-c_1''}(\log p)^{c_2''}\}\,,
\end{align}
provided that $0<\rho<\min\{1/2 ,(\vartheta-1)/\vartheta\}$, where $c_1'',c_2''$ are two positive constants depending only on $(\rho,\vartheta)$.

For $H_{1,5}$, recall  $Q_{t,k}^{(1)}=\vep_{t,j}^{\rm dis}\vep_{t+\ell,j'}^{\rm dis}$ and  $Q_{t,k}^{(5)}=Q_{k}^{(5)}=(n_L^{-1}\sum_{t'=1}^{n_L}\vep_{t',j}^{\rm dis})\bar{\vep}^{\rm dis}_{j'} $.  Then, we have
\begin{align}\label{eq:II_1}
\mathbb{P}(H_{1,5}>x) &=  \bbP\bigg[\max_{k_1,k_2\in[L\tilde{p}^2]}\bigg|\sum_{i=0}^{n_L-1}\mathcal{W}\bigg(\frac{i}{b_n}\bigg)\bigg\{\frac{1}{n_L} \sum_{t=i+1}^{n_L}Q_{t,k_1}^{(1)}Q_{k_2}^{(5)}\bigg\} \bigg| > x\bigg]  \notag\\
&\lesssim    \tilde{p}^4\max_{k_1,k_2\in[L\tilde{p}^2]}\bbP\bigg\{\sum_{i=0}^{n_L-1}\bigg|\mathcal{W}\bigg(\frac{i}{b_n}\bigg)\bigg|  \bigg|\frac{1}{n_L}\sum_{t=i+1}^{n_L} (1-\bbE)( \vep_{t,j_1}^{\rm dis}\vep_{t+\ell_1,j_1'}^{\rm dis})  \bigg| |Q_{k_2}^{(5)}| > \frac{x}{2}\bigg\}  \notag\\
&~~  +  \tilde{p}^4\max_{k_1,k_2\in[L\tilde{p}^2]}\bbP\bigg\{ \sum_{i=0}^{n_L-1}\bigg|\mathcal{W}\bigg(\frac{i}{b_n}\bigg)\bigg| \bigg|\frac{1}{n_L}\sum_{t=i+1}^{n_L}\bbE(\vep_{t,j_1}^{\rm dis}\vep_{t+\ell_1,j_1'}^{\rm dis})\bigg| |Q_{k_2}^{(5)}| > \frac{x}{2} \bigg\} 
\end{align}
for any $x\geq0$. Since $\sum_{i=0}^{n_L-1}|\mathcal{W}(i/b_n)|\lesssim b_n\asymp n^{\rho}$, it follows from Conditions~\ref{c.alpha} and \ref{c.subgaussian}, together with
\eqref{eq:sumvep} and \eqref{eq:sumvep2}, that
\begin{align*}
    & \bbP\bigg\{\sum_{i=0}^{n_L-1}\bigg|\mathcal{W}\bigg(\frac{i}{b_n}\bigg)\bigg|  \bigg|\frac{1}{n_L}\sum_{t=i+1}^{n_L} (1-\bbE)( \vep_{t,j_1}^{\rm dis}\vep_{t+\ell_1,j_1'}^{\rm dis})  \bigg| |Q_{k_2}^{(5)}| > \frac{x}{2}\bigg\}\\
   &~~~~  \leq  \sum_{i=0}^{n_L-1}\bbP\bigg\{\bigg|\frac{1}{n_L}\sum_{t=i+1}^{n_L} (1-\bbE)( \vep_{t,j_1}^{\rm dis}\vep_{t+\ell_1,j_1'}^{\rm dis}) \bigg| > \frac{Cx^{1/3}}{n^{\rho/3}}  \bigg\}\\
    &~~~~~~~+  n\bbP\bigg( \bigg|\frac{1}{n_L}\sum_{t=1}^{n_L}\vep_{t,j_2}^{\rm dis}\bigg|  >  \frac{Cx^{1/3}}{n^{\rho/3}} \bigg) + n\bbP\bigg(\bigg|\frac{1}{n}\sum_{t=1}^n\vep_{t,j_2'}^{\rm dis} \bigg|> \frac{Cx^{1/3}}{n^{\rho/3}} \bigg) \\
    &~~~~  \lesssim n [ \exp\{-Cn^{(3-2\rho)/3}x^{2/3}\}
+\exp\{ -Cn^{(3-\rho)/9}x^{1/9} \} ]
\end{align*}
for any $x\geq 0$. Notice that $\bbE(|\vep_{t,j}^{\rm dis}\vep_{t+\ell,j'}^{\rm dis}|)\leq C$ for any $(t,\ell,j,j')$. It holds by \eqref{eq:sumvep} that
\begin{align*}
    &\bbP\bigg\{ \sum_{i=0}^{n_L-1}\bigg|\mathcal{W}\bigg(\frac{i}{b_n}\bigg)\bigg| \bigg|\frac{1}{n_L}\sum_{t=i+1}^{n_L}\bbE(\vep_{t,j_1}^{\rm dis}\vep_{t+\ell_1,j_1'}^{\rm dis})\bigg| |Q_{k_2}^{(5)}| > \frac{x}{2} \bigg\} \\
    &~~~~  \leq n\bbP\bigg( \bigg|\frac{1}{n_L}\sum_{t=1}^{n_L}\vep_{t,j_2}^{\rm dis}\bigg|  >  \frac{Cx^{1/2}}{n^{\rho/2}} \bigg) + n\bbP\bigg(\bigg|\frac{1}{n}\sum_{t=1}^n\vep_{t,j_2'}^{\rm dis} \bigg|> \frac{Cx^{1/2}}{n^{\rho/2}} \bigg)\\
    &~~~~  \lesssim n[\exp(-Cn^{1-\rho}x) + \exp\{-Cn^{(2-\rho)/6}x^{1/6}\} ]
\end{align*}
for any $x\geq 0$. This, together with \eqref{eq:II_1}, yields
\begin{align*}
    \mathbb{P}(H_{1,5}>x)  \lesssim &~  \tilde{p}^4n\big[ \exp\{-Cn^{(3-2\rho)/3}x^{2/3}\}
+\exp\{ -Cn^{(3-\rho)/9}x^{1/9} \}\\
&~~~~~~~~~~~~~+ \exp(-Cn^{1-\rho}x) + \exp\{-Cn^{(2-\rho)/6}x^{1/6}\}\big]
\end{align*}
for any $x\geq 0$.  Therefore, we can conclude that
\begin{align}\label{eq:rateH15}
H_{1,5}=O_{\rm p}\{n^{-c_1'''}(\log p)^{c_2'''}\} \,,
\end{align} 
where $c_1''',c_2'''$ are two positive constants depending only on $\rho$.

For $H_{2,2}$, recall $Q_{t,k}^{(2)}=-\vep_{t,j}^{\rm dis}\bar{\vep}^{\rm dis}_{j'}$. It holds that
\begin{align}\label{eq:III_1}
    \mathbb{P}(H_{2,2}>x) \lesssim &~\tilde{p}^4\max_{k_1,k_2\in[L\tilde{p}^2]}\bbP\bigg[\bigg|\sum_{i=0}^{n_L-1}\mathcal{W}\bigg(\frac{i}{b_n}\bigg)\bigg\{\frac{1}{n_L} \sum_{t=i+1}^{n_L}Q_{t,k_1}^{(2)}Q_{t-i,k_2}^{(2)}\bigg\} \bigg| > x\bigg]\nonumber\\
    \lesssim &~  \tilde{p}^4n\max_{k_1,k_2\in[L\tilde{p}^2]}\max_{i\in[n_L-1]_0} \bbP\bigg(  \bigg|\frac{1}{n_L}\sum_{t=i+1}^{n_L}  \vep_{t,j_1}^{\rm dis}\vep_{t-i,j_2}^{\rm dis}   \bigg|
	|\bar{\vep}_{j_1'}^{\rm dis}| |\bar{\vep}_{j_2'}^{\rm dis}| > \frac{Cx}{n^{\rho}}\bigg) \nonumber\\
    \lesssim &~  \tilde{p}^4n^2\max_{k_1,k_2\in[L\tilde{p}^2]}\max_{i\in[n_L-1]_0} \max_{ i+1\leq t \leq n_L}\bbP\bigg(   | \vep_{t,j_1}^{\rm dis}\vep_{t-i,j_2}^{\rm dis}    |
	|\bar{\vep}_{j_1'}^{\rm dis}| |\bar{\vep}_{j_2'}^{\rm dis}| > \frac{Cx}{n^{\rho}}\bigg)
\end{align}
for any $x\geq 0$. Since $\bar{\vep}_{j}^{\rm dis} = n^{-1}\sum_{t=1}^n {\vep}_{t,j}^{\rm dis}$, it holds by Condition \ref{c.subgaussian} and \eqref{eq:sumvep} that
\begin{align*}
    &~\bbP\bigg(   | \vep_{t,j_1}^{\rm dis}\vep_{t-i,j_2}^{\rm dis}    |
	|\bar{\vep}_{j_1'}^{\rm dis}| |\bar{\vep}_{j_2'}^{\rm dis}| > \frac{Cx}{n^{\rho}}\bigg) \\
    \leq &~ \bbP\big \{   | \vep_{t,j_1}^{\rm dis}\vep_{t-i,j_2}^{\rm dis}    |
	 >  Cn^{(1-\rho)/2} x^{1/2} \big\} + \bbP\bigg(    
	|\bar{\vep}_{j_1'}^{\rm dis}|   > \frac{Cx^{1/4}}{n^{(1+\rho)/4}}\bigg)+ \bbP\bigg(    
	 |\bar{\vep}_{j_2'}^{\rm dis}|  > \frac{Cx^{1/4}}{n^{(1+\rho)/4}}\bigg)\\
     \lesssim &~ \exp\{ -Cn^{(1-\rho)/2} x^{1/2}  \}+\exp\{ -Cn^{(3-\rho)/12} x^{1/12}  \}
\end{align*}
for any $x\geq 0$. This, together with \eqref{eq:III_1}, yields
\begin{align*}
    \mathbb{P}(H_{2,2}>x) \lesssim \tilde{p}^4n^2\big[ \exp\{ -Cn^{(1-\rho)/2} x^{1/2}  \}+\exp\{ -Cn^{(3-\rho)/12} x^{1/12}  \} \big]
\end{align*}
for any $x\geq 0$. Therefore, we can conclude that
\begin{align}\label{eq:rateH22}
H_{2,2}=O_{\rm p}\{n^{-c_1''''}(\log p)^{c_2''''}\} \,,
\end{align} 
where $c_1'''',c_2''''$ are two positive constants depending only on $\rho$.

For $H_{2,4}$, $H_{2,5}$, $H_{4,5}$ and $H_{5,5}$, recall $Q_{t,k}^{(2)}=-\vep_{t,j}^{\rm dis}\bar{\vep}^{\rm dis}_{j'}$, $Q_{t,k}^{(4)}=Q_{k}^{(4)}=-
 {n_L^{-1}}\sum_{t'=1}^{n_L}\vep_{t',j}^{\rm dis}\vep_{t'+\ell,j'}^{\rm dis}$ and $Q_{t,k}^{(5)}=Q_{k}^{(5)}=(n_L^{-1}\sum_{t'=1}^{n_L}\vep_{t',j}^{\rm dis})\bar{\vep}^{\rm dis}_{j'} $. Then, it holds that
 \begin{align*}
    H_{2,4}=&~ \max_{k_1,k_2\in[L\tilde p^2]}
\bigg|
\sum_{i=0}^{n_L-1}
\mathcal W\bigg(\frac{i}{b_n}\bigg)
\bigg\{
\frac{1}{n_L}
\sum_{t=i+1}^{n_L}
Q_{t,k_1}^{(2)}
Q_{t-i,k_2}^{(4)}
\bigg\}
\bigg|\\
\leq &~ \max_{k_1,k_2\in[L\tilde p^2]} \max_{t'\in [n_L]}\sum_{i=0}^{n_L-1}\bigg|\mathcal{W}\bigg(\frac{i}{b_n}\bigg)\bigg| \bigg|\frac{1}{n_L}\sum_{t=i+1}^{n_L}\vep_{t,j_1}^{\rm dis}\bigg||\bar{\vep}_{j_1'}^{\rm dis}|  | \vep_{t',j_2}^{\rm dis}\vep_{t'+\ell_2,j_2'}^{\rm dis}   |\,,\\
  H_{2,5}=&~ \max_{k_1,k_2\in[L\tilde p^2]}
\bigg|
\sum_{i=0}^{n_L-1}
\mathcal W\bigg(\frac{i}{b_n}\bigg)
\bigg\{
\frac{1}{n_L}
\sum_{t=i+1}^{n_L}
Q_{t,k_1}^{(2)}
Q_{t-i,k_2}^{(5)}
\bigg\}
\bigg|\\
\leq &~ \max_{k_1,k_2\in[L\tilde p^2]} \max_{t'\in [n_L]}\max_{t''\in [n]}\sum_{i=0}^{n_L-1}\bigg|\mathcal{W}\bigg(\frac{i}{b_n}\bigg)\bigg| \bigg|\frac{1}{n_L}\sum_{t=i+1}^{n_L}\vep_{t,j_1}^{\rm dis}\bigg||\bar{\vep}_{j_1'}^{\rm dis}|  | \vep_{t',j_2}^{\rm dis}\vep_{t'',j_2'}^{\rm dis}   |\,,\\
   H_{4,5}=&~ \max_{k_1,k_2\in[L\tilde p^2]}
\bigg|
\sum_{i=0}^{n_L-1}
\mathcal W\bigg(\frac{i}{b_n}\bigg)
\bigg\{
\frac{1}{n_L}
\sum_{t=i+1}^{n_L}
Q_{t,k_1}^{(4)}
Q_{t-i,k_2}^{(5)}
\bigg\}
\bigg|\\
\leq &~ \max_{k_1,k_2\in[L\tilde p^2]} \max_{t\in [n_L]}\sum_{i=0}^{n_L-1}\bigg|\mathcal{W}\bigg(\frac{i}{b_n}\bigg)\bigg| |\vep_{t,j_1}^{\rm dis}\vep_{t+\ell_1,j_1'}^{\rm dis} |\bigg|\frac{1}{n_L}\sum_{t'=1}^{n_L} \vep_{t',j_2}^{\rm dis}    \bigg||\bar{\vep}_{j_2'}^{\rm dis}|  \,,\\
 H_{5,5}=&~ \max_{k_1,k_2\in[L\tilde p^2]}
\bigg|
\sum_{i=0}^{n_L-1}
\mathcal W\bigg(\frac{i}{b_n}\bigg)
\bigg\{
\frac{1}{n_L}
\sum_{t=i+1}^{n_L}
Q_{t,k_1}^{(5)}
Q_{t-i,k_2}^{(5)}
\bigg\}
\bigg|\\
\leq &~ \max_{k_1,k_2\in[L\tilde p^2]} \max_{t'\in [n_L]}\max_{t''\in [n]}\sum_{i=0}^{n_L-1}\bigg|\mathcal{W}\bigg(\frac{i}{b_n}\bigg)\bigg| \bigg|\frac{1}{n_L}\sum_{t=1}^{n_L}\vep_{t,j_1}^{\rm dis}\bigg||\bar{\vep}_{j_1'}^{\rm dis}|  | \vep_{t',j_2}^{\rm dis}\vep_{t'',j_2'}^{\rm dis}   |\,.
 \end{align*}  
By taking similar arguments as in the proof for $H_{2,2}$, we can also have 
 \begin{align}\label{eq:rateHrest}
H_{2,4}+H_{2,5}+H_{4,5}+H_{5,5}=O_{\rm p}\{n^{-c_1''''}(\log p)^{c_2''''}\} \,.
\end{align}  
Combining \eqref{Istar} with  \eqref{eq:rateH12}, \eqref{eq:rateH15}, \eqref{eq:rateH22} and \eqref{eq:rateHrest}, we can conclude that
\begin{align*}
    {\rm I}^*
= O_{\rm p}\{n^{-c_1^*}(\log p)^{c_2^*}\}\,,
\end{align*}
provided that $0<\rho<\min\{1/2 ,(\vartheta-1)/\vartheta\}$, where $c_1^*,c_2^*$ are two positive constants depending only on $(\rho,\vartheta)$.  Notice that, for every
$(a_1,a_2)\in\mathcal A\setminus\mathcal A^*$, the corresponding term can be
controlled by the same argument as one of the representative terms in
$\mathcal A^*$.   Hence, it follows that $ {\rm I}=O_{\rm p}\{n^{-c_1^*}(\log p)^{c_2^*}\}$. The same convergence rate holds for ${\rm II}$. Since   $(\vartheta-1)/(3\vartheta-2) < \min\{1/2 ,(\vartheta-1)/\vartheta\}$, \eqref{dis.theta2} and \eqref{dis.theta3.pre} imply \eqref{disSigma}, which
completes the proof of Lemma~\ref{disGAstar}. $\hfill\Box$

\subsection{Proof of Lemma \ref{lemma1.prop1}}\label{proof.lemma1.prop1}

As in Section~\ref{sec.preliminary}, let $\cU=[0,1]$ and partition it into
$M$ subintervals $\{B_1,\ldots,B_{M}\}$ of equal length $M^{-1}$. Let $b_m$ denote the midpoint of $B_m$. Then, it holds that
\begin{align*}
    \max_{t\in[n]}\max_{j\in[p]}\sup_{u\in\cU}|\vep_{t,j}(u)| \leq \max_{t\in[n]}\max_{j\in[p]}\max_{m\in[M]} | \vep_{t,j}(b_m)| + \max_{t\in[n]}\max_{j\in[p]}\max_{m\in[M]}\sup_{u\in{B}_m}|\vep_{t,j}(u)-\vep_{t,j}(b_m)|\,.
\end{align*}
By Condition \ref{c.subgaussian} and Bonferroni's inequality, we have 
\begin{align*}
    \bbP\bigg\{\max_{t\in[n]}\max_{j\in[p]}\max_{m\in[M]} | \vep_{t,j}(b_m)|  >x  \bigg\} \lesssim  npM\exp(-C x^2)
\end{align*}
for any $x\geq 0$, which implies
\begin{align}\label{order.dis}
    \max_{t\in[n]}\max_{j\in[p]}\max_{m\in[M]} | \vep_{t,j}(b_m)| = O_{\rm p}[\{ \log(npM) \}^{1/2}]\,.
\end{align}
Moreover, by Condition \ref{c.subgaussian} again, it holds that
\begin{align*}
    \max_{t\in[n]}\max_{j\in[p]}\bbP\{ |\vep_{t,j}(u)-\vep_{t,j}(v)| > x \}\lesssim \exp\bigg[ -C\bigg\{ \frac{x}{d(u,v)} \bigg\}^2 \bigg]
\end{align*}
for any $x \geq 0$ and $u,v\in \cU$, where $d(u,v) =  |u-v|^{\kappa}$. Then, by Lemma \ref{tail_multi}(i) with $(c_{1n},c_{2n},\gamma)=(C,C,2)$,  we have
 \begin{align*}
	&~\bbP\bigg\{ \sup_{u,v\in{B}_m}|\vep_{t,j}(u)-\vep_{t,j}(v)|>x \bigg\}  \\
     \lesssim&~ \frac{1}{x}\int_{0}^{D_1}\{\log(eN_\epsilon)\}^{1/2}\,{\rm d}\epsilon + N_{D_1}\log_2\biggl(\frac{1}{D_1}\biggr)\exp\bigg\{-C\bigg( \frac{x}{D} \bigg)^2  \bigg\}  
\end{align*}
for any $x >0$ and $D_1\in(0,D\wedge 1)$, where $D= \sup\{d(u ,v): u,v \in B_m\}\leq M^{-\kappa}$. Notice that $N_{\epsilon}\lesssim M^{-1}\epsilon^{-1/\kappa}$ for any $\epsilon\leq D_1$. Similar to \eqref{I1.sup} with $\delta=1/2$, we also have 
\begin{align*}
    \int_{0}^{D_1}\{\log(eN_\epsilon)\}^{1/2}\,{\rm d}\epsilon \lesssim \frac{D_1^{1/2}}{M^{\kappa/2}}\,.
\end{align*}
Taking $D_1\asymp M^{-2}$ yields
\begin{align*}
    &~\bbP\bigg\{\max_{t\in[n]}\max_{j\in[p]}\max_{m\in[M]} \sup_{u \in{B}_m}|\vep_{t,j}(u)-\vep_{t,j}(b_m)|>x \bigg\} \nonumber \\
   &~~~~~~~~~~ \lesssim   \frac{np}{M^{\kappa/2} x} + np M^{2/\kappa}\log_2(M)\exp(-CM^{2\kappa}x^2  )
\end{align*}
 for any $x> 0$, which implies that 
  \begin{align}\label{div.vep}
     \max_{t\in[n]}\max_{j\in[p]}\max_{m\in[M]} \sup_{u \in{B}_m}|\vep_{t,j}(u)-\vep_{t,j}(b_m)|  = O_{\rm p}\bigg\{\frac{ (np)^2}{  {M}^{ \kappa/2} }   \bigg\}  \,,
 \end{align}
 for any large $M>0$. Take $M\asymp (np)^{C_0}$ for some sufficiently large constant $C_0>0$. It holds that $\max_{t\in[n]}\max_{j\in[p]}\max_{m\in[M]} \sup_{u \in{B}_m}|\vep_{t,j}(u)-\vep_{t,j}(b_m)| = O_{\rm p}\{(np)^{-1}\} $.
 This, together with \eqref{order.dis},   yields
 \begin{align*}
     \max_{t\in[n]}\max_{j\in[p]}\sup_{u\in\cU}|\vep_{t,j}(u)| = O_{\rm p}[\{ \log(np ) \}^{1/2}]\,. 
 \end{align*}
Taking $ M =M_1$ in  \eqref{div.vep} yields the second assertion. We complete the proof of Lemma \ref{lemma1.prop1}. $\hfill\Box$

\subsection{Proof of Lemma \ref{lemma2.prop1}}\label{proof.lemma2.prop1}

The local linear estimator $\hat{\vep}_{t,j}(u)$ admits the following closed-form expression:
 \begin{align*}
     \hat{\vep}_{t,j}(u) = \frac{A_0(u)B_2(u)-A_1(u)B_1(u)}{B_0(u)B_2(u)-B_1^2(u)}\,,
 \end{align*}
 where for $r\in \{0,1,2\}$,
 \begin{align*}
     A_r(u)=&~  \frac{1}{N_{t,j}   }\sum_{k=1}^{N_{t,j}}\cK_{h_{t,j}}(u_{t,j,k}-u)\bigg(\frac{u_{t,j,k}-u}{h_{t,j}}\bigg)^r\vep^*_{t,j,k}\,,\\
     B_r(u)= &~  \frac{1}{N_{t,j}   }\sum_{k=1}^{N_{t,j}}\cK_{h_{t,j}}(u_{t,j,k}-u)\bigg(\frac{u_{t,j,k}-u}{h_{t,j}}\bigg)^r\,.
 \end{align*}
 Denote $\be_0=(1,0)^{\T}$, $\widetilde{\bu}_{t,j,k}=\{1,(u_{t,j,k}-u)/h_{t,j}\}^{\T}$ and
\begin{align}\label{def.SR}
\widehat{\bf S}_{t,j}(u)=&\,\frac{1}{N_{t,j}}\sum_{k=1}^{N_{t,j}} \widetilde{\bu}_{t,j,k}\widetilde{\bu}_{t,j,k}^{\T}\cK_{h_{t,j}}(u_{t,j,k}-u)\,,\nonumber \\
\widehat{\bR}_{t,j}(u)=&\, \frac{1}{N_{t,j}}\sum_{k=1}^{N_{t,j}} \widetilde{\bu}_{t,j,k}\{\vep^*_{t,j,k} - \vep_{t,j}(u)\}\cK_{h_{t,j}}(u_{t,j,k}-u) \,.
\end{align} 
Since $|\be_0|_2=1$, a simple calculation yields, for any $u,u_1,u_2\in \cU$, 
\begin{align}\label{div.delta1}
    &|\delta_{t,j}(u)|=|\hat{\vep}_{t,j}(u) - \vep_{t,j}(u)|= |\be_0^{\T}\{\widehat{\bf S}_{t,j}(u)\}^{-1}\widehat{\bR}_{t,j}(u)| \leq \| \widehat{\bf S}_{t,j}(u) \|_{\min}^{-1} |\widehat \bR_{t,j}(u) |_2  \,, \nonumber\\
    &|\delta_{t,j}(u_1) - \delta_{t,j}(u_2)| =    |\be_0^{\T}[\{\widehat{\bS}_{t,j}(u_1)\}^{-1}\widehat{\bR}_{t,j}(u_1) - \{\widehat{\bS}_{t,j}(u_2)\}^{-1}\widehat{\bR}_{t,j}(u_2)] | \nonumber\\
&~~~~~~~~~~~~~~~~~~~~~~~~\leq    \|\widehat{\bS}_{t,j}(u_1)\|_{\min}^{-1}|\widehat{\bR}_{t,j}(u_1)-\widehat{\bR}_{t,j}(u_2)|_2\nonumber\\
&~~~~~~~~~~~~~~~~~~~~~~~~~~~
+ \|\widehat{\bS}_{t,j}(u_1)\|_{\min}^{-1}\|\widehat{\bS}_{t,j}(u_1)-\widehat{\bS}_{t,j}(u_2)\|_{\rm F} \|\widehat{\bS}_{t,j}(u_2)\|_{\min}^{-1} |\widehat{\bR}_{t,j}(u_2)|_2 \,.
\end{align}
Here we have used the identity $\bA^{-1}-\bB^{-1}=\bA^{-1}(\bB-\bA)\bB^{-1}$, 
together with $|\ba^{\T}\bB\bc|
\le
|\ba|_2 \|\bB\|_{\rm op} |\bc|_2$, $\|\bB^{-1}\|_{\rm op}
=
\|\bB\|_{\min}^{-1}$ 
and $\|\bB\|_{\rm op}
\le
\|\bB\|_{\rm F}$ 
for any $\ba,\bc\in\mathbb{R}^q$ and $\bB\in\mathbb{R}^{q\times q}$. 
Here,  $\|\bB\|_{\min}
=
\{\lambda_{\min}(\bB^{\T}\bB)\}^{1/2}$.  By Conditions~\ref{cond.fu} and~\ref{cond.kern},  for any
$\ba=(a_0,a_1)^{\T}$ with $|\ba|_2=1$, there exists a constant $c_S>0$,
independent of $(t,j,u)$, such that
\begin{align*}
\ba^{\T}\bbE\{\widehat{\bf S}_{t,j}(u)\}\ba
=&~
\int 
\bigg(a_0+a_1\frac{v-u}{h_{t,j}}\bigg)^2
\cK_{h_{t,j}}(v-u)f_U(v)\,{\rm d}v  
\geq  c_S \,.
\end{align*}
Taking the infimum over $|\ba|_2=1$ gives  $\|\bbE\{\widehat{\bf S}_{t,j}(u)\}\|_{\min}
    \geq c_S$  for any $(t,j,u)$. Define
\begin{align*}
    \mathcal{E}_1 = \bigg\{ \max_{t\in [n]}\max_{j\in[p]}\sup_{u\in\cU} \|\widehat{\bf S}_{t,j}(u) - \bbE\{\widehat{\bf S}_{t,j}(u)\}\|_{\rm F} \leq \frac{c_S}{2} \bigg\}\,.
\end{align*}    
On the event $\mathcal E_1$,  we have  $ \|\widehat{\bS}_{t,j}(u)\|_{\min}  \geq \|\bbE\{\widehat{\bf S}_{t,j}(u)\}\|_{\min}-\|\widehat{\bf S}_{t,j}(u) - \bbE\{\widehat{\bf S}_{t,j}(u)\}\|_{\rm F}  \geq c_S/2 $ for all $(t,j,u)$. Combining this with \eqref{div.delta1}, we obtain, on $\mathcal E_1$, that for any $(u,u_1,u_2)\in\cU^3$, 
\begin{align}\label{div.delta2}
    |\delta_{t,j}(u)| \leq  &~  2c_S^{-1}|\widehat \bR_{t,j}(u) |_2 \lesssim |\widehat \bR_{t,j}(u) |_2 \,, \nonumber\\
    |\delta_{t,j}(u_1) - \delta_{t,j}(u_2)| \lesssim &~   |\widehat{\bR}_{t,j}(u_1)-\widehat{\bR}_{t,j}(u_2)|_2 
+  \|\widehat{\bS}_{t,j}(u_1)-\widehat{\bS}_{t,j}(u_2)\|_{\rm F}   |\widehat{\bR}_{t,j}(u_2)|_2 \,.
\end{align}
Similar to Section~\ref{sec.preliminary}, let $\cU=[0,1]$ and partition it into
$\widetilde{M}$ subintervals $\{B_1,\ldots,B_{\widetilde{M}}\}$ of equal length $\widetilde{M}^{-1}$. Let $b_m$ denote the midpoint of $B_m$. Then, by \eqref{div.delta2},  it holds that
\begin{align}\label{delta.sup}
    &\max_{t\in [n]}\max_{j\in[p]}\sup_{u\in\cU}|\delta_{t,j}(u)|\leq  \max_{t\in [n]}\max_{j\in[p]}\max_{m\in[\widetilde{M}]}|\delta_{t,j}(b_m)|+ \max_{t\in [n]}\max_{j\in[p]}\max_{m\in[\widetilde{M}]}\sup_{u\in B_m}|\delta_{t,j}(u)-\delta_{t,j}(b_m)|\nonumber\\
   &~~~~~~~~~~~~~~~~~~~ \lesssim  \underbrace{\max_{t\in [n]}\max_{j\in[p]}\max_{m\in[\widetilde{M}]}|\widehat \bR_{t,j}(b_m) |_2}_{{\rm I}_1} + \underbrace{\max_{t\in [n]}\max_{j\in[p]}\max_{m\in[\widetilde{M}]}\sup_{u\in B_m}|\widehat{\bR}_{t,j}(u)-\widehat{\bR}_{t,j}(b_m)|_2}_{{\rm I}_2}\nonumber \\
   &~~~~~~~~~~~~~~~~~~~~~~+ \underbrace{\max_{t\in [n]}\max_{j\in[p]}\max_{m\in[\widetilde{M}]}\sup_{u\in B_m}\|\widehat{\bS}_{t,j}(u)-\widehat{\bS}_{t,j}(b_m)\|_{\rm F}   |\widehat{\bR}_{t,j}(b_m)|_2}_{{\rm I}_3} \,.
\end{align}
In what follows, we specify the convergence rates of ${\rm I}_1$, ${\rm I}_2$ and ${\rm I}_3$, respectively.

By Conditions~\ref{cond.Tij}--\ref{cond.kern}, we have that for $q = 2,3,\dots$ and $s = 0,1,2$,
\begin{align}\label{moment.bound1}
   &~ \bbE\bigg\{\bigg | \cK_{h_{t,j}}(u_{t,j,k}-u) \bigg( \frac{u_{t,j,k} - u}{h_{t,j}}\bigg)^s\bigg|^q\bigg\} 
= \int h_{t,j}^{-q} \cK^q\bigg(\frac{v-u}{h_{t,j}} \Big) \bigg|\frac{v-u}{h_{t,j}}\bigg|^{sq} f_{U}(v) \,{\rm d}v\nonumber \\
&~~~~~~~~~~~~~~~~~~~~~~~~~~~~~~~~~~~~= h_{t,j}^{1-q }\int  \cK^q(w)  |w|^{sq} f_{U}(h_{t,j} w+u) \,{\rm d}w \leq C_0^q h^{1-q }
\end{align}
for some positive constant $C_0$. For $l\in\{1,2\}$, let $\widehat R_{t,j,l}(u)$ be the $l$-th element of $\widehat{\bR}_{t,j}(u)$. Recall that $\vep^*_{t,j,k} = \vep_{t,j}(u_{t,j,k})+ \varsigma_{t,j,k}$.  Then $ \widehat R_{t,j,2}(b_m) = \widehat R_{t,j,2}^{(1)}(b_m)+\widehat R_{t,j,2}^{(2)}(b_m) $ with 
\begin{align*} 
\widehat R_{t,j,2}^{(1)}(b_m)&= \frac{1}{N_{t,j}}\sum_{k=1}^{N_{t,j}} \cK_{h_{t,j}}(u_{t,j,k}-b_m)\bigg(\frac{u_{t,j,k}-b_m}{h_{t,j}}\bigg)\varsigma_{t,j,k}\,,
\\
\widehat R_{t,j,2}^{(2)}(b_m)&=\frac{1}{N_{t,j}}\sum_{k=1}^{N_{t,j}} \cK_{h_{t,j}}(u_{t,j,k}-b_m)\bigg(\frac{u_{t,j,k}-b_m}{h_{t,j}}\bigg)\{\vep_{t,j}(u_{t,j,k}) - \vep_{t,j}(b_m)\}\,.
\end{align*}
By Condition \ref{cond.subgauss} and  Proposition 2.5.2 of \cite{Vershynin2018_supp}, for any positive integer $q$, $\bbE(|\varsigma_{t,j,k}|^q)\leq  {C}^q  q^{q/2}$. Together with the elementary inequality $q^{q/2}\leq q!$ for any integer
$q\geq1$, \eqref{moment.bound1} implies    
\begin{align*}
\bbE\bigg\{\bigg | \cK_{h_{t,j}}(u_{t,j,k}-b_m)\bigg(\frac{u_{t,j,k}-b_m}{h_{t,j}}\bigg)\varsigma_{t,j,k}\bigg|^q\bigg\} \leq \check{C}_0^q h^{1-q}q!   
\end{align*}
for some positive constant $\check{C}_0$. Notice that $\bbE[\cK_{h_{t,j}}(u_{t,j,k}-b_m)\{({u_{t,j,k}-b_m})/{h_{t,j}}\}\varsigma_{t,j,k}]=0$ for any $k\in[N_{t,j}]$. Then, by the i.i.d.   assumptions on $\{  u_{t,j,k}\}$ and 
$\{\varsigma_{t,j,k}\}$ in \eqref{eq:dismodel}, together with
Conditions~\ref{cond.subgauss} and \ref{cond.fu}(ii),
Bernstein's inequality \citep[see, e.g., Theorem 2.10 and Corollary 2.11 of][]{Boucheron2013_supp} gives 
\begin{align*}
\bbP\bigg\{ \max_{t\in [n]}\max_{j\in[p]}\max_{m\in[\widetilde{M}]}|\widehat R_{t,j,2}^{(1)}(b_m)|\geq x \bigg\}\lesssim np\widetilde{M} \exp\bigg(-\frac{  C N h x^2}{1+x}\bigg) 
\end{align*}
 for any $x \geq 0$, which implies  
\begin{align}\label{eq.R2}
    \max_{t\in [n]}\max_{j\in[p]}\max_{m\in[\widetilde{M}]}|\widehat R_{t,j,2}^{(1)}(b_m)| = O_{\rm p}\bigg[ \bigg\{ \frac{\log(np \widetilde{M})}{N h } \bigg\}^{1/2} \vee  \frac{\log(np \widetilde{M})}{N h }\bigg]\,.
\end{align}
For $\widehat R_{t,j,2}^{(2)}(b_m)$,  define  an event
\begin{align*}
    \mathcal E_U
=
\bigg\{
\max_{t\in[n]}\max_{j\in[p]}\max_{m\in[\widetilde M]}
\frac{1}{N_{t,j}h_{t,j}}
\sum_{k=1}^{N_{t,j}}
I(|u_{t,j,k}-b_m|\leq h_{t,j})
\leq C_U
\bigg\}\,,
\end{align*}
where $C_U>0$ is a sufficiently large constant. By
Conditions~\ref{c.subgaussian} and \ref{cond.Tij}--\ref{cond.kern}, it holds on $\mathcal E_U$ that
\begin{align*}
&\big\|
\widehat R_{t,j,2}^{(2)}(b_m)
\,\big|\,
\{u_{t,j,k}\}_{k=1}^{N_{t,j}}
\big\|_{\psi_2} \\
&~~~\leq
\frac{1}{N_{t,j}}
\sum_{k=1}^{N_{t,j}}
\bigg|
\cK_{h_{t,j}}(u_{t,j,k}-b_m)
\bigg(\frac{u_{t,j,k}-b_m}{h_{t,j}}\bigg)
\bigg|
\big\|
\{\vep_{t,j}(u_{t,j,k})-\vep_{t,j}(b_m)\}\,|\,
u_{t,j,k} 
\big\|_{\psi_2}  \\
&~~~\lesssim
\frac{1}{N_{t,j}h_{t,j}}
\sum_{k=1}^{N_{t,j} } 
I(|u_{t,j,k}-b_m|\leq h_{t,j})
|u_{t,j,k}-b_m|^\kappa   
\lesssim h^\kappa  \,,
\end{align*}
where $\|X\,|\, \{u_{t,j,k}\}_{k=1}^{N_{t,j}}\|_{\psi_2}$ denotes the $\psi_2$-norm of $X$ given $\{u_{t,j,k}\}_{k=1}^{N_{t,j}}$.
%$\|\cdot\,|\, \{u_{t,j,k}\}_{k=1}^{N_{t,j,k}}\|_{\psi_2}:= \inf\{\lambda>0: \bbE[\psi_2(|\cdot|/\lambda)\,|\, \{u_{t,j,k}\}_{k=1}^{N_{t,j,k}}] \leq 1\}$. 
Consequently, on $\mathcal E_U$, it holds that
\begin{align*}
    \bbP\big[
|\widehat R_{t,j,2}^{(2)}(b_m)|>x
\,\big|\,
\{u_{t,j,k}\}_{k=1}^{N_{t,j}}
\big]
\lesssim \exp\bigg(-\frac{Cx^2}{h^{2\kappa}}\bigg) 
\end{align*}
for any $x\geq 0$.   
By the union bound, we can conclude that
\begin{align*}
&\bbP\bigg\{
\max_{t\in[n]}\max_{j\in[p]}\max_{m\in[\widetilde M]}
|\widehat R_{t,j,2}^{(2)}(b_m)|>x
\bigg\} \lesssim 
\bbP(\mathcal E_U^{\rm c})
+
 np\widetilde M
\exp\bigg(-\frac{Cx^2}{h^{2\kappa}}\bigg) 
\end{align*}
for any $x\geq 0$.  To bound $\bbP(\mathcal E_U^{\rm c})$, Condition~\ref{cond.fu} implies that
\begin{align*}
    \mathbb E\{I(|u_{t,j,k}-b_m|\leq h_{t,j})\} = \mathbb P(|U-b_m|\leq h_{t,j})= \int_{b_m-h_{t,j}}^{b_m+h_{t,j}}f_U(v)\,\mbox{d}v \leq 2C h_{t,j}\,.
\end{align*}
Then, Bernstein's inequality and the union bound give
\begin{align*}
    \mathbb P(\mathcal E_U^{\rm c})
\leq
np\widetilde M\exp(-CNh)=o(1)
\end{align*}
provided that
$\log(np\widetilde M)\ll Nh$. Therefore, it follows that
\begin{align*} 
\max_{t\in[n]}\max_{j\in[p]}\max_{m\in[\widetilde M]}
|\widehat R_{t,j,2}^{(2)}(b_m)|
=
O_{\rm p}[
\{\log (np\widetilde M)\}^{1/2} h^\kappa ]
\end{align*}
provided that
$\log(np\widetilde M)\ll Nh$. This, together with \eqref{eq.R2}, yields 
\begin{align*} 
\max_{t\in[n]}\max_{j\in[p]}\max_{m\in[\widetilde M]}
|\widehat R_{t,j,2}(b_m)|
=
O_{\rm p}\bigg[
\{\log (np\widetilde M)\}^{1/2} \bigg\{ \frac{1}{(Nh)^{1/2}}+ h^\kappa\bigg\}\bigg ] 
\end{align*}
provided that
$\log(np\widetilde M)\ll Nh$. The same rate holds for $\max_{t,j,m}|\widehat R_{t,j,1}(b_m)|$. Hence,
\begin{align}\label{I1.OP}
    {\rm I}_1 = O_{\rm p}\bigg[
\{\log (np\widetilde M)\}^{1/2} \bigg\{ \frac{1}{(Nh)^{1/2}}+ h^\kappa\bigg\}\bigg ]
\end{align}
provided that
$\log(np\widetilde M)\ll Nh$.

We next consider ${\rm I}_2$. By  \eqref{def.SR}, it holds that
\begin{align*}
    \widehat{R}_{t,j,1}(u)- \widehat{R}_{t,j,1}(b_m)=&~ \frac{1}{N_{t,j}}\sum_{k=1}^{N_{t,j}}   \vep^*_{t,j,k} \{ \cK_{h_{t,j}}(u_{t,j,k}-u)-\cK_{h_{t,j}}(u_{t,j,k}-b_m)\} \\
    &~ - \frac{1}{N_{t,j}}\sum_{k=1}^{N_{t,j}}   \vep_{t,j}(u) \{\cK_{h_{t,j}}(u_{t,j,k}-u)-  \cK_{h_{t,j}}(u_{t,j,k}-b_m)  \}\\
    &~ - \frac{1}{N_{t,j}}\sum_{k=1}^{N_{t,j}}  \{ \vep_{t,j}(u)  -\vep_{t,j}(b_m) \}\cK_{h_{t,j}}(u_{t,j,k}-b_m)   \,.
\end{align*} 
Since $|u-b_m|\leq \widetilde{M}^{-1}$ for any $u\in B_m$, by Conditions \ref{cond.Tij} and  \ref{cond.kern}, we have  $|\cK_{h_{t,j}}(u_{t,j,k}-u)-\cK_{h_{t,j}}(u_{t,j,k}-b_m)| \lesssim (h^2 \widetilde{M})^{-1}$ for any $u\in B_m$. Thus, for any   $(t,j)$ and  $u\in B_m$,
\begin{align*}
    |\widehat{R}_{t,j,1}(u)- \widehat{R}_{t,j,1}(b_m)| \lesssim &~\frac{1}{h} \max_{t\in [n]}\max_{j\in[p]}\max_{m\in [\widetilde{M}]} \sup_{u\in B_m} | \vep_{t,j}(u)  -\vep_{t,j}(b_m)|  \\
    &~ +\frac{1}{h^2 \widetilde{M}}\bigg( \max_{t\in [n]}\max_{j\in[p]}\frac{1}{N_{t,j}}\sum_{k=1}^{N_{t,j}}  |\vep^*_{t,j,k}| + \max_{t\in [n]}\max_{j\in[p]}\sup_{u\in \cU}  |\vep_{t,j}(u) |\bigg) \,,
\end{align*} 
where the inequality follows from $|\cK_{h_{t,j}}(u_{t,j,k}-b_m)|\lesssim h^{-1}$, which implies that 
\begin{align}\label{div.R1}
     &~\max_{t\in [n]}\max_{j\in[p]}\max_{m\in [\widetilde{M}]} \sup_{u\in B_m} |\widehat{R}_{t,j,1}(u)- \widehat{R}_{t,j,1}(b_m)|\lesssim\frac{1}{h} \max_{t\in [n]}\max_{j\in[p]}\max_{m\in [\widetilde{M}]} \sup_{u\in B_m} | \vep_{t,j}(u)  -\vep_{t,j}(b_m)|\nonumber\\
     &~~~~~~~~~~~~~~~~~~~~~~~~~~~~ +\frac{1}{h^2 \widetilde{M}}\bigg( \max_{t\in [n]}\max_{j\in[p]}\frac{1}{N_{t,j}}\sum_{k=1}^{N_{t,j}} |\vep^*_{t,j,k}| + \max_{t\in [n]}\max_{j\in[p]}\sup_{u\in \cU}  |\vep_{t,j}(u) |\bigg) \,.
\end{align}
By \eqref{div.vep}, it holds that
\begin{align*}
    \frac{1}{h} \max_{t\in [n]}\max_{j\in[p]}\max_{m\in [\widetilde{M}]} \sup_{u\in B_m} | \vep_{t,j}(u)  -\vep_{t,j}(b_m)| = O_{\rm p}\bigg\{\frac{ (np)^2}{h \widetilde{M}^{ \kappa/2} }   \bigg\}\,.
\end{align*}
 By Conditions \ref{c.subgaussian} and \ref{cond.subgauss}, we have $\| \vep^*_{t,j,k} \|_{\psi_2}\leq C$ for any $(t,j,k)$.
 %$k\in [N_{t,j}]$, 
 which implies that $\|N_{t,j}^{-1}\sum_{k=1}^{N_{t,j}} |\vep^*_{t,j,k}|\|_{\psi_2}\leq C$.   Then,  we can conclude that
\begin{align*}
     \max_{t\in [n]}\max_{j\in[p]}\frac{1}{N_{t,j}}\sum_{k=1}^{N_{t,j}} |\vep^*_{t,j,k}| =  O_{\rm p} [ \{ \log(np) \}^{1/2}  ]\,.
\end{align*}
 This, together with Lemma \ref{lemma1.prop1} and \eqref{div.R1}, yields 
 \begin{align}\label{R1.OP}
     \max_{t\in [n]}\max_{j\in[p]}\max_{m\in [\widetilde{M}]} \sup_{u\in B_m} |\widehat{R}_{t,j,1}(u)- \widehat{R}_{t,j,1}(b_m)| =  O_{\rm p}\bigg[\frac{ (np)^2}{h \widetilde{M}^{ \kappa/2} }  + \frac{   \{ \log(np) \}^{1/2}   }{h^2 \widetilde{M}}  \bigg]\,. 
 \end{align}
Moreover, write $\widetilde{\cK}_{h_{t,j}}(u_{t,j,k}-u)=\{(u_{t,j,k}-u)/h_{t,j}\}{\cK}_{h_{t,j}}(u_{t,j,k}-u) $ for any $u\in\cU$. We have
\begin{align*}
    \widehat{R}_{t,j,2}(u)- \widehat{R}_{t,j,2}(b_m)=&~ \frac{1}{N_{t,j}}\sum_{k=1}^{N_{t,j}}   \vep^*_{t,j,k} \{ \widetilde{\cK}_{h_{t,j}}(u_{t,j,k}-u)-\widetilde{\cK}_{h_{t,j}}(u_{t,j,k}-b_m)\} \\
    &~ - \frac{1}{N_{t,j}}\sum_{k=1}^{N_{t,j}}   \vep_{t,j}(u) \{\widetilde{\cK}_{h_{t,j}}(u_{t,j,k}-u)-  \widetilde{\cK}_{h_{t,j}}(u_{t,j,k}-b_m)  \}\\
    &~ - \frac{1}{N_{t,j}}\sum_{k=1}^{N_{t,j}}  \{ \vep_{t,j}(u)  -\vep_{t,j}(b_m) \}\widetilde{\cK}_{h_{t,j}}(u_{t,j,k}-b_m)   \,.
\end{align*} 
It follows from Conditions \ref{cond.Tij} and  \ref{cond.kern} that, for any $u\in B_m$, $|\widetilde{\cK}_{h_{t,j}}(u_{t,j,k}-u)|\lesssim h^{-1} $ and 
 \begin{align*}
    |\widetilde{\cK}_{h_{t,j}}(u_{t,j,k}-u)-\widetilde{\cK}_{h_{t,j}}(u_{t,j,k}-b_m)|\lesssim  \frac{1}{h^2 \widetilde{M}}\,.
\end{align*}
Therefore, by the same arguments as those used to prove \eqref{R1.OP}, we obtain the
same convergence rate for $\max_{t\in [n]}\max_{j\in[p]}\max_{m\in [\widetilde{M}]} \sup_{u\in B_m} |\widehat{R}_{t,j,2}(u)- \widehat{R}_{t,j,2}(b_m)|$. Consequently,
\begin{align}\label{I2.OP}
    {\rm I}_2  =  O_{\rm p}\bigg[\frac{ (np)^2}{h \widetilde{M}^{ \kappa/2} }  + \frac{   \{ \log(np) \}^{1/2}   }{h^2 \widetilde{M}}  \bigg]\,. 
\end{align}

For ${\rm I}_3$,   it follows from Conditions \ref{cond.Tij} and  \ref{cond.kern} that, for any $s\in\{0,1,2\}$ and $u\in B_m$,
\begin{align*}
    \bigg|\frac{1}{N_{t,j}}\sum_{k=1}^{N_{t,j}}\bigg\{\bigg(\frac{u_{t,j,k}-u}{h_{t,j}}\bigg)^s\cK_{h_{t,j}}(u_{t,j,k}-u) - \bigg(\frac{u_{t,j,k}-b_m}{h_{t,j}}\bigg)^s\cK_{h_{t,j}}(u_{t,j,k}-b_m)\bigg\}\bigg|\lesssim  \frac{1}{h^2 \widetilde{M}}\,,
\end{align*}
which implies that 
\begin{align}\label{S.div}
    \max_{t\in [n]}\max_{j\in[p]}\max_{m\in[\widetilde{M}]}\sup_{u\in B_m}\|\widehat{\bS}_{t,j}(u)-\widehat{\bS}_{t,j}(b_m)\|_{\rm F} \lesssim  \frac{1}{h^2 \widetilde{M}}\,.
\end{align}
Therefore, we have 
\begin{align}\label{I3.OP}
    {\rm I}_3    = O_{\rm p}\bigg[
\frac{\{\log (np\widetilde M)\}^{1/2}}{h^2 \widetilde{M}} \bigg\{ \frac{1}{(Nh)^{1/2}}+ h^\kappa\bigg\}\bigg ] 
\end{align}
provided that
$\log(np\widetilde M)\ll Nh$. On the event $\mathcal E_1$, combining \eqref{delta.sup}, \eqref{I1.OP}, \eqref{I2.OP} and \eqref{I3.OP} gives
\begin{align}\label{dis.delta.sup}
    &~\max_{t\in [n]}\max_{j\in[p]}\sup_{u\in\cU}|\delta_{t,j}(u)| \lesssim {\rm I}_1+{\rm I}_2+{\rm I}_3\nonumber\\
    &~~~~~~= O_{\rm p}\bigg[\frac{\{\log (np\widetilde M)\}^{1/2}}{(h^2 \widetilde{M}) \wedge 1} \bigg\{ \frac{1}{(Nh)^{1/2}}+ h^\kappa\bigg\} + \frac{ (np)^2}{h \widetilde{M}^{ \kappa/2} }  + \frac{  \{ \log(np)\}^{1/2}   }{h^2 \widetilde{M}}\bigg]\,,\nonumber\\
    &~\max_{t\in [n]}\max_{j\in[p]}\max_{m\in[\widetilde{M}]}\sup_{u\in B_m}|\delta_{t,j}(u)-\delta_{t,j}(b_m)| \lesssim {\rm I}_2+{\rm I}_3\nonumber\\
    &~~~~~~= O_{\rm p}\bigg[ \frac{\{\log (np\widetilde M)\}^{1/2}}{h^{2-\kappa} \widetilde{M}}   + \frac{ (np)^2}{h \widetilde{M}^{ \kappa/2} }  + \frac{     \{ \log(np)\}^{1/2}   }{h^2 \widetilde{M}}\bigg]\,,
\end{align}
provided that
$\log(np\widetilde M)\ll Nh$. Choose $\widetilde M\asymp h^{-(2+2\kappa)/\kappa}(np)^C$ for some sufficiently large constant
$C>0$. Then, on the event $\mathcal E_1$, 
\begin{align*}
    \max_{t\in [n]}\max_{j\in[p]}\sup_{u\in\cU}|\delta_{t,j}(u)|  = O_{\rm p}\bigg[ \{\log(np /h )\}^{1/2} \bigg\{ \frac{1}{(Nh)^{1/2}}+ h^\kappa\bigg\}  \bigg]\,,
\end{align*}
provided that $\log(np/h)\ll N h$. Taking $\widetilde M =M_1$ in the second bound of \eqref{dis.delta.sup} yields the second assertion.  Therefore, to confirm Lemma \ref{lemma2.prop1}, it remains to show that $\bbP(\mathcal E_1^{\rm c})=o(1)$.

For some integer $M'>0$, it follows from \eqref{S.div} that
\begin{align*}
     \max_{t\in [n]}\max_{j\in[p]}\sup_{u\in\cU} \|\widehat{\bf S}_{t,j}(u) - \bbE\{\widehat{\bf S}_{t,j}(u)\}\|_{\rm F} \lesssim \max_{t\in [n]}\max_{j\in[p]}\max_{m\in[M']} \|\widehat{\bf S}_{t,j}(b_m) - \bbE\{\widehat{\bf S}_{t,j}(b_m)\}\|_{\rm F} +  \frac{1}{h^2  {M'}}\,.
\end{align*}
By  \eqref{moment.bound1}, we have,  for any $q\ge 2$, $s\in\{0,1,2\}$ and  $u\in\cU$, 
\begin{align*} 
    \bbE\bigg\{\bigg |  \cK_{h_{t,j}}(u_{t,j,k}-u) \bigg( \frac{u_{t,j,k} - u}{h_{t,j}}\bigg)^s\bigg|^q\bigg\} \leq &~  \tilde{C}     h^{1-q}  q! 
\end{align*}
 for some sufficiently large constant $\tilde{C} >0$.  For $l_1, l_2 \in \{1,2\}$, let $\widehat S_{t,j,l_1,l_2}(u)$ be the $(l_1,l_2)$-th element of $\widehat{\bf S}_{t,j}(u)$. By the Bernstein inequality \citep[see, e.g., Theorem 2.10 and Corollary 2.11 of][]{Boucheron2013_supp},   it holds that
 \begin{align*}
     \bbP\big[ |\widehat{S}_{t,j,l_1,l_2}(u) - \bbE\{\widehat{S}_{t,j,l_1,l_2}(u)\}| >x   \big]\lesssim  \exp\bigg( -C\frac{  N h x^2}{1+x} \bigg)
 \end{align*}
for any $x\geq 0$ and $l_1,l_2 \in \{1,2\}$, which implies that, for any $x\geq 0$ and $u \in \cU$,
\begin{align*} 
 \max_{t\in[n]}\max_{j\in[p]}\bbP\big[ \|\widehat{\bf S}_{t,j}(u) - \bbE\{\widehat{\bf S}_{t,j}(u)\}\|_{\rm F} \geq x \big]\lesssim  \exp\bigg( -C\frac{  N h x^2}{1+x} \bigg) \,.
\end{align*}
Then it holds that
\begin{align*}
   \bbP(\mathcal E_1^{\rm c})\leq &~  \bbP\bigg[ \max_{t\in [n]}\max_{j\in[p]}\max_{m\in[M']} \|\widehat{\bf S}_{t,j}(b_m) - \bbE\{\widehat{\bf S}_{t,j}(b_m)\}\|_{\rm F} \geq \frac{c_S}{2} - \frac{C}{h^2  {M'}} \bigg] \\ 
   \lesssim &~ npM'\exp( -C  N h ) =o(1) \,,
\end{align*}
provided that $(M')^{-1}\ll h^2$ and $\log(np M')\ll Nh$. Taking $M'\asymp h^{-4}$, we obtain $\bbP(\mathcal E_1^{\rm c})=o(1)$ provided that
$\log(np/h)\ll Nh$.  We complete the proof of Lemma \ref{lemma2.prop1}.  $\hfill\Box$

\subsection{Proof of Lemma \ref{lemma1.prop2}}\label{proof.lemma1.prop2}

Recall  $\widetilde{\bSigma}_{xx}^{(\ell)}(u,v) = n_{\ell}^{-1}\sum_{t=1}^{n_\ell}\{\bX_{t}(u)-\widebar{\bX}(u)\}\{\bX_{t+\ell}(v) - \widebar{\bX}(v)\}^{\T}$  for any $\ell\in[\ell_0]$ and $(u,v)\in\cU^2$, where $\widebar{\bX}(\cdot) = n^{-1}\sum_{t=1}^n\bX_t(\cdot)$ and $\bX_t(\cdot)=\{X_{t,1}(\cdot),\ldots,X_{t,p}(\cdot)\}^{\T}$. Moreover, by Conditions \ref{cond.facerror} and \ref{cond.facprocess}, we know that  ${\bSigma}_{xx}^{(\ell)}(u,v) =\cov\{\bX_{t}(u),\bX_{t+\ell}(v)\} = \bbE[\bX_{t}(u) \{\bX_{t+\ell}(v)\}^{\T}]$ and ${\bSigma}_{zz}^{(\ell)}(u,v) =\cov\{\bZ_{t}(u),\bZ_{t+\ell}(v)\} = \bbE[\bZ_{t}(u) \{\bZ_{t+\ell}(v)\}^{\T}]$.
The proof of Lemma~\ref{lemma1.prop2} relies on the following lemma, whose proof is given in Section~\ref{proof.lem.fac.Sigma}.

\begin{lemma}\label{lem.fac.Sigma}
Under Conditions {\rm \ref{cond.facerror}--\ref{cond.facmat}}, we have  {\rm (i)}  $\max_{\ell\in[\ell_0]}\max_{j,k\in[p]}\bbE\{\|\widetilde{\Sigma}^{(\ell)}_{xx,jk}-\Sigma^{(\ell)}_{xx,jk}\|_{\cS}^2\} \lesssim n^{-1}$;  and {\rm (ii)} $\max_{\ell\in[\ell_0]}\|\widetilde{\bSigma}_{xx}^{(\ell)} - \bSigma_{xx}^{(\ell)}\|_{\cS,\max}=O_{\rm p}[ n^{-1/2}\{\log (np)\}^{1/2} ] $, provided that $\log (np) \ll n^{1/5}$.
\end{lemma}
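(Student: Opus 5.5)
Since $\bSigma^{(\ell)}_{xx}$ is deterministic, I would treat the two parts of Lemma~\ref{lem.fac.Sigma} separately. Part (i) is a pointwise second-moment bound integrated over $\cU^2$; part (ii) is a uniform tail bound combined with a union bound over $(\ell,j,k)$.

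\textbf{Part (i).} Write $X_{t,j}(u)=\ba_j^{\T}\bZ_t(u)+\vep_{t,j}(u)$. By Conditions~\ref{cond.facerror}(i)--(ii), \ref{cond.facprocess}(i)--(ii) and \ref{cond.facmat}(ii), together with finiteness of $r$, we have $\sup_u\|X_{t,j}(u)\|_{\psi_2}\le C$ uniformly in $(t,j)$. Decompose
\[
\widetilde\Sigma^{(\ell)}_{xx,jk}(u,v)-\Sigma^{(\ell)}_{xx,jk}(u,v)
=\frac{1}{n_\ell}\sum_{t=1}^{n_\ell}(1-\bbE)\{X_{t,j}(u)X_{t+\ell,k}(v)\}
\]
minus three mean-correction terms involving $\bar X_j(u)$ and $\bar X_k(v)$, which mirrors \eqref{eq:bareps.decomp}.

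For fixed $(u,v)$, the summands $X_{t,j}(u)X_{t+\ell,k}(v)$ form an $\alpha$-mixing sequence with coefficients $\alpha(|m-\ell|_+)$, using Condition~\ref{cond.facprocess}(iii). Their $\psi_1$-norms are bounded. Lemma~\ref{tail_chang} with $(B_n,c_n,r_1,r)=(1,\ell,1,1/3)$ then gives second moment $\lesssim n^{-1}$.

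For the correction terms, Lemma~\ref{tail_chang} with $r_1=2$ gives $\|\bar X_j(u)\|_4\lesssim n^{-1/2}$. The Cauchy--Schwarz inequality then bounds each correction term's second moment by $\lesssim n^{-2}$. All these bounds are uniform in $(u,v,\ell,j,k)$. Fubini's theorem gives $\bbE\|\cdot\|_{\cS}^2=\iint \bbE|\cdot|^2\,{\rm d}u\,{\rm d}v\lesssim n^{-1}$, using that $\cU$ is compact.

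\textbf{Part (ii).} Since $\|\cdot\|_{\cS}\le|\cU|\,\|\cdot\|_\infty$, it suffices to bound
\[
\max_{\ell,j,k}\sup_{(u,v)}\big|\widetilde\Sigma^{(\ell)}_{xx,jk}(u,v)-\Sigma^{(\ell)}_{xx,jk}(u,v)\big|.
\]
I would follow the discretization argument of Lemma~\ref{lemma_tn_dis}, with $\vep$ replaced by $X$. Partition $\cU$ into $M'\asymp(np)^{C}$ cells with midpoints $b_m$.

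On the grid, Lemma~\ref{tail_chang} combined with a union bound over $\ell_0p^2M'^2$ indices gives the tail
\[
p^2M'^2\{\exp(-Cnx^2)+\exp(-Cn^{1/3}x^{1/3})\}.
\]
This yields the rate $n^{-1/2}\{\log(np)\}^{1/2}$, because the second exponential is negligible once $\log(np)\ll n^{1/5}$. The mean-correction terms are handled in the same way and are of smaller order.

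Within cells, the $\psi_1$-increment bound on $X_{t,j}(u)X_{t+\ell,k}(v)$ holds with metric $|u_1-u_2|^\kappa+|v_1-v_2|^\kappa$, exactly as in \eqref{eq:tailprob.vep2}. This uses the separability of the processes. Lemma~\ref{tail_multi}(i) with $\gamma=1/3$ then shows that the oscillation term is $O_{\rm p}(n^{-1})$, as in \eqref{eq:rate.I1}.

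The main obstacle is the oscillation step. One must verify that separability and the Hölder-type increment bounds transfer from $(\bZ_t,\bvep_t)$ to the products. The loadings enter only through $|\ba_j|_{\max}\le C$ and the finite $r$, so this transfer should go through.
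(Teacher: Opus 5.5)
Your proposal is correct and follows essentially the same route as the paper: the same split into a centered lagged-product average plus three uncentered mean-correction terms, Lemma~\ref{tail_chang} with $(B_n,c_n,r_1,r)=(1,\ell,1,1/3)$ and $(1,0,2,1/3)$ for the moment and tail bounds, Fubini for part (i), and for part (ii) the sup-norm reduction followed by the grid/oscillation split of Lemma~\ref{lemma_tn_dis}, where the grid term yields $n^{-1/2}\{\log(np)\}^{1/2}$ exactly when $\log(np)\ll n^{1/5}$. The paper handles your ``main obstacle'' just as you suggest, transferring the increment bounds through \eqref{incer.Z} and Condition~\ref{cond.facerror} and then citing the proof of Lemma~\ref{lemma_tn_dis}; it records the within-cell oscillation as $O_{\rm p}(n^{-3/2})$, and your $O_{\rm p}(n^{-1})$ is weaker but equally sufficient.
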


Recall    $\bX_t(u) = \bA\bZ_t(u)+\bvep_t(u)$ and $ \mathbb{E}\{ \bvep_t(u) \bvep_{t+\ell}(v)^{\T} \}= \mathbf{0}$ for any $\ell\neq 0$ under model \eqref{eq:ffm}. Then, by Conditions \ref{cond.facerror}(iii), it holds  that $ \bSigma_{xx}^{(\ell)}(u,v) =  \bbE[\{\bA\bZ_t(u)+\bvep_t(u)\} \{\bA\bZ_{t+\ell}(v)+\bvep_{t+\ell}(v)\}^{\T}] =  \bA\bSigma_{zz}^{(\ell)}(u,v)\bA^{\T}$. Moreover, by Condition \ref{cond.facmat}(i), we have ${\rm rank}(\bA)=r$ and
\begin{align*}
    {\bM} := &~\sum_{\ell=1}^{\ell_0} \iint {\bSigma}_{xx}^{(\ell)}(u,v) \{{\bSigma}_{xx}^{(\ell)}(u,v)\}^{\T}\,{\rm d}u{\rm d}v \\
    = &~ \bA  \underbrace{\bigg[\sum_{\ell=1}^{\ell_0} \iint \bSigma_{zz}^{(\ell)}(u,v)\bA^{\T}\bA\{\bSigma_{zz}^{(\ell)}(u,v)\}^{\T}  \,{\rm d}u{\rm d}v \bigg]}_{\bH} \bA^{\T} \,.
\end{align*}
By Conditions \ref{cond.facprocess}(iv) and \ref{cond.facmat}(i), we know that
\begin{align*}
    \lambda_{\min}(\bH) \geq \lambda_{\min}\bigg[ \sum_{\ell=1}^{\ell_0} \iint \bSigma_{zz}^{(\ell)}(u,v)\{\bSigma_{zz}^{(\ell)}(u,v)\}^{\T}  \,{\rm d}u{\rm d}v  \bigg] \cdot \lambda_{\min}(\bA^{\T}\bA) \gtrsim p 
    \,,
\end{align*}
where $\lambda_{\min}(\bB)$ denotes the smallest eigenvalue of symmetric matrix $\bB$. Then
${\rm rank}(\bH)=r$, which implies ${\rm rank}(\bM)=r$  and $\mathcal{C}(\bA)=\mathcal{C}({\bM})$, where $\mathcal C(\bB)$ denotes the linear space spanned by the columns of $\bB$. By Condition \ref{cond.facprocess}  again, we have $\|{\bSigma}_{zz}^{(\ell)}(u,v)\|_{\max}\leq C$ for any $\ell \in [\ell_0]$ and $(u,v)\in\cU^2$. Since $r$ is fixed, it holds that $\|\bSigma_{zz}^{(\ell)}(u,v)\|_{\rm op} \leq \|\bSigma_{zz}^{(\ell)}(u,v)\|_{\rm F} \leq C$ for any $\ell \in [\ell_0]$ and $(u,v)\in\cU^2$. Let $\nu_1\geq\cdots\geq\nu_r$ denote the $r$ nonzero eigenvalues of $\bM$. Then, it holds that
\begin{align}\label{eq:Mop}
    \nu_1 
    =&\, \|{\bM}\|_{\rm op} 
    \leq \sum_{\ell=1}^{\ell_0} \iint \|{\bSigma}_{xx}^{(\ell)}(u,v) \|_{\rm op}^2\,{\rm d}u{\rm d}v  
    = \sum_{\ell=1}^{\ell_0} \iint \|\bA\bSigma_{zz}^{(\ell)}(u,v)\bA^{\T}\|_{\rm op}^2\,{\rm d}u{\rm d}v \notag\\
    \leq &\, \|\bA\|_{\rm op}^4 \sum_{\ell=1}^{\ell_0} \iint \|\bSigma_{zz}^{(\ell)}(u,v)\|_{\rm op}^2 \,{\rm d}u{\rm d}v   \lesssim  p^2\,,
\end{align}
where the last inequality follows from Condition \ref{cond.facmat}(i). 
% In addition, by the Poincaré separation theorem, since $(p^{-1/2}\bA)^{\T}(p^{-1/2}\bA)=  \bI_r$ and $\bM = \bA\bH\bA^{\T}$, it holds that 
% \begin{align*}
%     \nu_r =  \lambda_r(\bM) \geq  \lambda_r\{(p^{-1/2}\bA)^{\T}\bM (p^{-1/2}\bA) \} = p  \lambda_r( \bH ) \gtrsim p^2\,,
% \end{align*}
% where the last inequality follows from Condition~\ref{cond.facprocess}(iv). 
In addition, 
\begin{align*}
    \nu_r =  \lambda_r(\bM) = \lambda_r(\bA\bH\bA^{\T}) \geq \lambda_r( \bH )\lambda_r(\bA \bA^{\T}) \gtrsim p^{2 }\,.
\end{align*}
Here, $\lambda_r(\bB)$ denotes the $r$-th largest eigenvalue of a symmetric matrix $\bB$. Then, we can conclude that 
\begin{align}\label{eq:eigval}
    \nu_i\asymp  p^{2 }~~~~\mbox{for}~~ i\in[r]\,.
\end{align}

Recall $\widehat{\bM} =  \sum_{\ell=1}^{\ell_0} \iint \widetilde{\bSigma}_{xx}^{(\ell)}(u,v) \{\widetilde{\bSigma}_{xx}^{(\ell)}(u,v)\}^{\T}\,{\rm d}u{\rm d}v$. It holds that
\begin{align*}
    \|\widehat{\bM}-\bM\|_{\rm op} 
    =&\, \bigg\|\sum_{\ell=1}^{\ell_0} \iint \big[\widetilde{\bSigma}_{xx}^{(\ell)}(u,v) \{\widetilde{\bSigma}_{xx}^{(\ell)}(u,v)\}^{\T} - {\bSigma}_{xx}^{(\ell)}(u,v) \{{\bSigma}_{xx}^{(\ell)}(u,v)\}^{\T} \big] \,{\rm d}u{\rm d}v\bigg\|_{\rm op}  \\
    \leq &\, \sum_{\ell=1}^{\ell_0}\iint \|\widetilde{\bSigma}_{xx}^{(\ell)}(u,v) - {\bSigma}_{xx}^{(\ell)}(u,v) \|_{\rm op}^2 \,{\rm d}u{\rm d}v  \\
    &\, +2 \sum_{\ell=1}^{\ell_0} \iint \|\widetilde{\bSigma}_{xx}^{(\ell)}(u,v) - {\bSigma}_{xx}^{(\ell)}(u,v)\|_{\rm op} \|{\bSigma}_{xx}^{(\ell)}(u,v)\|_{\rm op} \, {\rm d}u{\rm d}v \,.
\end{align*}
By Lemma \ref{lem.fac.Sigma}(i), for each $\ell \in [\ell_0]$ and $j,k\in[p]$,  we have $\|\widetilde{\Sigma}_{xx,jk}^{(\ell)} - \Sigma_{xx,jk}^{(\ell)}\|_{\cS}^2=O_{\rm p}(n^{-1})$. Then
\begin{align*}
    \iint \|\widetilde{\bSigma}_{xx}^{(\ell)}(u,v) - {\bSigma}_{xx}^{(\ell)}(u,v) \|_{\rm op}^2 \,{\rm d}u{\rm d}v 
  \leq&~ \sum_{j,k=1}^p   \|\widetilde{\Sigma}_{xx,jk}^{(\ell)} - \Sigma_{xx,jk}^{(\ell)}\|_{\cS}^2 = O_{\rm p}(p^2 n^{-1}) \,.
\end{align*}
Moreover, by  Cauchy-Schwarz inequality and \eqref{eq:Mop}, we have 
\begin{align*}
    &~\iint \|\widetilde{\bSigma}_{xx}^{(\ell)}(u,v) - {\bSigma}_{xx}^{(\ell)}(u,v)\|_{\rm op} \|{\bSigma}_{xx}^{(\ell)}(u,v)\|_{\rm op} \, {\rm d}u{\rm d}v\\
     \leq &~ \bigg\{\iint \|\widetilde{\bSigma}_{xx}^{(\ell)}(u,v) - {\bSigma}_{xx}^{(\ell)}(u,v) \|_{\rm op}^2 \,{\rm d}u{\rm d}v \bigg\}^{1/2}\bigg\{  \iint \|{\bSigma}_{xx}^{(\ell)}(u,v) \|_{\rm op}^2\,{\rm d}u{\rm d}v \bigg\}^{1/2} = O_{\rm p}(p^2 n^{-1/2})
\end{align*}
for any $\ell \in [\ell_0]$, which implies 
\begin{align}\label{eq:diffM}
     \|\widehat{\bM}-\bM\|_{\rm op} 
    = O_{\rm p}(p^2 n^{-1/2})\,.
\end{align}

Recall $\widetilde{\bA}=  \bA \bGamma {\bf{\Lambda}}^{-1/2}$ such that $\widetilde{\bA}^{\T}\widetilde{\bA}=\bI_r$. Let $\bQ_1=\widetilde{\bA} \in \mathbb{R}^{p\times r}$, and let $\bQ_2\in \mathbb{R}^{p\times (p-r)}$ be its orthogonal complement such that
$\bQ=(\bQ_1,\bQ_2) $ is an orthogonal matrix.  Since
$\mathcal C(\bM)=\mathcal C(\bA)=\mathcal C(\widetilde{\bA})$ and ${\rm rank}(\bM)=r$, we have
\begin{align*}
   \bQ^{\T}\bM\bQ
=
\biggl(
\begin{array}{cc}
\bD_1 & \0\\
\0 & \0
\end{array}
\biggr)  \,, 
\end{align*} 
where the eigenvalues of $\bD_1$ are $\nu_1,\ldots,\nu_r$. Hence, by
\eqref{eq:eigval}, it holds that 
\begin{align}\label{eigengap}
    {\rm sep}(\bD_1,\0)=\nu_r \asymp  p^2\,,
\end{align}
where ${\rm sep}(\bD_1,\bD_2) = \min_{\mu_1\in\lambda(\bD_1),\mu_2\in\lambda(\bD_2)}|\mu_1-\mu_2|$, and $\lambda(\bD)$ denotes the set of eigenvalues of the matrix $\bD$.
Moreover, let $\bE_M=\widehat{\bM}-\bM$, and partition 
\begin{align*}
    \bQ^{\T}\bE_M\bQ
= \biggl(
\begin{array}{cc}
\bE_{11} & \bE_{21}^{\T}\\
\bE_{21} & \bE_{22}
\end{array}
\biggr) \,.
\end{align*}
By \eqref{eq:diffM}, we have $\|\bE_M\|_{\rm op}=O_{\rm p}(n^{-1/2}p^2)$. This, together with \eqref{eigengap}, yields $\|\bE_M\|_{\rm op} \leq {\rm sep}(\bD_1,\0)/5 $ with probability approaching one. Then, by Lemma 3 of \cite{Lam2011_supp},  there exists a matrix
$\bP\in\mathbb R^{(p-r)\times r}$ such that 
\begin{align*}
    \|\bP\|_{\rm op}
\leq
\frac{4}{ {\rm sep}(\bD_1,0)}\|\bE_{21}\|_{\rm op}
\leq
\frac{4}{\nu_r}\|\bE_M\|_{\rm op}
=
O_{\rm p}(  n^{-1/2}) \,.
\end{align*}
Moreover, the columns of $\bar{\bA}
=
(\bQ_1+\bQ_2\bP)(\bI_r+\bP^{\T}\bP)^{-1/2}$ form an  orthonormal basis for a subspace that is invariant for $\widehat{\bM}$. Therefore, due to $\bQ_1=\widetilde\bA$, it holds that
\begin{align*}
\|\bar{\bA}-\widetilde\bA\|_{\rm op}=&~\|\bar{\bA}-\bQ_1\|_{\rm op}
 =
\left\|
[\bQ_1\{\bI_r-(\bI_r+\bP^{\T}\bP)^{1/2}\}
+\bQ_2\bP](\bI_r+\bP^{\T}\bP)^{-1/2}
\right\|_{\rm op}  \\
 \leq &~ 
\left\|\bI_r-(\bI_r+\bP^{\T}\bP)^{1/2}\right\|_{\rm op}
+
\|\bP\|_{\rm op}  \leq 
2\|\bP\|_{\rm op}
=
O_{\rm p}( n^{-1/2})\,.
\end{align*}
Since $\widehat{\bA}=(\hat{\bomega}_1,\ldots,\hat{\bomega}_r)$, where
$\hat{\bomega}_1,\ldots,\hat{\bomega}_r$ are the leading $r$ eigenvectors of
$\widehat{\bM}$, the columns of $\widehat{\bA}$ and $\bar{\bA}$ form two
orthonormal bases for the same leading invariant subspace of $\widehat{\bM}$.
Therefore, there exists an $r\times r$ orthogonal matrix $\bU_0$ such that
$\bar{\bA}=\widehat{\bA}\bU_0$. Then, we complete the proof of Lemma \ref{lemma1.prop2}(i).

For  Lemma \ref{lemma1.prop2}(ii), since $\|{\bSigma}_{zz}^{(\ell)}(u,v)\|_{\max}\leq C$ for any $(u,v)\in\cU^2$, it holds by Condition \ref{cond.facmat}(ii) that $\|\bSigma_{xx}^{(\ell)}\|_{\cS,1} \lesssim p$ and $\|\bSigma_{xx}^{(\ell)}\|_{\cS,\infty} \lesssim p$. Then, by Lemma A.6(ii) of \cite{LQW2025_supp} and Lemma \ref{lem.fac.Sigma}(ii), we have 
\begin{align*}
    \|\widehat{\bM}-\bM\|_{\infty} 
    \leq &~ \sum_{\ell=1}^{\ell_0} \bigg\| \iint \big[\widetilde{\bSigma}_{xx}^{(\ell)}(u,v) \{\widetilde{\bSigma}_{xx}^{(\ell)}(u,v)\}^{\T} - {\bSigma}_{xx}^{(\ell)}(u,v) \{{\bSigma}_{xx}^{(\ell)}(u,v)\}^{\T} \big] \,{\rm d}u{\rm d}v\bigg\|_{\infty} \\
    \leq&~  \sum_{\ell=1}^{\ell_0}\bigg\|\iint \big\{\widetilde{\bSigma}_{xx}^{(\ell)}(u,v) - \bSigma_{xx}^{(\ell)}(u,v)\big\} \big\{\widetilde{\bSigma}_{xx}^{(\ell)}(u,v) - \bSigma_{xx}^{(\ell)}(u,v)\big\}^{\T} \,{\rm d}u{\rm d}v \bigg\|_{\infty} \\
 &~ +   \sum_{\ell=1}^{\ell_0}\bigg\|\iint  \big\{\widetilde{\bSigma}_{xx}^{(\ell)}(u,v) - \bSigma_{xx}^{(\ell)}(u,v)\big\}\big\{\bSigma_{xx}^{(\ell)}(u,v)\big\}^{\T}\,{\rm d}u{\rm d}v  \bigg\|_{\infty}  \\
 &~ +   \sum_{\ell=1}^{\ell_0}\bigg\|\iint  \bSigma_{xx}^{(\ell)}(u,v)\big\{\widetilde{\bSigma}_{xx}^{(\ell)}(u,v) - \bSigma_{xx}^{(\ell)}(u,v)\big\}^{\T}\,{\rm d}u{\rm d}v  \bigg\|_{\infty}  \\
 \leq&~ \sum_{\ell=1}^{\ell_0}\big\{ \|\widetilde{\bSigma}_{xx}^{(\ell)} - \bSigma_{xx}^{(\ell)}\|_{\cS,\infty} \|\widetilde{\bSigma}_{xx}^{(\ell)} - \bSigma_{xx}^{(\ell)}\|_{\cS,1}
 \\
 &~~~~~~ ~~~+   \|\widetilde{\bSigma}_{xx}^{(\ell)} - \bSigma_{xx}^{(\ell)}\|_{\cS,\infty} \|\bSigma_{xx}^{(\ell)}\|_{\cS,1}+\|\bSigma_{xx}^{(\ell)}\|_{\cS,\infty}\|\widetilde{\bSigma}_{xx}^{(\ell)} - \bSigma_{xx}^{(\ell)}\|_{\cS,1} \big\}\\
 \lesssim &~ p^2\sum_{\ell=1}^{\ell_0} \|\widetilde{\bSigma}_{xx}^{(\ell)} - \bSigma_{xx}^{(\ell)}\|_{\cS,\max} =O_{\rm p}[n^{-1/2}p^2\{\log (np)\}^{1/2}] \,,
\end{align*}
provided that $\log (np) \ll n^{1/5}$.  
% In addition, since $v_i\asymp p^{2}$ for $i\in[r]$, by \eqref{eq:diffM} and Condition \ref{cond.facprocess}(iv), we can find $\iota \approx\|\widehat{\bM}-\bM\|_{\rm op}=o_{\rm p}(p^{2})$ and $\iota >  \|\widehat{\bM}-\bM\|_{\rm op}$ such that  for any $i\in[r]$, the interval $(\nu_i-\iota,\nu_i+\iota)$ does not
% contain any eigenvalues of $\bM$ other than $\nu_i$ with probability approaching one. 
% Since ${\rm rank}(\bM)=r$, it holds by Condition \ref{cond.facmat}  and  Lemma B.13 of  \cite{LQW2025_supp}   that  $\mu= (p/r)\max_{j\in[p]}\sum_{k=1}^r {\color{red}\widetilde{A}_{jk}^2} \asymp    \max_{j\in[p]}|\ba_j|_2^2\leq C  $ and 
Recall $\nu_i\asymp p^{2}$ for $i\in[r]$. Since ${\rm rank}(\bM)=r$, it holds by Condition \ref{cond.facmat} and Theorem 2 of \cite{Fan2018_supp} that $\mu=(p/r)\max_{j\in[p]}\sum_{k=1}^r \widetilde{A}_{jk}^2\asymp\max_{j\in[p]}|\ba_j|_2^2\leq C$ and there exists an $r\times r$ orthogonal matrix $\bU_1$ such that, with $\widebar{\bA}_1=\widehat{\bA}\bU_1$,
\begin{align*}
    \|\widebar{\bA}_1- \widetilde\bA  \|_{\max} =   O_{\rm p} \bigg(\frac{r^{5/2}\mu^2 \|\widehat{\bM}-\bM\|_{\infty}}{\sqrt{p}|\nu_r| }\bigg)
    =   O_{\rm p}\big[ {n^{-1/2}p^{ -1/2}\{\log (np)\}^{1/2}}  \big]\,,
\end{align*}
provided that $\log (np) \ll n^{1/5}$.

It remains to verify that the same orthogonal alignment can be used in parts
{\rm (i)} and {\rm (ii)}.  Since $r$ is fixed,
\begin{align*}
\|\widehat{\bA}\bU_1-\widetilde{\bA}\|_{\rm op}
&\leq \sqrt{pr}\,
\|\widehat{\bA}\bU_1-\widetilde{\bA}\|_{\max}
=O_{\rm p}\big[n^{-1/2}\{\log(np)\}^{1/2}\big]\,.
\end{align*}
Since $\widehat{\bA}^{\T}\widehat{\bA}=\bI_r$, the preceding bound and
part {\rm (i)} imply that
\begin{align*}
\|\bU_0-\bU_1\|_{\rm op}
&=\|\widehat{\bA}(\bU_0-\bU_1)\|_{\rm op}\\
&\leq
\|\widehat{\bA}\bU_0-\widetilde{\bA}\|_{\rm op}
+\|\widehat{\bA}\bU_1-\widetilde{\bA}\|_{\rm op}
=O_{\rm p}\big[n^{-1/2}\{\log(np)\}^{1/2}\big]\,.
\end{align*}
Moreover, Condition~\ref{cond.facmat} and the max-norm bound for
$\widehat{\bA}\bU_1$ yield
\begin{align*}
\max_{j\in[p]}\|\be_j^{\T}\widehat{\bA}\|_2
 =\max_{j\in[p]}\|\be_j^{\T}\widehat{\bA}\bU_1\|_2 \leq \max_{j\in[p]}\|\be_j^{\T}\widetilde{\bA}\|_2
+\sqrt r\,\|\widehat{\bA}\bU_1-\widetilde{\bA}\|_{\max}
=O_{\rm p}(p^{-1/2})\,,
\end{align*}
provided that $\log (np) \ll n$.
Consequently,
\begin{align*}
&~\|\widehat{\bA}\bU_0-\widetilde{\bA}\|_{\max}
 \leq
\|\widehat{\bA}\bU_1-\widetilde{\bA}\|_{\max}
+\|\widehat{\bA}(\bU_0-\bU_1)\|_{\max}\\
 \leq &~
\|\widehat{\bA}\bU_1-\widetilde{\bA}\|_{\max}
+\max_{j\in[p]}\|\be_j^{\T}\widehat{\bA}\|_2
\|\bU_0-\bU_1\|_{\rm op}
=O_{\rm p}\big[ {n^{-1/2}p^{ -1/2}\{\log (np)\}^{1/2}}  \big]\,,
\end{align*}
provided that $\log (np) \ll n$.
Thus, $\bU_0$ satisfies both assertions {\rm (i)} and {\rm (ii)}. Taking
$\bU=\bU_0$ and $\widebar{\bA}=\widehat{\bA}\bU$ proves the two assertions
with the same orthogonal matrix.
We complete the proof of Lemma \ref{lemma1.prop2}. $\hfill\Box$

\subsection{Proof of Lemma \ref{lemma2.prop2}}\label{proof.lemma2.prop2}

Recall $\bdelta_{t}(u) = (\bA -  \widehat{\bA}\widehat{\bA}^{\T}\bA) \bZ_t(u) -  \widehat{\bA}\widehat{\bA}^{\T}\bvep_t(u)$ for any $u\in\cU$,  $\widetilde{\bA}^{\T}\widetilde{\bA}=\bI_r$, and $\widetilde{\bA} \widetilde{\bZ}_t(\cdot)= \bA\bZ_t(\cdot)$, where $\widetilde{\bZ}_t(\cdot)=  \bf{\Lambda}^{1/2} \bGamma^{\T} \bZ_t(\cdot)$ and $\widetilde\bA=(\tilde\ba_1,\ldots,\tilde\ba_p)^{\T}\in \mathbb{R}^{p\times r}$.
Let $\be_j\in\mathbb R^p$ denote the $j$-th canonical basis vector, whose $j$-th component is one and all other components are zero.  Then, for any $u\in\cU$, 
\begin{align}\label{eq:fac.delta} 
	&\delta_{t,j}(u)  
	=   \be_j^{\T} ( \widetilde\bA -  \widehat{\bA}\widehat{\bA}^{\T}\widetilde\bA) \widetilde\bZ_t(u)  -   \be_j^{\T}(\widehat{\bA}\widehat{\bA}^{\T}-  \widetilde\bA\widetilde\bA^{\T})\bvep_t(u)    -   \be_j^{\T}\widetilde\bA\widetilde\bA^{\T}\bvep_t(u)  \notag\\
	&~~~=     \be_j^{\T} (\widetilde\bA\widetilde\bA^{\T}-  \widehat{\bA}\widehat{\bA}^{\T}) \bA  \bZ_t(u)   -   \be_j^{\T}(\widehat{\bA}\widehat{\bA}^{\T}-\widetilde\bA\widetilde\bA^{\T})\bvep_t(u)  -    \tilde\ba_{j}^{\T}\widetilde\bA^{\T}\bvep_t(u) \,.
\end{align}
Following from  Lemma \ref{lemma1.prop2}, there exists  an orthogonal matrix ${\bf U}$ such that $\widebar{\bA} = \widehat{\bA}{\bf U}$, $\| \widebar{\bA} - \widetilde\bA \|_{\rm op} = O_{\rm p}( n^{-1/2})$ and $\max_{j\in [p]} | \bar{\ba}_j - \tilde{\ba}_j |_{2} = O_{\rm p} [ {n^{-1/2}p^{ -1/2}\{\log (np)\}^{1/2}}  ]$, 
provided that $\log (np) \ll n^{1/5}$, where $\widebar{\bA}=(\bar{\ba}_1,\ldots,\bar{\ba}_p)^{\T}$.  By Condition \ref{cond.facmat}, it holds that $\max_{j\in[p]}|\tilde\ba_j|_2 \leq \max_{j\in[p]}| \ba_j|_2 \|\bGamma {\bf{\Lambda}}^{-1/2}\|_{\rm op} \lesssim p^{ -1/2}$, then
\begin{align}\label{eq:diff.AA}
	&\max_{j\in[p]} |\be_j^{\T}(\widehat{\bA}\widehat{\bA}^{\T} - \widetilde\bA \widetilde\bA^{\T} ) |_2 
   =  \max_{j\in[p]}|\be_j^{\T}(\widebar{\bA}\widebar{\bA}^{\T}-\widetilde\bA \widetilde\bA^{\T} )|_2 \notag\\
   =&~\max_{j\in[p]}|  (\bar{\ba}_j -   \tilde\ba_j)^{\T} (\widebar{\bA} -   \widetilde\bA)^{\T} + (\bar{\ba}_j -   \tilde\ba_j)^{\T} \widetilde\bA^{\T}  +    \tilde\ba_j^{\T} (\widebar{\bA} -  \widetilde\bA )^{\T}|_2  \notag\\
  \lesssim&\, \max_{j\in[p]}|\bar{\ba}_{j}-\tilde\ba_j|_2 \|\widebar{\bA}-\tilde\bA\|_{\rm op} 
  + \max_{j\in[p]}|\bar{\ba}_{j}-\tilde\ba_j|_2 \|\tilde\bA\|_{\rm op}\notag\\
  &\, + \max_{j\in[p]} |\tilde\ba_j|_2 \|\widebar{\bA}-\tilde\bA\|_{\rm op}   
  =  O_{\rm p}\big[  n^{-1/2}p^{ -1/2}\{\log (np)\}^{1/2}  \big]\,, 
\end{align}
provided that $\log (np) \ll n^{1/5}$.  By Conditions \ref{cond.facprocess} and \ref{cond.facmat}, \eqref{eq:tail.x} and \eqref{incer.Z} in Section \ref{proof.lem.fac.Sigma},  we have $\|\ba_{j}^{\T}\bZ_t(u)\|_{\psi_2} \leq C$ and $\|\ba_{j}^{\T}\bZ_t(u) -\ba_{j}^{\T}\bZ_t(v) \|_{\psi_2} \leq C|u-v|^{\kappa}$ for any $(t,j,u,v)$. Then,  following similar arguments as the proof of Lemma \ref{lemma1.prop1}, we also have
\begin{align*}
&~~~~~~~ \max_{t\in[n]}\max_{j\in[p]}\sup_{u\in\cU}|\ba_j^{\T}\bZ_{t}(u)|  = O_{\rm p}[\{\log(np)\}^{1/2} ]\,,\\
 &\max_{t\in[n]}\max_{j\in[p]}\max_{m\in[M_2]} \sup_{u \in{B}_m}|\ba_j^{\T}\bZ_{t}(u)-\ba_j^{\T}\bZ_{t}(b_m)|  = O_{\rm p}\bigg\{\frac{ (np)^2}{  M_2^{ \kappa/2} }   \bigg\}\,.
\end{align*}
Notice that $|\bA\bZ_{t}(u)|_2^2  = \sum_{j=1}^p |\ba_j^{\T}\bZ_{t}(u)|^2 \leq p \max_{j\in [p]} |\ba_j^{\T}\bZ_{t}(u)|^2$ for any $(t,u)$. We have 
\begin{align}\label{term1.norm}
&	\max_{t\in[n]}\sup_{u\in\cU}|\bA\bZ_{t}(u)|_2 \leq p^{1/2}\max_{t\in[n]}\max_{j\in[p]}\sup_{u\in\cU}|\ba_j^{\T}\bZ_{t}(u)|  = O_{\rm p}[p^{1/2}\{\log(np)\}^{1/2} ]  \,, \nonumber\\
&~~~~~~~~~ \max_{t\in[n]} \max_{m\in[M_2]} \sup_{u \in{B}_m}|\bA\bZ_{t}(u)-\bA\bZ_{t}(b_m)|_2  = O_{\rm p}\bigg(\frac{ n^2p^{5/2}}{  M_2^{ \kappa/2} }   \bigg)\,.
\end{align}
Analogously, by Condition \ref{cond.facerror}, we also have
\begin{align}\label{term2.norm}
    &~~~~~~\max_{t\in[n]}\sup_{u\in\cU}|\bvep_t(u)|_2 =  O_{\rm p}[p^{1/2}\{\log(np)\}^{1/2} ]\,,\nonumber\\
    &\max_{t\in[n]}\max_{m\in[M_2]}\sup_{u\in B_m}|\bvep_t(u)-\bvep_t(b_m)|_2 =   O_{\rm p}\bigg(\frac{ n^2p^{5/2}}{  M_2^{ \kappa/2} }   \bigg)\,.
\end{align}
Write $\widetilde\bA=(\tilde\bb_1,\ldots,\tilde\bb_r)$.  For any  $i\in[r]$, $\tilde\bb_i^{\T}\bvep_t(u)$ is the $i$-th component of $\widetilde\bA^{\T}\bvep_t(u)$. 
%By Condition \ref{cond.facmat}(i), we know that $|p^{-1/2}\bb_i|_2 = 1$ for any $i\in[r]$. 
By the fact that $\widetilde{\bA}^{\T}\widetilde{\bA}=\bI_r$, we have $|\tilde\bb_i|_2 = 1$ for any $i\in[r]$.
Therefore, by Condition \ref{cond.facerror}, it holds that $ \|\tilde\bb_i^{\T}\bvep_t(u)\|_{\psi_2} \leq C$ and $ \|\tilde\bb_i^{\T}\bvep_t(u) - \tilde\bb_i^{\T}\bvep_t(v)\|_{\psi_2} \leq C|u-v|^{\kappa}$ for any $(t,i,u,v)$. Following similar arguments as in the proof of Lemma \ref{lemma1.prop1},  we have
\begin{align*}
&~~~~~~~ \max_{t\in[n]}\max_{i\in[r]}\sup_{u\in\cU}|\tilde\bb_i^{\T}\bvep_t(u)|  = O_{\rm p}[\{\log(np)\}^{1/2} ]\,,\\
 &\max_{t\in[n]}\max_{i\in[r]}\max_{m\in[M_2]} \sup_{u \in{B}_m}|\tilde\bb_i^{\T}\bvep_t(u)-\tilde\bb_i^{\T}\bvep_t(b_m)|  = O_{\rm p}\bigg\{\frac{ (np)^2}{  M_2^{ \kappa/2} }   \bigg\}\,,
\end{align*}
which yields that
\begin{align}\label{term3.norm}
&	\max_{t\in[n]}\sup_{u\in\cU}|\widetilde\bA^{\T}\bvep_t(u)|_2 \leq  r^{1/2} \max_{t\in[n]}\max_{i\in[r]}\sup_{u\in\cU}|\tilde\bb_i^{\T}\bvep_t(u)|  = O_{\rm p}[\{\log(np)\}^{1/2} ]  \,,\nonumber \\
&~~~~~~~~~~~~~ \max_{t\in[n]} \max_{m\in[M_2]} \sup_{u \in{B}_m}|\widetilde\bA^{\T}\bvep_t(u)-\widetilde\bA^{\T}\bvep_t(b_m)|_2  = O_{\rm p}\bigg\{\frac{ (np)^2}{  M_2^{ \kappa/2} } \bigg\} \,.
\end{align}
Notice that $\max_{j\in[p]}|\tilde\ba_j|_2   \lesssim p^{-1/2}$ and $|\ba^{\T}\bb|\leq |\ba|_2|\bb|_2$ for any $\ba,\bb\in\mathbb{R}^q$. Then, following from \eqref{eq:fac.delta}--\eqref{term3.norm}, it holds  that
\begin{align*}
   &~~~~~~~  \max_{t\in[n]}\max_{j\in[p]}\sup_{u\in\cU}|\delta_{t,j}(u)|   
	=  O_{\rm p}[n^{-1/2} \log(np) + p^{ -1/2}\{\log (np)\}^{1/2}] \,,\\
   & \max_{t\in[n]}\max_{j\in[p]}\max_{m\in[M_2]}\sup_{u\in B_m}|\delta_{t,j}(u)-\delta_{t,j}(b_m)| 
 = 	O_{\rm p}\bigg[\frac{ n^{3/2}p^{ 2}\{\log (np)\}^{1/2}+n^2p^{ 3/2}}{  M_2^{ \kappa/2} } \bigg] \,,
\end{align*}
provided that $\log (np) \ll n^{1/5}$. We complete the proof of Lemma \ref{lemma2.prop2}. $\hfill\Box$

\subsection{Proof of Lemma \ref{lem.fac.Sigma}}\label{proof.lem.fac.Sigma}

Recall  $\bX_t(u) = \bA\bZ_t(u)+\bvep_t(u)$ and $X_{t,j}(u) = \ba_{j}^{\T}\bZ_t(u)+\vep_{t,j}(u)$ for any $t\in[n]$, $j\in[p]$ and $u\in\cU$, where $\bA= (\ba_1,\ldots,\ba_p)^{\T}\in\mathbb{R}^{p\times r}$ and $\bZ_t(\cdot)=\{Z_{t,1}(\cdot),\ldots,Z_{t,r}(\cdot)\}^{\T}$.  By Conditions \ref{cond.facerror} and \ref{cond.facprocess}, it holds that $\mathbb{E}\{X_{t,j}(u)\}=0$, $ \|\vep_{t,j}(u)\|_{\psi_2}\leq C$ and $\max_{i\in[r]}\| Z_{t,i}(u) \|_{\psi_2}\leq C$ for any $(t,j,u)$. Since $r$ is a fixed integer, it follows from Condition \ref{cond.facmat} that $\|\ba_{j}^{\T}\bZ_t(u)\|_{\psi_2} \leq \sum_{i=1}^r |\ba_{j}|_{\max}\|  Z_{t,i}(u)  \|_{\psi_2}\leq C$ for any $(t,j,u)$. Then,  
\begin{align}\label{eq:tail.x}
    \bbP\{|X_{t,j}(u)|>x\} \leq&\, \bbP\bigg\{|\ba_{j}^{\T}\bZ_t(u)|>\frac{x}{2}\bigg\} + \bbP\bigg\{|\vep_{t,j}(u)|>\frac{x}{2}\bigg\}  \lesssim  \exp(-Cx^2)
\end{align}
for any $(t,j,u)$ and $x\geq 0$. Notice that
 \begin{align*}
 	& \widetilde{\Sigma}_{xx,jk}^{(\ell)}(u,v) - \Sigma_{xx,jk}^{(\ell)}(u,v) 
 	= \underbrace{\frac{1}{n_\ell}\sum_{t=1}^{n_\ell} [X_{t,j}(u)X_{t+\ell,k}(v) - \bbE\{X_{t ,j}(u)X_{t+\ell,k}(v)\}]}_{{\rm I}_{\ell,jk}(u,v)} \\
    &~~~~~~~~~~~~~~~~~~~~~~~~+ \underbrace{ \bigg[\widebar{X}_{j}(u)\widebar{X}_k(v) -  \bigg\{\frac{1}{n_\ell}\sum_{t=1}^{n_\ell}X_{t ,j}(u)\bigg\}\widebar{X}_k(v) 
 	- \widebar{X}_j(u)\bigg\{\frac{1}{n_\ell}\sum_{t=1}^{n_\ell} X_{t+\ell,k}(v)\bigg\} \bigg]}_{{\rm II}_{\ell,jk}(u,v)}  
 \end{align*} 
for any $j,k\in[p]$, $\ell\in[\ell_0]$ and $(u,v)\in\cU^2$. By Condition  \ref{cond.facprocess},   $\{\bX_t(u)\}_{t\in[n]}$ is also an $\alpha$-mixing sequence with $\alpha$-mixing coefficients ${\alpha}_x(m) \lesssim \exp(-Cm)$. 
Then, by \eqref{eq:tail.x} and Lemma \ref{tail_chang} with $(B_n,c_n,{r}_1,{r})=(1,\ell,1,1/3)$, it holds that $\|{\rm I}_{\ell,jk}(u,v)\|_2\lesssim n^{-1/2}$ and 
\begin{align}
\label{eq:fac.tail.I}
\bbP\big\{|{\rm I}_{\ell,jk}(u,v)|>x\big\}
\lesssim \exp(-C n x^2)+\exp(-C n^{1/3}x^{1/3}) 
\end{align} 
 for any $(\ell,j,k,u,v)$ and $x\geq 0$.  Analogously, by Lemma \ref{tail_chang} again with $(B_n,c_n,{r}_1,{r})=(1,0,2,1/3)$,  we have  $\|{\rm II}_{\ell,jk}(u,v)\|_2\lesssim n^{-1 }$ and 
\begin{align}
\label{eq:fac.tail.II}
 \bbP\big\{|{\rm II}_{\ell,jk}(u,v)| >x\big\} 
\lesssim \exp(-Cn x)+\exp(-C n^{1/3}x^{1/6}) 
\end{align}
for any $(\ell,j,k,u,v)$ and $x\geq 0$, which  implies $\max_{\ell,j,k }\|\widetilde{\Sigma}_{xx,jk}^{(\ell)}(u,v)-\Sigma_{xx,jk}^{(\ell)}(u,v)\|_2 \lesssim n^{-1/2}$ for any  $(u,v)\in\cU^2$. Then,  by 
%Jensen's inequality and 
Fubini's theorem, it holds that
\begin{align*}
\max_{\ell\in[\ell_0]}\max_{j,k\in[p]}\bbE\big\{\|\widetilde{\Sigma}_{xx,jk}^{(\ell)}-\Sigma_{xx,jk}^{(\ell)}\|_{\cS}^2\big\}
\lesssim n^{-1}\,.
\end{align*}
This completes the proof of the first part of Lemma~\ref{lem.fac.Sigma}.

By \eqref{eq:fac.tail.I} and \eqref{eq:fac.tail.II},  we obtain, for any $(\ell,j,k,u,v)$ and $x\geq 0$, 
\begin{align}\label{tail.Xtilde}
 &~\bbP\big ( |\widetilde{\Sigma}_{xx,jk}^{(\ell)}(u,v)-\Sigma_{xx,jk}^{(\ell)}(u,v)|>x\big )\nonumber\\
&~~~~\lesssim    \exp(-Cn x^2)+ \exp(-Cn^{1/3}x^{1/3}) +\exp(-Cn x)+ \exp(-C n^{1/3}x^{1/6}) \,.
\end{align}
As in Section~\ref{sec.preliminary}, let $\cU=[0,1]$ and partition it into
$M$ subintervals $\{B_1,\ldots,B_{M}\}$ of equal length $M^{-1}$. Let $b_m$ denote the midpoint of $B_m$. Write $\Delta_{\ell,jk}(u,v)= \widetilde{\Sigma}_{xx,jk}^{(\ell)}(u,v)-\Sigma_{xx,jk}^{(\ell)}(u,v)$. Then, it holds that
\begin{align*}
    &\max_{\ell\in[\ell_0]}\|\widetilde{\bSigma}_{xx}^{(\ell)}-\bSigma_{xx}^{(\ell)}\|_{\cS,\max} = \max_{\ell\in[\ell_0]}\max_{j,k\in[p]}\|\widetilde{\Sigma}_{xx,jk}^{(\ell)}-\Sigma_{xx,jk}^{(\ell)}\|_{\cS}  \lesssim \max_{\ell\in[\ell_0]}\max_{j,k\in[p]}\sup_{(u,v)\in \cU^2}|\Delta_{\ell,jk}(u,v)|\\
    &~~~~~~\lesssim \underbrace{\max_{\ell\in[\ell_0]}\max_{j,k\in[p]}\max_{m,m'\in[M]} \sup_{(u,v)\in B_m\times B_{m'}}|\Delta_{\ell,jk}(u,v)-\Delta_{\ell,jk} (b_m,b_{m'})|}_{Q_1}\\
     &~~~~~~~~~+ \underbrace{\max_{\ell\in[\ell_0]}\max_{j,k\in[p]}\max_{m,m'\in[M]} |\Delta_{\ell,jk}(b_m,b_{m'})|}_{Q_2}\,.
\end{align*}
By Conditions \ref{cond.facprocess} and \ref{cond.facmat}, we know that 
\begin{align}\label{incer.Z}
    \|\ba_{j}^{\T}\bZ_t(u) -\ba_{j}^{\T}\bZ_t(v) \|_{\psi_2} \leq \sum_{i=1}^r |\ba_{j}|_{\max}\|  Z_{t,i}(u) -Z_{t,i}(v)  \|_{\psi_2}\leq C|u-v|^{\kappa}
\end{align}
 for any $(t,j,u,v)$. This, together with Condition \ref{cond.facerror}, yields 
\begin{align*}
    &~~~~~~~~~~~~~~~~\|X_{t,j}(u_1)-X_{t,j}(u_2)\|_{\psi_2} \leq C|u_1-u_2|^{\kappa} \,,~~\mbox{and}\\
    &\|X_{t,j}(u_1)X_{t+\ell,k}(v_1)-X_{t,j}(u_2)X_{t+\ell,k}(v_2)\|_{\psi_1} \leq C(|u_1-u_2|^{\kappa}+|v_1-v_2|^{\kappa})
\end{align*}
for any $(t,\ell,j,k,u_1,v_1,u_2,v_2)$. Then, following the proof of Lemma \ref{lemma_tn_dis}, we also have  $Q_1 = O_{\rm p}(n^{-3/2})$, provided that $\log (np)\ll n^{1/3}$ and $M\asymp (np)^C$ for some sufficiently large constant $C>0$. Moreover, by \eqref{tail.Xtilde}, we have 
\begin{align*}
    \bbP(Q_2 > x) \lesssim p^2M^2\big\{\exp(-Cn x^2)+ \exp(-Cn^{1/3}x^{1/3}) +\exp(-Cn x)+ \exp(-C n^{1/3}x^{1/6})  \big\}
\end{align*}
 for any $x\geq 0$, which implies that $ Q_2 = O_{\rm p}[ n^{-1/2}\{\log (np)\}^{1/2} ] $, provided that $\log (np)\ll n^{1/5}$. Then, we can conclude that 
 \begin{align*}
     \max_{\ell\in[\ell_0]}\|\widetilde{\bSigma}_{xx}^{(\ell)}-\bSigma_{xx}^{(\ell)}\|_{\cS,\max} = O_{\rm p}[ n^{-1/2}\{\log (np)\}^{1/2} ] \,,
 \end{align*}
provided that $\log (np)\ll n^{1/5}$. We complete the proof of Lemma  \ref{lem.fac.Sigma}. $\hfill\Box$

\clearpage
\section{Additional simulation results and details for real data analysis}

\subsection{Additional simulation results}
Tables \ref{tab:p1_supp}--\ref{tab:dis.HA_supp} present the simulation results of the empirical sizes and  powers for our proposed white noise test using three kernel types for Models 1--6 at the 0.05 nominal level.
Table \ref{tab:factor_supp} reports the rejection rates of our proposed goodness-of-fit test using three kernel functions for Model 7 at the 0.05 nominal level.

\begin{table}[!h]
\caption{Empirical sizes (Models 1--2) and powers (Models 3--6) of our test $T_n$ based on the three kernels for $p=1$ at the 0.05 nominal level. All numbers are multiplied by 100.} 
\footnotesize
\vspace{-0.2cm}
\renewcommand{\arraystretch}{0.6}
\centering
\begin{tabular}{cccccccccccc}
\toprule
& \multirow{2}{*}{$n$} & \multirow{2}{*}{$N$}
& \multicolumn{3}{c}{$L=2$}
& \multicolumn{3}{c}{$L=4$}
& \multicolumn{3}{c}{$L=6$} \\
\cmidrule(lr){4-6}
\cmidrule(lr){7-9}
\cmidrule(lr){10-12}
& & &
${\rm QS}$ & ${\rm Pz}$ & ${\rm Bt}$
& ${\rm QS}$ & ${\rm Pz}$ & ${\rm Bt}$
& ${\rm QS}$ & ${\rm Pz}$ & ${\rm Bt}$ \\
\midrule
Model 1 & 200 & 25 & 7.1 & 7.0 & 7.0 & 5.5 & 6.0 & 6.2 & 6.0 & 5.5 & 6.2 \\
& & 51 & 6.5 & 6.3 & 6.3 & 5.6 & 5.4 & 5.5 & 5.5 & 6.0 & 5.7 \\
& 400 & 25 & 5.2 & 5.5 & 5.5 & 5.2 & 4.9 & 5.3 & 5.1 & 5.3 & 5.3 \\
& & 51 & 5.0 & 5.4 & 5.5 & 5.1 & 4.9 & 5.2 & 5.2 & 5.2 & 5.3 \\ 
\midrule
Model 2 & 200 & 25 & 4.3 & 4.5 & 4.5 & 3.0 & 3.4 & 3.5 & 3.2 & 3.4 & 3.4 \\
& & 51 & 3.8 & 4.3 & 4.2 & 3.4 & 3.6 & 3.6 & 3.5 & 3.7 & 3.8 \\
& 400 & 25 & 3.6 & 4.2 & 4.0 & 4.8 & 4.9 & 4.6 & 4.5 & 4.7 & 4.8 \\
& & 51 & 4.8 & 5.2 & 4.9 & 4.6 & 4.9 & 5.2 & 4.9 & 4.8 & 4.6 \\ 
\midrule
Model 3 & 200 & 25 & 97.1 & 97.3 & 97.4 & 95.5 & 95.4 & 95.9 & 93.9 & 94.0 & 94.0 \\
& & 51 & 97.2 & 97.6 & 97.6 & 95.8 & 95.6 & 95.9 & 93.7 & 94.0 & 94.7 \\
& 400 & 25 & 100 & 100 & 100 & 100 & 100 & 100 & 100 & 100 & 100 \\
& & 51 & 100 & 100 & 100 & 100 & 100 & 100 & 100 & 100 & 100  \\ 
\midrule
Model 4 & 200 & 25 & 90.7 & 90.4 & 91.2 & 80.9 & 80.8 & 83.2 & 71.6 & 72.7 & 74.5 \\
& & 51 & 90.1 & 89.8 & 91.9 & 79.7 & 79.3 & 81.6 & 69.9 & 69.3 & 72.6 \\
& 400 & 25 & 100 & 100 & 100 & 100 & 100 & 100 & 100 & 99.9 & 100 \\
& & 51 & 100 & 100 & 100 & 100 & 100 & 100 & 100 & 100 & 100 \\ 
\midrule
Model 5 & 200 & 25 & 86.8 & 84.6 & 88.8 & 82.4 & 81.2 & 85.2 & 79.8 & 79.5 & 82.9 \\
& & 51 & 87.4 & 85.1 & 89.4 & 83.6 & 81.7 & 86.6 & 80.6 & 79.3 & 83.4 \\
& 400 & 25 & 99.7 & 99.5 & 99.7 & 99.7 & 99.5 & 99.7 & 99.4 & 99.4 & 99.5 \\
& & 51 & 99.9 & 99.6 & 99.9 & 99.9 & 99.8 & 99.9 & 99.8 & 99.6 & 99.9 \\ 
\midrule
Model 6 & 200 & 25 & 86.7 & 84.9 & 88.6 & 82.7 & 81.8 & 85.5 & 80.3 & 79.8 & 83.4 \\
& & 51 & 87.5 & 85.3 & 89.1 & 84.1 & 82.7 & 86.8 & 80.0 & 79.4 & 84.3 \\
& 400 & 25 & 99.7 & 99.5 & 99.7 & 99.7 & 99.5 & 99.6 & 99.5 & 99.4 & 99.5 \\
& & 51 & 99.9 & 99.5 & 99.9 & 99.7 & 99.8 & 99.8 & 99.7 & 99.5 & 99.8 \\ 
\bottomrule
\end{tabular}
\label{tab:p1_supp}
\end{table}

\begin{table}[ht]
\caption{Empirical sizes of the proposed test $T_n$ based on the three kernels under Models 1 and 2 for $p\in\{10,50,100\}$ at the 0.05 nominal level. All numbers are multiplied by 100.} 
%\vspace{-0.5em}
\footnotesize
\centering
\vspace{-0.2cm}
\renewcommand\arraystretch{0.6}
\begin{tabular}{ccccccccccccc}
\toprule
& \multirow{2}{*}{$n$} & \multirow{2}{*}{$p$}
& \multirow{2}{*}{$N$}
& \multicolumn{3}{c}{$L=2$}
& \multicolumn{3}{c}{$L=4$}
& \multicolumn{3}{c}{$L=6$} \\
\cmidrule(lr){5-7}
\cmidrule(lr){8-10}
\cmidrule(lr){11-13}
& & & &
${\rm QS}$ & ${\rm Pz}$ & ${\rm Bt}$
& ${\rm QS}$ & ${\rm Pz}$ & ${\rm Bt}$
& ${\rm QS}$ & ${\rm Pz}$ & ${\rm Bt}$ \\
\midrule
Model 1 & 200 & 10  & 25 & 4.3 & 4.3 & 4.7 & 4.1 & 4.6 & 4.3 & 4.6 & 4.6 & 4.6 \\
&&     & 51 & 5.1 & 5.1 & 5.0 & 4.1 & 4.4 & 3.9 & 4.0 & 4.4 & 4.2 \\
&& 50  & 25 & 3.9 & 4.1 & 3.8 & 3.5 & 3.7 & 3.7 & 4.0 & 3.5 & 4.1 \\
&&     & 51 & 3.9 & 3.4 & 4.0 & 3.4 & 3.5 & 3.5 & 3.5 & 3.4 & 3.6 \\
&& 100 & 25 & 2.7 & 2.3 & 2.4 & 2.3 & 2.0 & 2.4 & 2.5 & 2.3 & 2.9 \\
&&     & 51 & 3.1 & 3.4 & 3.4 & 3.2 & 3.2 & 3.6 & 2.8 & 3.2 & 3.5 \\
\cmidrule(lr){2-13}
& 400 & 10  & 25 & 4.6 & 5.0 & 4.9 & 5.0 & 5.3 & 5.5 & 4.7 & 4.7 & 4.8 \\
&&     & 51 & 4.7 & 4.8 & 4.7 & 4.1 & 4.9 & 4.6 & 3.9 & 4.1 & 4.3 \\
&& 50  & 25 & 5.1 & 5.1 & 5.2 & 4.9 & 4.5 & 5.2 & 4.4 & 4.8 & 4.9 \\
&&     & 51 & 4.0 & 4.0 & 4.5 & 3.8 & 4.5 & 4.2 & 3.9 & 4.0 & 4.3 \\
&& 100 & 25 & 3.7 & 3.8 & 4.3 & 3.9 & 3.5 & 4.0 & 3.2 & 3.4 & 3.2 \\
&&     & 51 & 3.6 & 4.0 & 4.0 & 4.7 & 4.2 & 4.4 & 4.0 & 3.8 & 3.8 \\
\midrule
Model 2 & 200 & 10  & 25 & 3.5 & 3.2 & 3.7 & 2.5 & 2.4 & 2.8 & 2.4 & 2.9 & 2.9 \\
&&     & 51 & 2.1 & 1.8 & 2.5 & 1.7 & 1.7 & 2.0 & 1.9 & 1.8 & 2.0 \\
&& 50  & 25 & 2.2 & 2.2 & 2.5 & 1.7 & 1.6 & 1.8 & 1.4 & 1.5 & 1.8 \\
&&     & 51 & 2.0 & 2.3 & 2.7 & 1.4 & 1.7 & 1.6 & 1.4 & 1.9 & 1.9 \\
&& 100 & 25 & 1.5 & 1.7 & 2.1 & 1.4 & 1.4 & 1.6 & 1.2 & 1.2 & 1.4 \\
&&     & 51 & 2.0 & 1.8 & 1.8 & 1.9 & 1.4 & 1.9 & 1.2 & 1.3 & 1.5 \\
\cmidrule(lr){2-13}
& 400 & 10  & 25 & 3.8 & 3.8 & 3.9 & 3.1 & 3.4 & 3.5 & 3.1 & 3.2 & 3.5 \\
&&     & 51 & 3.9 & 3.9 & 4.2 & 2.8 & 2.6 & 3.1 & 2.1 & 1.9 & 2.3 \\
&& 50  & 25 & 2.4 & 2.4 & 2.4 & 2.3 & 2.3 & 2.4 & 2.2 & 2.1 & 2.5 \\
&&     & 51 & 2.5 & 2.8 & 3.3 & 2.9 & 3.0 & 3.1 & 2.9 & 2.7 & 2.8 \\
&& 100 & 25 & 2.8 & 2.6 & 2.9 & 1.6 & 1.7 & 2.0 & 2.4 & 2.2 & 2.4 \\
&&     & 51 & 3.5 & 3.3 & 3.6 & 3.2 & 3.4 & 3.5 & 3.0 & 3.2 & 3.5 \\
\bottomrule
\end{tabular}
\label{tab:dis.H0_supp}
\end{table}

\clearpage

\begin{table}[ht]
\caption{Empirical powers of the proposed test $T_n$ based on the three kernels under Models 3--6 for $p\in\{10,50,100\}$ at the 0.05 nominal level. All numbers are multiplied by 100.} 
%\vspace{-0.1em}
\footnotesize
\centering
\vspace{-0.2cm}
\renewcommand\arraystretch{0.56}
\begin{tabular}{ccccccccccccc}
\toprule
& \multirow{2}{*}{$n$} & \multirow{2}{*}{$p$}
& \multirow{2}{*}{$N$}
& \multicolumn{3}{c}{$L=2$}
& \multicolumn{3}{c}{$L=4$}
& \multicolumn{3}{c}{$L=6$} \\
\cmidrule(lr){5-7}
\cmidrule(lr){8-10}
\cmidrule(lr){11-13}
& & & &
${\rm QS}$ & ${\rm Pz}$ & ${\rm Bt}$
& ${\rm QS}$ & ${\rm Pz}$ & ${\rm Bt}$
& ${\rm QS}$ & ${\rm Pz}$ & ${\rm Bt}$ \\
\midrule
Model 3 & 200 & 10  & 25 & 100 & 100 & 100 & 100 & 100 & 100 & 100 & 100 & 100 \\
&&     & 51 & 100 & 100 & 100 & 100 & 100 & 100 & 100 & 100 & 100 \\
&& 50  & 25 & 99.7 & 99.7 & 99.8 & 99.6 & 99.6 & 99.6 & 99.6 & 99.6 & 99.6 \\
&&     & 51 & 100 & 100 & 100 & 100 & 100 & 100 & 100 & 100 & 100 \\
&& 100 & 25 & 99.9 & 100 & 99.9 & 99.7 & 99.7 & 99.8 & 99.5 & 99.5 & 99.5 \\
&&     & 51 & 100 & 100 & 100 & 100 & 100 & 100 & 100 & 100 & 100 \\
\cmidrule(lr){2-13}
& 400 & 10  & 25 & 100 & 100 & 100 & 100 & 100 & 100 & 100 & 100 & 100 \\
&&     & 51 & 100 & 100 & 100 & 100 & 100 & 100 & 100 & 100 & 100 \\
&& 50  & 25 & 100 & 100 & 100 & 99.9 & 99.9 & 99.9 & 99.9 & 99.9 & 99.9 \\
&&     & 51 & 100 & 100 & 100 & 100 & 100 & 100 & 100 & 100 & 100 \\
&& 100 & 25 & 100 & 100 & 100 & 100 & 100 & 100 & 100 & 100 & 100 \\
&&     & 51 & 100 & 100 & 100 & 100 & 100 & 100 & 100 & 100 & 100 \\
\midrule
Model 4 & 200 & 10  & 25 & 99.3 & 99.4 & 99.7 & 99.3 & 99.4 & 99.7 & 99.5 & 99.5 & 99.9 \\
&&     & 51 & 99.0 & 98.8 & 99.7 & 99.1 & 99.1 & 99.6 & 99.1 & 99.1 & 99.4 \\
&& 50  & 25 & 99.0 & 99.0 & 99.3 & 99.1 & 99.0 & 99.3 & 98.6 & 98.9 & 99.3 \\
&&     & 51 & 100 & 100 & 100 & 100 & 100 & 100 & 100 & 100 & 100 \\
&& 100 & 25 & 98.7 & 98.9 & 98.9 & 98.3 & 98.2 & 98.3 & 98.0 & 98.1 & 98.1 \\
&&     & 51 & 100 & 100 & 100 & 100 & 100 & 100 & 100 & 100 & 100 \\
\cmidrule(lr){2-13}
& 400 & 10  & 25 & 99.7 & 99.7 & 99.9 & 99.7 & 99.7 & 100 & 99.8 & 99.9 & 100 \\
&&     & 51 & 99.8 & 99.9 & 99.9 & 99.9 & 99.8 & 99.9 & 99.8 & 99.8 & 99.9 \\
&& 50  & 25 & 99.5 & 99.4 & 99.5 & 98.8 & 99.0 & 99.1 & 98.7 & 98.6 & 98.8 \\
&&     & 51 & 100 & 100 & 100 & 100 & 100 & 100 & 100 & 100 & 100 \\
&& 100 & 25 & 99.6 & 99.8 & 99.7 & 99.5 & 99.5 & 99.5 & 99.5 & 99.5 & 99.5 \\
&&     & 51 & 100 & 100 & 100 & 100 & 100 & 100 & 100 & 100 & 100 \\
\midrule
Model 5 & 200 & 10  & 25 & 83.6 & 82.7 & 87.7 & 78.6 & 77.5 & 82.7 & 75.8 & 74.3 & 79.4 \\
&&     & 51 & 86.0 & 84.6 & 90.4 & 80.9 & 79.2 & 85.0 & 77.9 & 75.7 & 82.1 \\
&& 50  & 25 & 72.2 & 71.6 & 77.7 & 65.3 & 65.4 & 71.6 & 61.3 & 60.0 & 66.8 \\
&&     & 51 & 73.8 & 72.7 & 78.0 & 67.0 & 66.3 & 71.8 & 61.5 & 61.9 & 67.7 \\
&& 100 & 25 & 62.1 & 61.4 & 68.7 & 52.9 & 52.3 & 58.0 & 48.3 & 47.5 & 54.4 \\
&&     & 51 & 67.7 & 66.4 & 72.7 & 57.2 & 55.9 & 63.4 & 50.1 & 49.0 & 56.6 \\
\cmidrule(lr){2-13}
& 400 & 10  & 25 & 99.4 & 99.4 & 99.5 & 99.4 & 99.3 & 99.4 & 99.1 & 99.0 & 99.3 \\
&&     & 51 & 99.6 & 99.5 & 99.7 & 99.6 & 99.6 & 99.7 & 99.5 & 99.4 & 99.7 \\
&& 50  & 25 & 96.8 & 96.8 & 97.1 & 94.1 & 94.2 & 94.8 & 92.5 & 91.9 & 92.9 \\
&&     & 51 & 97.6 & 97.7 & 97.9 & 96.0 & 95.5 & 96.4 & 94.6 & 94.4 & 95.2 \\
&& 100 & 25 & 94.3 & 93.9 & 95.0 & 91.3 & 91.6 & 92.2 & 89.3 & 88.9 & 90.2 \\
&&     & 51 & 94.6 & 94.9 & 95.2 & 92.0 & 91.2 & 92.6 & 90.0 & 89.2 & 90.4 \\
\midrule
Model 6 & 200 & 10  & 25 & 91.7 & 90.6 & 94.6 & 90.3 & 88.7 & 93.5 & 89.6 & 87.1 & 92.7 \\
&&     & 51 & 92.4 & 91.3 & 95.2 & 90.8 & 89.9 & 94.3 & 89.6 & 88.3 & 93.0 \\
&& 50  & 25 & 97.2 & 96.4 & 98.6 & 95.4 & 94.5 & 96.6 & 94.4 & 93.9 & 96.0 \\
&&     & 51 & 97.7 & 96.4 & 99.6 & 96.6 & 94.7 & 99.1 & 95.4 & 94.7 & 98.3 \\
&& 100 & 25 & 96.5 & 96.1 & 98.1 & 94.5 & 93.8 & 96.9 & 92.8 & 92.1 & 94.9 \\
&&     & 51 & 98.2 & 96.9 & 99.0 & 97.1 & 97.0 & 98.7 & 96.2 & 94.9 & 98.2 \\
\cmidrule(lr){2-13}
& 400 & 10  & 25 & 99.9 & 99.9 & 100 & 99.9 & 99.9 & 100 & 100 & 99.9 & 100 \\
&&     & 51 & 100 & 100 & 100 & 100 & 100 & 100 & 100 & 100 & 100 \\
&& 50  & 25 & 99.9 & 99.8 & 100 & 99.9 & 99.9 & 100 & 100 & 99.9 & 100 \\
&&     & 51 & 100 & 99.9 & 100 & 100 & 100 & 100 & 99.9 & 99.9 & 100 \\
&& 100 & 25 & 99.8 & 99.8 & 99.9 & 99.9 & 99.8 & 99.9 & 99.5 & 99.5 & 99.6 \\
&&     & 51 & 99.9 & 99.9 & 100 & 99.8 & 99.8 & 99.9 & 99.9 & 99.8 & 99.9 \\
\bottomrule
\end{tabular}
\label{tab:dis.HA_supp}
\end{table}
 
\begin{table}[htp]
\caption{Rejection rates of our proposed test $T_n$ for  Model 7 based on the three kernels  at the 0.05 nominal level. All numbers are multiplied by 100.} 
%\vspace{-0.8em}
\footnotesize
\centering
\vspace{-0.2cm}
\renewcommand\arraystretch{0.6}
\begin{tabular}{cccccccccccc}
\toprule
& \multirow{2}{*}{$n$} & \multirow{2}{*}{$p$}
& \multicolumn{3}{c}{$L=2$}
& \multicolumn{3}{c}{$L=4$}
& \multicolumn{3}{c}{$L=6$} \\
\cmidrule(lr){4-6}
\cmidrule(lr){7-9}
\cmidrule(lr){10-12}
& & &
${\rm QS}$ & ${\rm Pz}$ & ${\rm Bt}$
& ${\rm QS}$ & ${\rm Pz}$ & ${\rm Bt}$
& ${\rm QS}$ & ${\rm Pz}$ & ${\rm Bt}$ \\
\midrule
% \toprule
%  &$n$ & $p$ & $T_{2,{\rm QS}}$ & $T_{2,\rm Pz}$ & $T_{2,{\rm Bt}}$ & $T_{
% \rm 4,QS}$ & $T_{\rm 4,Pz}$ & $T_{\rm 4,Bt}$ & $T_{\rm 6,QS}$ & $T_{\rm 6,Pz}$ & $T_{\rm 6,Bt}$ \\
% \midrule
%\multicolumn{11}{c}{$r=3$ (empirical sizes)} \\
$r=3$& 200 & 10  & 3.2  & 2.9  & 3.2  & 2.0 & 2.1  & 2.1  & 2.6  & 2.4   & 2.4   \\
(size)&& 50  & 1.2  & 1.1  & 1.5  & 1.2  & 1.2  & 1.3  & 0.8  & 0.6  & 1.2   \\
&& 100 & 1.1  & 1.2  & 1.3  & 1.0   & 1.1  & 1.3  & 0.5  & 0.7  & 0.9   \\
\cmidrule(lr){2-12}
&400 & 10  & 3.9  & 3.8  & 4.1  & 3.8   & 3.9  & 4.0  & 3.6  & 4.0   & 3.5   \\
&& 50  & 2.6  & 2.4  & 2.6  & 1.9  & 1.6  & 2.2  & 1.8  & 1.7  & 1.9   \\
&& 100 & 2.0   & 1.7  & 2.0   & 1.3  & 1.3  & 1.6  & 1.1  & 1.2  & 1.4   \\
\midrule
%\multicolumn{11}{c}{$r=1$ (empirical powers)} \\
$r=1$& 200 & 10  &  96.3  & 96.3  & 96.5  & 94.6  & 94.5  & 95.0  & 92.9  & 92.9   & 93.8   \\
(power)&& 50  & 99.4  & 99.5  & 99.8  & 98.7  & 98.3  & 98.8  & 97.3  & 97.2  & 98.4   \\
&& 100 & 99.4  & 99.4  & 99.7  & 98.4  & 98.7  & 99.0  & 97.7  & 97.7  & 98.3   \\
\cmidrule(lr){2-12}
&400 & 10  & 100  & 100  & 100  & 100   & 100  & 100  & 99.9  & 100   & 100   \\
&& 50  & 100  & 100  & 100  & 100  & 100  & 100  & 100  & 100  & 100   \\
&& 100 & 100  & 100 & 100  & 99.5   & 100  & 100  & 99.5  & 99.5  & 99.5  \\
\bottomrule
\end{tabular}
\label{tab:factor_supp}
\end{table}

\clearpage
\subsection{List of countries for  yield data}  
Table \ref{tab.list} presents the list of $p=22$  inclusive countries in the yield data under our study.

\renewcommand{\arraystretch}{0.7}

\begin{table}[h]
\centering
\caption{ISO Alpha-3 codes and corresponding countries for yield data.}
\vspace{-0.3cm}
\begin{tabular}{p{1.5cm}p{2.5cm} p{1.5cm}p{2.5cm} p{1.5cm}p{2.5cm}}
\toprule
{Code} & {Country} & {Code} & {Country} & {Code} & {Country} \\
\midrule
AUS & Australia      & ESP & Spain           & NLD & Netherlands      \\
AUT & Austria        & FIN & Finland         & SGP & Singapore        \\
BEL & Belgium        & FRA & France          & SVK & Slovakia         \\
CAN & Canada         & GBR & Great Britain   & SVN & Slovenia         \\
CHE & Switzerland    & ITA & Italy           & SWE & Sweden           \\
CHN & China          & JPN & Japan           & USA & United States    \\
DEU & Germany        & KOR & South Korea     & ZAF & South Africa     \\
DNK & Denmark        &     &                 &     &                  \\
\bottomrule
\end{tabular}
\label{tab.list}
\end{table}

\subsection{Details for age-specific mortality data}

\subsubsection{List of countries for mortality data}
Table \ref{tab.morlist} presents the list of $p=24$ countries included in the mortality data.

\begin{table}[htbp]
\centering
\caption{ISO Alpha-3 codes and corresponding countries for mortality data.}
\vspace{-0.3cm}
\begin{tabular}{p{1.5cm}l p{1.5cm}l p{1.5cm}l}
\toprule
Code & Country & Code & Country & Code & Country \\
\midrule
AUS & Australia      & ESP & Spain           & NOR & Norway          \\
AUT & Austria        & EST & Estonia         & NZL & New Zealand     \\
BEL & Belgium        & FIN & Finland         & POL & Poland          \\
BGR & Bulgaria       & FRA & France          & RUS & Russia          \\
BLR & Belarus        & GBR & Great Britain   & SVK & Slovakia        \\
CAN & Canada         & HUN & Hungary         & SWE & Sweden          \\
CZE & Czech Republic & IRL & Ireland         & UKR & Ukraine         \\
DNK & Denmark        & ITA & Italy           & USA & United States   \\
\bottomrule
\end{tabular}
\label{tab.morlist}
\end{table}

\subsubsection{Estimation procedure for the vector functional autoregressive  model}\label{threesteps}
To estimate the vector functional autoregressive model,  we first center all the series of $\bX_t(\cdot)$ about their empirical means, then apply
the three-step approach in \cite{Guo2023} (detailed as follows), and finally compute the residuals $\hat{\bvep}_t(\cdot)$ based on $\widehat{\bA}$.
\begin{itemize}
	\item{\bf Step 1.}  Perform functional principal component analysis (FPCA) based on the centered $\{X_{t,j}(\cdot)\}$ for each $j \in [p]$, and thus obtain estimated eigenfunctions and FPC scores. The numbers of FPCs are chosen to explain at least 98\% of the total variation.
	\item{\bf Step 2.} Implement a regularized least squares approach with group lasso penalization to the vector time series of estimated FPC scores and thus obtain the estimated block sparse coefficient matrices, where the regularization parameters are selected by BIC. 
	\item{\bf Step 3.} Recover the functional sparse estimates of entries in $\bA$ based on the block sparse estimates obtained in Step 2 and estimated eigenfunctions obtained in Step~1. 
\end{itemize}

%\subsubsection{Figure of the sparsity patterns}

Figure \ref{fig.heatmap} plots the sparsity patterns in $\widehat{\bA}$  for $24$ countries, where black and white correspond to non-zero and zero functional entries in  $\widehat{\bA}$, respectively.

\begin{figure}[htbp]
     \centering
     \includegraphics[scale=0.36]{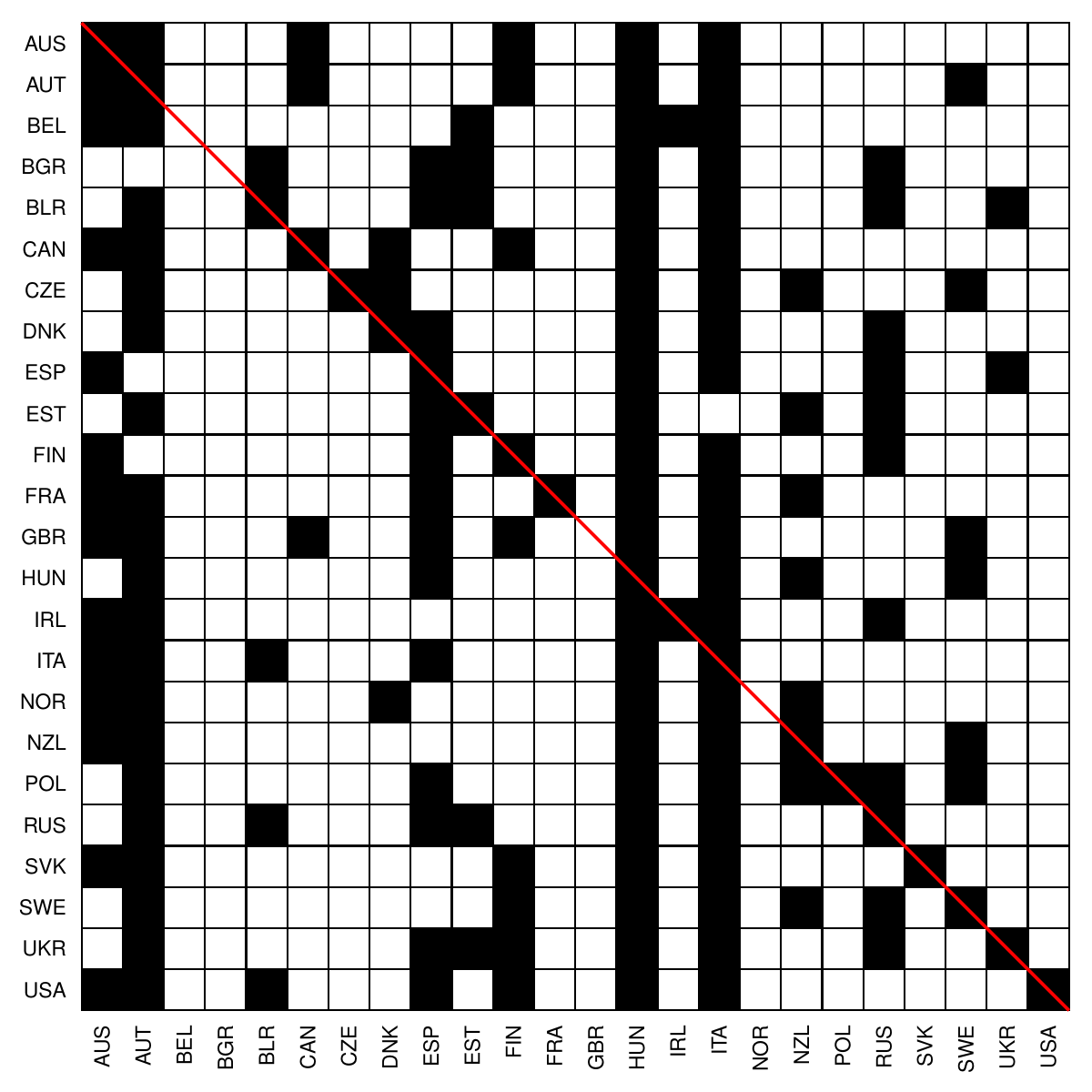}
     \caption{The binary heatmap for the sparsity structure in  $\widehat{\bA}$ for 24 countries.} 
     %Black and white correspond to non-zero and zero functional entries in  $\widehat{\bA}$, respectively.}
     \label{fig.heatmap}
 \end{figure}

\spacingset{1.2}

\end{document}